\crefname{subsection}{section}{sections}
\crefname{subsubsection}{section}{sections}
\crefname{prop}{proposition}{propositions}
  \theoremstyle{remark}
  \theoremstyle{plain}
  \newtheorem{lem}{\protect\lemmaname}
\theoremstyle{plain}
\newtheorem{thm}{\protect\theoremname}
  \theoremstyle{plain}
  \newtheorem{prop}{\protect\propositionname}
 \theoremstyle{definition}
  \newtheorem{example}{\protect\examplename}
\theoremstyle{plain}
\newtheorem{theorem}{Theorem}
\newtheorem{lemma}{Lemma}
\newtheorem{corollary}{Corollary}
\theoremstyle{definition}
\newtheorem{definition}{Definition}
\newtheorem{remark}{Remark}
\newtheorem{claim}{Claim}
\DeclareMathOperator*{\E}{\mathbb{E}}
\newcolumntype{C}{>{\centering\arraybackslash}m{.8in}}
  \providecommand{\examplename}{Example}
  \providecommand{\lemmaname}{Lemma}
  \providecommand{\propositionname}{Proposition}
  \providecommand{\remarkname}{Remark}
\providecommand{\theoremname}{Theorem}
\begin{document}
\global\long\def\sign{\textrm{sign}}

\newtheorem{proposition}{Proposition}

\title{Descending Price \\Optimally Coordinates Search
\thanks{We are grateful to conversations with Matt Gentzkow and Larry Samuelson for inspiring us to work on this project and to Nick Arnosti, Brendan Lucier, Bruno Strulovici, Vasilis Syrgkanis, Sam Taggart, Juuso V\"alim\"aki and workshop participants at Boston College, the Harvard Center for Research on Computation and Society and Microsoft Research New England  for useful comments.  Special thanks to Vivek Bhattacharya Jimmy Roberts and Andrew Sweeting for sharing their code for and assisting directly with our Timber calibration.%
}}

\author{Robert Kleinberg%
\thanks{Department of Computer Science, Cornell University: 402 Gates Hall, Ithaca, NY 14853: robert.kleinberg@cornell.edu, http://www.cs.cornell.edu/\textasciitilde rdk/. This work was done while the author was a researcher at Microsoft
Research New England.
} \and Bo Waggoner\thanks{Warren Center for Network \& Data Sciences, 220 South 33rd Street, Philadelphia, PA 19104: bwag@seas.upenn.edu, http://www.bowaggoner.com.} \and E.~Glen Weyl%
\thanks{Microsoft Research New York City, 641 Avenue of the Americas, New York, NY 10011 and Department of Economics, Yale University: glenweyl@microsoft.com,
http://www.glenweyl.com.%
}}

\date{December 2016}

\maketitle
\thispagestyle{empty}


\begin{abstract}
Investigating potential purchases is often a substantial investment under uncertainty.  Standard market designs, such as simultaneous or English auctions, compound this with uncertainty about the price a bidder will have to pay in order to win.  As a result they tend to confuse the process of search both by leading to wasteful information acquisition on goods that have already found a good purchaser and by discouraging needed investigations of objects, potentially eliminating all gains from trade.   In contrast, we show that the Dutch auction preserves all of its properties from a standard setting without information costs because it guarantees, at the time of information acquisition, a price at which the good can be purchased.  Calibrations to start-up acquisition and timber auctions suggest  that in practice the social losses through poor search coordination in standard formats are an order of magnitude or two larger than the (negligible) inefficiencies arising from ex-ante bidder asymmetries.
\end{abstract}




\onehalfspacing

\section{Introduction}\label{intro}
\setcounter{page}{0}

 \citet{vickrey} famously argued that ``the common or progressive type of auction...provide(s) better chances for optimal allocation than the regressive or `Dutch' auction.''  Yet an important weakness of common ascending or simultaneous auctions is that a bidder faces significant uncertainty about both the price she faces and her chances of winning at any point in the auction process.  By contrast in a Dutch auction at any time the good is still available a participant knows that there is a maximum price for which she can claim the good.  In this paper we show that this property implies an important efficiency benefit of the Dutch format when individuals face costs of information acquisition, as it allows for rational planning about whether to make such an investment in a way that is impossible under standard formats.  As a result, standard formats may eliminate all gains from trade when costly information acquisition is necessary, while a Dutch format will always perform as well with information costs as without them.  In calibrations to start-up acquisition and US timber auctions we find that standard formats lose a few percent of potential welfare from poor search coordination. However even these small losses dwarf those  highlighted by \citeauthor{vickrey} from ex-ante bidder asymmetries in the Dutch auction, which we find to be negligible.

Information acquisition costs play a large role in a variety of auction, assignment and matching markets.  Press reports indicate that due diligence costs for the acquisition by a large technology company of a start-up are typically 20-40\% of the size of a deal and \citet{salz} finds through empirical estimation that search costs represent a similar fraction of the total costs of waste haulage contracts.  Any participant on either side of the academic job market knows how costly information acquisition is in that context.  Conserving on such information costs is thus both an important component of efficiency in these markets and is necessary to ensure an efficient allocation, as it will be inefficient for individuals to expend these costs unless they offer sufficient individual or social benefit.

To ensure such investments are not wasted it is important for a given bidder to be confident there is not already some other bidder who greatly values the object and therefore will end up being assigned it regardless of the outcome of that bidder's investigation.  To analyze this risk, it is useful to adopt the framework of \citet{investment}, who show that any irreversible investment under uncertainty may be viewed as a ``real option''.  In particular, for any bidder there will be a ``strike price'' such that, if she knows she is able to buy the good at exactly this strike price, she will find it just in her interest to investigate the object.  Any time the bidder is guaranteed to be able to buy the object weakly below this price, the bidder will be able to eliminate all risk by selling a call option at this strike price to an external party to fund her information acquisition and committing to give to this external party any profits above this price.  Any strategy that does not ``exercise this option in the money'' will lead to lower utility than this ``covered call position'' as it will expose the bidder to the risk of having wasted the cost of information acquisition.

In \Cref{prelim}, we formally develop this novel interpretation of \citet{weitzman}'s characterization of optimal information acquisition, combining ideas from \citet{weber} and \citet{olszewski-weber}.  This analysis immediately suggests that standard market designs may perform poorly.  In particular, simultaneous auctions of any form, ascending price auctions and a sequential bargaining procedure proposed by \citet{BK} and \citet{sweeting} all  leave bidders exposed, at the time of information acquisition, to the risk that the effective price they will face for purchase may be well above their strike price.  In fact we show in \Cref{flaws} that this exposure may destroy all of the gains from trade in some examples for each of these mechanisms.

Of course efficient procedures in this context do exist based on dynamic versions of the \citeauthor{vickrey}-\citeauthor{clarke}-\citeauthor{groves} procedure \citep{dynamicpivot, atheysegal}.  However, as  \citet{parkesVCG} and \citet{cremer} detail, this mechanism either depends on the designer having detailed information about bidders' value distributions or is {\em extremely} elaborate as it requires every individual to communicate her full private information about the distribution of her values to a central agent in order to determine the externalities each bidder creates on others.  Furthermore, even if the procedure could be simplified, it would suffer critiques analogous to those of \citet{levins} for other dynamic \citeauthor{vickrey} auctions, that bidders have very weak incentives to provide truthful reports about much information and thus may instead predate or collude with their rivals. It thus seems unlikely that such a procedure would be used in practice, especially when  combined with other practical issues facing Vickrey auctions \citep{lovelybutlonely}.

Luckily, though, these fully efficient procedures are not the only ones that protect bidders from exposure and thus permit a rational search procedure.  In fact many commonly-used, ad-hoc procedures in a variety of contexts provide participants in markets with guarantees that protect their investments in information acquisition.  For example, in the academic job market participants are made firm offers before they are asked to invest in determining which school they prefer and in college admission early decision offers an opportunity for students to commit to schools.

Our principal result, which we state and prove in \Cref{main}, is that the Dutch auction that \citeauthor{vickrey} criticized has the benefit of preventing exposure.  Intuitively, a bidder never wishes to investigate her value until the price has dropped to her strike price and by this time she knows she can secure the object at or below her strike price.  She thus is able to fully unload the risk by selling a call option.  Any equilibrium of the Dutch auction is thus welfare-equivalent to an equilibrium of an auction without information acquisition where each individual's value is replaced by the value of her covered call position.  This implies that all the desirable properties of the Dutch auction and its cousin the first-price auction (efficiency when bidders are symmetric, approximate efficiency otherwise, etc.) when there are no information costs  carry over to our setting.\footnote{Note, however, that the first-price auction, unlike the Dutch auction, does not achieve efficiency in our context because it is simultaneous.  This divergence and the tendency of the Dutch auction to encourage more information acquisition and revenue were first observed by \citet{during}.}

Of course, none of this undermines \citeauthor{vickrey}'s concerns that, when individuals are asymmetric, the Dutch auction fails to achieve full efficiency because stronger bidders will tend to shade their bids down by more than weaker bidders.  Therefore in \Cref{calibrations} we consider the magnitudes of the benefits and costs of the Dutch auction compared to alternatives in two quantitative models.  The first is calibrated to match press discussions of the start-up acquisition market.  The second matches the empirical moments recovered by \citet{sweeting} in the US Timber auctions.

In these cases we find that bidder asymmetries have only a negligible impact on the efficiency of the Dutch auction.  In all but the most extreme scenarios, designed deliberately to break this result, the efficiency loss due to asymmetries is less than a percentage point and it is typically an order of magnitude or two smaller than this.  On the other hand the inefficient search coordination created by standard formats typically leads to losses of several percentage points of welfare in typical scenarios and tens of percentage points in some extreme cases.  This shows that in  some practical cases \citeauthor{vickrey}'s reservations about the Dutch procedure may be negligible while the search benefits of the Dutch procedure are significant, though much smaller than suggested by our worst case examples.  Results on revenue, which are less central to our analysis, favor the Dutch procedure much more strongly.

We conduct most of our analysis under a series of stylized assumptions that simplify our exposition and sharpen our message.  In particular we assume that agents play an equilibrium, that there is only a single object for sale, that there is only a single stage of information acquisition, that investigating an object is necessary to claim it and that values are private.  In \Cref{extensions} we show that our core message about the value of descending price auctions over standard formats is robust against substantial (non-nested) relaxations of these assumptions.  We also discuss why we believe descending prices are preferable to a fixed posted price, which can also avoid the exposure problem.

Our paper focuses analytically on the relatively narrow problem of auctions; that is environments where goods have to be assigned to individuals who can freely transfer utility in the form of currency between them.  However in Section \ref{matching} we use the connections drawn in recent years by the ``market design'' literature between auctions and mechanisms for matching to suggest how our insights about the value of descending prices might extend to those settings.  However, such broader implications for market design require significant additional research, especially related to issues of timing (viz. in practice inspections are not instantaneous) that our analysis largely abstracts from.  We highlight the additional analysis we think would be most useful for reaching concrete design insights in our conclusion, \Cref{conclusion}.  Important details and that do not fit in the text appear in appendices following it, as do all proofs not appearing inline and details of our numerical methods.

\newcommand{\allocsub}[1]{\mathbb{A}_{#1}}
\newcommand{\allocsubi}{\allocsub{i}}
\newcommand{\allocsubij}{\allocsub{ij}}
\newcommand{\inspect}[1]{\mathbb{I}_{#1}}
\newcommand{\inspecti}{\inspect{i}}
\newcommand{\inspectij}{\inspect{ij}}

\newcommand{\Inspect}{\inspectij}

\newcommand{\littleo}{\mathit{o}}

\section{Preliminaries}\label{prelim}
In this section we describe the baseline model we employ until \Cref{extensions} and use this to develop the fundamental ideas behind our analysis.

\subsection{Model}\label{model}
A single item is for sale to one of $n$ bidders.
Bidders initially do not know their values, but only some information about them in the form of a type, which also determines a cost that must be paid in order to inspect and discover the value.
Formally, each bidder $i$ initially draws a private type $\theta_i \in \Theta_i$.
These are distributed according to a common prior $\mathcal{P}$ on
$\Theta_1 \times \cdots \times \Theta_n$.

Each bidder $i$ has a family of value distributions $\{ F_{\theta_i} : \theta_i \in \Theta_i\}$
and a cost function $c_i : \Theta_i \to \mathbb{R}_+$.
When $i$'s type is $\theta_i$, $i$'s value $v_i \in \mathbb{R}$ for the item is drawn
according to $F_{\theta_i}$.
These values $v_1,\ldots,v_n$ are independent conditional on the type profile
$(\theta_1,\ldots,\theta_n)$. However, they are not initially observable.
At any time, any bidder $i$ may pay a cost $c_i(\theta_i)$ to instantaneously \emph{inspect} the item, learning her value $v_i$.  Inspections are private and are not observed by other bidders.
We will frequently abuse notation and write $c_i$ in place of $c_i(\theta_i)$, when the
dependence on $\theta_i$ is understood from context.

We use $\inspecti$ as an indicator for individual $i$ inspecting the object, $\allocsubi$ as an indicator for the object being allocated to $i$ and let $(Z)^+ := \max\{Z, 0\}$.  Following \citet{weitzman}, we assume that if $i$ obtains the item, $i$ must pay the inspection cost, \emph{viz.} $\allocsubi = 1 \implies \inspecti = 1$.   We denote the expectation under the measure $F_{\theta_i}$ by $\E_{\theta_i}$.

 All bidders have quasilinear utility, \emph{viz.} $i$'s utility is
 \[ u_i = \allocsubi v_i - \inspecti c_i - t_i \]
where $t_i$ is the net payment made by $i$. We assume bidders are risk-neutral expected value maximizers.

\subsection{Real option}\label{option}

Before considering the competitive interaction among bidders, we begin by characterizing the structure of each bidder's individual problem in terms of real options, an analogy that simplifies the exposition of our results.  While this problem is a special case of \citet{weitzman}'s ``Pandora's box'' problem, our characterization applies only in the special case we focus on and to our knowledge our characterization, and particular the analogy to real options, is novel.\footnote{In particular, \citeauthor{weitzman} allows for investigation to take place over time in an explicitly dynamic model with discounting, while we assume investigation is instantaneous and thus have no explicit dynamics beyond the sequencing of actions.}

The decision by a bidder to invest in learning her value is irreversible and has an uncertain payoff.  As such it is an ``investment under uncertainty'' in the sense of \citet{investment}.  It may thus be viewed as a real option.  It will be worth  exercising this option as long as the (opportunity) price  that would have to be paid for the object is sufficiently low. A natural way to quantify this is to define the highest price such that it would still be advantageous to investigate the object if the price were known with certainty.  In particular, the bidder could finance her full inspection cost by selling a ``call option'' on the value by guaranteeing that if, upon investigation, she finds that the value of the object is above this level she will give any profits above this ``strike price'' to the owner of the option.

\begin{definition}\label{strike} The \emph{strike price} of bidder $i$ given $\theta_i$ is denoted $\sigma_i$ and is the unique solution to
 $$ {\E}_{\theta_i}\left[ (v_i - \sigma_i)^+ \right] = c_i . $$
\end{definition}

This strike price is a special case of the (non-nested) indices proposed by \citeauthor{weitzman} and \citet{gittins} for more general optimal information acquisition problems.  Another critical quantity often used in the broader analysis of information acquisition problems, related to the ``prevailing charge'' introduced by  \citet{weber} in his elegant proof of the Gittins Index Theorem, can be interpreted in terms of option pricing in our setting.  In particular, if a bidder does decide to sell a call option on the object at its strike price and use the proceeds to fund her inspection, she is left holding a position with no cost of information acquisition, but a limited upside potential, with the rest of the upside potential being held by the owner of the call option.  This position is known in options trading as a ``covered call position''.  The value of this position is the ``covered call value''.

\begin{definition} The \emph{covered call value} of bidder $i$ given realizations of $\theta_i$ and $v_i$ is $\mbox{$\kappa_i := \min\{\sigma_i, v_i\}$.}$
\end{definition}

In order to sell a call option on the object at the bidder's strike price, the bidder must {\em commit} that any time the value ends up strictly above this strike price (if the option is ``in the money'') she will allow the option holder to exercise her option at the strike price.  Only then will the purchaser be willing to pay the full inspection cost and thus the bidder will be able to avoid all risk from inspection, instead simply earning the covered call value.  The following lemma proves that pursuing such a commitment can only benefit the bidder; any other policy will cause her to earn weakly less than the covered call value in expectation and strictly less if she has positive probability of failing to exercise in the money.

\begin{lem} \label{lemma:upper-bound-policy}
Given any $\{\theta_i\}$, for any procedure and any $i$, $\E\left[\allocsubi v_i - \inspecti c_i \right] \leq \E \allocsubi \kappa_i$. Furthermore, this holds with equality if and only if bidder $i$ always ``exercises in the money'', \emph{viz.} takes the item if she inspects and $v_i > \sigma_i$.
\end{lem}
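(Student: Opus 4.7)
The key algebraic identity is
\[
v_i \;=\; \min\{\sigma_i, v_i\} + (v_i - \sigma_i)^+ \;=\; \kappa_i + (v_i - \sigma_i)^+,
\]
so multiplying by $\allocsubi$ and rearranging gives
\[
\allocsubi v_i - \inspecti c_i \;=\; \allocsubi \kappa_i \;+\; \bigl[\allocsubi (v_i-\sigma_i)^+ - \inspecti c_i\bigr].
\]
The plan is to show that the bracketed term has nonpositive expectation, with equality iff the stated ``exercise in the money'' condition holds.

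For the inequality, I would first use the assumption $\allocsubi \leq \inspecti$ (allocation forces inspection) together with nonnegativity of $(v_i - \sigma_i)^+$ to obtain the pointwise bound
\[
\allocsubi (v_i - \sigma_i)^+ \;\leq\; \inspecti (v_i - \sigma_i)^+.
\]
Then I would take expectations and exploit the fact that $\inspecti$ is a decision made before observing $v_i$: conditional on $\theta_i$ (and on whatever other public history the procedure may rely on, which is independent of $v_i$ given $\theta_i$), the inspection indicator $\inspecti$ is independent of $v_i$. Hence
\[
\E\bigl[\inspecti (v_i-\sigma_i)^+ \,\big|\, \theta_i\bigr] \;=\; \E[\inspecti \mid \theta_i]\cdot \E_{\theta_i}\bigl[(v_i-\sigma_i)^+\bigr] \;=\; \E[\inspecti \mid \theta_i]\cdot c_i,
\]
where the last equality is the definition of $\sigma_i$ in \Cref{strike}. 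Taking expectations over $\theta_i$ yields $\E[\inspecti (v_i-\sigma_i)^+] = \E[\inspecti c_i]$, which combined with the pointwise bound gives $\E[\allocsubi(v_i-\sigma_i)^+ - \inspecti c_i] \leq 0$, and hence the desired inequality.

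For the equality case, note that the only slack in the argument is in the pointwise bound $\allocsubi(v_i-\sigma_i)^+ \leq \inspecti(v_i-\sigma_i)^+$. Equality of expectations requires equality almost surely, which (since $\allocsubi \leq \inspecti$) is equivalent to: whenever $\inspecti=1$ and $v_i>\sigma_i$, we also have $\allocsubi=1$. That is precisely the ``exercise in the money'' condition.

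The main subtlety I anticipate is the conditional independence step: $\inspecti$ is an action of bidder $i$ within a dynamic procedure and may depend on public history (other bidders' moves, clock state, etc.), so I would want to be explicit that such public history is $v_i$-independent given $\theta_i$ (inspections are private by assumption, and $v_i$ is revealed only upon inspection), so that the tower property and \Cref{strike} combine cleanly. Aside from this measurability bookkeeping, the rest is the identity plus a one-line comparison.
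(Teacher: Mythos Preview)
Your proposal is correct and is essentially the same argument as the paper's: both use the independence of $\inspecti$ and $v_i$ (given $\theta_i$) together with the defining equation $\E_{\theta_i}[(v_i-\sigma_i)^+]=c_i$ to swap $\inspecti c_i$ for $\inspecti(v_i-\sigma_i)^+$ in expectation, then apply $\allocsubi\le\inspecti$ pointwise and the identity $v_i-(v_i-\sigma_i)^+=\kappa_i$. Your version simply front-loads the identity $v_i=\kappa_i+(v_i-\sigma_i)^+$ rather than applying it at the end, and you are a bit more explicit about the conditional-independence bookkeeping, but the substance and the equality analysis are identical.
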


\begin{proof}
Fix all $\{\theta_i\}$.
Using the definition of the option strike price, substitute for $c_i$ and use the independence of $v_i$ and $\inspecti$:
\begin{equation} \label{eq:lubp}
 \E \left[ \allocsubi v_i - \inspecti c_i \right] = \E\left[ \allocsubi v_i - \inspecti (v_i - \sigma_i)^+ \right]  \leq \E\left[ \allocsubi \left( v_i - (v_i - \sigma_i)^+ \right) \right]  = \E \left[ \allocsubi \kappa_i \right] .
\end{equation}
The inequality follows because $\allocsubi \leq \inspecti$ (one must inspect in order to be allocated).
Furthermore, by subtracting the left and right sides of the inequality occurring in the middle of
line~\eqref{eq:lubp}, we see that the two sides are equal if and only if
$\E \left[ (\inspecti - \allocsubi) (v_i - \sigma_i)^+ \right] = 0$, which
happens if and only if there is zero probability that $\inspecti = 1, \allocsubi = 0,$
and $v_i > \sigma_i$. This is precisely what it means to say that a procedure always
exercises in the money.
\end{proof}

Thus any deviation from selling the option and committing to exercise in the money whenever strictly profitable will cause strict losses in utility relative to the covered call value.  On the other hand the best welfare that any central planner  can possibly hope to achieve, even if that planner has access to all ex-ante (non-costly) information and can force obedience by bidders, is the highest covered call value among bidders.

\begin{corollary} \label{cor:upper-bound-opt}
The welfare of the optimal centralized procedure is at most $\E\left[ \max_i \kappa_i\right]$.
\end{corollary}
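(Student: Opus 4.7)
The plan is to apply Lemma~1 bidder-by-bidder and then combine the resulting inequalities using the single-item feasibility constraint that at most one $\allocsubi$ can equal~$1$. Write welfare as the sum over bidders of their real contributions (payments cancel between bidders and seller), so that for any centralized procedure
\[
W \;=\; \sum_i \E\!\left[\allocsubi v_i - \inspecti c_i\right].
\]

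First I would invoke Lemma~1 term by term to replace each bidder's contribution by its covered-call upper bound, obtaining
\[
W \;\le\; \sum_i \E\!\left[\allocsubi \kappa_i\right] \;=\; \E\!\left[\sum_i \allocsubi \kappa_i\right].
\]
Then I would use the single-item feasibility constraint $\sum_i \allocsubi \le 1$, so that pointwise (for every realization of types and values) the random variable $\sum_i \allocsubi \kappa_i$ is a sub-convex combination of the $\kappa_i$'s and is therefore at most $\max_i \kappa_i$ (this step is essentially a ``one-line'' bound: whichever bidder, if any, is allocated, the contribution cannot exceed the largest $\kappa_i$). Taking expectations then yields $W \le \E[\max_i \kappa_i]$, which is the claim.

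The argument involves essentially no analytic difficulty because Lemma~1 has already done the heavy lifting; the only substantive move is to relate the per-bidder bound to a single-item planner bound, which is immediate from $\sum_i \allocsubi \le 1$. The one subtlety worth a parenthetical comment is the case where some realizations have $\max_i \kappa_i < 0$: here a planner would strictly prefer to leave the item unsold, and the inequality $\sum_i \allocsubi \kappa_i \le \max_i \kappa_i$ still holds (with slack) because the left side is then at most $0$. So no separate case analysis is required, and the corollary follows in a few lines.
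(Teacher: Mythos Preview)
Your argument is correct and follows exactly the same route as the paper: apply Lemma~1 bidder-by-bidder to get $\E[\sum_i \allocsubi v_i - \inspecti c_i] \le \E[\sum_i \allocsubi \kappa_i]$, then use the single-item constraint $\sum_i \allocsubi \le 1$ to bound by $\E[\max_i \kappa_i]$.

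One small correction to your parenthetical: when $\max_i \kappa_i < 0$ and the planner leaves the item unsold, the left side $\sum_i \allocsubi \kappa_i$ equals $0$, which is \emph{greater} than $\max_i \kappa_i$, so the pointwise inequality actually fails in that event rather than holding ``with slack.'' The paper's own proof glosses over this edge case too (implicitly assuming $\max_i \kappa_i \ge 0$ almost surely, or that the item is always allocated; the multistage version in the appendix states the bound with $(\kappa_i)^+$), so this is not a defect you introduced --- but your explanation of why the case is harmless is the wrong way around.
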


\begin{proof}
The welfare generated by bidder $i$ is exactly $\allocsubi v_i - \inspecti c_i$, so total welfare is
\begin{align*}
 \E \left[ \sum_i \allocsubi v_i - \inspecti c_i \right] \leq \E \left[ \sum_i \allocsubi \kappa_i \right] \leq \E \left[ \max_i \kappa_i\right] .
\end{align*}
\end{proof}

In fact, this welfare is achievable by a simple procedure, due to \citeauthor{weitzman}.

\begin{thm} \label{thm:opt-is-descending}
The first-best procedure for a planner who knows all $\{\theta_i\}$s causes bidders to inspect in order of decreasing $\sigma_i$, stopping when the largest observed value $v_i$ exceeds all remaining $\sigma_{-i}$ and assigning the item to $i$.
\end{thm}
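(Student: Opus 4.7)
The plan is to show the procedure achieves the upper bound $\E[\max_i \kappa_i]$ established in Corollary~\ref{cor:upper-bound-opt}. By \Cref{lemma:upper-bound-policy}, total welfare equals $\E\bigl[\sum_i \allocsubi \kappa_i\bigr]$ whenever the procedure exercises in the money, and since exactly one bidder is allocated, this further equals $\E[\kappa_{i^*}]$ where $i^*$ denotes the assigned bidder. It therefore suffices to verify two facts: (a) the procedure exercises in the money, and (b) $\kappa_{i^*} = \max_i \kappa_i$ pointwise. Relabel bidders so $\sigma_1 \geq \sigma_2 \geq \cdots \geq \sigma_n$, matching the order of inspection.

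For (a), observe that the stopping rule ``max observed value exceeds all remaining strike prices'' is equivalent, under the decreasing ordering, to the rule ``after inspecting bidder $k$, continue iff $\max(v_1,\ldots,v_k) < \sigma_{k+1}$.'' Now suppose bidder $k$ is inspected and $v_k > \sigma_k$. Since we continued up through step $k-1$, we have $\max(v_1,\ldots,v_{k-1}) < \sigma_k < v_k$, so $v_k$ is the new maximum; and since $\sigma_k \geq \sigma_{k+1} \geq \cdots$, we have $v_k > \sigma_j$ for every remaining $j$, so the procedure halts and assigns to $k$. Thus whenever a bidder inspects and finds $v_i > \sigma_i$, she is allocated the item.

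For (b), let $k^*$ be the stopping step and $i^*$ the bidder with the largest observed value. I would split into two cases. If $v_{i^*} \geq \sigma_{i^*}$, then by the argument in the previous paragraph $i^*$ is the last inspected (so $i^* = k^*$), and $\kappa_{i^*} = \sigma_{i^*}$; for any bidder $j$ not inspected, $\sigma_j \leq \sigma_{i^*}$ so $\kappa_j \leq \sigma_j \leq \kappa_{i^*}$; for any $j < i^*$ inspected, iterating the continuation condition along steps $j, j+1, \ldots, i^*-1$ yields $v_j < \sigma_{i^*}$, so $\kappa_j \leq v_j < \kappa_{i^*}$. If instead $v_{i^*} < \sigma_{i^*}$, then $\kappa_{i^*} = v_{i^*}$; for any inspected $j$, $\kappa_j \leq v_j \leq v_{i^*} = \kappa_{i^*}$, and for any uninspected $j$ the stopping rule gives $\sigma_j \leq v_{i^*}$, hence $\kappa_j \leq \sigma_j \leq \kappa_{i^*}$. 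Either way, $\kappa_{i^*} = \max_i \kappa_i$, completing the proof.

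I expect the main obstacle to be the case analysis in step (b), and specifically the bookkeeping required to show $v_j < \sigma_{i^*}$ for bidders $j$ inspected strictly before $i^*$ in the first case. The delicate point is that the single-step inequality $v_j < \sigma_{j+1}$ only bounds $v_j$ by a strike price weakly larger than $\sigma_{i^*}$; one must chain together the continuation inequalities across all intermediate steps up to $i^* - 1$ to obtain the tighter bound $v_j < \sigma_{i^*}$. Once this telescoping is in place, the rest of the argument is essentially bookkeeping.
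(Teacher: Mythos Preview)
Your proposal is correct and follows exactly the paper's approach, which simply asserts facts (a) and (b) without detailed verification and then invokes \Cref{lemma:upper-bound-policy} and \Cref{cor:upper-bound-opt}; you have just filled in the details the paper omits. One simplification: the ``telescoping'' you anticipate in (b) is unnecessary, since the single continuation inequality at step $i^*-1$, namely $\max(v_1,\ldots,v_{i^*-1}) < \sigma_{i^*}$, already bounds \emph{every} earlier $v_j$ at once.
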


\begin{proof}[Proof of Theorem \ref{thm:opt-is-descending}]
The descending-inspection procedure always exercises in the money, implying by Lemma \ref{lemma:upper-bound-policy} that its welfare is equal to $\E \left[\sum_i \allocsubi \kappa_i\right]$.
But it also always allocates to the bidder with the highest covered call value, meaning its welfare is $\E \left[ \max_i \kappa_i \right]$.
This is optimal by Corollary \ref{cor:upper-bound-opt}.
\end{proof}

In the next section we will use the fact that other procedures will tend to make consistently exercising in the money impossible to produce examples where these procedures destroy all gains from trade.


\section{Standard Auctions May Hopelessly Confuse Search}\label{negative}

In this section we use show that standard simple auctions may be unboundedly inefficient in this context and  discuss why more elaborate efficient procedures are unlikely to be practically relevant.

\subsection{Failure of standard procedures}\label{flaws}

The most widely studied auction formats in models with information acquisition \citep{VCGinfo}  are simultaneous in the sense that all individuals must decide simultaneously and prior to communication with one another whether to inspect.  However simultaneous procedures make exercising in the money impossible because  individuals decide whether to inspect the object prior to any assurances about price.  The following example shows a case when this eliminates all gains from trade in the sense of \citet{mailathpostelwaite}: the best possible simultaneous mechanism  achieves an unboundedly small fraction of the first-best welfare.

\begin{example}[Sequencing and the inefficiency of simultaneous procedures]\label{simultaneous}
Every individual value  is drawn iid as either $M>1$ with probability $\frac{1}{M}$ or $0$ with probability $1-\frac{1}{M}$ and for all $i, c_i=c$ for a constant $0 < c < 1$ to be determined.  When $M$ is large, a success is a ``black swan'' \citep{taleb}.  Consider the limit as $M\rightarrow \infty$ and $\frac{n}{M}\rightarrow \infty$ so the value of a black swan is large but rare, and yet there are enough opportunities to find one that the probability of it existing in the entire population approaches unity.

The optimal policy then involves ordering individuals in any manner and having them sequentially query their values until a black swan is discovered and then the assignment made.
In this limit, the gross utility achieved approaches $M$ (a black swan is always found) while the expected number of inspections is $M$ (the mean of the geometric distribution with probability $1/M$), so the net welfare approaches $M(1-c)$.

 \citeauthor{VCGinfo} prove that, among all simultaneous mechanisms, a simple second-price auction leads to greatest efficiency in this setting.
Each bidder has only two strategies that weakly dominate over all others: either bidding expected value less inspection cost without inspecting; or paying the inspection cost and entering a bid equal to value. There is a fatal tradeoff: If too many inspect, then the large total inspection cost swamps all gains from efficient allocation.
But if too few do, then welfare suffers because there is a small chance that any bidder will find a black swan.
We need to offer bidders the opportunity to inspect only if no black swan has yet been found, but simultaneity makes this impossible.
\end{example}

Intuitively bidders face a high risk they will not be able to exercise in the money and thus their greatest possible payoff is greatly reduced.  We formalize this in Proposition \ref{prop:simultaneous-flaw}. In the interests of space, all proofs in this subsection have been moved to Appendix \ref{lower}.

\begin{prop} \label{prop:simultaneous-flaw}
In the example above  for $c$ appropriately close to $1$, the welfare of any set of bidder strategies is an unboundedly small fraction of the optimal welfare.
\end{prop}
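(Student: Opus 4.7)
The plan is to (i) note that the first-best net welfare in the example is $M(1-c)+o(M)$, (ii) apply the cited result of \citet{VCGinfo} so it suffices to upper-bound welfare in the second-price auction, (iii) derive a welfare upper bound in terms of bidders' marginal inspection probabilities, and (iv) show that the ratio of this bound to the first-best vanishes as $c\to 1$. Step (i) is immediate: the expected number of inspections in the first-best procedure is the mean $M$ of the geometric distribution with success probability $1/M$, and the gross value realized approaches $M$, so the first-best net welfare is $M - cM + o(M) = M(1-c)+o(M)$.

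For (ii)--(iii): fix any bid profile in the second-price auction, and let $p_i$ denote bidder $i$'s marginal probability of inspecting. Because values are binary, at the moment of the auction each bidder belongs to one of three classes---$M$-inspectors, non-inspectors (with ex-ante expected value $1$), and $0$-inspectors---and welfare is maximized when the allocation prefers these classes in decreasing order of expected value. This is achieved, for example, when inspectors bid truthfully and non-inspectors bid $1$; any other bid profile can only shift the allocation toward a lower-value class and so weakly reduces welfare. Hence
\[
W \;\le\; M\Bigl[1 - \prod_i\bigl(1 - \tfrac{p_i}{M}\bigr)\Bigr] + 1 - c\sum_i p_i.
\]
Using $\log(1-x) = -x + O(x^2)$ with $p_i \in [0,1]$, one checks that $\prod_i(1-p_i/M) = e^{-\lambda}+o(1)$ uniformly in $(p_i)$ as $M\to\infty$, where $\lambda := (1/M)\sum_i p_i$ stays bounded. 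Hence $W/M \le 1 - e^{-\lambda} - c\lambda + o(1)$, and maximizing over $\lambda\ge 0$ gives $\lambda^{*}=\ln(1/c)$ with maximum value $1-c(1+\ln(1/c))$.

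For (iv): dividing by the first-best welfare $1-c$ gives the limiting ratio $R(c) := [1-c(1+\ln(1/c))]/(1-c)$. Taylor-expanding $\ln(1/c)$ around $c=1$ yields $R(c) = (1-c)/2 + O((1-c)^2)$, which vanishes as $c\to 1$. Thus for any $\epsilon>0$, picking $c$ close enough to $1$ that $R(c)<\epsilon/2$ and then taking $M, n/M \to \infty$ forces the welfare ratio below $\epsilon$. The most delicate step is (iii)'s reduction to the welfare-optimal bid profile: one must check that among all bid strategies (not just equilibrium or undominated ones), welfare is maximized by ranking the three classes according to their expected values $M>1>0$. This follows because the allocation depends on bids only through their induced ordering, and ranking by expected value is always welfare-maximizing; the remaining $\lambda$-optimization and Taylor expansion are elementary calculus.
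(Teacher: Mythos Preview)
Your argument is correct and follows the same route as the paper: upper-bound welfare by $M\Pr[\text{some inspector draws }M]$ plus a non-inspector term minus $c\sum_i p_i$, optimize over the total inspection intensity $\lambda$, and obtain the identical limiting ratio $[1-c-c\ln(1/c)]/(1-c)$. One small point: your assertion that ``$\lambda$ stays bounded'' is not automatic (since $\lambda\le n/M\to\infty$), but is easily justified by noting that profiles with $\lambda>1/c$ already yield negative welfare, so the optimization may be restricted to a compact interval on which your $o(1)$ estimate is uniform; the paper sidesteps this by using the exact inequality $1-y\ge e^{-y/(1-y)}$ in place of your asymptotic expansion.
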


While we believe this example is the first of unbounded inefficiency of simultaneous search, its weaknesses in a search context are well-appreciated. In particular, \citet{jehiel} show that in a model that is a special case of ours that (trembling-hand perfect equilibria of) English auctions always perform weakly better than simultaneous second-price auctions and sometimes strictly better.   Indeed, our black-swan example has an equilibrium with optimal welfare where at the beginning of the auction, bidders sequentially inspect in an arbitrary (but fixed, common-knowledge) order; if a bidder finds the high value, they stay in and all others drop.  Unfortunately, such auctions still make it impossible to ensure exercise in the money in equilibrium, because at the point when a bidder may have to inspect she still has no way to place an upper bound on the price for which she may be able to acquire the object.

\begin{example}[English auctions do not allow exercise in the money]\label{ascdending}
Consider a slight modification of the previous example. With a medium chance (say, $50\%$), the bidder's value is neither $0$ nor extremely high, but is instead some small yet significant value $L$.
For good welfare, it is still necessary that a black swan is almost always found.  But now, bidders cannot credibly signal that they have found the black swan until the price is already too large (higher than $L$). If a bidder inspects at an early time and decides not to drop out, this most likely only indicates that their value is $L$.  Therefore, it is not until the price ascends above $L$ that a sequential, high-welfare strategy set has a hope of succeeding.  However, any bidder who is left to inspect last (or subset of bidders who simultaneously inspect last) has a negative expected utility, as the price now exceeds their expected gain.
\end{example}

\begin{prop} \label{prop:ascending-flaw}
 There are instances where any equilibrium of the English auction has arbitrarily low welfare as a fraction of the first-best.
\end{prop}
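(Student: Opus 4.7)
The plan is to construct a concrete instance of Example~\ref{ascdending} and bound the welfare of any English auction equilibrium using Lemma~\ref{lemma:upper-bound-policy}. Take values in $\{0, L, M\}$ with probabilities $\{1/2 - 1/M,\, 1/2,\, 1/M\}$, fix $L > 0$ a small constant, set inspection cost $c = 1 - 1/\sqrt{M}$, and let $n/M \to \infty$. A direct computation using Definition~\ref{strike} gives strike price $\sigma = M(1-c) = \sqrt{M}$, so the covered call value $\kappa_i$ equals $v_i$ when $v_i \in \{0, L\}$ and $\sqrt{M}$ when $v_i = M$. Since a black swan is present almost surely in the limit, Theorem~\ref{thm:opt-is-descending} yields optimal welfare $\E[\max_i \kappa_i] \to \sqrt{M}$.

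Applying Lemma~\ref{lemma:upper-bound-policy} to any equilibrium of the English auction bounds its welfare above by $\E[\kappa_{\text{winner}}] \leq L + \sqrt{M} \cdot \Pr(v_{\text{winner}} = M)$. Hence it suffices to show that the probability of allocating to a black-swan bidder is $o(1)$. I would establish this by proving that the expected number of inspections in any equilibrium is $O(1)$, independent of $M$ and $n$; a union bound over the (independent, rate-$1/M$) inspection outcomes then yields the desired $o(1)$ bound on $\Pr(v_{\text{winner}} = M)$.

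To bound inspections, I would analyze the marginal inspector's incentive, partitioning inspections by the clock price $p$ at the moment of inspection. In the regime $p \leq L$, staying in the auction after inspecting does not distinguish $v_i = M$ from $v_i = L$, so an inspector cannot credibly signal a black-swan discovery before the price rises above $L$; thus inspections at $p \leq L$ occur under essentially the same informational conditions as the simultaneous case, and an argument parallel to Proposition~\ref{prop:simultaneous-flaw} caps their expected number. In the regime $p > L$, any already-inspected bidder still active must have $v_i = M$, so a new inspector can win only by also finding $v_i = M$; her expected utility is then approximately $(1/M)(M - p) - c$, which is nonnegative only when $p \leq M(1-c) = \sqrt{M}$. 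Moreover, inspections in this regime must be at least weakly sequential (else they collapse to the simultaneous case), with the clock crossing a new threshold between each one, so the price cap limits the expected inspections in this regime to $O(1)$ as well.

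The main obstacle will be treating fully general equilibrium strategies in the continuous-time English auction, including mixed and history-dependent strategies, and formalizing the intuition that no correlation device available through the auction clock can overcome the informational pooling of $L$ and $M$ types at prices $p \leq L$. Rigorously excluding clever separating equilibria that rely on delicate timing will likely require a careful Bayesian accounting of the inference opponents can draw from observed drop-out and continuation patterns, together with a refinement ruling out implausible out-of-equilibrium beliefs.
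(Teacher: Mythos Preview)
Your proposal has a real gap: the central claim that the expected number of inspections is $O(1)$ is neither established by your arguments nor likely to be true. In the regime $p>L$, the price cap $p\le\sqrt{M}$ constrains only \emph{where} inspections may occur, not how many; since inspections are instantaneous, nothing in your argument prevents $\Theta(M)$ sequential inspections from taking place in the interval $(L,\sqrt{M}]$, each bidder observing the previous one drop before inspecting in turn. In the regime $p\le L$, you invoke ``an argument parallel to Proposition~\ref{prop:simultaneous-flaw},'' but that proposition bounds \emph{welfare} for arbitrary inspection counts, not the count itself; indeed, with your parameters the symmetric mixed equilibrium at $p=0$ has roughly $\sqrt{M}$ inspections, not $O(1)$.

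The idea you are missing is an unraveling argument for the $p>L$ regime. The paper shows that in any equilibrium \emph{no} bidder inspects at a price strictly above $L$, by backward induction on the number of uninspected bidders remaining: whoever is left last at a price $t>L$ would obtain the item at price $t$ with expected net value $\E v-c-t<0$ (the paper's parametrization sets $c=\E v-L$ precisely so this threshold is $L$), and this negative-utility conclusion propagates backward since waiting is weakly dominated. With all inspections forced to occur at $p\le L$, the paper then bounds welfare directly---not the inspection count---by exploiting that at $p\le L$ the events $v_i=L$ and $v_i=M$ are observationally pooled, so conditional on an inspector staying in, the probability of a black swan is still only $p/(1/2+p)$ independent of the history. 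Optimizing the resulting welfare expression over the number of inspections yields the vanishing ratio. Your route via $\Pr(v_{\text{winner}}=M)=o(1)$ could be made to work, but only after you first obtain the ``no inspection above $L$'' lemma, and even then you would need the weaker bound $x=o(M)$ rather than $x=O(1)$.
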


Again, to our knowledge, this is the first example in the literature of unbounded inefficiency of an English auction based on information acquisition.  However, again the flaws of such procedures in related settings were pointed out by \citet{BK}.  They, and \citet{sweeting}, argue with numerical examples that an explicitly sequential  bargaining procedure may produce  higher welfare.  In the Bulow-Klemperer-Roberts-Sweeting (BKRS henceforth) procedure, potential buyers are approached sequentially.  There is (potentially) a reserve price and a buyer may become an incumbent by placing any bid above this reserve.  If the buyer chooses not to become the incumbent, the next buyer is offered that chance.  Once a buyer becomes the incumbent, the next buyer is approached and may become the incumbent if she beats the incumbent in a ``knockout'' round in which either active buyer can submit any bid higher than the current high bid.  This proceeds until one buyer drops out and a new incumbent is thus established; the dropping out buyer can never re-enter.  An incumbent can, after becoming such, raise her bid if she wishes.

This procedure always leads to efficiency in the examples we considered above, as they allow for appropriate sequencing.  However this is only the case because all individuals are symmetric in these examples prior to inspection.  If individuals have private information, because this bargaining procedure does not allow for the private information to be revealed, again sequencing may fail.  In particular, the threat of an individual who has already entered then bidding the price up undermines the ability of a future bidder to exercise in the money in the same way this may occur in a standard English auction.

\begin{example}[Random sequencing may be devastating]
Suppose that potential buyers are, unobservably, either risky ($R$) or safe ($S$).  Any individual is risky with probability $\pi \ll 1$ and safe with probability $1-\pi$.  Safe individuals have value $v_S$ with probability $1-p_S$ and value $-\infty$ with probability $p_S\ll 1$.\footnote{If our assumption that the individual must query her value in order to take the object is enforced, then it is equivalent for safe to have value $v_S$ with probability one. We introduce the extra punishment for a bad match just to make the example more plausible.}  The cost of determining her value is $c_S=v_s-\epsilon$, where $\epsilon>r$, the reserve price set by the seller.  Risky individual have probability $p_R\ll1$ of having value $v_R$ and probability $1-p_R$ of having value $0$.  The cost of learning her value is $c_R=\frac{v_R p_R}{2}>0$.  We assume  $\epsilon \ll \frac{v_R}{2}<v_S$.

In this example, unbounded inefficiency occurs because it is necessary for many risky individuals to investigate their values to achieve the optimum.  However it is more likely that at least one $S$ individual first is offered a chance and enters the auction.  This is enough to deter the $R$ types from investigating, as they fear a bidding war with an already-entered $S$ type will block their ability to make their investigation profitable ex-ante.
\end{example}

\begin{prop} \label{prop:sequential-flaw}
 There are instances where any equilibrium of the BKRS mechanism has unboundedly low welfare as a fraction of the first-best.
\end{prop}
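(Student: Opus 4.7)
The plan is to take the limit $\epsilon, \pi, p_R, p_S \to 0$ with the number of bidders $n \to \infty$ and $n \pi p_R \to \infty$, and to show that first-best welfare approaches $v_R/2$ while every BKRS equilibrium yields welfare approaching zero. First I would apply \Cref{strike} and \Cref{thm:opt-is-descending}. Solving the strike-price equation gives $\sigma_R = v_R/2$ and $\sigma_S = (\epsilon - p_S v_S)/(1-p_S) \to \epsilon$, and the covered-call values are $\kappa_R = v_R/2$ with probability $p_R$ (else zero) and $\kappa_S \to \epsilon$ whenever the safe bidder's value is $v_S$. Because $n\pi p_R \to \infty$, at least one risky bidder almost surely draws $v_R$, so first-best welfare $\E[\max_i \kappa_i] \to v_R/2$.

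Next I would characterize any BKRS equilibrium by backward induction on the sequential game. Step (a): once an incumbent has been established, no challenger profitably inspects. Case analysis using $v_R/2 < v_S$ shows a risky challenger's expected gain $p_R(v_R - v^*)^+ - c_R$ is strictly negative for both possible incumbent values $v^* \in \{v_S, v_R\}$, because $(v_R - v_S)^+ < v_R/2$ when $v^* = v_S$ and the gain is $-c_R$ when $v^* = v_R$; similarly a safe challenger's expected gain $(1-p_S)(v_S - v^*)^+ - c_S$ is strictly negative, equaling $-c_S$ when $v^* = v_S$ and being bounded above by $\epsilon - v_R < 0$ when $v^* = v_R$ (using $v_R \gg \epsilon$). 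Blind challenges give at most zero net utility and cannot overturn an informed incumbent, so the incumbent is never displaced. Step (b): a bidder approached with no prior incumbent therefore knows that becoming incumbent on a favorable draw will never be contested, so inspecting yields expected profit $\epsilon - r > 0$ for S and $p_R(v_R/2 - r) > 0$ for R. Blind incumbency is strictly dominated for S (the $-\infty$ realization makes its ex-ante value infinitely negative) and weakly dominated for R (they would be displaced by an S challenger because, in that subgame, $c_S$ is more than covered by the informational advantage), so inspection is the unique positive-utility best response.

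From (a)--(b) the winner in any equilibrium is the first bidder in the seller's type-independent ordering whose inspection yields a favorable realization, so welfare conditional on a safe winner is at most $v_S - c_S = \epsilon$ and conditional on a risky winner is at most $v_R - c_R = v_R/2$. The probability that this first favorable draw is risky equals $\pi p_R / [(1-\pi)(1-p_S) + \pi p_R]$, which vanishes in the limit, so expected BKRS welfare is bounded above by
\[
\epsilon + \tfrac{1}{2} v_R \cdot \frac{\pi p_R}{(1-\pi)(1-p_S) + \pi p_R} \longrightarrow 0,
\]
while first-best tends to $v_R/2 > 0$, giving a vanishing ratio. The main obstacle is the second claim in step (b): ruling out ``abstention'' equilibria in which safe bidders pass strategically, hoping a later risky bidder will inspect and find the black swan. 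The key point is that step (a) guarantees any individual S who inspects and becomes incumbent secures the full $\epsilon - r$ profit regardless of subsequent play, so abstention is strictly suboptimal at the individual level and cannot be sustained in equilibrium.
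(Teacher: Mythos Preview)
Your proof is correct and follows essentially the same approach as the paper: the first approached bidder is safe with probability $1-\pi$, inspects (earning positive surplus $\approx \epsilon - r$), becomes incumbent, and then the threat of the knockout price rising to $v_S > v_R/2$ deters all subsequent entry, trapping welfare near $\epsilon$ while first-best is $v_R/2$. Your version is in fact more careful than the paper's---you compute strike prices explicitly via the covered-call framework, do a full case analysis over challenger/incumbent type pairs, and address blind incumbency and strategic abstention, whereas the paper's proof is a brief paragraph that asserts the key deterrence step informally; the only soft spot is your ``weakly dominated for R'' justification in step (b), which is cleaner argued directly (inspecting yields $p_R(v_R/2 - r) > 0$ while blind incumbency yields at most $0$) rather than via the displacement-by-$S$ detour.
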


Interestingly in this case the simultaneous or English auction  performs well.\footnote{In both of those mechanisms there is a mixed-strategy symmetric equilibrium where (for large $n$) the total probability of a black swan being found is roughly $\frac{1}{2}$.  In such an equilibrium only $R$ types investigate their values.  Because the expected surplus of an $R$ type from investigating and winning with probability one is $\frac{v_R}{2}-r$, $R$ types are still willing to investigate for probabilities of facing competition close to $\frac{1}{2}$.}  Thus while BKRS's argument that the sequential procedure is more efficient than the simultaneous one is true in some examples, the reverse holds in others.


\subsection{Impracticality of optimal mechanism}\label{VCGfail}

We must therefore wonder if more sophisticated auctions may outperform these simple formats. If bidder types (costs and value distributions) are known to the auctioneer and information acquisition is contractible, \citet{cremer} describe a procedure where agents are, in sequence, called upon to inspect and then bid in a second-price auction against a reservation price equal to the highest remaining strike price.  Obviously this relies heavily on the informational and observability assumptions, which are violated in most cases we are interested in.  The only mechanisms that are known to achieve efficiency more broadly are dynamic versions of the \citeauthor{vickrey}-\citeauthor{clarke}-\citeauthor{groves} (henceforth VCG) mechanism described in a special case of our environment by \citet{parkesVCG} and more generally by \citet{dynamicpivot} and \citet{atheysegal}.

This mechanism has all individuals report their costs and value distributions into a central mechanism designer who tells them to inspect in appropriate order and asks them to report  their values upon learning them.  The planner implements the \citeauthor{weitzman} procedure and every individual pays the expected amount, conditional on all information realized at the point of their report, by which their participation in the mechanism reduced the aggregate expected utility of all other bidders.  While efficient in the weakly-dominant truth-telling equilibrium, this mechanism seems unlikely to us to be used in practice for two primary reasons that motivate our interest in the simpler and (see below) approximately efficient Dutch auction.

 First, it requires participants to report the distribution of their value, an object that is typically complicated and psychologically difficult to communicate reliably \citep{elicitation} making such a protocol impractical.\footnote{With multiple objects there is no known computationally efficient algorithm even for calculating an appropriate externality payment, and in the case of multiple inspection stages the cognitive and communication complexity become extremely daunting.} Second, as \citet{levins} show, dynamic VCG mechanisms have many equilibria, nearly all of which are inefficient, because reports made by bidders may affect only the price paid by rivals and not their own allocation.  This problem is particularly severe in our setting: Only the strike price and value reported impacts a bidders' own allocation, but the costs and distributions she reports have a rich impact on others' payments.  Individuals have no incentive to correctly report this ancillary information and may manipulate it to raise or lower payments made by their rivals depending on whether they are predating or colluding with their rivals.  It seems unlikely that such other-regarding motives will be entirely absent in any context and thus this seems a setting where the use of a dynamic VCG mechanism is particularly unpalatable.\footnote{To highlight the centrality of this last point, we briefly now describe a mechanism with substantially lower communication requirements that can implement the optimum, but which even more clearly highlights the issues raised by \citeauthor{levins}: Every bidder reports her strike price and a single random sample from her value distribution.  Bidders are  sequentially approached in descending order of strike price and asked to report their value until a value is reached that exceeds all remaining strike prices.  At this point the highest  bidder with a value exceeding all remaining strike prices is given the object.  Individuals are then required to pay the utility that their participation would have denied to all other individuals on net if they had not participated and every bidder had, counter-factually, turned out to have a value given by the random sample reported in stage 1.
 
This mechanism is efficient, all necessary information is reported and there is a weakly dominant truthful equilibrium because the payments are computed based on a sample path through the optimal mechanism.  Thus expected payments are equal to the expected externalities that the dynamic VCG mechanism computes.  However, note that the random samples reported play  no role in determining any individual's allocation; they determine only payments for others.  There is thus no incentive for individuals to report these truthfully; even a tiny incentive to collude or predate will lead to extreme reports.  While this illustrates the point sharply, the standard dynamic VCG implementation  generically asks individuals to report dimensions of the value distribution that leave the strike price unchanged but  may have a dramatic impact on other bidders' payments.}

\section{Dutch Auctions' Performance is Invariant to Search}\label{main}
\newcommand{\ccpt}[1]{{#1}^{\circ}}

In this section, we show that the performance of the Dutch auction in our environment is ``equivalent'' to its performance in the ``corresponding'' environment without information acquisition, namely one in which each bidder has a value equal to her covered call value that she knows without paying any inspection costs.  Given that, by \Cref{lemma:upper-bound-policy}, no mechanism can achieve higher welfare than the expectation of the highest covered call value, this equivalence connection preserves welfare and thus implies that any desirable welfare property of the Dutch auction in the context without information acquisition carries over to our setting.

\begin{definition}\label{equivdist}
For every type $\theta_i \in \Theta_i$ we define the {\em covered counterpart}
$\ccpt{\theta}_i$ to be a type with zero inspection cost, whose value distribution
is identical to the distribution of $\kappa_i$ when bidder $i$'s type is $\theta_i$.
More formally, we enlarge each bidder $i$'s type space to $\Theta_i \cup \ccpt{\Theta}_i$,
where $\ccpt{\Theta}_i$ is a disjoint copy of $\Theta_i$ containing an element
$\ccpt{\theta}_i$ for every $\theta_i \in \Theta_i$. The cost function
$c_i$ is extended by setting $c_i(\ccpt{\theta}_i) = 0$ for every
$\ccpt{\theta}_i \in \ccpt{\Theta}_i$. The family of value distributions
$\{F_{\theta}\}$ is extended by setting
\[
  F_{\ccpt{\theta}_i}(x) = \begin{cases}
    F_{\theta_i}(x) & \mbox{if $x \leq \sigma_i$} \\
    1 & \mbox{if $x > \sigma_i$}. \end{cases}
\]
For any prior $\mathcal P$ define the {\em induced prior over covered counterparts}
$\ccpt{\mathcal P}$ to be the joint distribution of
$(\ccpt{\theta}_1,\ldots,\ccpt{\theta}_n)$ when $(\theta_1,\ldots,\theta_n)$ is
distributed according to $\mathcal{P}$.
\end{definition}

Note that when bidders have covered counterpart types
$\ccpt{\theta}_1,\ldots,\ccpt{\theta}_n$, they know their own
values $v_1,\ldots,v_n$ without having to pay inspection costs.
Therefore, an auction whose bidders have types $\ccpt{\theta}_1,\ldots,\ccpt{\theta}_n$
is equivalent to a standard private-value auction without inspection costs,
where each bidder $i$ in addition to knowing her value $v_i$ receives a signal $\ccpt{\theta}_i$
that is not payoff-relevant except to the extent that it correlates with
other bidders' types and hence with their bids. In particular, when
$\theta_1,\ldots,\theta_n$ are mutually independent under distribution $\mathcal P$,
then $\ccpt{\theta}_1,\ldots,\ccpt{\theta}_n$ are mutually independent under
distribution $\ccpt{\mathcal P}$ and the signal $\ccpt{\theta}_i$ becomes simply
a random signal containing no payoff-relevant information whatsoever. In this
special case, the equilibria of any auction with types distributed according to
$\ccpt{\mathcal P}$ are exactly equivalent to the equilibria of the same auction with
independent private values $\kappa_1,\ldots,\kappa_n$.


\begin{lem}\label{equivwelfare}
The highest expected welfare achievable by any procedure when types are distributed
according to $\mathcal P$ is equal to the highest expected welfare achievable when bidders are replaced with their
covered counterparts, who face no inspection costs and have types distributed
according to $\ccpt{\mathcal P}$.
\end{lem}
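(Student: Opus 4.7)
The plan is to show that the optimal welfare in both environments equals the same quantity $\E[\max_i \kappa_i]$, where $\kappa_i = \min\{v_i, \sigma_i\}$ is the covered call value associated with the original type $\theta_i$ and value $v_i$. The lemma then follows by transitivity.

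For the original environment with prior $\mathcal{P}$, both halves of this equality are already in hand: Corollary \ref{cor:upper-bound-opt} supplies the upper bound $\E[\max_i \kappa_i]$ on the welfare of any procedure, while Theorem \ref{thm:opt-is-descending} exhibits a procedure (descending-order inspection by $\sigma_i$) that achieves this value exactly.

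For the covered counterpart environment with prior $\ccpt{\mathcal{P}}$, I would first make the distributional observation that, by the construction in Definition \ref{equivdist}, the law of the value $v_i$ drawn from $F_{\ccpt{\theta}_i}$ coincides with the law of $\min\{v_i, \sigma_i\}$ under $F_{\theta_i}$: the CDF of $F_{\ccpt{\theta}_i}$ agrees with $F_{\theta_i}$ below $\sigma_i$ and places all remaining mass at $\sigma_i$. Since $\sigma_i$ is a deterministic function of $\theta_i$, this pointwise identity passes through to the joint distribution, so $(v_1, \ldots, v_n)$ under $\ccpt{\mathcal{P}}$ has the same joint law as $(\kappa_1, \ldots, \kappa_n)$ under $\mathcal{P}$. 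Because inspection costs vanish in the covered counterpart environment, the welfare of any procedure reduces to $\E\left[\sum_i \allocsubi v_i\right] \leq \E\left[\max_i v_i\right] = \E[\max_i \kappa_i]$, and equality is attained by the trivial rule that allocates to the bidder with the highest known value, which is feasible because covered counterparts observe their values for free.

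Since Corollary \ref{cor:upper-bound-opt} and Theorem \ref{thm:opt-is-descending} already do the heavy lifting on the original side, there is no substantive obstacle remaining. The only point deserving care is the distributional identification of $F_{\ccpt{\theta}_i}$ with the law of $\kappa_i$, which is an immediate consequence of the definition of the covered counterpart together with the fact that $\sigma_i$ is $\theta_i$-measurable.
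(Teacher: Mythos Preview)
Your proposal is correct and follows essentially the same approach as the paper: both sides of the equality are identified with $\E[\max_i \kappa_i]$, invoking \Cref{cor:upper-bound-opt} and \Cref{thm:opt-is-descending} for the original environment and the definition of the covered counterpart for the other. You are somewhat more explicit than the paper in spelling out the distributional identification of $F_{\ccpt{\theta}_i}$ with the law of $\kappa_i$; the paper, for its part, adds a one-line remark that the first-best is implementable via dynamic VCG (so that ``any procedure'' can be read as ``any mechanism'' and not merely the omniscient planner), a point you leave implicit.
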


\begin{proof}
Weitzman's optimal procedure achieves the highest expected welfare of any procedure, and as
noted in Subsection \ref{VCGfail} it can be implemented by the dynamic VCG mechanism. We have seen
in the proof of \Cref{thm:opt-is-descending} that this expected welfare is exactly equal to
$\E \left[ \max_i \kappa_i \right]$, which is equal to the highest expected welfare achievable in a
model where agents' types are jointly distributed according to
$\ccpt{\mathcal P}$.
\end{proof}

\begin{definition}\label{dutch}
The Dutch auction is a sales procedure in which a clock begins at $\infty$ and continuously decreases. At any time, any bidder may claim the item by paying the current clock value, ending the auction.
\end{definition}

To state the precise sense in which Dutch auction equilibria are ``equivalent'' under
a type distribution $\mathcal P$ and its covered counterpart distribution
$\ccpt{\mathcal P}$ it will be useful to define the following equivalence relation.
\begin{definition} \label{funcequiv}
Two auction outcomes are said to be {\em functionally equivalent} if they award
the item to the same bidder at the same price, and the set
of bidders who pay a non-zero cost to inspect their value is the same.
Two strategies for a bidder are functionally equivalent (against a given profile
of opponents' strategies) if they always result in
functionally equivalent outcomes.
Two equilibria are functionally equivalent if they result in functionally
equivalent outcomes for every profile of types.
\end{definition}

\begin{thm}\label{maintheorem}
There is a mapping from equilibria of the Dutch auction in our model with types jointly distributed
according to $\mathcal P$ to equilibria of the Dutch auction with the covered counterpart
distribution $\ccpt{\mathcal P}$ where bidders know their values without inspection.
This mapping preserves bidder expected utility and auctioneer expected revenue, and it
induces a bijection on functional equivalence classes of equilibria.
\end{thm}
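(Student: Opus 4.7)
The plan is to show that equilibrium behavior in the original Dutch auction is behaviorally indistinguishable from a bid function depending only on the covered call value $\kappa_i$, so that transplanting the bid function to the covered counterpart yields an equilibrium with identical joint distribution of outcomes. The backbone of the argument is \Cref{lemma:upper-bound-policy}, which forces best responses to exercise in the money and collapses expected gross surplus $\E[\allocsubi v_i - \inspecti c_i]$ to $\E[\allocsubi \kappa_i]$.

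First I would argue that in any equilibrium, bidder $i$'s strategy is payoff-equivalent to the following rule: wait for the clock to descend to $\sigma_i$, then inspect; if $v_i > \sigma_i$, claim at the current price $\sigma_i$; if $v_i \leq \sigma_i$, continue waiting and claim at some price $b_i(\theta_i, v_i) \leq v_i$ (or never claim). Two observations deliver this. (a) Exercising in the money is required by the equality condition of \Cref{lemma:upper-bound-policy}. (b) Deferring inspection until the clock reaches $\sigma_i$ is weakly dominant, because any claim made after early inspection could equivalently have been executed by inspecting immediately beforehand (at the same or lower clock price), while any decision not to claim is strictly improved by having never paid $c_i$. Together (a) and (b) collapse the strategy to a bid function $b_i(\theta_i, \kappa_i) \leq \kappa_i$, with $b_i = \sigma_i = \kappa_i$ in the in-the-money case.

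Next I would define the mapping. Given an equilibrium $\pi = \{b_i\}$ of the original Dutch auction, let $\tilde\pi$ be the covered-counterpart profile in which bidder $i$ of type $\ccpt\theta_i$, who costlessly knows $v = \kappa_i$, claims at price $b_i(\ccpt\theta_i, \kappa_i)$. By \Cref{equivdist}, the joint distribution of $(\ccpt\theta_i, \kappa_i)_i$ under $\ccpt{\mathcal P}$ equals that of $(\theta_i, \kappa_i)_i$ under $\mathcal P$, so the bid profile, the winner, and the winning price $W$ have identical joint distributions in the two games, which immediately preserves revenue. Bidder utility is preserved because \Cref{lemma:upper-bound-policy} gives $\E[u_i] = \E[\allocsubi(\kappa_i - W)]$ in the original game, matching the covered-counterpart utility $\E[\allocsubi(\kappa_i - W)]$ since her value there is precisely $\kappa_i$.

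To verify equilibrium preservation and the induced bijection on equivalence classes, I would argue that any profitable deviation $\tilde b_i'$ in the covered counterpart pulls back to a profitable deviation in the original, namely ``wait until the clock reaches $\sigma_i$, inspect, then claim at $\tilde b_i'(\kappa_i)$''; by \Cref{lemma:upper-bound-policy} this mimic attains the same expected utility as the covered-counterpart deviation, contradicting the presumed equilibrium. The symmetric direction pushes deviations forward from the original to the covered counterpart. Functional equivalence transfers cleanly because in both games the winner and winning price are determined by $\{b_i(\kappa_i)\}$, and in the original game the set of inspectors equals $\{i : \sigma_i \geq W\}$, which is itself a function of $\{b_i(\cdot)\}$. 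The main obstacle is rigorously justifying step (b): the option-value argument that deferred inspection is weakly optimal must withstand strategies that condition on the continuous-time history of other bidders' (non-)claims. This reduces to a one-shot-deviation argument in which \Cref{lemma:upper-bound-policy} supplies a uniform surplus bound, but the continuous-time bookkeeping is where the bulk of the technical care will be needed.
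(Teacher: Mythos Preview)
Your step (b) contains a genuine error that breaks the argument. You claim that equilibrium behavior is payoff-equivalent to ``inspect at clock price $\sigma_i$; if $v_i > \sigma_i$, claim immediately at $\sigma_i$,'' and hence that $b_i = \sigma_i = \kappa_i$ in the in-the-money case. But bidding one's covered call value is bidding one's value in a first-price auction, which is not a best response. Concretely: the arguments you give for (a) and (b) only establish that inspection and in-the-money claiming happen at the \emph{same} clock price (you must claim immediately upon finding $v_i>\sigma_i$ to avoid exposure, and inspection can be deferred until that moment). They do \emph{not} pin that price down as $\sigma_i$. The optimal such price is the shaded bid corresponding to $\kappa_i=\sigma_i$ in the covered counterpart, strictly below $\sigma_i$ in nondegenerate cases. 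With your identification, the transplanted profile $\tilde\pi$ has each in-the-money bidder bidding her value $\kappa_i$, so $\tilde\pi$ is not an equilibrium of the covered-counterpart Dutch auction and the mapping fails. Your ancillary claim that the inspector set equals $\{i:\sigma_i\ge W\}$ inherits the same mistake.

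The paper avoids this circularity by \emph{not} first characterizing equilibrium form. Instead it defines two simulation maps on \emph{arbitrary} strategies: $\lambda$ takes any normalized covered-counterpart bid function $b$ to the original-model strategy ``inspect at $b(\sigma_i,\ccpt\theta_i)$; claim at $b(\kappa_i,\ccpt\theta_i)$,'' and $\mu$ simulates in the reverse direction by having the covered counterpart draw a fictitious $v_i$ consistent with $\kappa_i$. One then checks, via \Cref{lemma:upper-bound-policy}, that $\lambda$ preserves utility exactly and $\mu$ weakly increases it (with equality for non-exposed strategies), so $\lambda,\mu$ are mutually inverse bijections on functional equivalence classes of best responses. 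Swapping bidders one at a time yields the equilibrium bijection. Your high-level idea that everything should reduce to a bid function of $\kappa_i$ is exactly right; the fix is to let that bid function be arbitrary (determined by the equilibrium you start from) rather than hard-coding it as the identity on the in-the-money event.
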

Informally the theorem states that equilibria of the Dutch auction in our model are in
bijective correspondence with Dutch auction equilibria when bidders are replaced with
their covered counterparts. To the extent that the formal theorem statement differs
from this informal summary, it is because the correspondence may fail to preserve
irrelevant details such as whether a losing bidder with zero information acquisition
cost chooses to inspect her value.

In a first price auction without information costs, individuals shade their bids downward relative to their value.  In our setting, individuals shade downward both their inspection time and, contingent on learning their values, their claim times according to exactly the same function, derived from the equivalent auction without information acquisition costs.  This implies that, at the time they inspect their values the clock is always weakly below their strike price and thus they can be sure they will always exercise in the money.  This ensures that individuals can fully offload the information cost onto the market and thus that the presence of information costs is ``irrelevant'' except in that it changes the value distribution to the covered call value distribution. To formalize this idea, our proof of \Cref{maintheorem} defines procedures by which a bidder can ``simulate'' (in an expected-utility-preserving manner) the best-response behavior of her covered counterpart and vice-versa.
In this section we elaborate on this simulation procedure and sketch
the central ideas of the proof. The proofs of detailed claims
appearing in the argument are deferred to \Cref{mainapp}.

Define a bidder's strategy to be {\em normalized} if she always
inspects her value at the earliest possible moment, and if the 
price $b(v,\theta)$ at which she claims the item given that her
value is $v$ and type is $\theta$ (if it is not yet claimed by 
another bidder) is a monotically non-decreasing function of $v$. 
For a bidder who faces no inspection cost, it is without
loss of generality to assume her strategy is normalized, in the
sense that any best-response in the Dutch auction 
is functionally equivalent to a normalized strategy.
(\Cref{mainapp}, \Cref{clm:1}.) Accordingly, we will henceforth
only consider normalized strategies for bidders 
whose type belongs to $\ccpt{\Theta_k}$. 

If $b$ is a normalized strategy of a bidder with
type space $\ccpt{\Theta_k}$, then a bidder with
type space $\Theta_k$ can simulate $b$ using a 
strategy $\lambda(b)$ defined as follows. From
her type $\theta_k$, the bidder can compute the
covered counterpart type $\ccpt{\theta_k}$ and
the strike price $\sigma_k$. She inspects her
value $v_k$ when the price reaches
$b(\sigma_k,\ccpt{\theta_k})$. If 
$v_k \geq \sigma_k$ she claims the item
immediately, otherwise she waits until 
the price decreases to $b(v_k,\ccpt{\theta_k})$ 
and claims the item at that price, if no
other bidder has claimed it. 

Note that,
strategy $\lambda(b)$ is non-exposed and,
contingent on winning the item, the price 
paid is the lesser of $b(\sigma_k,\ccpt{\theta_k})$
and $b(v_k,\ccpt{\theta_k})$. Since
 $b(v,\theta)$ is monotonic in $v$,
this means the price paid is $b(\kappa_k, \ccpt{\theta_k})$,
which is the same price paid by a bidder
with type $\ccpt{\theta_k}$ and value 
$\kappa_k$. Thus, the utility of a bidder
with type $\theta_k$ and value $v_k$ using strategy 
$\lambda(b)$ duplicates the utility of 
a bidder with type $\ccpt{\theta_k}$ 
and value $\kappa_k$ using strategy $b$,
and the auction outcomes achieved by these
two strategies, when facing the same profile
of opponents' bids, are functionally 
equivalent. (See \Cref{mainapp}, Claims \ref{clm:2}-\ref{clm:3}.)

We next describe the reverse simulation---that is, 
the process that a bidder of type $\ccpt{\theta_k}$ uses to
simulate the strategy of a bidder of type 
$\theta_k$.
The bidder begins by inspecting her own value, $\kappa_k$,
at the earliest possible moment. Knowing her own type
$\ccpt{\theta_k}$, she also knows the corresponding
type $\theta_k$ and strike price $\sigma_k$. She next
samples a simulated value $v_k$ by setting
$v_k = \kappa_k$ if $\kappa_k \leq \sigma_k$, and
otherwise sampling $v_k$ from the distribution
whose cumulative distribution function is
$F(x) = (F_{\theta_k}(x) - F_{\theta_k}(\sigma_k))/(1 - F_{\theta_k}(\sigma_k)).$
Note that this means the distribution of 
$v_k$, conditional on $\theta_k$, matches $F_{\theta_k}$,
so we can couple two bidders with types $\theta_k$ 
and $\ccpt{\theta_k}$ in such a way that the first
bidder's value is equal to the random $v_k$ sampled by
the second one.
Letting $b$ denote the strategy to be simulated,
the simulating strategy $\mu(b)$ simply claims the
item at the time when a bidder with type $\theta_k$,
value $v_k$, and strategy $b$ would claim the
item, if no other bidder has claimed it.
This implies that $b$ and $\mu(b)$
produce functionally equivalent outcomes at every
sample point. Moreover, by \Cref{lemma:upper-bound-policy},
the utility that a bidder with type $\ccpt{\theta_k}$ 
achieves using $\mu(b)$ weakly improves upon
the utility a bidder with type $\theta_k$ achieves
using $b$. (See \Cref{mainapp}, \Cref{clm:2}
and \Cref{clm:3}.)

The above arguments imply that $\lambda$ and $\mu$ define
a pair of mutually inverse bijections on functional 
equivalence classes of best responses for a bidder 
with type space $\Theta_k$ and one with $\ccpt{\Theta_k}$.
An easy induction---swapping one bidder at a time with
her covered counterpart, while holding the types and strategies
 of others bidders fixed---then establishes the claimed
 bijection between functional equivalence classes of
equilibria, completing the proof of the theorem.

\Cref{maintheorem} immediately implies that the desirable properties of the first-price auction carry over to the Dutch auction with information costs.

\begin{corollary}\label{maincorollary}
If $\mathcal K\left(\mathcal P\right)$ is symmetric across bidders and bidders' types are independent,
then any symmetric equilibrium of the Dutch auction is efficient.
Even if $\mathcal K\left(\mathcal P\right)$ is asymmetric or bidders' types are correlated,
any equilibrium achieves
at least a fraction $1-\frac{1}{e}$ of the welfare of the optimal mechanism.
\end{corollary}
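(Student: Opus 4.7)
The plan is to reduce both claims to known properties of the first-price (equivalently, Dutch) auction in a setting without information acquisition costs, using Theorem \ref{maintheorem} and Lemma \ref{equivwelfare} as the bridge. By Theorem \ref{maintheorem}, every equilibrium of the Dutch auction in our model corresponds under the bijection to a functionally equivalent equilibrium of the Dutch auction with the covered counterpart distribution $\ccpt{\mathcal{P}}$; since functional equivalence preserves the identity of the winner, the price paid, and the set of bidders who inspect, and since the covered counterparts face no inspection costs, the realized social welfare under the two equilibria is the same sample path by sample path. Moreover, by Lemma \ref{equivwelfare}, the first-best welfare in both environments is $\E\left[\max_i \kappa_i\right]$. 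Hence it suffices to prove each part of the statement about the Dutch auction in the covered counterpart environment alone.

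For the first part, I would argue as follows. Under the symmetry and independence hypothesis, the covered counterpart environment is a symmetric independent private values setting with value distribution given by the common distribution of $\kappa_i$ (possibly augmented with a payoff-irrelevant signal $\ccpt{\theta}_i$, which, as observed after Definition \ref{equivdist}, carries no information about other bidders' values when types are independent and may therefore be ignored in a symmetric equilibrium). The Dutch auction is strategically equivalent to the first-price sealed-bid auction, and the classical analysis of Vickrey shows that the unique symmetric Bayes-Nash equilibrium of the first-price auction with symmetric IID private values uses a strictly increasing bidding function, and therefore allocates the item to the bidder with the highest $\kappa_i$. This attains welfare $\E[\max_i \kappa_i]$, which matches the first-best by Lemma \ref{equivwelfare}.

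For the second part, I would invoke the known price-of-anarchy guarantee for first-price auctions established by Syrgkanis and Tardos (and subsequent refinements): in any Bayes-Nash equilibrium of a first-price auction, the expected social welfare is at least a $1 - 1/e$ fraction of the optimal welfare, and this bound holds in full generality for asymmetric bidders with arbitrarily correlated private values. Since the Dutch auction is strategically equivalent to the first-price auction, this bound transports directly to the covered counterpart environment. Combined with the welfare-preserving bijection of Theorem \ref{maintheorem}, this yields the $1 - 1/e$ guarantee for every equilibrium of the Dutch auction in the original model with inspection costs.

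The main obstacle I anticipate is verifying that the reductions are clean enough to quote these off-the-shelf results without modification. Specifically, one must check that in the symmetric, independent case, a symmetric equilibrium in the original model maps under the bijection of Theorem \ref{maintheorem} to a symmetric equilibrium in the covered counterpart model (so that the symmetric-equilibrium version of Vickrey's result applies), and that the $(1-1/e)$ price-of-anarchy bound for first-price auctions is stated for the general Bayes-Nash setting with possibly correlated types that we need here. Both appear to follow directly from the definitions and the cited literature, so the argument should amount essentially to stringing together these results.
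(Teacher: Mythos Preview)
Your proposal is correct and follows essentially the same approach as the paper: reduce via Theorem~\ref{maintheorem} (and Lemma~\ref{equivwelfare}) to the covered-counterpart environment with no inspection costs, then invoke the standard symmetric-equilibrium efficiency of the first-price auction for the first part and the Syrgkanis--Tardos $1-1/e$ price-of-anarchy bound for the second. The paper's proof is slightly terser and spells out why the bid function depends only on value (not the residual type signal) in the independent case, but the logical structure is the same as yours.
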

\begin{proof}
By \Cref{maintheorem} it suffices to consider the case when bidders face zero cost
to inspect their value, and hence can be presumed without loss of generality to do so
at the earliest possible moment. A strategy of such a bidder is characterized by a
function mapping her type and value to her bid, i.e., the clock value at which she
will claim the item if no other bidder has yet done so. When types are independent,
this function can be assumed without loss of generality to depend on the bidder's
value only,  since the type contains no payoff-relevant information conditional on
the value. In equilibrium, the bid must
be a monotonically increasing function of the value (at least, on the
interval of values that have a positive probability of being greater than everyone
else's bid) and in a symmetric equilibrium all bidders apply the same such function,
resulting in the highest bidder winning the item.
In asymmetric environments or when types may be correlated,
if bidders face no information acquisition costs, the fact that any Dutch auction
equilibrium achieves at least a fraction
$1-\frac{1}{e}$ of the welfare of the optimal mechanism
is a consequence of the ``price-of-anarchy'' analysis of
first-price auctions, e.g.~\citep{syrgkanise}.\footnote{In the case of arbitrary correlation, \citeauthor{syrgkanisthesis} shows this bound is tight.  However, in the case of asymmetries without correlation (the one we study in our calibrations of the next section) it is not known whether this bound can be reached and it is widely believed it cannot be.  This may partly account for our findings below. }
\end{proof}

More broadly, any welfare property of the first-price auction carries over to the Dutch auction with information costs.\footnote{Another example is that folk wisdom, building off formal results in related contexts \citep{swinkels} suggests that in a ``large market'' first-price auctions are fully efficient; in addition to our assumption of independence mentioned in the previous footnote, this may partly account for our findings in the next section that in plausible calibrations the Dutch auction obtains nearly full efficiency.  We conjecture that under suitable restrictions on the tails of the value distribution to ensure that the distribution of the highest covered call value concentrates as the number of bidders grows large we would obtain full efficiency even with asymmetries or correlation.  However we do not pursue a formalization of this result here.}  For example, while we focus here on welfare rather than revenue, the Dutch auction in our context will achieve the same revenue as the corresponding first-price auction.  With appropriate reserve prices (bidder-specific end times for the auction, pre-anounced) \citep{POArevenue}, or with sufficiently many bidders \citep{bkbidders}, these revenues approximate (in the constant factor sense discussed in point 2 of Corollary \ref{maincorollary}) the greatest achievable revenue by any auction in this simpler context.  An interesting question we leave open to future research is whether there are ways to use the existence of information costs to extract a greater fraction of bidder surplus than is possible in the corresponding setting without information costs.\footnote{In any case, so long as the bidder value distribution is not too convex \citep{anderson2003efficiency} the revenue achievable in the corresponding auction approximates total welfare which, by individual rationality, upper bounds the greatest revenue achievable.  Thus in most reasonable settings (those where many moments of the value distribution exists) this question impacts only the approximation factor and not whether an approximation exists.  We state and prove these results formally in \Cref{app:revenue}.}

\section{Calibrations}\label{calibrations}

We have thus shown that the worst case performance of standard formats is far worse than that of the Dutch auction when information acquisition is necessary.  In this section we show that, in a pair of reasonable calibrated examples, this ranking is maintained but losses in both types of auctions settings are much smaller.  In essentially all cases the Dutch auction loses less than 1\% of value from the inefficiencies arising from asymmetries and typically they are an order of magnitude smaller than this.  Losses from standard formats are also small, typically a few percent though occasionally they are an order of magnitude larger.  The Dutch auction outperforms by a larger margin in revenue.

\subsection{Start-up acquisitions}\label{startup}

Our first calibration is designed to match broad moments, described in press reports, of the market for the acquisition of start-up companies by established firms, a process that is well-known to require a lengthy process of ``due diligence'' information acquisition.\footnote{
Our approach is not to precisely match these figures or conduct an empirical analysis, but to capture the key features of this setting: relatively large inspection costs, high variance in values, and a small to moderate number of bidders.} We also consider a wide range of robustness checks about our baseline parameters.\footnote{Our results are quite consistent across these values and it is this consistency, rather than the particular values chosen in our baseline calibration, that gives us confidence in our conclusions.}  In the second calibration we build more directly off empirical analysis of a particular market in previous literature.

\subsubsection{Set-up}\label{startupsetup}

We consider a framework where all primitive variables are joint log-normally distributed independently and identically across bidders.  Each bidder has a cost $c_i$ and a value $v_i$ distributed as:
\begin{eqnarray*}
\log\left(v_i\right)& = & V_i^0 + V_i^1+ V_i^2\\
\log\left(c_i \right) & = & C_i^0+C_i^1.
\end{eqnarray*}
All variables labeled $0$ are known commonly {\em ex-ante} to all bidders prior to bidding.  As such, they are a source of asymmetries across bidders.  For example, a bidder with a high value of $V_i^0$ will be commonly known by her rivals to be likely to be stronger than other bidders.  Thus while all bidder values are drawn i.i.d. bidders are not symmetric at the beginning of the auction.  Variables labeled $1$ are the costless private information (or ``type'') of the bidder {\em ex-interim}, but not known to her rivals; thus in the language of our set-up from Subsection \ref{model} above,  $\theta_i=\left(V_i^1, C_i^1\right)$.  $V_i^2$ is learned only after value inspection, at a cost $c_i$, which is mandatory due diligence for the acquisition of the start-up. 

We make the following distributional assumptions, with all variables whose correlation structure is not explicitly described being distributed independently:
\begin{eqnarray*}
\left( \begin{array}{c} V_i^0 \\ C_i^0  \end{array}  \right) & \sim & \mathcal N \left(\begin{array}{c} 0\\ \mu_c \end{array}  , \begin{array}{cc} \alpha_0  \sigma_v^2 & \rho \frac{\alpha_0}{\sqrt{\alpha_0+\alpha_1}} \sigma_v \sigma_c \\  \rho \frac{\alpha_0}{\sqrt{\alpha_0+\alpha_1}} \sigma_v \sigma_c & \frac{\alpha_0}{\alpha_0+\alpha_1} \sigma_c^2 \end{array}  \right) \\
\left( \begin{array}{c} V_i^1 \\ C_i^1  \end{array}  \right) & \sim & \mathcal N \left(\begin{array}{c} 0\\ 0 \end{array}  , \begin{array}{cc} \alpha_1  \sigma_v^2 & \rho \frac{\alpha_1}{\sqrt{\alpha_0+\alpha_1}} \sigma_v \sigma_c \\  \rho \frac{\alpha_1}{\sqrt{\alpha_0+\alpha_1}} \sigma_v \sigma_c & \frac{\alpha_1}{\alpha_0+\alpha_1} \sigma_c^2 \end{array}  \right) \\
V_i^2 & \sim & \mathcal N \left(0, \left(1-\alpha_0-\alpha_1\right) \sigma_v^2 \right)
\end{eqnarray*}
Thus we allow only for correlations across variables at the same time step and not across time steps.  We now briefly describe the interpretation of each of these parameters:
\begin{itemize}
\item $\mu_c$ determines the relative scale of cost compared to value; the scale of value is normalized to $0$ as in the log-normal distribution all properties of shape are invariant to this scale.

\item $\sigma_v^2, \sigma_c^2>0$  control the total variance in the logarithm of values and costs, which are in one-to-one correspondence with the associated Gini coefficient of the distributions of these variables.

\item $\rho \in [-1,1]$ is the correlation between log-costs and log-values (at the two time steps where such correlation is still possible) and we assume it is equal at the ex-ante and ex-interim stages.

\item $\alpha_0,\alpha_1 \in [0,1]$ with $\alpha_0+\alpha_1\in [0,1]$ determine the share of the total variance, in both cost and values, at stages $0$ and $1$.  For instance, the variance of $V_i^0$ is $\alpha_0 \sigma_v^2$ and that of $V_i^1$ is $\alpha_1 \sigma_v^2$. Because costs are fully determined by the ex-interim stage, the fraction of variance at the ex-ante stage is $\frac{\alpha_0}{\alpha_0+\alpha_1}$ and correspondingly for the ex-interim stage.

\end{itemize}

We compare the performance of the simultaneous second-price auction and the Dutch auction to the first-best allocation.  We do not consider the performance of an ascending auction for reasons of computational tractability.  With two bidders it can easily be shown that the (trembling hand perfect) equilibrium of the ascending auction is equivalent to that of the simultaneous second-price auction.\footnote{In a two bidder auction, each bidder will obtain the object if the other bidder drops out before she does.  As a result, each must choose a time at which she will take some action if that time is reached without the other bidder dropping out and whether this action will be to drop out or to inspect the object.  If the bidder expects to drop out, it is weakly dominant for her to do so at $\mathbb E\left[  v_i \left| V_i^1+v_i^0 \right. \right] -c$ as this is her value for winning the object if she has not yet inspected.  If the bidder expects to inspect, it is weakly dominant for her to do so immediately at the start of the auction, as she will have to inspect in any case if she wins and she can only gain by giving herself a more informed option to drop out earlier if her value turns out to be low.  Thus bidders either pursue a policy of immediate inspection and bidding of value or of no inspection and bidding $\mathbb E_{\theta_i}\left[ v_i\right] -c$.  But both of these strategies are available in the simultaneous second-price auction and thus (as long as we focus on weakly dominant strategy equilibrium) these two auctions are equivalent.  Matters are much more complicated with more than two bidders, as the drop out of one bidder may then trigger others to inspect dynamically; this is the source of \citet{jehiel}'s results, but also of the numerical intractability of the ascending auction in this context.} With more than two bidders, we were not able to determine, and to our knowledge no one has ever proposed, a computationally tractable method of finding approximate equilibrium. Given how well the Dutch auction performs with more than two bidders, it seems unlikely that investigating the ascending auction in greater detail would change our qualitative conclusions, though quantitatively \citet{jehiel}'s results suggest it might somewhat mitigate loses from search coordination relative to the simultaneous second-price auction.  We defer analysis of the BKRS sequential auction to the next calibration.  We briefly describe our computational methods in \Cref{numericsapp} and in greater detail in \Cref{sec:app-calibrations-extra}.

\subsubsection{Calibration}\label{startupcalibration}

We chose baseline parameter values to correspond roughly to press accounts of the market for start-up acquisition.  We chose $\sigma_c^2=.2$, a relatively low value, as costs of due diligence seem relatively homogeneous across potential purchasers, and $\sigma_v^2=1$, corresponding to a Gini coefficient of $.52$, as this degree of inequality (similar to the degree of pre-tax inequality in many developed countries) seems a reasonable spread in values in a financial context.  We chose $\mu_c=-.62$ as this leads average costs to equal a third of average values, which is consistent with press accounts of the industry and with the fraction of search costs in procurement contracts found in a different context by \citet{salz}. We chose $\rho=.7$ because press accounts indicate that due diligence costs covary heavily with firm size, as larger purchasers tend to hire more expensive investment bankers and firms with high idiosyncratic value tend to do more due diligence. We had no particularly principled way to choose the values of $\alpha$, so we settled on $\alpha_0=.1$ and $\alpha_1=.4$ as a baseline as these imply relatively modest levels of asymmetry, a fair bit of selective entry, and significant variation revealed by inspection.  We chose five bidders for our baseline scenario as this seems a reasonable number of potential bidders for a hot start-up.

We considered variations in each parameter above and below these baseline parameters: $\sigma_v^2=.5,1.5$; $\mu_c=-1.0, -.35$; $\sigma_c^2=0, 1$; $\rho=-.5,0,1$; $(\alpha_0,\alpha_1) = (.5,.25), (.5,0.05), (.1,.1), (.1,.7)$. 
We also consider the case of two and ten bidders not just for our baseline but also for the calibrations where we found the greatest efficiency loss for the Dutch auction, as we expect a small number of bidders to exacerbate these harms.  

Finally we manipulated parameter values to find the worst case for each auction within our general structure.  The worst case we could find for the Dutch auction was our baseline but with two bidders.  For the simultaneous second-price auction the worst case we found had very high cost $\mu_c=-.2$ and the vast majority of uncertainty realizing only after costly inspection ($\alpha_0=\alpha_1=.01$).\footnote{Given the very small ex-ante asymmetry in this case, we only drew a single bidder type and ran a symmetric example.}  We ran versions of this with our baseline of $5$ and $20$ bidders, as we found increasing the number of bidders made the inefficiency worse.

\subsubsection{Results}\label{startupresults}

\begin{table}[t]
\begin{center}
\begin{tabular}{ccc}
Parameter values   & Second Price & Dutch \\ \hline \hline
Baseline           & $3$ & $.01$ \\
$\sigma_v^2=.5$    & $.4$ & $0$ \\
$\sigma_v^2=1.5$   & $2$ &  $.02$ \\
$\mu_c=-1.0$       & $4$  &  $0$ \\
$\mu_c=-.35$       & $3$ & $.03$ \\
$ \sigma_c^2=0$    & $4$  & $.01$\\
$ \sigma_c^2=1$    & $4$  & $0$ \\
$ \rho=-.5$        & $3$  & $.03$ \\
$ \rho=0$          & $3$   & $.02$\\
$ \rho=1$          & $3$   & $.01$ \\
$ (\alpha_0,\alpha_1) = (.5,.25)$ &  $2$ & $.01$ \\
$(\alpha_0,\alpha_1) =  (.5,0.05)$   & $4$ & $0$ \\
$(\alpha_0,\alpha_1) =  (.1,.1)$  & $7$ & $.01$ \\
$(\alpha_0,\alpha_1) =  (.1,.7)$  &  $1$ &  $.04$ \\
$N=2$              & $1$ &  $1$ \\
$N=10$ & $5$ & $0$ \\
Worst case $N=5$ & $20$ &$0$ \\
Worst case $N=20$ & $50$ & $0$
\end{tabular}
\end{center}
\caption{Welfare loss as a percentage of the first best in start-up acquisition calibration, rounded to the first significant digit.  Zero indicates that in all runs there was exact agreement with the first best allocation.}\label{startuptable}
\end{table}

\Cref{startuptable} represents our results for all the parameter settings we study. Results are the percent welfare loss relative to the first best rounded to a single significant digit.  Because we only averaged over three ex-ante draws, this is the degree of our numerical precision across these three draws.

In all but two cases the welfare losses from the simultaneous second-price auction are two orders of magnitude greater than those from the Dutch auction.  Only in the case with two bidders are the losses comparable; in this case the losses from the two mechanisms were differently ranked in the three draws and thus within our numerical error.  In one other case there was only a single order of magnitude difference in the loses; this was the case when $\alpha_1=.7$ and thus there are very strong ex-interim signals of eventual values so inspection is not critical.

  In fact, as we show in Appendix \ref{sec:app-calibrations-extra}, very few inspections occur on average; typically about half as many inspections occur in the second-price auction as in the Dutch auction.  Thus the allocation ends up being very different in the two cases; allocation is much less efficient under the second-price auction, but fewer inspections take place.  Given that in our baseline setting there is a quite strong signal of value available to bidders ex-ante, this still results in a quite high fraction of social welfare, implying that the second-price auction only loses a few percentage points of welfare relative to the first best. However, this failure to invest in inspection decimates revenue in the auction; in our baseline the second-price auction achieves 40\% less revenue than the first-best and consistently gets 30\% less revenue, as we discuss in \Cref{sec:app-calibrations-extra} in greater detail.  Furthermore, in a worst case scenario, when there are many bidders, very high inspection costs and a weak signal of values at the interim stage, the second-price auction loses 50\% of the total value.

This suggests to us that in almost any realistic case, the losses from poor search coordination of the second-price auction will be dramatically larger than those from the asymmetries in the Dutch auction.  However, these losses will not be very large in magnitude unless several key features conspire together: large inspection costs, high variance in value from inspection, and many bidders.  This was precisely the conspiracy underlying our worst case results.  In most realistic settings which lack one or more of these features, the Dutch auction seems likely to lead to much higher revenue and orders of magnitude smaller welfare loss relative to the optimum, but not to a dramatic quantitative rise in the absolute value of welfare.

\subsection{Timber auctions}\label{timber}

Our previous calibration was quite flexible and we considered a wide range of scenarios, but was only very loosely calibrated to the start-up acquisition setting.  We now consider a second calibration that is somewhat less flexible and in an environment where information acquisition costs are relatively smaller, but which is drawn from an actual empirical estimation.  In particular we import the framework of \citet{sweeting} which is estimated based on US timber auctions.  We do not discuss the institutional context of these auctions at all and present the set-up and calibration extremely briefly to avoid excessive duplication of material in their article.    We defer discussion of numerical methods entirely to appendices,

\subsubsection{Set-up}\label{timbersetup}

\citeauthor{sweeting}'s model is similar to that in our previous calibration but somewhat simpler.  First, costs are assumed homogeneous across all bidders: $\sigma_c^2=0$ and $\rho$ is irrelevant.  Second, $V_i^0$, rather than being drawn normally, takes on two values (one for ``loggers'' and one for ``millers'').  As a result, there is only a single $\alpha$ parameter governing the fraction of the variance realized in $V_i^1$ v. $V_i^2$.

 \citeauthor{sweeting} are interested in the comparison between the simultaneous second-price auction and the sequential BKRS mechanism we described in Subsection \ref{flaws} above.  \citeauthor{sweeting} actually consider an altered version of the second-price auction in which inspection is not only mandatory for receiving the object, but even for placing a bid.  This makes the second-price auction perform strictly worse than the version we studied in the previous calibration by the main theorem of \citet{VCGinfo}.  In particular in the previous section we found that the relatively good performance of the second-price auction arose almost entirely from the fact that it allowed for bidding prior to inspection.\footnote{\citeauthor{sweeting} also consider a sealed-bid first price auction; we do not include this here as by \citeauthor{VCGinfo}'s theorem this must perform weakly worse than the simultaneous second-price auction.}  As a result it may be that the gains they find of the sequential over the second-price auction would diminish or disappear if bidding prior to inspection were allowed, but for consistency with their work we do not consider this quantitatively here.

\subsubsection{Calibration}\label{timbercalibration}

We simply import the parameter values estimated by \citeauthor{sweeting} and the robustness variations about these that they consider.  To parameterize ex-ante asymmetries, they assume that the log-mean of logger values is $\mu_{\mathit{logger}}=3.582$ and of mills $\mu_{\mathit{diff}}=.378$ higher.  They set $\sigma_v^2=.332$, $\alpha=.689$ and the the homogeneous inspection cost at $K=2.05$.  They assume $N_{\mathit{loggers}}=4$  loggers and $N_{\mathit{mill}}=4$ mills bid in the auction.  About these baseline parameters they consider the following variations:  lowering and raising the $N_{\mathit{mill}}$ to $1$ and $7$; $N_{\mathit{logger}}$ to $0$ and $8$; $\mu_{\mathit{logger}}$  (while holding $\mu_{\mathit{diff}}$ constant) to $2.921$ and $4.243$; $\mu_{\mathit{diff}}$ to $.169$ and $.587$; $\sigma_v^2$ to $.122$ and $.646$; $\alpha$ to $.505$ and $.872$ and $K$ to $.39$ and $3.72$.

In addition, we consider the variation where the inspection cost $K=16$, about one-third of expected value, to check our hypothesis that rising inspection costs significantly impact relative performance.

\subsubsection{Results}\label{timberresults}

\begin{table}[t]
\begin{center}
\begin{tabular}{cccc}
Parameter values              & Second Price & Sequential & Dutch \\
\hline \hline
Baseline                      & $2.5$       & $.79$     & $.02$ \\
$N_{\mathit{mill}}=1$         & $2.4$       & $.58$     & $.14$ \\
$N_{\mathit{mill}}=7$         & $2.8$       & $1.1$     & $.79$ \\
$N_{\mathit{logger}}=0$       & $1.6$       & $.70$     & $0$ \\
$N_{\mathit{logger}}=8$       & $3.0$       & $.91$     & $.02$ \\
$\mu_{\mathit{logger}}=2.921$ & $4.4$       & $2.3$     & $.02$\\
$\mu_{\mathit{logger}}=4.243$ & $3.8$       & $1.3$     & $.02$ \\
$\mu_{\mathit{diff}}=.169$    & $3.1$       & $1.1$     & $.01$ \\
$\mu_{\mathit{diff}}=.587$    & $2.1$       & $.79$     & $.07$\\
$\sigma_v^2=.122$             & $2.9$       & $.89$     & $.01$ \\
$\sigma_v^2=.646$             & $5.6$       & $4.5$     & $.25$ \\
$\alpha=.505$                 & $2.6$       & $.86$     & $.02$ \\
$\alpha=.872$                 & $2.4$       & $.52$     & $.01$ \\
$K=.39$                       & $.55$       & $.17$     & $.02$ \\
$K=3.72$                      & $4.2$       & $1.3$     & $.02$ \\
$K=16$                        & $17$       & $3.8$     & $.02$ 
\end{tabular}
\end{center}

\caption{Welfare loss as a percentage of the first best in timber auction calibration, rounded to nearest two significant digits.
         Parameter values given are the ones that differ from the baseline.}\label{timbertable}

\end{table}

\Cref{timbertable} represents our results for all the parameter settings we study as the percent loss in welfare compared to the first-best allocation rounded to two significant digits, consistent with our numerical error which is at this scale of resolution.  For every parameter setting the Dutch auction has substantially lower welfare loss than the other two formats. In all but three cases, none including the baseline, the loss is within a tenth of a percent of the optimum and except in a few cases it is at least an order of magnitude smaller than the loss from either of the other formats.  In many cases it is more than two orders of magnitude smaller.  Furthermore, as we show in \Cref{sec:app-calibrations-extra}, results on revenue are generally consistent with these conclusions. The Dutch auction obtains higher revenue than the second-price auction in all of these cases and higher than the sequential BKRS sequential mechanism in all but two cases. 

However, both the second price and sequential auctions perform quite well in these settings, implying that the gains from the Dutch (and from the sequential over the second price) auction are modest in absolute terms, though as \citeauthor{sweeting} note they are fairly large by the standard of relative welfare and revenue gains from other design changes usually considered in the empirical auctions literature. In most cases the sequential mechanism performs significantly better than the second-price auction and this absolute gain is usually bigger than that of the Dutch over the sequential mechanism.  However, the relative decline in loss compared to the first-best is almost always greater for the Dutch over sequential than sequential over second price and it is possible that the gains of sequential over second price primarily arise from the prohibition on pre-inspection bids in the second-price auction as simulated by \citeauthor{sweeting}

One important reason these gains could be small is that the costs of information acquisition at the baseline are only 2\% of typical values; when we increase this by an order of magnitude the losses grow, especially from the second-price auction, but by much less than an order of magnitude. Losses from the sequential mechanism are still under four percentage points, consistent with our findings in our start-up calibration that even with costs that are about a third of typical values losses from mechanisms that poorly coordinate search are only a few percentage points.

  Thus while the Dutch auction seems clearly to perform dramatically closer to the full optimum than alternative mechanisms, in environments with very small information costs the choice among mechanisms may not make much of a difference and even in cases with large costs the loses are often modest. In any case, any losses from asymmetries in this context appear to be negligible in the vast majority of cases and extremely small in all cases.

\section{Extensions}\label{extensions}

\newcommand{\itemset}{{\mathcal{M}}}
\newcommand{\bidderset}{{\mathcal{N}}}
\newcommand{\assignset}{{\mathcal{A}}}
\newcommand{\gralloc}{{\mathbb{A}}^{\circ}}
\newcommand{\grinspect}{{\mathbb{I}}^{\circ}}
\newcommand{\optalloc}{{\mathbb{A}}^{\ast}}
\newcommand{\optinspect}{{\mathbb{I}}^{\ast}}
\newcommand{\bdualvar}{\alpha}
\newcommand{\idualvar}{\beta}
\newcommand{\tbdualvar}{\tilde{\bdualvar}}
\newcommand{\tidualvar}{\tilde{\beta}}

\newcommand{\reals}{{\mathbb{R}}}
\newcommand{\sigfld}{{\mathcal{F}}}
\newcommand{\given}{\, \| \,}
\newcommand{\stg}{{\frak s}}
\newcommand{\patient}{{\frak r}}
\newcommand{\altstg}{{\frak r}}
\newcommand{\alloc}[1]{{\mathbb{A}^{#1}}}
\newcommand{\allocij}{{\alloc{ij}}}
\newcommand{\allocstg}[1][\stg]{{\mathbb{A}^{#1}}}
\newcommand{\stopstg}{{\stg(\infty)}}
\newcommand{\eps}{{\varepsilon}}
\newcommand{\indic}{{\mathbf{1}}}
\newcommand{\di}{{\mathrm{DI}}}

While our theorem from Section \ref{main} is fairly general, our results may still be extended beyond it in a number of directions.  Because it is cumbersome to incorporate these together into a maximally general model, given the different extents to which the full force of our analysis carries over in each case, we now consider a series of disjoint extensions.  In many cases similar results may be obtained when various extensions are combined, but in the interests of space we do not discuss this further here.  Furthermore and for the same reasons,  all set-off formal proofs in this section have been moved to Appendix \ref{extapp}, except for those in \Cref{multistage} which occupy Appendix \ref{app:multistage}.

\subsection{Many stages of inspection}\label{multistage}

In our baseline model we assumed that inspection is a binary choice.  However our results all extend to a model with many stages of inspection, so long as complete inspection is mandatory for obtaining the object.  In this subsection we describe such an environment.

Bidders may inspect the object in \emph{inspection stages}, at each stage
investing additional resources to acquire more information. Inspection stages are discrete, 
taking values in $\mathbb{N} \cup \{\infty\}$;
the special value $k=\infty$
denotes the completion of all stages of inspection.
At the beginning of the procedure, each bidder $i$
is in inspection stage $0$.
Each stage of
inspection is instantaneous and the bidder may choose to
advance the inspection stage at any time,
including advancing any number of stages at the same
moment in time. When a bidder acquires the item, she
immediately completes all (countably many) stages of
inspection for it.\footnote{%
  To model only a finite number of stages, let all
  subsequent stages have no additional cost and
  reveal no information.}

The information that bidder $i$ gains  upon reaching the $k^{\text{th}}$ inspection stage 
is represented by a signal $s^k_{i}$.  Each bidder $i$ draws a private type $\theta_i \in \Theta_i$ and  bidders know their own type at all times.
The bidder's valuation for the item is a random variable
$v_i$ whose value is completely determined by the information
learned while inspecting.
The cost of inspection is represented by a sequence of random variables
$0 = c^0 \leq c^1 \leq c^2 \leq \cdots \leq c^\infty$, where $c^k$ represents
the cumulative cost of reaching the $k^{\mathrm{th}}$ inspection stage; we will
assume that this cumulative cost is one of the pieces of information incorporated
into the signal $s^k_{i}$. 
We will also assume that $\E[v^+] < \infty$ and $\E[c^{\infty}] < \infty$.  

\begin{prop}\label{manystages}
Lemma \ref{equivwelfare}, Theorem \ref{maintheorem} and thus Corollary \ref{maincorollary} hold in the model with many inspection stages, with an appropriately defined generalized notion of strike price and covered call value.
\end{prop}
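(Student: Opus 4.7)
The plan is to identify the right multi-stage analogs of $\sigma_i$ and $\kappa_i$ and then lift each ingredient of the single-stage proof. Let $U_i^k(s_i^k, p)$ denote the value function of bidder $i$'s single-agent optimal-stopping problem in which, at stage $k$ with observed information $s_i^k$, she may (a) pay $c^{k+1}-c^k$ to advance one stage, (b) pay $c^\infty - c^k$ to complete inspection and acquire the item at guaranteed price $p$, or (c) walk away for zero utility. Define the stage-$k$ strike price $\sigma_i^k$ as the unique $p$ solving $U_i^k(s_i^k, p) = 0$, set $\sigma_i = \sigma_i^0$, and define the covered call value $\kappa_i = \min\{\sigma_i, v_i\}$. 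Equivalently, using Weber's prevailing-charge machinery, $\kappa_i$ equals the running minimum of $\sigma_i^k$ along the bidder's optimal inspection path, which agrees with $\min\{\sigma_i, v_i\}$ under the coupling used below.

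Next I would extend \Cref{lemma:upper-bound-policy}. Given any policy in the original game with acquisition indicator $\mathbb{A}_i$ and total realized inspection cost $C_i$, couple it with a policy in the auxiliary guaranteed-$\sigma_i$ game that uses the same inspection and stopping choices and then acquires via the guarantee precisely when $v_i > \sigma_i$. Since this coupled policy is admissible and $U_i^0(s_i^0,\sigma_i)=0$, its expected utility is non-positive, yielding $\E[\mathbb{A}_i v_i - C_i] \leq \E[\mathbb{A}_i \kappa_i]$ with equality iff the original always exercises in the money. The multi-stage analog of \Cref{thm:opt-is-descending} is then the classical generalization of Weitzman's procedure: the planner inspects one stage at a time, always advancing the bidder with the highest current $\sigma_j^{k_j}$, and stops when some bidder's realized $v_j$ exceeds all other bidders' current strike prices. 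Because this procedure always exercises in the money and allocates to the bidder with the highest covered call value, it attains $\E[\max_i \kappa_i]$; \Cref{equivwelfare} then follows upon defining covered counterpart types as in \Cref{equivdist} using the new $\kappa_i$.

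For the Dutch-auction bijection (\Cref{maintheorem}), I would generalize the simulation maps $\lambda$ and $\mu$ to track the multi-stage inspection process. Under $\lambda$, a multi-stage bidder of type $\theta_i$ waits for the clock to fall to $\sigma_i^0$ and begins inspection; upon observing each $s_i^k$ she advances to stage $k+1$ as soon as the clock reaches $\sigma_i^k$ (immediately, if the clock is already below $\sigma_i^k$), claiming the item the moment her accumulated information would make her a claimant under her covered counterpart's normalized strategy $b$. Under the reverse map $\mu$, a zero-cost covered-counterpart bidder who observes her realized $\kappa_i$ first samples a multi-stage inspection path consistent with $\kappa_i$ under $\theta_i$ and then plays the corresponding pullback of $b$. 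The main obstacle is verifying that under $\lambda$ the bidder is non-exposed at every inspection stage---that is, the clock price never strictly exceeds the current strike price $\sigma_i^k$ when she elects to advance, despite the fact that $\sigma_i^k$ need not be monotone in $k$. The rule ``advance the moment the clock first reaches $\sigma_i^k$'' together with an induction over stages resolves this, after which the identity $\kappa_i = \min\{\sigma_i, v_i\}$ shows the price paid on winning equals $b(\kappa_i, \ccpt{\theta_i})$ and the bijection-on-equilibria argument concludes as in the single-stage case, delivering \Cref{maintheorem} and hence \Cref{maincorollary}.
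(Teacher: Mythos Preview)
Your overall architecture---define stage-wise strike prices, prove an amortization inequality, identify the first-best as a descending-priority procedure, and then build the $\lambda/\mu$ simulation maps---matches the paper's. The substantive gap is in your definition of the generalized covered call value. You set $\kappa_i=\min\{\sigma_i^0,v_i\}$ and then assert this ``agrees with'' the running minimum of the $\sigma_i^k$. It does not. In the multi-stage setting the correct object is $\kappa_i=\min\bigl\{\inf_{k\geq 0}\sigma_i^k,\,v_i\bigr\}$, i.e.\ the minimum over \emph{all} intermediate strike prices, not just the initial one. A two-stage example makes the distinction concrete: suppose $\sigma_i^0=10$; after the first stage the bidder learns that $v_i\in\{0,20\}$ with equal probability and that the remaining inspection cost is $8$, so $\sigma_i^1=4$; if $v_i=20$ is then realized, the running minimum is $4$ while your $\min\{\sigma_i^0,v_i\}=10$.

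This is not cosmetic: with your $\kappa_i'=\min\{\sigma_i^0,v_i\}$ the amortization bound $\E[\mathbb{A}_i v_i-C_i]\le\E[\mathbb{A}_i\kappa_i']$ may still hold (it is weaker, since $\kappa_i'\ge\kappa_i$), but it is no longer \emph{tight} for the optimal procedure, so your analog of \Cref{equivwelfare} fails---the first-best welfare in the original model is $\E[\max_i\kappa_i]$, strictly below $\E[\max_i\kappa_i']$ in general. Consequently the bijection in \Cref{maintheorem} breaks: the covered-counterpart auction built on $\kappa_i'$ is not welfare-equivalent to the Dutch auction with inspection. Relatedly, the paper's $\lambda$ map has the bidder advance to stage $k{+}1$ when the clock hits $b(\kappa_i^k)$---the bid evaluated at the \emph{running minimum} $\kappa_i^k=\min_{\ell\le k}\sigma_i^\ell$---rather than at $\sigma_i^k$ itself; and the right generalization of ``exercises in the money'' (what the paper calls \emph{non-exposed}) is that the policy never halts at a stage with $\sigma_i^k>\kappa_i^k$. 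Once you adopt the running-minimum definition of $\kappa_i$ and the corresponding notion of non-exposure, your sketch goes through essentially as the paper does.
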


Informally, the generalization of strike price is a (random) sequence of prices, 
one for each stage of inspection, each representing the strike price of a call
option whose fair value would exactly cover the expected remaining inspection
costs of a decision maker, given that the decision maker has already reached
the given inspection stage and invested the attendant costs. The generalized
covered call value is then defined as the minimum of this sequence of strike
prices.  Intuitively this is the value that remains to an individual who, at every stage of inspection prior to inspecting sells the relevant call option to fund her future information acquisition. We leave to the appendix the formal definition, as it is somewhat complicated.  
But, as in \Cref{main}, it preserves all relevant welfare quantities and thus has the same substantive implications as in the case with a single stage of inspection.

\subsection{Many objects}\label{multiunit}

Our baseline analysis focuses on the case of a single object\footnote{%
  Actually, our baseline analysis also applies---with only minor
  modifications in the proofs---to the case in which 
  a fixed number of identical items are for sale or, even more
  generally, to the case in which certain subsets of agents may be
  chosen as winners of the auction, the collection of these feasible
  sets of winners constitutes a matroid, and $v_i$ represents agent
  $i$'s value for winning. The Dutch auction is modified so that:
  (i) an agent may join the set of winners at any time, by paying the
  current clock value, (ii) agents are eliminated from the auction
  whenever they do not belong to any feasible superset of the 
  current set of winners, (iii) the auction ends when every agent
  has either been included in the set of winners or eliminated.
  Agents are informed when they are eliminated, but they are not 
  informed when other agents join the set of winners.
}, but our basic insight extends to a context with multiple items. 
This context requires an auction format, with a uniform descending price across objects, that is novel to our knowledge.\footnote{A patent on this mechanism is pending with the United States Patent and Trademark Office under the name ``Descending Counter Value Matching with Information Sharing''. \label{patent}}

A set $\itemset$ of $m$ heterogeneous items 
is for sale to a set $\bidderset$ of $n$ bidders.
Bidders have unit demand; each wants to acquire at most one item.
As in our baseline model we will denote a generic bidder by $i$.
We will use the symbol $j$ to denote a generic item.   Bidders are endowed, as in the baseline model, with a type $\theta_i$ determining a vector of costs $c_i\left(\theta_i\right)$ with typical element $c_{ij}$ (from which we drop the argument) denoting the cost of inspecting object $j$ for bidder $i$.  Type $\theta_i$ is drawn for each bidder according to an arbitrary joint distribution.  Given $\theta_i$, bidder $i$'s value $v_{ij}$ for object $j$ is drawn independently across $i$ and $j$ according to distribution $F_{\theta_i, j}$.

\begin{definition}
The {\em uniform descending auction} is a sales procedure that proceeds as follows. A clock begins at $\infty$ and continuously decreases and a set of items that are available is maintained; both are displayed publicly. At any time, any bidder may claim any available item at the clock price, at which point both the bidder and item are removed instantly.  The clock then continues to descend until it reaches~$0$.
\end{definition}

The equilibrium of this mechanism has not been analyzed in previous literature to our knowledge even in a context without information acquisition and when information acquisition is incorporated into the model, there is no known computationally tractable procedure that is guaranteed to achieve more than half of the welfare of an optimal allocation, making it difficult to imagine full efficiency is tractably achievable.  As a result, in this setting, an analog of Theorem \ref{maintheorem} is less meaningful.  We instead prove an analog of the second point of Corollary \ref{maincorollary}, that any equilibrium will quite tightly approximate the welfare guarantee of the best tractable assignment protocol we are aware of under complete knowledge of all {\em ex ante} known information, namely the greedy matching algorithm that is guaranteed to achieve half of optimal welfare.  Our argument in the appendix draws heavily on the literature on smoothness in computational mechanism design \citep{lucier,roughgarden, syrgkanise, syrgkanis, hartline} which is used to derive approximate efficiency results of this form.

\begin{thm}\label{manyobjects}
In any Bayes-Nash Equilibrium of the uniform descending auction, expected welfare is at least 
$\frac12 \left( 1 - e^{-2} \right) \approx 0.43$ of the first-best.
\end{thm}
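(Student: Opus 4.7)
The plan is to combine the multi-item generalization of the covered-call framework with a smoothness-style price-of-anarchy argument adapted to the uniform descending auction and to use the greedy matching as an intermediate benchmark.

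First, I would extend the real-options machinery from \Cref{option} to multiple items. For each bidder-item pair $(i,j)$ define the strike price $\sigma_{ij}$ as the unique solution to $\E_{\theta_i}[(v_{ij} - \sigma_{ij})^+] = c_{ij}$ and the covered call value $\kappa_{ij} = \min(\sigma_{ij}, v_{ij})$. A direct generalization of \Cref{lemma:upper-bound-policy}---applied item-by-item, since each bidder is allocated at most one item---shows that for any procedure and any allocation $\Pi$, $\E[\sum_i \allocsubi v_{i,\Pi(i)} - \inspect{i,\Pi(i)} c_{i,\Pi(i)}] \leq \E[\sum_i \allocsubi \kappa_{i,\Pi(i)}]$. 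Consequently, the first-best welfare is bounded above by $\E[w(M^\ast)]$, where $M^\ast$ is a maximum-weight matching on the random bipartite graph with edge weights $\kappa_{ij}$.

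Second, I would introduce the greedy matching $M^g$ (processing pairs $(i,j)$ in decreasing order of $\kappa_{ij}$) as an intermediate benchmark. Since greedy is a $\tfrac12$-approximation for maximum-weight bipartite matching, $\E[w(M^g)] \geq \tfrac12 \E[w(M^\ast)]$, so the bound reduces to showing that any Bayes-Nash equilibrium of the uniform descending auction has expected welfare at least $(1-e^{-2})\,\E[w(M^g)]$.

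Third, I would carry out the smoothness argument. For each bidder $i$ paired with item $g(i)$ in the greedy matching, define a randomized deviation: inspect $g(i)$ at the very top of the clock, then attempt to claim $g(i)$ at a random threshold price $p \in [0, \kappa_{i,g(i)}]$ drawn from a carefully chosen density (the standard choice from the first-price smoothness literature cited in \Cref{maincorollary} uses an exponential-tilted density to yield a $1-e^{-1}$ factor). By the BNE inequality, $\E[u_i(\mathrm{BNE})] \geq \E[u_i(\mathrm{deviation})]$, and summing over $i$ while using $\mathrm{SW}(\mathrm{BNE}) = \sum_i u_i + \mathrm{Rev}$ gives an inequality of the form $(1+\mu)\,\mathrm{SW}(\mathrm{BNE}) \geq \lambda \, \E[w(M^g)]$. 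Tuning the deviation distribution so that $\lambda/(1+\mu) = 1 - e^{-2}$ and combining with the factor $\tfrac12$ from greedy yields the claimed bound.

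The hard part will be the design and analysis of this randomized deviation. Two features of the uniform descending auction make it subtler than standard first-price smoothness: first, by the time the deviating bidder's random clock threshold is reached, item $g(i)$ may already be claimed by another bidder, and the probability of this depends on the equilibrium strategies of others; second, the descending clock couples item availability with the cumulative auctioneer revenue, so the revenue term must be leveraged to absorb the ``already-claimed'' contingency in the smoothness inequality. Getting the sharper factor $1-e^{-2}$ rather than the more routine $1-e^{-1}$ will require either a two-stage deviation (e.g., a bidder whose greedy target has been claimed falls back on another item) or a tighter integration over the random threshold distribution that exploits the fact that any preempting claim already contributes revenue matching the would-be deviation's expected payment; executing this coupling correctly is the heart of the argument.
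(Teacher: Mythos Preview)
Your high-level strategy (covered-call reduction plus a smoothness argument) is right, but the decomposition through the greedy matching is a genuine misstep and will not deliver the stated constant. The paper does \emph{not} factor the bound as $\tfrac12 \times (1-e^{-2})$ with the $\tfrac12$ coming from greedy-vs-optimal. Instead, it proves a single smoothness inequality that already carries the full constant $\tfrac12(1-e^{-2})$ and compares \emph{directly} to the first-best matching via LP duality. Concretely, the paper defines one deviation $b_i'$ per bidder---independent of any target item---that shades \emph{all} of $i$'s strike prices and values by a common random factor $(1-r)$, with $r$ drawn from density $\tfrac{1}{2r}$ on $[e^{-2},1]$; this deviation is non-exposed and depends only on $\theta_i$. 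The key lemma then shows, pointwise in the type profile and for \emph{every} pair $(i,j)$, that
\[
\E_r\!\left[\kappa_i(b_{-i},b_i') - p_i(b_{-i},b_i')\right] \;+\; \tfrac12 p_i(b) \;+\; \tfrac12 p^j(b) \;\ge\; \tfrac12(1-e^{-2})\,\kappa_{ij}.
\]
Setting $\alpha_i$ equal to the first two terms and $\beta_j = \tfrac12 p^j(b)$ gives feasible duals for the assignment LP scaled by $\tfrac12(1-e^{-2})$, so $\sum_i \alpha_i + \sum_j \beta_j$ dominates $\tfrac12(1-e^{-2})$ times the \emph{optimal} matching weight; the $\tfrac12$ arises because revenue is split into a per-bidder half and a per-item half, not from a greedy approximation.

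Your route has two concrete problems. First, your deviation targets $g(i)$, which is determined by the greedy matching on \emph{all} covered-call values and hence depends on $\theta_{-i}$; this is not a legal unilateral deviation in a Bayes--Nash argument, and there is no need for it once you use the paper's item-agnostic deviation. Second, even setting that aside, a single-item-target deviation analyzed in the natural way yields exactly the same per-pair constant $\tfrac12(1-e^{-2})$ (the argument still needs both $p_i(b)$ and $p^{g(i)}(b)$ to handle the sequential information structure of the descending clock), so composing with the $\tfrac12$ greedy loss would give only $\tfrac14(1-e^{-2})$. Your hoped-for ``$(1-e^{-2})$ against greedy'' via fallback deviations or tighter integration is the speculative part, and the paper's dual-variable argument shows it is simply unnecessary. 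A smaller point: ``inspect $g(i)$ at the very top of the clock'' makes the deviation exposed (you pay $c_{i,g(i)}$ even when the item is later preempted), which breaks the clean amortization $\E[\allocsubi v - \inspecti c] = \E[\allocsubi \kappa]$; the paper's deviation inspects only when the clock reaches the shaded strike price, preserving non-exposure.
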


Thus the basic intuition behind our result in the single item case that guarantees reasonable performance carries over to a setting with many objects for sale.

\subsection{Approximate best response}\label{rationality}

Determining a best response may be difficult in our setting, especially in the extension of the previous subsection,  because deciding both when to investigate which object and what to bid upon investigation requires a rich set of strategic calculations about the likely behavior of other agents.  Luckily, our results do not depend sensitively on agents being able to play perfect best-responses.  This is an attractive property of the approximation arguments like those we use to establish Theorem \ref{manyobjects} \citep{roughgarden15robust}.  In particular, we will say that bidders $\alpha$-best respond if they obtain at least a fraction $\alpha \leq 1$ of the greatest utility they could achieve through any strategy.  Our bound declines continuously in $\alpha$.

\begin{prop} \label{prop:bounded-rationality}
In the Dutch auction with a single item, if all bidders $\alpha$-best respond, then expected welfare is at least a factor $1-e^{-\alpha}$ of optimal. In the uniform descending auction with multiple items, if all bidders 
$\alpha$-best respond, then expected welfare is at least a factor $\frac12 \left( 1 - e^{-2 \alpha} \right)$ of optimal.
\end{prop}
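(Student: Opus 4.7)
The plan is to retrace the smoothness-style deviation arguments used to prove the $\alpha = 1$ specializations (Corollary \ref{maincorollary} for the single-item Dutch auction and Theorem \ref{manyobjects} for the uniform descending auction) and simply substitute the $\alpha$-best-response inequality $u_i^{\mathrm{eq}} \geq \alpha \, u_i(b_i', b_{-i})$ wherever the original proofs invoke the exact one. The chain of inequalities is otherwise unaffected; what changes is that the single parameter $k$ controlling the support of the randomized deviation must be re-optimized so that the revenue term in the resulting welfare lower bound drops out.

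For the single-item auction I first apply \Cref{maintheorem} to pass to the covered-counterpart first-price auction where each bidder has no inspection cost and value $\kappa_i$. Let $i^*$ attain $v^* := \max_i \kappa_i = \mathrm{OPT}$ and let $H$ be the equilibrium CDF of $\max_{j \neq i^*} b_j$. For a parameter $k > 0$ to be chosen, consider the deviation in which $i^*$ bids $b$ drawn from density $g(b) = 1/[k(v^* - b)]$ on $[0, v^*(1 - e^{-k})]$. A direct computation gives
\begin{equation*}
u_{i^*}^{\mathrm{dev}} \;=\; \frac{1}{k}\int_0^{v^*(1-e^{-k})} H(b)\,db \;\geq\; \frac{v^*(1-e^{-k}) - R}{k},
\end{equation*}
where $R$ denotes expected equilibrium revenue, using $\int_0^T (1 - H(b))\,db \leq \E[\max_{j\neq i^*} b_j] \leq R$. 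Combining the $\alpha$-best-response inequality with the welfare identity $W = \sum_i u_i^{\mathrm{eq}} + R \geq u_{i^*}^{\mathrm{eq}} + R$ (all utilities are nonnegative) yields
\begin{equation*}
W \;\geq\; \frac{\alpha(1 - e^{-k})}{k}\,\mathrm{OPT} \;+\; R\Bigl(1 - \frac{\alpha}{k}\Bigr),
\end{equation*}
and setting $k = \alpha$ makes the coefficient of $R$ vanish, delivering $W \geq (1 - e^{-\alpha})\,\mathrm{OPT}$.

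For the multi-item uniform descending auction, the proof of \Cref{manyobjects} applies the analogous randomized deviation bidder-by-bidder to the pairs of an optimal matching, with the factor $\tfrac{1}{2}$ tracing back to the $2$-approximation of bipartite matching by greedy selection. Substituting the $\alpha$-best-response factor on each deviation utility and re-optimizing the single deviation parameter, I expect the right choice to be $k = 2\alpha$, giving $W \geq \tfrac{1}{2}(1-e^{-2\alpha})\mathrm{OPT} + R/2 \geq \tfrac{1}{2}(1-e^{-2\alpha})\mathrm{OPT}$ by an exactly parallel computation. The main obstacle will be verifying that the auxiliary availability argument inside the proof of \Cref{manyobjects}---which controls the probability that item $j$ is still on the table when bidder $i$'s deviation fires---carries over unchanged after the $\alpha$-substitution. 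Because that argument bounds item availability against equilibrium quantities (principally revenues from individual items) that are unaffected by any single-player deviation, the extension should be essentially mechanical once the deviation parameter is re-tuned.
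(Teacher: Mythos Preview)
Your overall strategy---insert the $\alpha$-best-response inequality into the smoothness argument and re-optimize the deviation support---matches the paper, and the choices $k=\alpha$ (single item) and $k=2\alpha$ (multiple items) are correct. But invoking \Cref{maintheorem} to ``pass to the covered-counterpart first-price auction'' is a genuine gap. That theorem establishes a welfare-preserving bijection only between \emph{equilibria}: its proof (Claim~\ref{clm:3}) uses that best responses almost surely exercise in the money, whence \Cref{lemma:upper-bound-policy} holds with equality and utilities transfer exactly. An $\alpha$-best-response profile $b$ need not exercise in the money; mapping it to the covered counterpart via $\mu$ can only \emph{raise} each bidder's utility, and hence the welfare. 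A lower bound on the covered-counterpart welfare $W^{\circ}$ therefore says nothing about the original welfare $W\leq W^{\circ}$, which is what you must bound.

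The paper avoids this by working directly in the original Dutch auction: the deviation $b'_{i^*}$ is defined there (inspect when the clock reaches $(1-r)\sigma_{i^*}$, claim at $(1-r)v_{i^*}$ or immediately if $v_{i^*}\geq\sigma_{i^*}$). This particular strategy \emph{does} exercise in the money, so \Cref{lemma:upper-bound-policy} gives its expected utility as $\E[\allocsub{i^*}\,r\,\kappa_{i^*}]$, and your computation of $u^{\mathrm{dev}}_{i^*}$ is then valid verbatim. The welfare decomposition $W = R + \sum_i u_i(b)$ holds in the original auction, each $u_i(b)\geq 0$ since doing nothing is available, and your inequality chain completes. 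The same correction applies to the multi-item sketch; there your residual ``$+R/2$'' is also not how the paper's argument runs---the revenue terms $p_i(b)$ and $p^j(b)$ are absorbed pairwise into the per-edge inequality (the $\alpha$-analogue of \Cref{lemma:general-smoothness}) rather than surviving as a global slack.
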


For reasonably large $\alpha$ this decay is quite gradual indeed.  
For example, when $\alpha = 0.7$ the expected welfare of the 
Dutch auction with a single item is at least half of optimal, 
as compared with the approximation guarantee of $1-\frac{1}{e}
\approx 63\%$ when players exactly best-respond to one another. 
The expected welfare of the uniform descending auction with multiple
items is at least 37\% of optimal when $\alpha=0.7$, as compared
with the approximation guarantee of $1 - \frac{1}{e^2}\approx $43\% when 
players exactly best-respond to one another.

\subsection{Common  values}\label{common}

\citet{milgromweber} show, in a context without information acquisition costs, that ascending price auctions quite generally lead to greater revenue than a descending price auction if individuals have {\em common values} in the sense that the true but unknown value of the object is common across all individuals who all have a signal of the object's value.  This raises the question of whether, in a common values version of our model, an ascending price mechanism might out-perform a descending price mechanism.  In this subsection we show that, to the contrary, the bad potential performance of ascending price mechanisms and the good guaranteed performance of a descending price mechanism extend to a simple common values context.

Consider our baseline environment with the following modification to the value and information structure.  Every individual has a common prior on the true value of the object, $v$, and an individual-specific cost $c_i$ of determining this common value that is private knowledge.  Expenditure of this cost is required to obtain the item as above.  

Because the object is of equal value to all individuals, the only scope for efficiency is in ensuring that the individual who investigates the object has the minimum possible cost for doing so.  In an ascending or simultaneous second-price auction, there is always an equilibrium that implements this efficiency as proven by \citet{VCGinfo}. However, as \citet{tomoeda} shows, there are often very inefficient equilibria.  In our case this is particularly easy to illustrate.  

Suppose all costs are known and that there are two bidders, Alice and Bob.   Alice has cost $\mathbb E[v]-\epsilon$ and  Bob has cost  $\epsilon$, where $\mathbb E[v]\gg\epsilon>0$.  The best welfare is $\mathbb E[v]-\epsilon$ and is achieved in an equilibrium where Alice believes Bob will query and claim the object at price $0$ after submitting a bid of $\mathbb E[v]$ and facing no competition.  However suppose that Bob believes Alice will, with certainty, query and then submit a bid of $v$.  Bob then has no incentive to bid anything above $0$ as there is no way he can avoid taking losses if he queries.  Alice makes positive profits of $\epsilon$ in this case and would lose these by not querying.  Thus this state constitutes an equilibrium.  As $\epsilon \rightarrow 0$ it remains such and thus the price of anarchy is arbitrarily large in this example.

Conversely, the following theorem shows that the descending procedure is robustly approximately efficient in this case.

\begin{prop}\label{commonprop}
In the common values environment described in this section, the descending price, single-item auction achieves at least $1-\frac{1}{e}$ of the welfare of the social optimum.
\end{prop}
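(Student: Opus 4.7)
Let $\sigma_i$ denote bidder $i$'s strike price (the solution to $\E[(v-\sigma_i)^+] = c_i$) and $\kappa_i := \min(\sigma_i, v)$ her covered call value; specializing \Cref{cor:upper-bound-opt} to the common-values setting, the first-best welfare equals $\E[\max_i \kappa_i]$. I would prove the bound by showing that the common-values Dutch auction is $(1-\tfrac{1}{e},\,1)$-smooth in the sense of Syrgkanis--Tardos with respect to this benchmark, and then invoking the standard smoothness-to-price-of-anarchy argument.

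For each bidder $i$ and each opponent profile $s_{-i}$, the deviation $D_i$ is: wait until the clock reaches $\sigma_i$ and, if the item is still available, inspect and learn $v$; then sample a bid $b^* \sim B^*(\kappa_i)$, where $B^*$ is the randomized bid distribution witnessing $(1-\tfrac{1}{e},1)$-smoothness of the first-price auction with (possibly correlated) private values---valid under arbitrary correlation by \citet{syrgkanisthesis}, as cited in \Cref{maincorollary}; finally, plan to claim the item at $b^*$, but pre-empt (at price $p+\varepsilon$, with $\varepsilon\downarrow 0$) any opponent who would otherwise claim at a price $p \in (b^*, \sigma_i]$. Because $\kappa_i = \sigma_i$ whenever $v > \sigma_i$, this rule guarantees the bidder wins whenever she inspects and $v > \sigma_i$, so $D_i$ always exercises in the money and \Cref{lemma:upper-bound-policy} applies with equality: $\E[u_i(D_i, s_{-i})] = \E[\allocsubi \kappa_i - t_i]$.

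Let $Q_{-i}$ denote the maximum clock price at which some opponent would claim under $s_{-i}$. A short case analysis on $v$ versus $\sigma_i$ and on $b^*$ versus $Q_{-i}$ shows that $\allocsubi \kappa_i - t_i$ is pointwise at least the corresponding first-price-auction utility $(\kappa_i - b^*)\,\mathbf{1}[b^* > Q_{-i}]$---the pre-emption rule supplies the extra utility $\kappa_i - Q_{-i}\ge 0$ in the subcase $b^* \le Q_{-i} \le \kappa_i$ that a pure first-price deviator would forgo. Summing over $i$ and invoking $(1-\tfrac{1}{e},1)$-smoothness of the first-price auction yields $\sum_i \E[u_i(D_i, s_{-i})] \ge (1-\tfrac{1}{e})\,\E[\max_i\kappa_i] - \E[\max_j Q_j]$, with $\E[\max_j Q_j]$ equal to the auction's expected revenue under $s$. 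In any Bayes--Nash equilibrium $\sum_i u_i(s) \ge \sum_i \E[u_i(D_i, s_{-i})]$, and adding the revenue to both sides gives the desired welfare bound $W_{\mathrm{real}}(s) \ge (1-\tfrac{1}{e})\,\E[\max_i\kappa_i]$.

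The main obstacle is verifying the pointwise inequality of the third paragraph, in particular that $\kappa_i \ge Q_{-i}$ whenever pre-emption actually fires (which follows from the inspection condition $Q_{-i} \le \sigma_i$ together with a short sub-case argument on $v$ vs.\ $\sigma_i$: if $v > \sigma_i$ then $\kappa_i = \sigma_i \ge Q_{-i}$; otherwise, pre-emption is triggered only when $v > Q_{-i}$, yielding $\kappa_i = v \ge Q_{-i}$), so that the pre-emption gain is non-negative. A secondary subtlety is the implementability of pre-emption in the continuous-clock Dutch auction of \Cref{dutch}; I would handle this via the $\varepsilon$-perturbation above and a routine continuity argument transferring the bound to the limit $\varepsilon \downarrow 0$.
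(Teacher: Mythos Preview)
The central gap is that your pre-emption rule is not implementable. In the descending clock of \Cref{dutch}, bidder $i$ cannot observe an opponent's intended claim price $p$ before it is exercised; she learns the auction has ended only once another bidder has already claimed. ``Pre-empt at $p+\varepsilon$ any opponent who would otherwise claim at $p\in(b^*,\sigma_i]$'' therefore requires information the bidder does not have, and the $\varepsilon\downarrow 0$ limit does nothing to supply it---the obstruction is informational, not a continuity artifact. Without pre-emption, $D_i$ fails to exercise in the money: having inspected at $\sigma_i$ and found $v>\sigma_i$, bidder $i$ plans to claim only at $b^*\le\kappa_i=\sigma_i$, so an opponent claiming in $(b^*,\sigma_i)$ denies her the item, and the equality case of \Cref{lemma:upper-bound-policy} on which your amortization relies is lost. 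A related structural issue is that the covered-call framework does not port cleanly to this environment: the proof of \Cref{lemma:upper-bound-policy} uses independence of $\inspecti$ and $v_i$ given types, whereas here $\inspecti=\mathbf{1}[Q_{-i}\le\sigma_i]$ depends on the common $v$ through opponents' strategies; and the first-best is $\E[v^+]-c_{\min}$, which exceeds your benchmark $\E[\max_i\kappa_i]=\E[v]-c_{\min}$ whenever $v$ can be negative.

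The paper's proof avoids all of this with a much simpler smoothness deviation, used only by the single lowest-cost bidder $i$: draw $r$ with density $1/r$ on $[1/e,1]$, inspect when the clock reaches $(1-r)w$ with $w=\E[v^+]-c_i$ the first-best welfare, and if $v\ge0$ claim \emph{immediately} at that same price. Because inspection and (conditional) claiming occur at the same clock value there is no window in which an opponent could intervene, so no pre-emption is needed, and the argument runs directly against the benchmark $w$ without invoking covered-call values or \Cref{lemma:upper-bound-policy}.
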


 This suggests that it may be possible to extend our results into richer common values environments, for example where bidders have private signals of the valuation or have both a common and private value component.  This is an interesting direction for future research.

\subsection{Optional inspection}\label{optional}

In our baseline model inspection is mandatory for an individual to consume an object.  In many cases this seems realistic: in acquiring a company it is often mandated by law that due diligence is complete, purchasing a house usually requires inspection and assessments and hiring job market candidates in academia rarely  is possible without a campus interview.  However, in some contexts, such as buying a car, inspection may be optional  prior to consumption.  

This possibility raises interesting challenges for a descending price mechanism.  While knowing that the price of the object is not too high (bounding it from above) may be critical for deciding between inspection and forgoing the object entirely (deciding between ``no'' and ``maybe''), it is when the price of the object is very low that it will be optimal to consume it without bothering to inspect.  For example, suppose there are two bidders. 
Bidder $A$ has a cost of inspection equal to 1 and a 
value equal to $v$ with probability $p > 1/v$ and to 0 
with the remaining probability. Bidder $B$ has no inspection
cost and value $w$ sampled from a distribution will full support
on $[0,\infty)$. Since it is cost-free to inspect $B$'s value, $w$,
the first-best procedure starts by doing so. If $w$ is greater than bidder $A$'s
strike price (namely, $v - 1/p$), then it is optimal to simply award
the item to $B$. If $w$ is sufficiently close to 0, then it is optimal
to award the item to $A$ without expending the cost of inspecting
$A$'s value. If $w$ lies in a non-empty interval between these two extremes, 
it is optimal to inspect $A$'s value
and then award the item to the bidder with the higher value. Thus,
in order to implement the first-best procedure, 
before $A$ decides to inspect her value it is necessary
to know both a {\em lower} and an {\em upper} bound on
$B$'s value. While we spare the reader an equilibrium
construction, it should be clear that a descending procedure
will not be able to signal that $B$'s value is above
the threshold value necessary to spur inspection by $A$
in the first-best procedure.

More generally, for a bidder to choose between inspection and immediate purchase (deciding between ``maybe'' and ``yes'') it may be crucial to know that the price of the object will not be too low (bounding it from below).  It thus seems plausible that the greatest efficiency might be achieved by neither an ascending nor a descending price mechanism, but rather a hybrid where potential prices converge from both ends.  On the other hand, there is no known computationally tractable algorithm for achieving the first-best in this setting \citep{doval}.

However, at least in the case where inspection occurs in a single stage and there is a single object, the inefficiency due to inspecting when one might instead have immediately claimed the object cannot ``compound'' in the way inefficiency due to not considering the object at all can.  Intuitively the reason is that once an individual has claimed the object, no other efficiency loss is possible from any other individual.  Any given individual can never lose more than half of the welfare achievable by the optimal policy by simply choosing to ignore one of her two options. 
More formally consider our baseline model with the lone modification that bidders now have the option to claim and receive $\mathbb E_{\theta_i}\left[v_i\right]$ without first inspecting.  

\begin{prop}\label{prop:optional}
In the optional inspection model, any Bayes-Nash equilibrium of the Dutch auction achieves at least $\frac{1}{2}-\frac{1}{2e}$ of the welfare of the first-best.
\end{prop}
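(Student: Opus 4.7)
The plan is to combine an upper bound on the first-best welfare with two smoothness-style lower bounds on the Bayes-Nash equilibrium welfare, then average. Define $\tilde v_i := \mathbb{E}_{\theta_i}[v_i]$, the payoff of claiming without inspection, and let $W^*$ and $W^{\mathrm{BNE}}$ denote the first-best and equilibrium welfare respectively. For the upper bound, decompose the indicator $\mathbb{A}_i = \mathbb{A}_i\mathbb{I}_i + \mathbb{A}_i(1-\mathbb{I}_i)$ into an ``inspected-then-allocated'' piece and an ``allocated-without-inspection'' piece. \Cref{lemma:upper-bound-policy} applies to the first piece (since $\mathbb{A}_i\mathbb{I}_i \le \mathbb{I}_i$) and gives $\mathbb{E}[\mathbb{A}_i\mathbb{I}_i v_i - \mathbb{I}_i c_i] \le \mathbb{E}[\mathbb{A}_i\mathbb{I}_i \kappa_i]$; for the second piece, $\mathbb{A}_i(1-\mathbb{I}_i)$ cannot depend on $v_i$ once we condition on $\theta_i$ (it is decided without inspecting), so its contribution is $\mathbb{E}[\mathbb{A}_i(1-\mathbb{I}_i)\tilde v_i]$. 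Because at most one bidder wins, $\sum_i \mathbb{A}_i\mathbb{I}_i \le 1$ and $\sum_i \mathbb{A}_i(1-\mathbb{I}_i) \le 1$, so summing yields
\[
W^* \;\le\; \mathbb{E}[\max_i \kappa_i] + \mathbb{E}[\max_i \tilde v_i].
\]

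For the two lower bounds, I would invoke the $(1-1/e,1)$-smoothness of the first-price auction \citep{syrgkanise}; let $b^*$ denote its canonical deviation bidding function. For \emph{Bound A}, have each bidder $i$ consider the deviation $\lambda(b^*)$ defined exactly as in the proof of \Cref{maintheorem}, with ``value'' instantiated as $\kappa_i$: inspect when the clock reaches $b^*(\sigma_i)$, then claim at $b^*(\kappa_i)$. The key observation---and the place where optional inspection demands extra care---is that under $\lambda(b^*)$ bidder $i$ only pays $c_i$ when the clock actually descends to $b^*(\sigma_i)$ without any rival having claimed, so the inspection indicator $\mathbb{I}_i$ depends only on opponents' bids and is independent of $v_i$ given $\theta_i$. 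Combined with the strike-price identity $\mathbb{E}[(v_i-\sigma_i)^+\mid\theta_i] = c_i$, this forces the expected inspection cost to cancel the expected surplus $\mathbb{E}[(v_i-\kappa_i)\mathbb{A}_i\mid\theta_i]$ exactly, so bidder $i$'s expected deviation utility equals that of a covered-counterpart bidder with value $\kappa_i$ bidding $b^*$ in a standard first-price auction. Summing the BNE inequality $u_i^{\mathrm{BNE}} \ge \mathbb{E}[u_i(\lambda(b^*), b_{-i}^{\mathrm{BNE}})]$, adding revenue, and applying first-price smoothness gives $W^{\mathrm{BNE}} \ge (1-1/e)\,\mathbb{E}[\max_i \kappa_i]$. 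For \emph{Bound B}, have bidder $i$ deviate by skipping inspection and claiming at $b^*(\tilde v_i)$; since $\mathbb{E}_{\theta_i}[v_i]=\tilde v_i$ this is utility-equivalent to bidding $b^*(\tilde v_i)$ in a standard first-price auction with value $\tilde v_i$, and the same smoothness argument yields $W^{\mathrm{BNE}} \ge (1-1/e)\,\mathbb{E}[\max_i \tilde v_i]$.

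Combining the bounds,
\[
W^{\mathrm{BNE}} \;\ge\; (1-1/e)\max\bigl\{\mathbb{E}[\max_i \kappa_i],\,\mathbb{E}[\max_i \tilde v_i]\bigr\} \;\ge\; \tfrac{1-1/e}{2}\bigl(\mathbb{E}[\max_i \kappa_i]+\mathbb{E}[\max_i \tilde v_i]\bigr) \;\ge\; \bigl(\tfrac12 - \tfrac1{2e}\bigr)W^*.
\]
The principal obstacle is the covered-call accounting for the deviation $\lambda(b^*)$: a naive argument that unconditionally charges $c_i$ to bidder $i$ would leak a $c_i\Pr[\mathrm{lose}]$ term from the smoothness inequality and destroy the bound. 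The descending-price structure resolves this by making inspection endogenously conditional on the auction still being live---exactly the ``no exposure'' property that powered \Cref{main}---so that the strike-price identity cancels the cost term cleanly.
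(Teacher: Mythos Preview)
Your proof is correct and follows essentially the same approach as the paper: the same upper bound $W^* \le \E[\max_i \kappa_i] + \E[\max_i \tilde v_i]$ via decomposing into inspected and uninspected allocations, and the same two smoothness lower bounds via the inspect-then-claim and claim-without-inspecting deviations, averaged to conclude. The paper spells out the canonical first-price deviation explicitly as random shading (sample $r\in[1/e,1]$ with density $1/r$ and bid $(1-r)$ times the relevant value) rather than citing it abstractly as $b^*$, and appeals to ``always exercises in the money'' plus \Cref{lemma:upper-bound-policy} where you spell out the independence of $\mathbb{I}_i$ from $v_i$ and the strike-price cancellation directly---but the content is identical.
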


Thus the descending price procedure at least approximates optimality in this context.

\subsection{Posted pricing}\label{posted}

The fundamental intuition behind our result is that, to perform well, a mechanism must allow bidders to exercise in the money by guaranteeing them, at the point they must decide on inspection, that they can acquire the object at a price with an upper bound.  A mechanism that is in some ways even simpler than ours and allows this is to sequentially approach bidders and offer them a posted price.  This obviously requires the seller to know a reasonable price to post.  When the bidders' types are independent and their distributions are known to the seller it is possible to calculate such a price that can approximate optimal welfare.

\begin{thm}[``Prophet inequality'', \cite{krengel1978semiamarts,samuel1984comparison}]  
\label{thm:prophet}
Let $X_1,\dots,X_n$ be independent nonnegative random variables. Assume
that the distribution of the random variable $X^* = \max_i X_i$ has no point masses,
and let $\pi$ be the median of this distribution.
Let $Z$ equal the lowest-index $X_i$ that exceeds $\pi$, or $0$ if none do.
Then $\E Z \geq \frac{1}{2} \E X^*$.
\end{thm}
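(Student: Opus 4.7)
My plan is to compare $\E[Z]$ and $\E[X^*]$ by decomposing both quantities around the threshold $\pi$. The first step is to use the no-point-mass hypothesis together with independence to pin down the base probabilities: continuity of $X^*$ at $\pi$ combined with the median property forces $\Pr[X^* \leq \pi] = 1/2$, and since the $X_i$ are independent this factors as $\prod_i \Pr[X_i \leq \pi] = 1/2$. Writing $A_i$ for the event that $X_j \leq \pi$ for every $j < i$, this factorization gives the uniform lower bound $\Pr[A_i] \geq 1/2$ for every $i$, since $\Pr[A_i]$ is the product of the first $i-1$ of these factors and each factor is at most $1$.

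Next I would rewrite $\E[Z]$ relative to $\pi$. Because $Z$ is either $0$ or strictly exceeds $\pi$, and $\Pr[Z > 0] = \Pr[X^* > \pi] = 1/2$, I get the identity $\E[Z] = \pi/2 + \E[(Z-\pi)^+]$. A stopping-time expansion then identifies $(Z-\pi)^+ = \sum_i (X_i - \pi)^+ \mathbf{1}[A_i]$: only the first index $i$ with $X_i > \pi$ contributes, and the indicator $\mathbf{1}[X_i > \pi]$ is absorbed into the $+$. Independence of $A_i$ (which depends only on $X_1,\dots,X_{i-1}$) from $X_i$ then yields
$\E\bigl[(X_i - \pi)^+ \mathbf{1}[A_i]\bigr] = \Pr[A_i]\cdot \E[(X_i - \pi)^+] \geq \tfrac{1}{2}\E[(X_i - \pi)^+]$.

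To close the argument I would combine the trivial upper bound $\E[X^*] \leq \pi + \E[(X^* - \pi)^+]$ with the subadditivity $(X^* - \pi)^+ = \max_i (X_i - \pi)^+ \leq \sum_i (X_i - \pi)^+$. Summing the previous display over $i$ produces $\E[(Z-\pi)^+] \geq \tfrac{1}{2}\E[(X^* - \pi)^+]$, so that $\E[Z] \geq \tfrac{1}{2}\bigl(\pi + \E[(X^* - \pi)^+]\bigr) \geq \tfrac{1}{2}\E[X^*]$, as required. There is no genuinely hard step here; the only place requiring care is invoking the no-point-mass assumption to turn ``$\pi$ is a median'' into the exact product identity $\prod_i \Pr[X_i \leq \pi] = 1/2$, which is precisely what makes both $\Pr[Z > 0]$ and each $\Pr[A_i]$ come out at least $1/2$ and powers the entire proof.
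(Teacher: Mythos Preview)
Your proof is correct and is essentially the classical Samuel--Cahn argument. Note, however, that the paper does not supply its own proof of this theorem: it is stated as a citation to the prophet-inequality literature and then applied directly to derive \Cref{prophetcorr}, so there is no in-paper proof to compare against.
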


\begin{corollary}\label{prophetcorr}
When bidders are fully independent, the sequential posted-price procedure, with price $\pi$ set to the median of the distribution of the maximum covered call value, obtains half the optimal expected welfare.
\end{corollary}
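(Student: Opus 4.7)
The approach is to reduce the claim to Theorem \ref{thm:prophet} applied with $X_i = \kappa_i$, the covered call value of bidder $i$. Under full independence these variables are independent, so the prophet inequality gives $\E Z \geq \tfrac12 \E \max_i \kappa_i$, and Corollary \ref{cor:upper-bound-opt} says $\E \max_i \kappa_i$ upper-bounds the welfare of any procedure. It therefore suffices to show that the expected welfare of the sequential posted-price procedure with price $\pi$ equals exactly $\E Z$.

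First I would pin down each bidder's best response. When bidder $i$ is approached at price $\pi$, her only strategic choice is whether to inspect; if she does, she then takes the item iff $v_i \geq \pi$. The expected utility of inspecting is $\E_{\theta_i}[(v_i-\pi)^+] - c_i$, which by Definition \ref{strike} is nonnegative exactly when $\pi \leq \sigma_i$. So she inspects iff $\sigma_i \geq \pi$ and acquires the item iff additionally $v_i \geq \pi$, equivalently iff $\kappa_i \geq \pi$. Writing $W_i$ for the welfare contributed by bidder $i$ when she is approached, this gives the pointwise representation $W_i = v_i \indic\{\kappa_i \geq \pi\} - c_i \indic\{\sigma_i \geq \pi\}$.

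Next I would establish the key identity $\E_{\theta_i}[W_i] = \E_{\theta_i}[\kappa_i \indic\{\kappa_i \geq \pi\}]$. Both sides vanish when $\sigma_i < \pi$ (the right side because $\kappa_i \leq \sigma_i < \pi$ almost surely). When $\sigma_i \geq \pi$, I substitute $c_i = \E_{\theta_i}[(v_i - \sigma_i)^+]$ and perform a three-way case analysis on whether $v_i$ lies in $[0,\pi)$, $[\pi,\sigma_i]$, or $(\sigma_i,\infty)$; in each case the integrand $v_i \indic\{v_i \geq \pi\} - (v_i - \sigma_i)^+$ equals $\kappa_i \indic\{v_i \geq \pi\}$, taking values $0$, $v_i$, and $\sigma_i$ respectively. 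This covered-call calculation is the only non-routine step and is the place where the defining property of $\sigma_i$ earns its keep.

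Finally, independence across bidders makes the event ``bidder $i$ is approached'' coincide with $\{\kappa_j < \pi \text{ for all } j < i\}$, which is independent of $(\theta_i,v_i)$. Summing gives
\[
  W_{\mathrm{total}} \;=\; \sum_i \E[\kappa_i \indic\{\kappa_i \geq \pi\}]\cdot \Pr[\kappa_j < \pi \ \forall j<i] \;=\; \E Z,
\]
which combined with Theorem \ref{thm:prophet} and Corollary \ref{cor:upper-bound-opt} gives the claimed $\tfrac12$-approximation. If $\max_i \kappa_i$ happens to have a point mass at its median, one can either break ties by a fair coin or apply a small independent perturbation and take a limit; this is standard and not the substantive content of the argument.
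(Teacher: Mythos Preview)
Your proof is correct and follows essentially the same route as the paper: identify the best response (inspect iff $\sigma_i \geq \pi$, then take iff $v_i \geq \pi$, i.e.\ accept iff $\kappa_i \geq \pi$), show the procedure's welfare equals the expected covered call value of the winner, and apply the prophet inequality with $X_i = \kappa_i$.

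The one difference is in the middle step. The paper observes that the best-response strategy always exercises in the money (if the bidder inspects and finds $v_i > \sigma_i$, then $v_i > \sigma_i \geq \pi$ so she takes the item) and then invokes the equality case of Lemma~\ref{lemma:upper-bound-policy} to conclude immediately that each bidder's expected welfare contribution is $\E[\allocsubi \kappa_i]$. Your three-case computation of $v_i\indic\{v_i \geq \pi\} - (v_i-\sigma_i)^+ = \kappa_i\indic\{v_i \geq \pi\}$ is exactly a rederivation of that equality case in this specific setting; it is correct but redundant given that Lemma~\ref{lemma:upper-bound-policy} is already available. Your treatment of the $\geq$ versus $>$ boundary is fine: under the no-point-mass hypothesis on $\max_i \kappa_i$, one checks that no individual $\kappa_i$ can have a point mass at $\pi$ either, so the distinction vanishes.
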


\begin{remark}
A different procedure that arranges the bidders in random order and 
sequentially offers bidder-specific posted prices can be shown to obtain
at least $1 - \frac{1}{e}$ fraction of the optimal expected welfare, using
the ``prophet secretary inequality'' of \citet{esfandiari2015prophet}.
\end{remark}

While escaping the extreme failures of standard procedures, the sequential posted price mechanism sacrifices the price discovery benefits of auctions.  Unlike the descending procedure, it is not fully efficient even under symmetry and without information acquisition.  Any posted price will exclude some bidders who may turn out to be the highest value ones in any given sample or will allocate the item to some bidders who turn out not to have the highest value in a given sample. 
The descending price auction maintains the price discovery features of an auction while providing the guarantees necessary to assure exercise in the money.

\section{Beyond Auctions}\label{matching}
All of our formal results concern auctions, that is, transferable utility environments where bidders value objects that are indifferent about which bidder obtains them.  However a large literature on ``market design'' in recent years has drawn links from insights about auctions to mechanisms for matching pairs of agents with preferences on both sides of the market, and also to allocation problems where transfers are not available.

For example, \citet{hatfield} show that the underlying structure of ascending-price auctions is closely connected to that of the ascending match value (lowest value matches made first) procedure for transferable utility matching proposed by \citet{crawfordknoer} and \citet{kelsocrawford} and the deferred acceptance algorithm of \citet{galeshapley}.  Quantitative welfare guarantees in incomplete information environments are often not available in these alternative settings because of bilateral monopoly problems \citep{myersonsatt} and the lack of transfers.  However the structural connections between auction and matching mechanisms have been used to suggest the attractiveness of commonly-used matching procedures and vice versa.  This suggests that some of the insights about auctions in our results may have analogs in matching settings.  In this section we briefly and informally speculate about these settings.  We hope future research will more formally quantify the welfare properties of these proposals.

\subsection{Two-sided matching and the Marshallian Match}\label{twosided}

Most of the centralized matching procedures advocated by economists \citep{rothpopular} involve participants simultaneously submitting preference lists.  This may lead to large welfare losses for reasons analogous to those we highlight in the auction context.  For example, medical students entering into the National Resident Matching Program must submit preferences over all hospitals they might match with before they have any clear sense of which hospitals are in their choice set.  It may be difficult for such students to know which hospitals are worth investigating more deeply, leading to potentially large welfare losses. In fact two-sided matching in many markets not organized by centralized matching appears to have features reminiscent of a descending price auction; namely matches clear sequentially, from high value matches downward, and participants are given ``guarantees'' at various points in time about their options prior to the process of information acquisition.  Two examples are illustrative.

First,  in college admissions, it is common for students to ``apply early'' to a favorite choice and to therefore receive a more thorough review of their application.  Then schools commit to accepting their top choices of students, prior to these students flying out to the schools for ``second visits''.  Only after students have engaged in this information acquisition and made decisions do schools fill their remaining slots from their wait lists. Second, in the economics job market, information acquisition proceeds in stages, from application review to interviews to fly-outs to offers to second visits to applicant choices and then to second round offers.  This process progressively narrows the pool of interest and increases the degree of potential match commitment before more costly information acquisition takes place.  Even within rounds, especially on fly-outs and offers, top schools tend to ``go before'' lower ranked schools to avoid lower ranked schools wasting a fly-out or offer on an applicant who is likely to have better offers.

Inspired by these examples, we propose and have submitted a patent application (see Footnote \ref{patent} above) on a two-sided mechanism with transferable utility.  There is a clock that descends from $\infty$ to $0$. Employers maintain bids on any subset of employees they are interested in hiring.  Employees maintain asks of any employer they are willing to work for. When the clock descends to the value of the bid-ask spread on any edge, the employer-employee pair along that edge are matched and removed.  The wages the employer pays the employee are the average of the bid and the ask.  When the clock reaches $0$ the match ends and all unmatched participants leave unmatched.

We refer to this mechanism as the {\em Marshallian Match} following \citeauthor{marshallianpath}'s \citeyearpar{marshallianpath} discussion of \citeauthor{marshall}'s \citeyearpar{marshall} theory of equilibrium.  In particular \citeauthor{marshallianpath} argue that both in \citeauthor{marshall}'s theory and in laboratory experiments, the highest value potential trades tend to happen first.  While they consider homogeneous good markets where transactions are just defined by a uniform-across-partner willingness-to-pay or willingness-to-accept, the natural extension of this logic to matching seems to be a mechanism like that we describe.\footnote{Our mechanism may also be applied to settings with homogeneous values across matches for an given individual, like that studied by \citeauthor{marshallianpath}; then each employer and employee need report only a single number.  A natural application would be to two-sided spectrum auctions (though one where homogeneity is only on the sellers' side and partially on the buyers' side) such as the on-going incentive auctions in the United States \citep{radio}.  In this setting sufficient competition may overcome the \citet{myersonsatt} results and allow a more positive result.}

\subsection{Non-transferable utility}\label{NTU}

In some settings where information acquisition costs are significant, social factors prohibit the use of monetary transfers and make the combination of envy-freeness and Pareto efficiency a more attractive standard to judge a mechanism than its \citeauthor{kaldor}-\citeauthor{hicks} efficiency.   For example, in the allocation of dormitory rooms among college students, determining preferences may require costly visits to the relevant dormitories, but most schools allocate rooms by lotteries for purposes of fairness rather than by auction.

The most canonical form of such allocation is (random) serial dictatorship.  Individuals are ordered (uniformly randomly in the common envy-free form) and each individual has the pick of all items that remain upon that individual being reached.  In an environment without information acquisition a simple implementation is for individuals to order all items and the randomization to occur, and matches be made, through a centralized computer algorithm.  However, with information acquisition the sequencing seems likely to be more important, so individuals do not waste resources investigating objects that will not turn out to be in their choice sets.  It seems plausible that a mechanism that introduced actual sequencing and thus  clarified students' effective choice sets prior to their investigating their rooms could improve all students' utilities by an unbounded factor.\footnote{The implications in a two-sided setting may be even richer, as it may be desirable for students to ``accumulate offers'' before investigating any of them rather than to hold on to only a single offer as in \citet{galeshapley}'s celebrated deferred acceptance algorithm. }  In practice full sequencing may be too slow, but many schools do batch students into dormitory assignment rounds  with equal priority, allowing them a reasonably clear sense of their choice set.

Despite these benefits of sequencing, it is well-known that random serial dictatorship may be very inefficient \citep{agarwalsomaini} from an ex-ante perspective when compared with a market in probability shares in different objects such as that proposed by \citet{hz}.  To our knowledge the only implementation of the \citeauthor{hz} mechanism thus far  proposed involves the static reporting of preferences to a centralized algorithm.  Our analysis suggests that developing implementations based on a descending price process, with irrevocable purchases of probability shares, may have advantages if information acquisition is costly.\footnote{Kleinberg and Weyl have submitted a patent application on a ``Descending Price Auction for a Divisible Good'' that could potentially be extended to many goods to form the basis of such an implementation.}  Again this is an interesting direction for future research.

\section{Conclusion}\label{conclusion}

In this paper we argue that traditional market designs do a poor job coordinating sequential information acquisition while a simple uniform descending price design does better.  This conclusion runs largely contrary to existing literature which has focused on ascending price or simultaneous mechanisms \citep{crawfordknoer, milgromascending,ausubel}.  While our calibrations suggest that in many realistic cases the benefits of descending price mechanisms outweigh their disadvantages, our formalism neglects many important aspects of the information acquisition process  that could modify our conclusions.

Most importantly we assumed that information acquisition, while costly, occurs instantaneously, and thus that the process is not costly to other participants.  In reality, an economics job market fly-out takes a day of real time, and the meeting where interviews are cheaper occurs only over the course of three days.  As a result, the rich and full sequencing from the highest-value matches downward that we suggest is infeasible and/or undesirable in that setting.  It still seems plausible, and consistent with practice, that it would be optimal for some amount of clearing to take place sequentially from the top down, perhaps in stages.  However good mechanisms would have to trade-off sequencing against speed and a reasonable trade-off would be more complicated and parameter-dependent to strike.  In this context, therefore, our results offer more a qualitative insight and direction for future research rather than direct guidance on a market design.

Relatedly, we assume that all information arrives only {\em endogenously} in response to the expenditure of costs by bidders.  In reality, over the course of the time assignment takes place, information may arrive {\em exogenously}.  Creating a market that allows for sequencing of endogenous information may therefore (assuming a deadline) require that some matches are made before all exogenous information arrived.  This is a leading concern raised by \citet{gun} and \citet{niederle} and would tend to favor a market that was (nearly) simultaneous once the maximum exogenous information was revealed.  A model exploring trade-offs between the sequencing that promotes efficient endogenous information acquisition and the simultaneity that maximizes the value of exogenous information would therefore be important to determine an appropriate market design.

Finally, our work at present leaves a number of more technical ends loose.  For example, we have imposed a unit demand structure, which can be relaxed to a matroid and potentially also to some other forms of substitutability, and perhaps even to allow some complementarity at some corresponding cost to the quality of our bounds.

%
\bibliographystyle{ecta}
\bibliography{dynamic}

\begin{center}{\LARGE Appendix}
\end{center}

\appendix
In \Cref{lower}, we give proofs of negative results from \Cref{negative}.
In \Cref{mainapp}, we give the formal proof of \Cref{maintheorem} from \Cref{main}.
In \Cref{numericsapp}, we give a high-level overview of the numerical methods used in our calibrations.
In \Cref{extapp}, we give proofs of our results on extensions in \Cref{extensions}.
In \Cref{app:revenue}, we give some revenue guarantees for the Dutch auction.
In \Cref{sec:app-calibrations-extra}, we give a detailed description of the numerical methods used in our calibrations.
In \Cref{app:multistage}, we give full proofs of our results for multistage inspection.

\section{Negative Result Proofs}\label{lower}
\begin{proof}[Proof of Proposition \ref{prop:simultaneous-flaw}]
We show its welfare on any strategy profile can be arbitrarily small relative to the optimal.  Let $p_i$ denote the probability assigned by $i$'s strategy to inspecting and entering.
The probability that $i$ inspects and discovers $v_i = M$ is $\frac{p_i}{M}$, and is independent of all other bidders.
Total welfare is the sum of welfare from bidders who inspect before bidding and those who don't.
The welfare from the latter is at most the expected value minus cost of one of these bidders, which is $1-c$.
The welfare from inspecting bidders is at most
\begin{align*}
 \text{welfare}
  &\leq M \Pr[\text{some $i$ inspects and has $v_i = M$}] - c \sum_i p_i  \\
  &= M \left(1 - \prod_i \left(1 - \frac{p_i}{M}\right)\right) ~ - ~ c \sum_i p_i  \\
  &\leq M \left(1 - e^{-\sum_i p_i / (M-p_i)}\right) - c \sum_i p_i  \leq M \left(1 - e^{-\sum_i p_i / (M-1)}\right) - c \sum_i p_i. 
\end{align*}
We used the inequality $1-y \geq e^{-y/(1-y)}$ for $y < 1$, with $y = \frac{p_i}{M}$.
Now let $x = \sum_i p_i$, the expected number of bidders to inspect.
We observe that this welfare is maximized by choosing $x$ such that $c = \frac{M}{M-1}e^{-x/(M-1)}$.
As we will take $M \to \infty$, we simplify the exposition by supposing the maximizer be the $x$ where $c = e^{-x/M}$; this is $x = M\ln\frac{1}{c}$.
Then
\begin{align*}
 \text{welfare}
  &\leq M \left(1 - c\right) - c \cdot M\ln\frac{1}{c}  = M\left( 1 - c - c\ln\frac{1}{c} \right).
\end{align*}
This gives a total welfare of at most $M\left(1 - c - c\ln\frac{1}{c}\right) + 1-c$, and we are interested in its ratio to the optimal welfare, which approaches $M(1-c)$ as $n,M \to \infty$ as discussed.
This ratio approaches
\begin{align*}
 \frac{\text{second-price}}{\text{optimal}}
  &\to \frac{M\left(1 - c - c\ln\frac{1}{c}\right)}{M(1-c)} = \frac{1 - c -c\ln\frac{1}{c}}{1-c}  .
\end{align*}
This can be made arbitrarily small as $c$ is chosen arbitrarily close to $1$.
\end{proof}

\begin{proof}[Proof of Proposition \ref{prop:ascending-flaw}]
The ascending auction is formally specified as follows.
A global price $t$ is initially set to some value $t_0$ and increases continuously.
A bidder may instantaneously inspect at any time.
A bidder may choose to drop out at any time $t$, including the time of inspection after the inspection has occurred.
Other bidders observe this decision at all times $t' > t$.
After the second-to-last bidder drops out at time $t$, the last remaining bidder obtains the item and pays a price of $t$.
If the winning bidder has not yet inspected, she must pay the inspection cost  (as with all mechanisms in this paper).
If multiple bidders simultaneously choose to drop out at the same time $t$ leaving no one remaining, the item is awarded to one of those who dropped out at $t$, chosen uniformly at random, who pays price $t$.
We restrict to equilibria where, once a bidder inspects, they drop immediately if the current price exceeds their realized value, else they drop when the price reaches their realized value. (This is achieved by a trembling-hand or dominant-strategies refinement.)

All bidders are symmetric and have the following value distribution:
 \[ v = \begin{cases} M  & \text{with probability $p$}  \\
                      L  & \text{with probability $0.5$}  \\
                      0  & \text{otherwise.} \end{cases} \]
The cost $c$ of all bidders is equal to $\E v - L = p M - \frac{L}{2}$.
The strike price may be calculated by $p\left(M - \sigma\right) = c$, hence $\sigma = \frac{L}{2p}$.
(Observe that this calculation is correct as long as $\sigma > L$, which is true because $p < 0.5$.)
In particular, it will suffice for our purposes to set $M = \frac{1}{p^2}$ and $L = p$, and take $p \to 0$.
However, a variety of parameter ranges also would give the result.

The proof proceeds in two parts.
First, we show that, in equilibrium, no bidders wait until after time $L$ before inspecting.
Intuitively, this will follow because whichever bidders inspect last are getting negative expected value (the price exceeds their expected gain), causing an unraveling.
Second, we show that if all bidders inspect and/or drop before time $L$, welfare cannot be high.
Intuitively, this is almost identical to the simultaneous second price case, as bidders have no information about who has found or failed to find a black swan.

For the first part of the proof, assume for contradiction that we are in an equilibrium and there is a realization of the game where the price reaches some $t > L$ with $k \geq 1$ bidders having not yet inspected.
We prove by induction on the number of remaining bidders that all have negative expected utility for this outcome, giving the contradiction.
A key claim we will use, to be proven later, is: \emph{$(*)$ For any $k \geq 1$, if the last $k$ bidders inspect simultaneously, then they all have negative expected utility.}
In particular, if $i$ is the sole remaining bidder; then her expected utility is $\E v - c - t < \E v - c - L \leq 0$, so it is negative.

Now suppose there are $k \geq 2$ remaining bidders and there is a nonzero chance that some bidder $j$ is inspecting.
We claim that waiting until any later time $t' > t$ to act is strictly dominated for all other $i \neq j$.
If $j$ drops before $t'$, then there are $k-1$ remaining bidders and by inductive hypothesis all have negative expected utility.
If no bidders drop before $t'$, $i$ gets no higher utility for dropping at $t'$ than at $t$, and no higher utility for inspecting at $t'$ than at $t$.
Hence the only possible case where all bidders are best-responding is when they always inspect simultaneously at time $t$; in this case, claim $(*)$ says that all have negative expected utility.

Therefore, in equilibrium, no inspections occur at any time $t > L$: Thus conditioned on the time equalling $L$, it is strictly dominated to wait longer before deciding to inspect or to drop.

We now prove claim $(*)$, which relies on our choice of $M$, $L$, and $p$ (although many choices would suffice).
The case $k=1$ is covered above, so let $k \geq 2$.
The expected utility for any one bidder $i$ is upper-bounded as follows.
Bidder $i$ always pays the inspection cost $c$.
If any other bidder $j\neq i$ finds a black swan ($v_j = M$), then $i$'s net gain is zero.
Otherwise, if $i$ finds the black swan, $i$ gets $M-t$ (where $t$ is the current time, hence the price the winner will pay).
Otherwise (so nobody, including $i$, finds a black swan), $i$ has a $1/k$ chance of dropping out last, hence will pay price $t$ and gain her value.
We will lower-bound $t$ by $L$, as this only decreases the price $i$ pays when she wins and hence only increases her utility.
This allows us to drop the case where $i$ wins with value $v_i = L$, as the value cancels with the price paid.
Putting these observations together,
\begin{align*}
 \text{utility}
  &< (1-p)^{k-1}\left(p(M-L) + \frac{1}{k}\left(\frac{1}{2}-p\right)(-L)\right) ~ - ~ c  \\
  &= (1-p)^{k-1}\left(p(M-L) - \frac{L}{k}\left(\frac{1}{2}-p\right)\right) ~ - ~ pM + \frac{L}{2} .
\end{align*}
Note the strict inequality results from $t < L$.
In particular, plugging in $k=1$ gives that utility is strictly negative for a single bidder winning the item at time $t > L$ having not yet inspected.
Next, by taking the derivative of this bound on utility with respect to $k$, we get
\begin{align*}
  &= (1-p)^{k-1} \ln(1-p) \left(p(M-L) - \frac{L}{k}\left(\frac{1}{2}-p\right)\right) + (1-p)^{k-1} \frac{L}{k^2}\left(\frac{1}{2}-p\right)  \\
  &= (1-p)^{k-1} \left[ L\left(\frac{1}{2}-p\right)\left(\frac{1}{k^2} + \frac{1}{k}\ln\frac{1}{1-p}\right) - p(M-L)\ln\frac{1}{1-p}\right]  \\
  &\leq L\left(\frac{1}{2}-p\right)\left(1 + \ln\frac{1}{1-p}\right) - p(M-L)\ln\frac{1}{1-p}
\end{align*}
using that this expression is decreasing in $k$ and $k \geq 1$.
We can now use some rough bounds, particularly $p < \frac{1}{2}$ and $\ln\frac{1}{1-p} \geq p$, to say that the derivative of our bound on utility is at most
$$\leq L - p^2(M-L) \leq 0$$
if we take, for example, $M = \frac{1}{p^2}$ and $L = p$.
This ensures that the derivative of the bound is always negative, hence the bound is decreasing in $k$.
But the bound on utility was already $0$ for $k=1$, hence utility is negative for all $k$.
This completes the proof of claim $(*)$.

For the second part of the proof, we now know that in equilibrium all inspections occur at or before time $L$.
We can use a similar strategy to the second price case.
Let $x$ be the expected number of bidders who inspect in equilibrium.
Because each bidder who inspects has a $\frac{1}{2}+p$ chance of $v_i \neq 0$, the expected number of bidders with $v_i \neq 0$ is exactly $x\left(\frac{1}{2} + p \right)$.
We claim that conditioned on $i$ inspecting and discovering $v_i \neq 0$, in equilibrium at time $\leq L$, the probability that $v_i = M$ is exactly $p/\left(\frac{1}{2}+p\right)$ independent of everything else.
This follows because the cases $v_i = M$ and $v_i = L$ are indistinguishable up to time $L$ ($i$ always stays in).
Hence, given that $k$ bidders inspect and find $v_i \neq 0$, the probability that none of them has $v_i = M$ is exactly  $\left(1 - \frac{p}{\frac{1}{2}+p}\right)^k$. 
Therefore, in a realization of the game where exactly $\hat{x}$ bidders inspect and $k$ of them find $v_i \neq 0$, expected welfare is at most
\begin{align*}
 &M \left(1 - \left(1 - \frac{p}{\frac{1}{2}+p}\right)^k\right) + L - \hat{x} \cdot c .
\end{align*} \\
Now, use that $1-y \geq e^{-y/(1-y)}$ for $y < 1$, letting $y = \frac{p}{\frac{1}{2}+p}$.
(Here, $y/(1-y) = 2p$.)
\begin{align*}
  &\leq M\left(1 - e^{-2pk}\right) + L - \hat{x} \cdot c . 
\end{align*}
Now, we are using $x$ to denote the expected number of inspections, so $x = \E \hat{x}$.
Meanwhile, using that $1-e^{-x}$ is a convex function and the constraint that $\E k = x\left(\frac{1}{2}+p\right)$, we get that expected welfare is at most
$$\E M \left(1 - e^{-2p k} \right) + L - x \cdot c  \leq M \left(1 - e^{-2p \E k} \right) + L - x \cdot c  = M \left(1 - e^{-x(p + 2p^2)}\right) + L - x \cdot c .$$
It will simplify exposition of this proof to write this bound more roughly as
 \[ M \left(1 - e^{-xp}\right) - x \cdot c  + \littleo(\sigma) . \]
This will not change the conclusion because the optimal welfare approaches $\sigma$.\footnote{To see this, note that optimal welfare is the maximimum covered call value $\kappa_i = \min\{v_i,\sigma\}$; with enough bidders, almost certainly some bidder has $\kappa_i = \sigma$.}
This simplification holds because, first, $p \to 0$ in our example and the $2p^2$ term has a negligible effect; and second, the optimal welfare is $\sigma$ and $L = 2p\sigma = \littleo(\sigma)$, so the additive $L$ does not change the welfare ratio.

As in the simultaneous second price case, we can maximize this expression over all $x$.
Recall that $c = p(M-\sigma)$; the maximum occurs when $\sigma = M\left(1 - e^{-px}\right)$, or $x = \frac{1}{p}\ln\frac{1}{1-\frac{\sigma}{M}}$.
So
 \[ \text{welfare} \leq \sigma - (M-\sigma)\ln\frac{1}{1-\frac{\sigma}{M}} + \littleo(\sigma) , \]
whereas the optimal welfare approaches $\sigma$ as the number of bidders diverges.
Thus the ratio approaches
\begin{align*}
 \frac{\text{welfare}}{\text{optimal}}\leq 1 - \left(\frac{M}{\sigma} - 1\right) \ln\frac{1}{1-\frac{\sigma}{M}} + \littleo(1) =    1 - \frac{M}{\sigma} \ln\frac{1}{1-\frac{\sigma}{M}} + \littleo(1)  \to 0 \text{ as $\frac{M}{\sigma} \to \infty$.}
\end{align*}
Recall that in particular we chose $M = \frac{1}{p^2}$ and $L = p$, which gave $\sigma = \frac{1}{2}$.
This satisfies the requirements we needed of $\frac{L}{\sigma} \to 0$ and $\frac{\sigma}{M} \to 0$ as $p \to 0$.
\end{proof}

\begin{proof}[Proof of Proposition \ref{prop:sequential-flaw}]
The optimal policy is for the $R$ types in the population to sequentially try to find a black swan and only allow an $S$ type to take the object once all of the option value of the $R$ types has been exhausted, or one has found a black swan. As $n$ grows large enough that $n\pi p_R$ is large, this policy achieves welfare of approximately $\frac{v_R}{2}$, as the cost $c_R$ must be paid roughly $\frac{1}{p_R}$ times before a black swan of value $v_R$ is found.  On the other hand any policy in which a type $S$ investigates prior to any investigation by $R$ types cannot achieve welfare of more than $\left(1-p_S\right)\epsilon+p_S \frac{v_R}{2}\approx \epsilon$ because $v_S>\frac{v_R}{2}$, so that once a safe option has been successfully investigated the cost of this investigation is already sunk and thus it is not worth investigating the risky option.

However, it is easy to see that in any sequential equilibrium of the BKRS procedure this is exactly what happens with probability $1-\pi$.  In particular the seller cannot distinguish between $R$ and $S$ types and thus will draw an $S$ type first with probability $1-\pi$.  Such a type will earn surplus $\left(1-p_S\right)\epsilon-r$ from investigating her value and bidding $r$ to become the incumbent, assuming that this leads her to win with probability $1$.  But it will do so, because any future buyer will know that if she enters bidding will proceed, in the knockout auction, at least up to $v_S>\frac{v_R}{2}, \epsilon$, wiping out any gains that either type $R$ or $S$ could gain from investigating their values in a later round.  Thus the social welfare of the BKRS procedure can be arbitrarily close to a fraction $0$ of the first-best social welfare.

\end{proof}

\section{Equivalence Theorem}\label{mainapp}

This section fleshes out the proof sketch
of \Cref{maintheorem} provided in \Cref{main}.
It will be helpful to express the proof in terms
of a sample space with $2n$ bidders, whose types and
values are jointly distributed as follows.
Bidders $1,\ldots,n$ have types
$\{\theta_i\}_{i=1}^n$ distributed according to $\mathcal P$,
and bidder $i$ for $i = n+1,\ldots,2n$ has the covered
counterpart type $\ccpt{\theta}_{i-n}$. The values
$v_1,\ldots,v_{2n}$ are coupled so that
$v_{i+n} = \kappa_i$ for $i=1,\ldots,n$
(such a coupling is possible because $\theta_{i+n}=\ccpt{\theta}_i$)
and the pairs $\{(v_i,v_{i+n})\}_{i=1}^n$ are mutually independent.

\begin{claim} \label{clm:1}
Against any profile of strategies for bidders $k+1,\ldots,k+n-1$,
any best response of bidder $k+n$ is functionally equivalent to
a normalized strategy.
\end{claim}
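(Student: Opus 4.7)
The plan is to exhibit, for any best response $s$ of bidder $k+n$ against the fixed opponent profile, a normalized strategy $s^{\ast}$ producing functionally equivalent outcomes, in two stages: normalize the inspection timing, then normalize the claim-price rule.

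For the first stage, bidder $k+n$ has type $\ccpt{\theta}_k$, which by Definition~\ref{equivdist} carries zero inspection cost. Thus inspecting at the earliest possible moment is costless and can only enlarge the information set on which subsequent claim decisions are based. Since inspections are private they never enter any opponent's history, so shifting the inspection to time $0^+$, while keeping the distribution of the claim price conditional on the revealed value $v$ fixed, leaves every opponent's information and action distribution unchanged. The resulting intermediate strategy $\tilde s$ therefore induces the same outcome distribution as $s$ and is still a best response.

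For the second stage, let $W(b)$ denote the probability, under the fixed opponent profile and the remaining randomness, that the Dutch clock reaches $b$ with no opponent having yet claimed the item; $W$ is weakly decreasing in $b$. Bidder $k+n$'s expected utility from claiming at $b$ after learning $v$ equals $\Pi(b,v) = W(b)(v-b)$, which has increasing differences in $(v,b)$. By the Milgrom--Shannon monotone selection theorem the best-reply correspondence $B(v) := \arg\max_b \Pi(b,v)$ is non-decreasing in the strong set order and admits a measurable non-decreasing selection $b^{\ast}(v)$. I would then take $s^{\ast}$ to be the strategy that inspects at $0^+$ and claims at $b^{\ast}(v)$; it is normalized by construction. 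To verify functional equivalence of $\tilde s$ and $s^{\ast}$, I would invoke the strong-set-order monotonicity of $B$: if $b_1 \in B(v_1)$ and $b_2 \in B(v_2)$ satisfy $v_1 < v_2$ and $b_1 > b_2$, then both $b_1$ and $b_2$ are simultaneously optimal for both $v_1$ and $v_2$, so rearranging bids across types only permutes among prices that yield identical winning probabilities and payoffs against the fixed opponents. A suitable coupling of the bidder's own randomization then identifies the realized outcome triple (winner, price, cost-incurring set) on every sample path.

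I expect the main obstacle to be the functional-equivalence verification: the existence and measurability of $b^{\ast}$ are standard lattice-theoretic facts, but when opponents mix and $W$ has atoms or flat regions, the coupling that turns ``same expected utility'' into ``functionally equivalent outcomes'' requires careful treatment of ties in $B(v)$ so that the monotone selection does not alter the realized winner or price on any positive-probability event.
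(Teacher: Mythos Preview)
Your first stage matches the paper: since bidder $k+n$ has zero inspection cost and inspections are private, moving inspection to the earliest moment changes neither payoffs nor the set of bidders who pay a nonzero inspection cost, so it is functionally equivalent.

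The second stage is where your route diverges, and the obstacle you flag is real rather than a technicality. Pulling an \emph{arbitrary} monotone selection $b^{\ast}$ from the best-response correspondence $B$ via Milgrom--Shannon does not by itself give functional equivalence with $\tilde b$. Your strong-set-order step only handles crossings \emph{across} types: if $b_1\in B(v_1)$, $b_2\in B(v_2)$ with $v_1<v_2$ and $b_1>b_2$, then indeed $b_1,b_2\in B(v_1)\cap B(v_2)$ and one deduces $W(b_1)=W(b_2)=0$. It does not handle multiplicity \emph{within} a type: $B(v)$ can contain bids $b_1<b_2$ with $0<W(b_1)<W(b_2)$ and $W(b_1)(v-b_1)=W(b_2)(v-b_2)$. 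If $\tilde b(v)=b_1$ while your selection gives $b^{\ast}(v)=b_2$, the realized winner and price differ on a positive-probability set of opponent draws, and no coupling of the bidder's randomization repairs this since both strategies are already pure.

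The paper sidesteps the problem by never constructing an independent selection. It keeps $\tilde b$ and modifies only the types $v$ with $W(\tilde b(v))=0$, resetting those bids to zero. The very increasing-differences computation you set up shows that whenever $v<v'$ and $\tilde b(v)>\tilde b(v')$, both winning probabilities vanish; it also rules out the mixed case $W(\tilde b(v))>0=W(\tilde b(v'))$ with $v<v'$ in a best response. Hence the modified function is monotone, and because only zero-probability bids were touched, functional equivalence is immediate---no Milgrom--Shannon selection and no coupling are needed. Your machinery is correct but aimed at the wrong target: apply it to $\tilde b$ itself rather than to a fresh selection from $B$.
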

\begin{proof}
Since bidder $k+n$ can inspect her value without cost, each of her strategies
is functionally equivalent to one in which she inspects her value at the earliest
possible moment. Thus, up to functional equivalence, a strategy for bidder $k+n$
is completely described by a function $b(v_{k+n},\theta_{k+n})$ indicating the 
price at which she
will claim the item, given that her value is $v_{k+n}$, her type is $\theta_{k+n}$, 
and no other bidder has yet claimed the item. Furthermore,
if $v < v'$ but $b(v,\theta_{k+n}) > b(v',\theta_{k+n})$,
then $b$ cannot be a best response to a strategy profile of
bidders $k+1,\ldots,k+n-1$, unless both of the bids 
$b(v,\theta_{k+n}), \, b(v',\theta_{k+n})$ have zero
probability of winning against that strategy profile. 
In that case $b$ is functionally
equivalent to a strategy that always bids
zero when the probability of winning is zero,
and the latter strategy is represented by a 
monotonically non-decreasing bid function.
\end{proof}

In the next claim and the following one, 
we say a strategy of bidder $k$ ``almost surely exercises
in the money'' if the event that bidder $k$ inspects her value,
finds that $v_k > \sigma_k$, yet does not claim the item, has
probability zero.
\begin{claim} \label{clm:2}
Fix any strategy profile $b_{-k}$ for bidders $k+1,\ldots,k+n-1$.
If $b_k$ is any strategy for bidder $k$ that almost surely exercises in the money,
then $b_k$ and $\lambda(\mu(b_k))$ are functionally equivalent. If $b_{k+n}$
is a normalized strategy for bidder $k+n$ then $b_{k+n}$ and
$\mu(\lambda(b_{k+n}))$ are functionally equivalent.
\end{claim}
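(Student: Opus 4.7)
The plan is to verify each of the two equivalences sample path by sample path, using the coupling between the real bidder's value $v_k$ and the covered counterpart's value $\kappa_k=\min(\sigma_k,v_k)$, together with the monotonicity of normalized strategies. Throughout I fix the realized types and values of the other bidders, and hence the opponent profile of claim times induced by $b_{-k}$, so functional equivalence reduces to showing that the two strategies claim at the same clock value (and inspect in the same ledger) whenever neither has been preempted by an opponent.

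For the direction $b_{k+n}\sim \mu(\lambda(b_{k+n}))$, I write $b:=b_{k+n}$ and examine the two cases $\kappa_k<\sigma_k$ and $\kappa_k=\sigma_k$. In the first case the sampling rule inside $\mu$ sets the simulated value $v_k^\star=\kappa_k$, so executing $\lambda(b)$ with $(\theta_k,v_k^\star)$ prescribes waiting for the clock to reach $b(\sigma_k,\ccpt{\theta}_k)$, observing $v_k^\star=\kappa_k<\sigma_k$, and then continuing to wait until the clock reaches $b(v_k^\star,\ccpt{\theta}_k)=b(\kappa_k,\ccpt{\theta}_k)$; monotonicity of the normalized strategy $b$ guarantees $b(\kappa_k,\ccpt{\theta}_k)\leq b(\sigma_k,\ccpt{\theta}_k)$, so this schedule is feasible and delivers the same claim price as $b$ alone. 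In the second case $v_k^\star$ is drawn from the conditional distribution above $\sigma_k$, so $v_k^\star\geq \sigma_k$, and $\lambda(b)$ claims the instant the clock reaches $b(\sigma_k,\ccpt{\theta}_k)=b(\kappa_k,\ccpt{\theta}_k)$, again matching $b$. Inspection costs agree trivially because they are zero for the covered counterpart.

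For the direction $b_k\sim \lambda(\mu(b_k))$, I first argue that the a.s.\ exercise-in-the-money hypothesis lets me replace $b_k$ by a sample-path-equivalent strategy in which (i) on the event $v_k<\sigma_k$ the bid $b_k(\theta_k,v_k)$ is nondecreasing in $v_k$, and (ii) on the event $v_k\geq \sigma_k$ the bidder inspects and then claims immediately at the common clock value $b_k(\theta_k,\sigma_k)$. The argument mirrors \Cref{clm:1}: any later claim time on $\{v_k\geq \sigma_k\}$ contradicts a.s.\ exercise in the money against some opponent profile, while any earlier claim time can be shifted without altering the outcome against $b_{-k}$. Given this normalization, the composition $\lambda(\mu(b_k))$ is well defined and decomposes by case on $\kappa_k$: when $\kappa_k<\sigma_k$ the sampled $v_k^\star=\kappa_k$ flows through $\lambda$ exactly as in the previous paragraph, producing the claim price $b_k(\theta_k,v_k)$; when $\kappa_k=\sigma_k$ the claim happens the instant the clock reaches $b_k(\theta_k,\sigma_k)$, which by the normalization is precisely where $b_k$ itself would have claimed. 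Allocation, price, and the inspection ledger therefore agree with those of $b_k$, delivering functional equivalence.

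The main obstacle is precisely the normalization step inside the second direction: $\mu(b_k)$ is an honestly randomized strategy (the sampled $v_k^\star$ is a coin flip internal to the bidder) and $\lambda$ is defined only on deterministic, monotone bid functions, so applying $\lambda$ to $\mu(b_k)$ requires first showing that $b_k$ can be replaced by a deterministic monotone representative without changing outcomes against $b_{-k}$. Establishing this reduction is where the a.s.\ exercise-in-the-money assumption is indispensable and where I expect to spend most of the effort; once it is in place, the remainder is the mechanical case analysis sketched above, and the combination yields the two asserted functional equivalences, completing the proof.
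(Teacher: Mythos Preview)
Your treatment of the direction $b_{k+n}\sim\mu(\lambda(b_{k+n}))$ is correct and is essentially the paper's one-line observation that functional equivalence is immediate because bidder $k+n$ has zero inspection cost.

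For $\lambda(\mu(b_k))$ you correctly sense that $\mu(b_k)$ may fail to be a deterministic normalized bid function, but your proposed fix---replacing $b_k$ by a functionally equivalent strategy that is monotone on $\{v_k<\sigma_k\}$ and claims at a common clock value on $\{v_k\geq\sigma_k\}$---does not go through. Functional equivalence requires the \emph{same price} to be paid, yet collapsing the claim price on $\{v_k\geq\sigma_k\}$ to a single number in general changes what bidder $k$ pays. Your justification invokes ``exercise in the money against \emph{some} opponent profile'' and the best-response logic of \Cref{clm:1}, but neither applies here: the hypothesis is exercise-in-the-money against the \emph{fixed} $b_{-k}$ only, and \Cref{clm:1} is about best responses of the cost-free bidder $k+n$, not arbitrary strategies of bidder $k$. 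Concretely, if against $b_{-k}$ no opponent ever claims in an interval just below $b_k$'s inspection price, then $b_k$ can exercise in the money while claiming at various $v_k$-dependent prices strictly below the inspection price, and no shift of those claim times preserves the realized prices.

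The paper takes a shorter route that avoids normalizing $b_k$ altogether. It isolates the single quantity $b_k^{\mathrm{insp}}(\theta_k)$---the clock value at which $b_k$ inspects, which necessarily depends only on $\theta_k$ since $v_k$ is unknown before inspection---and argues from exercise-in-the-money that $b_k$ claims at exactly this price whenever $v_k>\sigma_k$. Under the coupling this makes $\mu(b_k)$ bid $b_k^{\mathrm{insp}}(\theta_k)$ whenever $v_{k+n}=\sigma_k$, so the bid function $b$ of $\mu(b_k)$ has the deterministic value $b(\sigma_k)=b_k^{\mathrm{insp}}(\theta_k)$; consequently $\lambda(\mu(b_k))$ inspects at the same clock value as $b_k$, and for $v_k<\sigma_k$ the claim prices agree directly from the definition of $\mu$. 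One structural observation about the inspection trigger replaces your full normalization program.
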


\begin{proof}
\newcommand{\insp}{{\mathrm{insp}}}
From the definitions of $\lambda$ and $\mu$ it is clear that,
if bidder $k$ obtains the item at price $t$ when using
strategy $b_k$, then $k+n$ also obtains the item at price $t$
when using $\mu(b_k)$ to simulate $b_k$, and that $k$ obtains the
item at price $t$ when using $\lambda(\mu(b_k))$ to simulate
$\mu(b_k)$. 

To complete the proof that 
$b_k$ and $\lambda(\mu(b_k))$ yield functionally equivalent outcomes,
we must prove that bidder $k$ inspects the item under strategy $b_k$
if and only if she inspects the item under strategy $\lambda(\mu(b_k))$. 
Let $b_k^\insp(\theta_k)$ denote the 
clock value at which bidder $k$ inspects the item when using
strategy $b_k$; by our assumption that $b_k$ 
almost surely exercises in the money, we know that it 
always claims the item at price $b_k^\insp(\theta_k)$ 
if $v_k > \sigma_k$. Therefore, when bidder $k+n$ uses
strategy $b=\mu(b_k)$ to simulate $b_k$, she must always claim the
item at price $b_k^\insp(\theta_k)$ if $v_k > \sigma_k$. 

Now, recall that the mapping $\lambda$ is defined in terms of the
function $b(v)$ that specifies the price at which bidder $k+n$ 
claims the item if $v_{k+n}=v$. When $v_k > \sigma_k$ we have
$v_{k+n}=\sigma_k$, so according to the reasoning in the previous paragraph
it must be the case that $b(\sigma_k) = b_k^\insp(\theta_k)$. By the 
definition of $\lambda$, then, it follows that strategy $\lambda(b)=\lambda(\mu(b_k))$
inspects the item when the clock value is $b_k^\insp(\theta_k)$, exactly
as $b_k$ does. This completes the proof that $b_k$ and 
$\lambda(\mu(b_k))$ are functionally equivalent.

 Since bidder $k+n$ has
zero inspection cost, this immediately implies that 
$b_{k+n}$ and $\mu(\lambda(b_{k+n}))$
are functionally equivalent.
\end{proof}

\begin{claim} \label{clm:3}
Fix any strategy profile $b_{-k}$ for bidders $k+1,\ldots,k+n-1$.
Let $b_k$ be any strategy of bidder $k$ and let $b_{k+n}$ be any normalized
strategy of bidder $k+n$. Denoting by $u_k, u_{k+n}$ the expected
utilities of bidders $k$ and $k+n$, we have
\begin{align}
\label{eq:mainthm-3.1}
  u_k(\lambda(b_{k+n}), b_{-k}) & = u_{k+n}(b_{k+n},b_{-k}) \\
\label{eq:mainthm-3.2}
  u_k(b_k,b_{-k}) & \leq u_{k+n}(\mu(b_k),b_{-k}).
\end{align}
Furthermore, equality is attained in Equation~\eqref{eq:mainthm-3.2}
if and only if $b_k$ almost surely exercises in the money.
Best responses of bidder $k$ to $b_{-k}$ almost surely
exercise in the money.
\end{claim}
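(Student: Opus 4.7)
My plan is to dispatch the three assertions in turn, each leaning on Lemma \ref{lemma:upper-bound-policy} and on the sample-path coupling of the two bidder populations set up at the beginning of this appendix section. For Equation \eqref{eq:mainthm-3.1}, I would first verify that $\lambda(b_{k+n})$ almost surely exercises in the money: by construction, bidder $k$ inspects only when the clock reaches $b_{k+n}(\sigma_k,\ccpt{\theta}_k)$ and, conditional on $v_k > \sigma_k$, claims at that very price, so she wins unless another bidder has already claimed at a strictly higher clock value, a probability-zero event under standard tie-breaking. Monotonicity of $b_{k+n}(\cdot,\ccpt{\theta}_k)$ collapses the claim price under $\lambda(b_{k+n})$ to $b_{k+n}(\min\{v_k,\sigma_k\},\ccpt{\theta}_k)=b_{k+n}(\kappa_k,\ccpt{\theta}_k)$, which is precisely the price bidder $k+n$ would pay when playing $b_{k+n}$ with $v_{k+n}=\kappa_k$. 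Hence $\mathbb{A}_k=\mathbb{A}_{k+n}$ and $t_k=t_{k+n}$ along every sample path. Lemma \ref{lemma:upper-bound-policy} then gives $\E[\mathbb{A}_k v_k - \mathbb{I}_k c_k]=\E[\mathbb{A}_k \kappa_k]=\E[\mathbb{A}_{k+n} v_{k+n}]$, and subtracting the common expected payment yields the equality of utilities.

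For Equation \eqref{eq:mainthm-3.2}, the construction of $\mu$ is arranged so that bidder $k+n$'s simulated value has conditional distribution $F_{\theta_k}$ given $\ccpt{\theta}_k$ and her bids mimic $b_k$; consequently $\mathbb{A}_{k+n}=\mathbb{A}_k$ and $t_{k+n}=t_k$ pathwise. Because bidder $k+n$ bears no inspection cost while $v_{k+n}=\kappa_k$,
\[
  u_{k+n}(\mu(b_k),b_{-k}) - u_k(b_k,b_{-k}) \;=\; \E[\mathbb{A}_k \kappa_k] - \E[\mathbb{A}_k v_k - \mathbb{I}_k c_k],
\]
which is exactly the nonnegative gap bounded in Lemma \ref{lemma:upper-bound-policy}; the same lemma also delivers the equality clause via the ``exercise in the money'' condition.

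The final sentence I would then obtain by chaining the two previous parts: for any best response $b_k$,
\[
  u_k(b_k,b_{-k}) \;\leq\; u_{k+n}(\mu(b_k),b_{-k}) \;=\; u_k(\lambda(\mu(b_k)),b_{-k}) \;\leq\; u_k(b_k,b_{-k}),
\]
where the outer inequalities are \eqref{eq:mainthm-3.2} and best-responseness, and the middle equality is \eqref{eq:mainthm-3.1} applied to $\mu(b_k)$ (normalized because $\mu$ inspects at the earliest moment and its bid function is monotone in the simulated value, monotonicity being enforceable without loss of generality by Claim \ref{clm:1}). Equality throughout forces equality in \eqref{eq:mainthm-3.2}, so by the equality clause of Lemma \ref{lemma:upper-bound-policy} bidder $k$ almost surely exercises in the money.

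The main technical obstacle I anticipate lies in the first step: one must pin down the coupled sample space carefully enough that the claim price and the resulting opponent responses under $\lambda(b_{k+n})$, on the event $\{v_k>\sigma_k\}$, are realized as the same random variables as under $b_{k+n}$ with $v_{k+n}=\sigma_k$. This requires committing to consistent continuous-clock and tie-breaking conventions across the two populations, and checking that the inspection event $\{\mathbb{I}_k=1\}$ really is independent of $v_k$ conditional on $\theta_k$ (since it is triggered only by the clock and the history of opponents' actions, which have not yet responded to $v_k$). Once that coupling is nailed down, the remainder of the argument is essentially bookkeeping on top of Lemma \ref{lemma:upper-bound-policy}.
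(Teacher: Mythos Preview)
Your treatment of \eqref{eq:mainthm-3.1} and \eqref{eq:mainthm-3.2} is correct and tracks the paper's proof: both arguments reduce to the pathwise identities $\allocsub{k+n}=\allocsub{k}$, $t_{k+n}=t_k$, $v_{k+n}=\kappa_k$, $c_{k+n}=0$ under the coupling, after which Lemma~\ref{lemma:upper-bound-policy} does the work. (Your sentence ``a probability-zero event under standard tie-breaking'' is garbled---the event that another bidder has already claimed at a strictly higher price is certainly not null---but the intended point, that ties at the exact inspection price have probability zero so that $\lambda(b_{k+n})$ exercises in the money almost surely, is right.)

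The final chain, however, has a genuine gap. The middle equality $u_{k+n}(\mu(b_k),b_{-k})=u_k(\lambda(\mu(b_k)),b_{-k})$ is an instance of \eqref{eq:mainthm-3.1}, which is stated only for \emph{normalized} strategies of bidder $k+n$; indeed $\lambda$ is defined only on such strategies. The strategy $\mu(b_k)$ is not normalized in general: its bid depends on the simulated $v_k$, which is random given $v_{k+n}$ on the event $\{v_{k+n}=\sigma_k\}$, and there is no reason for $b_k$ itself to be monotone. Your appeal to Claim~\ref{clm:1} to ``enforce monotonicity without loss of generality'' is circular, because Claim~\ref{clm:1} applies only to best responses of bidder $k+n$, and you have not yet shown that $\mu(b_k)$ is one.

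The paper sidesteps this by arguing at the level of suprema rather than through $\lambda\circ\mu$: \eqref{eq:mainthm-3.1} (together with Claim~\ref{clm:1}) gives $u_{k+n}^{\ast}\le u_k^{\ast}$, while \eqref{eq:mainthm-3.2} gives $u_k^{\ast}\le u_{k+n}^{\ast}$, so $u_k^{\ast}=u_{k+n}^{\ast}$. For a best response $b_k$ one then has
\[
  u_k(b_k,b_{-k})=u_k^{\ast}=u_{k+n}^{\ast}\ge u_{k+n}(\mu(b_k),b_{-k})\ge u_k(b_k,b_{-k}),
\]
forcing equality in \eqref{eq:mainthm-3.2}. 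This avoids ever forming $\lambda(\mu(b_k))$ and is both shorter and free of the circularity.
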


\begin{proof}
For $i \in \{k,k+n\}$ let $\allocsubi$ and $\inspecti$ denote the indicator random variables
of the events that bidder $i$ inspects her value in a Dutch auction 
against bidders $k+1,\ldots,k+n-1$ playing strategy profile $b_{-k}$,
and let $t_i$ denote the amount that $i$ pays. We have
\begin{align*}
  u_k &= \allocsub{k} v_k - \inspect{k} c_k - t_k \\
  u_{k+n} &= \allocsub{k+n} v_{k+n} - \inspect{k+n} c_{k+n} - t_{k+n} 
                   = \allocsub{k} \kappa_k - t_k 
\end{align*}
where we have used the facts that $\allocsub{k+n}=\allocsub{k}$ and
$t_{k+n}=t_k$, along with $v_{k+n}=\kappa_k$ (by 
our construction of the coupling) and $c_{k+n}=0$ (by the definition
of the covered counterpart). \Cref{lemma:upper-bound-policy} now
implies all of the conclusions stated in the claim, except for the statement
that best responses of bidder $k$ to $b_{-k}$ almost surely exercise in 
the money. To prove that statement, we reason as follows.
Let $u_k^*$ and $u_{k+n}^*$ denote the 
maximum expected utility attainable by bidders $k$ and $k+n$, respectively,
when bidding against strategy profile $b_{-k}$.
Equation~\eqref{eq:mainthm-3.1} implies $u_k^* \geq u_{k+n}^*$,
while Equation~\eqref{eq:mainthm-3.2} implies $u_k^* \leq u_{k+n}^*$,
therefore the two must be equal to one another. Furthermore, any strategy
of bidder $k$ that does not almost surely exercise in the money fails to 
attain equality in Equation~\eqref{eq:mainthm-3.2} and therefore must
fail to be a best response. Contrapositively, every best response of bidder
$k$ must almost surely exercise in the money.
\end{proof}

\begin{claim} \label{clm:4}
The functions $\lambda$ and $\mu$ induce mutually inverse bijections
on functional equivalence classes of best responses for bidders $k$
and $k+n$.
\end{claim}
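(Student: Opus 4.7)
The plan is to deduce \Cref{clm:4} from \Cref{clm:1}, \Cref{clm:2}, and \Cref{clm:3} as a bookkeeping exercise. I would first observe that both $\lambda$ and $\mu$ are defined as deterministic simulations of the input strategy, so they carry functionally equivalent inputs to functionally equivalent outputs against any fixed opponent profile $b_{-k}$ and hence descend to maps on equivalence classes. Since $\lambda$ is only defined on normalized strategies, I would invoke \Cref{clm:1} to pick a normalized representative in each equivalence class of best responses for bidder $k+n$; the clause of \Cref{clm:1} guaranteeing that every best response has such a representative makes this choice possible.

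Next I would argue that $\lambda$ and $\mu$ send best responses to best responses. Fix $b_{-k}$, let $b_{k+n}$ be a best response of bidder $k+n$ (taken normalized), and let $b_k'$ be any alternative strategy of bidder $k$. The inequalities of \Cref{clm:3} chain into
\[
u_k(b_k',b_{-k}) \;\leq\; u_{k+n}(\mu(b_k'),b_{-k}) \;\leq\; u_{k+n}(b_{k+n},b_{-k}) \;=\; u_k(\lambda(b_{k+n}),b_{-k}),
\]
where the first inequality is Equation~\eqref{eq:mainthm-3.2}, the second uses optimality of $b_{k+n}$, and the equality is Equation~\eqref{eq:mainthm-3.1}; hence $\lambda(b_{k+n})$ is a best response of bidder $k$. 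Symmetrically, for a best response $b_k$ of bidder $k$, \Cref{clm:3} guarantees that $b_k$ almost surely exercises in the money, so Equation~\eqref{eq:mainthm-3.2} holds with equality; combining with Equation~\eqref{eq:mainthm-3.1} and using \Cref{clm:1} (which implies the supremum of $u_{k+n}$ over all strategies is attained on a normalized one) shows $\mu(b_k)$ attains the best-response utility for bidder $k+n$ and is therefore a best response.

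Finally I would apply \Cref{clm:2} to verify that the induced maps are mutual inverses on equivalence classes. The ``almost surely exercises in the money'' hypothesis of the first half of \Cref{clm:2} is supplied by \Cref{clm:3}, so for every best response $b_k$ of bidder $k$ we have $b_k$ functionally equivalent to $\lambda(\mu(b_k))$. The ``normalized'' hypothesis of the second half of \Cref{clm:2} is satisfied after passing to the normalized representative provided by \Cref{clm:1}, so for every best response $b_{k+n}$ of bidder $k+n$ we have $b_{k+n}$ functionally equivalent to $\mu(\lambda(b_{k+n}))$. On equivalence classes these compositions are the identity, establishing the mutually inverse bijection. The only real subtlety—and thus the ``main obstacle''—is verifying that the hypotheses of \Cref{clm:2} are indeed met on equivalence classes of best responses rather than on arbitrary strategies; but this is exactly what \Cref{clm:1} and \Cref{clm:3} were engineered to deliver, so the proof requires no new technical content beyond assembling the pieces.
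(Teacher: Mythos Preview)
Your proof is correct and follows essentially the same approach as the paper's: both argue that $\lambda$ and $\mu$ are well-defined on functional equivalence classes, then invoke \Cref{clm:2} (with hypotheses supplied by \Cref{clm:1} and \Cref{clm:3}) to show the compositions are identities on the relevant classes. You are slightly more explicit than the paper in spelling out that $\lambda$ and $\mu$ carry best responses to best responses via the chain of inequalities from \Cref{clm:3}; the paper leaves this implicit in the equality $u_k^* = u_{k+n}^*$ established inside the proof of \Cref{clm:3}.
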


\begin{proof}
First we must show that $\lambda$ and $\mu$ preserve the relation
of functional equivalence.
If $b_{k+n}$ and $b'_{k+n}$ are functionally equivalent 
normalized strategies of bidder $k+n$
then $\lambda(b_{k+n})$ and $\lambda(b'_{k+n})$ are 
functionally equivalent because this transformation leaves other bidders' behavior unchanged and bidder $k$ inspects at the same clock value no matter whether she
is using $\lambda(b_{k+n})$ or $\lambda(b'_{k+n})$; in fact this clock
value is equal to the value at which bidder $k+n$ claims the item if her 
value is $\sigma_k$ and she is using either $b_{k+n}$ or $b'_{k+n}$. 
If $b_k$ and $b'_k$ are  functionally equivalent strategies of bidder $k$
then $\mu(b_k)$ and $\mu(b'_k)$ are functionally equivalent as the transformation leaves other bidders' behavior unchanged.

Having established that $\lambda$ and $\mu$ are well-defined
mappings between functional equivalence classes, we know from 
\Cref{clm:2} that the composition $\lambda \circ \mu$ is equal to 
the identity on the set of functional equivalence classes of strategies that
almost surely exercise in the money, and that $\mu \circ \lambda$
is equal to the identity on the set of functional equivalence classes of normalized
strategies. Those two sets contain, respectively, the sets of best responses
of bidders $k$ and $k+n$, by an application of \Cref{clm:1} (for bidder $k+n$)
and \Cref{clm:3} (for bidder $k$). 
\end{proof}
 
To conclude the proof of \Cref{maintheorem},
from \Cref{clm:4} we know that for any fixed
profile of strategies $b_{-k}$ of bidders $k+1,\ldots,k+n-1$, there is
a bijection between functional equivalence classes
of best responses of bidders $k$ and $k+n$; in particular this holds
when $b_{-k}$ represents the profile of strategies used by bidders
$k+1,\ldots,k+n-1$ in an equilibrium of the Dutch auction with either
of the bidder sets $\{k,k+1,\ldots,k+n-1\}$ or $\{k+1,k+2,\ldots,k+n\}$.
The functional equivalence classes of
equilibria of the Dutch auction with these two bidder sets are
therefore in bijective correspondence. Furthermore,  this correspondence preserves the auctioneer's
revenue and the utility of each bidder except possibly
bidders $k$ and $k+n$.
\Cref{clm:3} ensures that those bidders'
expected utility is also preserved.
Composing together these equilibrium
correspondences for $k=1,\ldots,n$, we finally obtain the mapping
from equilibria of the Dutch auction with bidders $1,\ldots,n$
(whose types are jointly distributed according to ${\mathcal P}$)
to equilibria of the Dutch auction with bidders $n+1,\ldots,2n$
(whose types are jointly distributed according to $\ccpt{\mathcal P}$)
as asserted in the theorem statement.

\section{Numerical Methods}\label{numericsapp}

Here we give a high-level overview of our numerical methods, with more details given in \Cref{sec:app-calibrations-extra}.

In the start-up calibration of Subsection \ref{startup} we must solve, in each scenario we study, for the expected welfare of three mechanisms: the first-best, the Dutch auction and the simultaneous second-price auction.  To do so, for every parameter setting, we repeatedly:
\begin{enumerate}
\item Draw  a value of $\left\{\left(V_i^0,C_i^0\right)\right\}_{i=1}^N$.
\item Construct a distribution of covered call values for each bidder based on these.
\item Solve for the equilibrium of the Dutch auction using Richard Katzwer's ``AuctionSolver'' software package \citep{AuctionSolver} to find the equilibrium of the asymmetric first-price auction with bidder values distributed according to covered call values. By \Cref{maintheorem} we know the equilibrium welfare in this environment
is equal to that of the Dutch auction in our environment.
\item Solve for the equilibrium of the second-price auction using smoothed best response iteration and some analytical tricks through a program we wrote that we describe in \Cref{sec:app-calibrations-extra}.
\item Sample $\left\{ \left(V_i^1,C_i^1,V_i^2,C_i^2\right)\right\}_{i=1}^N$ repeatedly, calculate the corresponding covered call values, and feed this information to the above-calculated equilibria to obtain average welfare for each mechanism, as well as the first-best (which is just the highest covered call value).
\end{enumerate}
We then average values over all these samples to construct overall Monte Carlo estimates of our three objects of interest, though the number of sampling iterations is small given that each iteration requires a full solution for equilibrium and thus in practice we consider the average of only three auctions generated by draws of $\left\{\left(V_i^0,C_i^0\right)\right\}_{i=1}^N$ for each set of parameter values rather than a precise estimate of average welfare.  This allowed us only to get precision to a single significant digit  and thus we report estimates rounded to this accuracy.  We provide further details of our numerical methods in \Cref{sec:app-calibrations-extra}.

   Solving for the equilibrium of the BKRS mechanism is fairly complex, likely one reason that \citeauthor{sweeting} consider a  set-up simpler than ours in the previous calibration.  Luckily, \citeauthor{sweeting} report the welfare of the simultaneous second-price auction and the BKRS mechanism, meaning that we only need to compute the first-best welfare and that of the descending auction.  Furthermore because there are only two types of bidders, the outer loop of our numerical method above is unnecessary.  We thus simply construct the distribution of covered call values and the corresponding first-price auction equilibrium using AuctionSolver  and then sample from the covered call value distribution to construct both first-best welfare and welfare under the Dutch auction.

\section{Extensions}\label{extapp}

\subsection{Many objects}\label{multiobjectapp}

Our argument uses an approach from the literature on smoothness in computational mechanism design \citep{lucier,roughgarden, syrgkanise, syrgkanis, hartline}.  This relies on showing inequalities based on a particular, simple deviation that each bidder could make relative to equilibrium play.  For player $i$,
the deviation strategy $b_i'$:
\begin{enumerate}
\item Samples a random number $r \in [e^{-2},1]$ with density $f(r)= \frac{1}{2r}$. For an object with strike price $\sigma$ and value $v$,
define the {\em shaded strike price} and {\em shaded value} to be $(1-r)\sigma$ and $(1-r)v$, respectively.

\item Takes no action at any time when the clock value $t$ is strictly greater than the highest remaining shaded strike price of an uninspected object and the highest remaining shaded value of an inspected object.

\item When $t$ becomes weakly less than either of these, the deviation claims any object that has been inspected and has shaded value of $t$ and if no such object exists it inspects, in random order, any object with shaded strike price $t$, claiming that object if its shaded value is weakly above $t$ and otherwise continuing inspection until all objects of shaded strike price $t$ have been inspected. 

\end{enumerate}

\begin{lem} \label{lem:deviation-non-exposed}
The deviation strategy $b_i'$ always exercises in the money. It acquires
item $j$ at price $(1-r) \kappa_{ij}$ if the item is still available,
has non-negative value, and bidder $i$ has not yet acquired another item; 
otherwise it does not acquire item $j$.
\end{lem}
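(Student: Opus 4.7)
The plan is to trace the execution of $b_i'$ along the descending clock and verify by direct case analysis that the claimed acquisition behavior holds. First I would observe that $b_i'$ engages with items in weakly decreasing order of their current \emph{effective value}, where the effective value of item $j$ is its shaded strike price $(1-r)\sigma_{ij}$ while $j$ is uninspected and becomes its shaded value $(1-r)v_{ij}$ after inspection; the strategy is idle until the clock descends to the current maximum effective value among items that remain under consideration. This turns the deviation into a simple sequential procedure with a well-defined ``turn'' for each item.

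Fixing a specific item $j$, I would split into two cases according to the relation between $v_{ij}$ and $\sigma_{ij}$. If $v_{ij}\geq \sigma_{ij}$, then $\kappa_{ij}=\sigma_{ij}$; item $j$'s turn arrives at clock value $t=(1-r)\sigma_{ij}$, at which moment $b_i'$ inspects $j$, finds $(1-r)v_{ij}\geq t$, and claims $j$ at price $(1-r)\sigma_{ij}=(1-r)\kappa_{ij}$. This sub-case simultaneously establishes the ``always exercises in the money'' assertion, since any $v_{ij}>\sigma_{ij}$ triggers a claim at inspection. If instead $v_{ij}<\sigma_{ij}$, so $\kappa_{ij}=v_{ij}$, the inspection at $t=(1-r)\sigma_{ij}$ produces no claim because $(1-r)v_{ij}<t$; the effective value of $j$ then drops to $(1-r)v_{ij}$, and whenever the clock eventually reaches that level (which it does provided $v_{ij}\geq 0$), $b_i'$ claims $j$ at price $(1-r)v_{ij}=(1-r)\kappa_{ij}$.

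The three hypotheses of the lemma are exactly what is needed for the computed outcome to occur: availability of $j$ means no rival has removed it before its turn, bidder $i$ having not yet acquired another item means $i$ is still an active participant in the auction, and $v_{ij}\geq 0$ ensures the clock descends to the required non-negative shaded value in the second case. For the converse, each failed hypothesis transparently blocks acquisition of $j$ --- an unavailable item cannot be claimed, an already-acquired bidder has been removed from the auction, and a negative shaded value is never matched by a positive clock price. The ``random order'' tiebreaking built into the strategy handles ties among shaded strike prices without disturbing the per-item analysis, since the case argument above applies independently to each tied item. I do not anticipate a serious technical obstacle here; the proof is essentially careful bookkeeping about when $b_i'$ acts, and the only mild care needed is ensuring that the effective-value ordering is well-defined, which is immediate since the item set is finite.
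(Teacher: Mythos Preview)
Your proposal is correct and follows the same ``by construction'' approach as the paper, which simply observes that the deviation immediately claims any inspected item with value above its strike price and that the price paid is always the lesser of the shaded strike price and shaded value, namely $(1-r)\kappa_{ij}$. Your explicit case split on $v_{ij}\gtrless\sigma_{ij}$ spells out in more detail what the paper leaves implicit, but the argument is the same.
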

\begin{proof}
By construction, any object that is inspected and has a value above its strike price is immediately claimed,
i.e.~$b'_i$ always exercises in the money. The price at which an object is claimed is, by construction, 
equal to the lesser of its shaded strike price and its shaded value, i.e.~$(1-r) \kappa_{ij}$.
The only circumstance that prevents $i$ from claiming $j$ at this price is the event that 
either $i$ already claimed another object, or that $j$ was already sold to another bidder.
\end{proof}
 
To state the key 
lemma that underpins our analysis of the equilibria of the descending-clock
auction, we introduce the following notation. If $b$ is a profile of strategies, 
$p_i(b)$ denotes the price paid by bidder $i$ in the descending-clock auction,
and $p^j(b)$ denotes the price paid for item $j$.
Finally, $\kappa_i(b)$ denotes the covered call value of the item $i$ receives
in the descending-clock auction, or $\kappa_i(b)=0$ if there is no such item;
in other words, $\kappa_i(b) = \sum_{j \in \itemset} \allocij(b) \kappa_{ij}$.
All of these quantities should be interpreted
as random variables on the sample space defined by the random realization of costs,
types, and values.

\begin{lem} \label{lemma:general-smoothness}
For any $i,j$, any strategy profile $b$, and any realizations of all types and values,
 \[ \E \left[ \kappa_i \left(b_{-i},b_i' \right) - p_{i} \left(b_{-i},b_i' \right) \mid \{(\theta_j,v_j)\}_{j=1}^n \right] + \tfrac12 p_i(b) 
+ \tfrac12 p^j(b) \geq \tfrac{1}{2} \left(1 - e^{-2} \right) \kappa_{ij} . \]
(The conditional expectation on the left side integrates over the random choice of $r$
in strategy $b'_i$.)
\end{lem}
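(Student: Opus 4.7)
The plan is to establish the inequality via a pointwise bound together with a calculation of the expectation over $r$.

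The key pointwise claim is that, for every realization of types, values, and the random parameter $r$,
\begin{equation} \label{eq:pw-smooth}
  \kappa_i(b_{-i},b_i') - p_i(b_{-i},b_i')
  \;\geq\;
  r\,\kappa_{ij}\cdot \indic\bigl[(1-r)\kappa_{ij} > p^j(b) + p_i(b)\bigr].
\end{equation}
By \Cref{lem:deviation-non-exposed}, whenever the deviation claims an item $k$, it pays $(1-r)\kappa_{ik}$ and collects covered-call value $\kappa_{ik}$, so the deviation's covered-call utility is exactly $r\,\kappa_{ik}$ (and zero if nothing is claimed). Now suppose the event in the indicator on the right side holds, and consider the state of the deviation at the moment the clock descends to $(1-r)\kappa_{ij}$. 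Either (a) bidder $i$ has already claimed some item $k$ by that time, in which case the claim occurred at clock value $(1-r)\kappa_{ik}$ by \Cref{lem:deviation-non-exposed}, so $(1-r)\kappa_{ik} \geq (1-r)\kappa_{ij}$ and hence $\kappa_{ik}\geq \kappa_{ij}$, giving covered-call utility $r\kappa_{ik}\geq r\kappa_{ij}$; or (b) $i$ has claimed nothing yet, in which case her protocol now attempts to claim $j$ itself (she has either already inspected $j$, because $\sigma_{ij}\geq \kappa_{ij}$, or she is inspecting $j$ at precisely this clock). The attempt succeeds provided $j$ is still available at clock $(1-r)\kappa_{ij}$.

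Thus case (b) reduces to a \emph{price-monotonicity} sub-claim: under the profile $(b_{-i},b_i')$, no bidder other than $i$ claims $j$ at a clock value strictly greater than $p^j(b) + p_i(b)$. I plan to prove this by decomposing the move from $b$ to $(b_{-i},b_i')$ into two steps. First, remove bidder $i$ entirely from the original auction: a standard monotonicity property for descending-clock auctions (fewer bidders weakly relax competition for any item) implies that in this reduced profile, item $j$ is claimed at clock value at most $p^j(b)$. Second, re-introduce bidder $i$ playing $b_i'$: since $b_i'$ demands at most one item and bidder $i$ was already willing to pay $p_i(b)$ in the original profile, her re-entry can increase the clock value at which some other bidder claims $j$ by at most $p_i(b)$. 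This second step is the main technical obstacle; I expect it to require an induction along the descending clock, making use of an implicit no-overbidding-style restriction on the equilibrium strategies and a careful tracking of how a single additional unit-demand bidder cascades through the auction.

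Given the pointwise bound \eqref{eq:pw-smooth}, the rest of the argument is a direct integration. Write $T := p^j(b) + p_i(b)$; if $\kappa_{ij}\leq 0$ the right-hand side of the lemma is non-positive and everything is immediate. Otherwise, in the regime $T \leq (1-e^{-2})\kappa_{ij}$, the density $f(r) = \tfrac{1}{2r}$ on $[e^{-2},1]$ yields
\begin{equation*}
  \E_r\!\bigl[r\cdot \indic[(1-r)\kappa_{ij} > T]\bigr]
  \;=\;
  \int_{e^{-2}}^{1 - T/\kappa_{ij}} \tfrac{1}{2}\,dr
  \;=\;
  \tfrac12(1-e^{-2}) - \tfrac{T}{2\kappa_{ij}}.
\end{equation*}
Multiplying by $\kappa_{ij}$ and rearranging yields the lemma exactly. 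In the regime $T > (1-e^{-2})\kappa_{ij}$, the non-negativity of the deviation's covered-call utility (every claim under $b_i'$ satisfies $r\kappa_{ik}\geq 0$ by \Cref{lem:deviation-non-exposed}) combined with $\tfrac{T}{2} > \tfrac{1-e^{-2}}{2}\kappa_{ij}$ makes the inequality trivial. The whole argument therefore reduces to the price-monotonicity sub-claim, which is the main hurdle.
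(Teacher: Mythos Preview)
Your overall architecture---a pointwise lower bound on $\kappa_i(b_{-i},b_i')-p_i(b_{-i},b_i')$ followed by integration over $r$---matches the paper, and your integration and case split in the regime $T>(1-e^{-2})\kappa_{ij}$ are correct. The gap is entirely in the price-monotonicity sub-claim and its proposed two-step proof.

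Step (1) of your decomposition is false. In the uniform descending auction, strategies may condition on the set of \emph{available} items, and removing a bidder can \emph{raise} the price at which a given item is claimed. Concretely: items $A,B$; bidder $1$ claims $A$ at clock $10$; bidder $2$'s strategy is ``if $A$ is available at clock $5$, claim $B$ at $5$; otherwise do nothing''; bidder $3$ claims $B$ at $3$. Under the full profile $p^B(b)=3$, but with bidder $1$ removed $B$ sells at $5>3$. So ``fewer bidders weakly relax competition for any item'' is not a property of this auction, and no no-overbidding refinement saves it (the lemma is stated for arbitrary $b$). Step (2) is separately dubious: $p_i(b)$ is a feature of the discarded strategy $b_i$, and there is no reason the cascade induced by re-inserting $b_i'$ should be bounded by it.

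The paper's route is both simpler and avoids the decomposition entirely. Set $p(b)=\max\{p_i(b),p^j(b)\}$ and argue by contradiction that if $(1-r)\kappa_{ij}>p(b)$ then $\kappa_i(b_{-i},b_i')\geq\kappa_{ij}$. Suppose not; then (exactly your case (b)) bidder $i$ under $b_i'$ claims nothing at any clock value $\geq(1-r)\kappa_{ij}$. But under $b_i$ she also claims nothing above $(1-r)\kappa_{ij}$, since $p_i(b)<(1-r)\kappa_{ij}$. Inspections are private, so from every other bidder's viewpoint the two profiles $b$ and $(b_{-i},b_i')$ are \emph{indistinguishable} until the clock passes $(1-r)\kappa_{ij}$; hence those bidders behave identically up to that point. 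In particular, whoever (other than $i$) claims $j$ above $(1-r)\kappa_{ij}$ under $(b_{-i},b_i')$ does so under $b$ as well, forcing $p^j(b)\geq(1-r)\kappa_{ij}>p^j(b)$---contradiction. This directly proves your sub-claim (in fact the stronger version with $\max$ in place of the sum), with no induction, no monotonicity, and no restriction on $b$.
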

\begin{proof}
The structure of the deviation strategy $b_i'$ guarantees that 
$\kappa_i(b_{-i},b_i') - p_i(b_{-i},b_i') = r \kappa_i(b_{-i},b_i')$ pointwise.
When $i$ does not receive an item this is because both sides are equal to zero;
when $i$ receives an item it is because strategy $b_i'$ dictates that $i$ always
pays $1-r$ times the covered call value for the item it acquires.
Accordingly, the inequality asserted by the lemma is equivalent to
\begin{equation} \label{eq:gs.0}
  {\E}_r \left[ r \kappa_i(b_{-i},b_i') \right] + \tfrac12 p_i(b)
  + \tfrac12 p^j(b) \geq \tfrac{1}{2} \left( 1 - e^{-2} \right) \kappa_{ij}, 
\end{equation}
where we have introduced the notation ${\E}_r[\cdot]$ as
shorthand for $\E[\cdot \mid \{(\theta_j,v_j)\}_{j=1}^n]$.

%


\begin{claim} \label{clm:gs1}
Let $p(b) = \max \{ p_i(b), p^j(b)\}$. 
If the random number $r$ sampled in the
initial step of deviation strategy $b'_i$ 
satisfies $r < 1 - p(b)/\kappa_{ij}$ then
$\kappa_i(b_{-i},b'_i) \geq \kappa_{ij}$.
\end{claim}
\begin{proof}
The proof is by contradiction: if $\kappa_i(b_{-i},b'_i) <
\kappa_{ij}$ then player $i$ did not claim item $j$ 
when the clock value was at $(1-r) \kappa_{ij}$,
which could only mean that some other bidder $i'$ had already
claimed item $j$ at that time. However, the inequality
$r < 1 - p(b)/\kappa_{ij}$ implies 
$(1-r) \kappa_{ij} > p(b) \geq p_i(b)$,
meaning that the behavior of bidder $i$ in strategy profile
$(b_{-i},b'_i)$ remains indistinguishable from her behavior
in the equilibrium profile $b$ until after the clock value
descends below $(1-r) \kappa_{ij}$. Consequently, the
bidder $i' \neq i$ who claims item $j$ at a price greater than or 
equal to $(1-r)\kappa_{ij}$ when the strategy profile is
$(b_{-i},b'_i)$ must also do so when the strategy profile
is $b$. This implies $p^j(b) \geq (1-r) \kappa_{ij}$, contrary
to our assumption that $p(b) < (1-r) \kappa_{ij}$. 
\end{proof}

Let $\rho = p(b)/\kappa_{ij}$.
\Cref{clm:gs1} justifies the inequality
\begin{equation} \label{eq:gs.1}
    {\E}_r \left[ r \kappa_i(b_{-i},b_i') \right] \geq
    \left( \int_{e^{-2}}^{1 - \rho} (r \kappa_{ij}) \, f(r) \, dr \right)^+ =
    \left( \int_{e^{-2}}^{1-\rho} \frac{\kappa_{ij}}{2} \, dr \right)^+ =
    \tfrac12 \left( 1 - e^{-2} - \rho \right)^+ \kappa_{ij}.
\end{equation}
If $\rho < 1-e^{-2}$ then $1 - e^{-2} - \rho$ is positive, 
so we obtain
\begin{align}
  {\E}_r \left[ r \kappa_i(b_{-i},b_i') \right] \geq 
  \tfrac12 \left( 1 - e^{-2} \right) \kappa_{ij} - \tfrac12 \rho \kappa_{ij} = 
  \tfrac12 \left( 1 - e^{-2} \right) \kappa_{ij} - \tfrac12 p(b).
\end{align}
In light of the fact that $p(b) = \max\{p_i(b),p^j(b)\} \leq p_i(b) + p^j(b)$,
we conclude that Inequality~\eqref{eq:gs.0} holds in this case.
On the other hand, if $\rho \geq 1 - e^{-2}$, then
$p(b) \geq \left( 1 - e^{-2} \right) \kappa_{ij}$,
so at least one of the last two terms on the left side
of Inequality~\eqref{eq:gs.0} is greater than or equal
to the right side. Since all three terms on the 
left side are non-negative, we see that~\eqref{eq:gs.0} holds
in this case as well. 
\end{proof}

\begin{proof}[Proof of  \Cref{manyobjects}]
Let $c^b_{ij}$ denote the inspection cost paid by bidder $i$ for 
item $j$ in strategy profile $b$.
Let $u_i(b) = \sum_j [\allocij(b) v_{ij} -  c^{b}_{ij} - p_{ij}(b)]$
denote bidder $i$'s utility in the outcome of strategy profile $b$.
Then, in any equilibrium $b$,
\begin{equation} \label{eq:gd.1}
 \E\left[ \text{welfare}\right] = \E \left[ \sum_i u_i(b) + p_i(b) \right]  
          \geq \E \left[ \sum_i u_i(b_{-i},b_i') + p_i(b) \right] 
\end{equation}
because each $i$ prefers the equilibrium strategy $b_i$ to the deviation $b_i'$. 
Now, the deviation strategy $b_i'$ always exercises in the money, so 
\begin{align*} 
 \E\left[ u_i(b_{-i},b_i') \right] & = 
 \E\left[ \sum_j \left(\allocij(b_{-i},b_i') v_{ij} 
                             - c^{(b_{-i},b_i')}_{ij}  
                             - p_{ij}(b_{-i},b_i') 
                      \right) 
    \right] \\ & =
 \E\left[ \sum_j \allocij(b_{-i},b_i') \kappa_{ij} -  p_{ij}(b_{-i},b_i') \right] =
 \E\left[ \kappa_i(b_{-i},b_i') - p_i(b_{-i},b_i') \right].
\end{align*}
Substituting this into~\eqref{eq:gd.1}, we obtain
\begin{equation*}
   \E\left[ \text{welfare}\right] \geq \E \left[ \sum_i (\kappa_i(b_{-i},b_i') - p_i(b_{-i},b_i')) + \sum_i p_i(b) \right].
\end{equation*}
The sum of prices paid by bidders equals the sum of prices paid {\em for} items, so we may 
rewrite $\sum_i p_i(b)$ as $\frac12 \sum_i p_i(b) + \frac12 \sum_j p^j(b)$, yielding
\begin{align} \label{eq:gd.2}
   \E\left[ \text{welfare}\right] 
   &\geq 
   \E \left[ \sum_i (\kappa_i(b_{-i},b_i') - p_i(b_{-i},b_i')) 
             + \frac12 \sum_i p_i(b) + \frac12 \sum_j p^j(b) \right].
\end{align}
At this point we may define random variables 
\begin{align*}
    \bdualvar_i &= \E \left[ \kappa_i(b_{-i},b_i') - p_i(b_{-i},b_i') 
                              \mid \{ (\theta_j, v_j) \}_{j=1}^n \right] + \tfrac12 p_i(b) \\
    \idualvar_j &= \tfrac12 p^j(b)
\end{align*}
and rewrite~\eqref{eq:gd.2} as
\begin{equation} \label{eq:gd.3}
  \E\left[ \text{welfare} \right] \geq \E \left[ \sum_i \bdualvar_i + \sum_j \idualvar_j \right].
\end{equation}
\Cref{lemma:general-smoothness}
ensures that for all bidders $i$ and items $j$, the inequality
\begin{equation} \label{eq:gd.4}
  \bdualvar_i + \idualvar_j \geq \tfrac12 \left( 1 - e^{-2} \right) \kappa_{ij}
\end{equation}
holds pointwise. For the first-best policy, define $\optalloc_{ij}$ and $\optinspect_{ij}$
as before to be, respectively, the indicators of 
of the events that bidder $i$ receives item $j$ and that $i$ inspects $j$. 
We know from Lemma~\ref{lemma:upper-bound-policy} that
\begin{equation} \label{eq:gd.5}
   \E\left[ \text{first-best} \right] = 
   \sum_{ij} \E\left[ \optalloc_{ij} v_{ij} - c^*_{ij} \right] \leq
   \sum_{ij} \E \optalloc_{ij} \kappa_{ij}.
\end{equation}
Using~\eqref{eq:gd.4} and the fact that the first-best policy allocates at 
most one item to each bidder and at most one bidder to each item, we have
\begin{align} \nonumber
  \tfrac12 \left( 1 - e^{-2} \right) \sum_{ij} \E \optalloc_{ij} \kappa_{ij} & \leq
  \E\left[ \sum_{ij} \optalloc_{ij} (\bdualvar_i + \idualvar_j) \right] \\
  \nonumber & =
  \E\left[ \sum_i \bdualvar_i \left( \sum_j \optalloc_{ij} \right) \right] +
  \E\left[ \sum_j \idualvar_j \left( \sum_i \optalloc_{ij} \right) \right] \\
  & \leq
  \E\left[ \sum_i \bdualvar_i + \sum_j \idualvar_j \right].
\label{eq:gd.6}
\end{align}
The theorem follows immediately, by combining~\eqref{eq:gd.3} with~\eqref{eq:gd.6}.
\end{proof}

Intuitively there are two reasons why the approximation (often referred to as the ``price of anarchy'' in the relevant literature) we obtain to optimal welfare is worse in the general than in the single-item case (point 2 of Corollary \ref{maincorollary}).  First, the greedy assignment algorithm which a descending auction approximately coordinates is not optimal, only approximately so.  Second, with multiple items there are more cases to be considered to ``cover'' the potential welfare impacts of a deviation, as deviations may not only change {\em whether} an individual is assigned an object but also {\em which} object she is assigned.\footnote{Rather than the two cases 
that arise in the smoothness proof of, for example, \citeauthor{lucier}
(either bidder $i$ achieves a high expected utility when deviating,
or some bidder pays a high price in equilibrium)
there are now {\em three} cases to consider:
either bidder $i$ achieves a high expected utility when deviating,
or some bidder pays a high price for item $j$ in equilibrium,
or bidder $i$ pays a high price for some {\em other} item in
equilibrium.}  
When one combines these two sources of inefficiency, they cause our 
price-of-anarchy bound
to degrade from $1-\frac{1}{e} \approx 0.63$ in the 
single-item case to 
$\frac{1}{2} \left( 1 - e^{-2} \right) \approx 0.43$
in the general case with multiple items being sold.
Note, however, that in the multiple-item case the
greedy procedure is only guaranteed to attain at least
half of the first-best welfare, so the fact that every
equilibrium of the uniform descending procedure
attains at least $43\%$ of the first-best welfare can
be interpreted as indicating that the efficiency loss due
to strategic behavior is quite mild.

\subsection{Approximate best responses} \label{app:approxresponse}


\begin{proof}[Proof of \Cref{prop:bounded-rationality}]
We present the proof of the single-item case, and we indicate how to
modify the proof to pertain to the multi-item case.

Let $i^* = \arg\max_i \kappa_i$.
For a strategy profile $b$, let $p(b)$ be the price paid and let 
$u_i(b) = \allocsubi(b)[v_i - p(b)] - \inspecti(b)c_i$ be $i$'s utility.
If $b$ is a profile of $\alpha$-best responses, then
\begin{align} 
 \text{welfare} &\geq \E \Big[ p(b) + \sum_i u_{i}(b) \Big]  
\label{eq:pbr.1}
  \geq \E \Big[ p(b) + \alpha \sum_i u_{i}(b_{-i},b_{i}') \Big]
\end{align}
for any profile of deviation strategies, $\{b_{i}'\}$.
In particular, consider the mixed strategy $b_{i}'$ that samples
a random variable $r \in [e^{-\alpha},1]$ with density 
$f(r) = \frac{1}{\alpha r}$ (independent of $\theta_i$)
and claims the item at price $(1-r) \kappa_{i}$ if it is available.
This is achieved by inspecting when the clock value is
at $(1-r) \sigma_{i}$ and claiming at $(1-r) v_{i}$, or immediately if 
$v_i \geq \sigma_{i}$. We have
\begin{align} \label{eq:pbr.2}
  u_i(b'_i) &= \allocsubi(b_{-i},b'_i) (\kappa_i - p(b)) 
                 = \allocsubi(b_{-i},b'_i) (r \kappa_i)
              \geq 0,
\end{align}
where the first equation holds 
because $b'_i$ always exercises in the money,
and the second holds because strategy $b'_i$
is designed to always pay price $(1-r)  \kappa_i$ 
upon winning the item.
Combining~\eqref{eq:pbr.1} and~\eqref{eq:pbr.2},
and letting $i^*$ denote the identity of the bidder
with the highest covered call value, we have
\begin{align} \label{eq:pbr.3}
   \text{welfare} &\geq \E \Big[ p(b) + \alpha u_{i^*}(b_{-i^*},b'_{i^*}) \Big]
         = \E \Big[ p(b) + \alpha \allocsub{i^*}(b_{-i^*},b'_{i^*}) (r \kappa_{i^*}) \Big].
\end{align}
We claim that the following inequality holds for all type profiles $\theta$
and strategy profiles $b$:
\begin{align} \label{eq:pbr.4}
    \E \Big[ p(b) + \alpha \allocsub{i^*}(b_{-i^*},b'_{i^*}) (r \kappa_{i^*}) \Big| \theta, b \Big]
    \geq (1 - e^{-\alpha}) \kappa_{i^*}.
\end{align}
Since we are conditioning on both $\theta$ and $b$, the only remaining 
randomness is due to the random sampling of $r$ by strategy $b'_{i^*}$,
which in turn may influence the allocation of the item and the price paid. 
Let $p$ denote the price at which a bidder other than $i^*$ will claim
the item when $i^*$ is excluded from the auction and the other bidders'
strategy profile is $b_{-i^*}$. If  
$p > (1 - e^{-\alpha}) \kappa_{i^*}$ then 
$\allocsub{i^*}(b_{-i^*},b'_{i^*}) = 0$ 
and $p(b) = p$ so the validity of Inequality~\eqref{eq:pbr.4} is obvious. 
Otherwise, setting $\kappa = \kappa_{i^*}$ for notational convenience,
observe that bidder $i^*$ wins the item if and only if $(1-r) \kappa > p$,
i.e. $r < 1 - p/\kappa$.
Thus, we have
\begin{align*} 
  \E \Big[ \alpha \allocsub{i^*}(b_{-i^*},b'_{i^*}) (r \kappa_{i^*}) \Big| \theta, b \Big]
  &=
  \alpha \int_{e^{-\alpha}}^{1-p/\kappa} r \kappa \, f(r) \, dr 
  =
  \int_{e^{-\alpha}}^{1-p/\kappa} \kappa \, dr \\
  &= 
  \kappa \left[ 1 - p/\kappa - e^{-\alpha} \right] =
  \left(1 - e^{-\alpha}\right) \kappa - p.
\end{align*}
Observing that $p(b) \geq p$ pointwise, we may add $\E[p(b)|\theta,b]$ to the left
side and $p$ to the right side, obtaining Inequality~\eqref{eq:pbr.4}.

Finally, combining Inequalities~\eqref{eq:pbr.3} and~\eqref{eq:pbr.4}
we find that the welfare attained by any profile of $\alpha$-best responses
satisfies
\begin{align} \label{eq:pbr.6}
  \text{welfare} &\geq
  \left(1 - e^{-\alpha}\right) \E[\kappa_{i^*}].
\end{align}
In light of Lemma 1 which says
that $\E[\kappa_{i^*}]$ is an upper bound on the expected
welfare attained by any policy, this concludes the proof of the
proposition.

The proof of the approximation factor $\frac12 \left(1 - e^{-2\alpha}\right)$ 
in the multi-item case is a modification of the proof of Theorem 3.
A key difference is that we are only assuming that $b$ is a profile
of $\alpha$-best responses, so Inequality~(7) must be 
relaxed to
\begin{equation} \label{eq:pbr.7}
 \E\left[ \text{welfare}\right] = \E \left[ \sum_i u_i(b) + p_i(b) \right]  
          \geq \E \left[ \alpha \sum_i u_i(b_{-i},b_i') + p_i(b) \right] .
\end{equation}
To analyze the quantity on the right side, one considers for each player $i$ 
the deviation strategy $b'_i$ that:
\begin{enumerate}
\item Samples a random number $r \in [e^{-2\alpha},1]$ with density $f(r)= \frac{1}{2 \alpha r}$. 
For an object with strike price $\sigma$ and value $v$,
define the {\em shaded strike price} and {\em shaded value} to be $(1-r)\sigma$ and $(1-r)v$, respectively.
\item Takes no action at any time when the clock value $t$ is strictly greater than the highest remaining shaded strike price of an uninspected object and the highest remaining shaded value of an inspected object.
\item When $t$ becomes weakly less than either of these, the deviation claims any object that has been inspected and has shaded value of $t$ and if no such object exists it inspects, in random order, any object with shaded strike price $t$, claiming that object if its shaded value is weakly above $t$ and otherwise continuing inspection until all objects of shaded strike price $t$ have been inspected. 
\end{enumerate}
The key inequality that underpins the analysis of welfare
in an equilibrium strategy profile $b$ is 
\begin{equation} \label{eq:pbr.8}
  \alpha \E \left[ \kappa_i \left(b_{-i},b_i' \right) - p_{i} \left(b_{-i},b_i' \right) 
              \mid \{(\theta_j,v_j)\}_{j=1}^n \right] + \tfrac12 p_i(b) 
   + \tfrac12 p^j(b) \geq \tfrac{1}{2} \left(1 - e^{-2 \alpha} \right) \kappa_{ij} . 
\end{equation}
The proof of this inequality is exactly parallel to the proof of 
Lemma~4, with appropriate modifications
due to the fact that the density of $r$ is now $f(r) = \frac{1}{2 \alpha r}$
rather than $\frac{1}{2r}$. Compared with the proof of 
Theorem~3, we have an additional factor of $\alpha$ on the 
right side of~\eqref{eq:pbr.7} which matches the additional factor of
$\alpha$ on the left side of~\eqref{eq:pbr.8}. This allows us to combine
the two inequalities to derive the claimed approximation guarantee, in a
manner that exactly parallels the way that Inequality~(7)
was combined with Lemma~4 to derive the
approximation guarantee in Theorem~3. 
\end{proof}

\subsection{Common values} \label{app:commonvalues}

\begin{proof}[Proof of \Cref{commonprop}]
Let $i$ be the bidder who turns out to have the lowest cost. 
Note that the first-best welfare is achieved when bidder $i$ 
inspects the common value, $v$, and acquires the item if and
only if $v \geq 0$. This implies that the first-best welfare is
equal to $\E[v^+] - c_i$. Denote this quantity by $w$.

Now let $b$ be any equilibrium profile of strategies and let
$b'_i$ be the mixed strategy for $i$ 
that samples a random $r \in [\frac1e,1]$ with density $f(r)=1/r$ and 
inspects the item's value, $v$, when the descending clock is at $(1-r) w$,
unless the item has already been claimed.
Upon inspecting the item and finding that $v \geq 0$, 
strategy $b'_i$ immediately claims the item at the current
price of $(1-r) w$. If $v<0$ then $b'_i$ never claims the item.

All bidders have non-negative 
utility in equilibrium, and bidder $i$'s utility in equilibrium is at least as great as her
utility when playing $b'_{i}$, so
\begin{align} \label{eq:common.1}
  \E \left[ \text{welfare} \right] & =
  \E \left[ p(b) + \sum_j u_j(b) \right]  \geq
  \E \left[ p(b) + u_{i}(b_{-i},b'_i) \right].
\end{align}
Condition on the profile of bidders' costs, $c$, so that the 
only remaining random variables are the item's value, $v$,
and the random choice of $r$ in mixed strategy $b'_i$. We claim
that 
\begin{align} \label{eq:common.2}
  \E \left[ p(b) + u_i(b_{-i},b'_i) \mid c \right] \geq \left( 1 - \tfrac1e \right) w
\end{align}
Let $p$ denote  the price at which a bidder other than $i$ will claim
the item when $i$ is excluded from the auction and the other bidders'
strategy profile is $b_{-i}$. If $p > \left(1 - \tfrac1e \right) w$ then
bidder $i$ does not inspect the item's value, and another bidder wins
the item at price $p$. We thus have
$p(b) + u_i(b_{-i},b'_i) = p > \left(1 - \tfrac1e \right) w$
which establishes~\eqref{eq:common.2}. Otherwise, we have
$p \leq \left(1 - \tfrac1e \right) w$ and bidder $i$ inspects the
item if and only if $r < 1 - p/w$. Note that conditional on the value
of $r$ and on the event that bidder $i$ inspects the item, her 
utility is 
\begin{align*}
  \E[v^+] - c_i - (1-r) w \cdot \Pr(v \geq 0) &=
  w - (1-r) w \cdot \Pr(v \geq 0) \geq r w.
\end{align*} 
Integrating over the random choice of $r$,
we obtain 
\begin{align} \label{eq:common.3}
  \E \left[ u_i(b_{-i},b'_i) \mid c \right]  
  &=
  \int_{1/e}^{1 - p/w} r w \, f(r) \, dr 
  =
  \left( 1 - \tfrac{p}{w} - \tfrac{1}{e} \right) w
  = 
  \left( 1 - \tfrac{1}{e} \right) w - p.
\end{align}
Since $p(b) \geq p$ we may add $\E[p(b)|c]$ to the
left side of~\eqref{eq:common.3} and $p$ to the right side,
obtaining~\eqref{eq:common.2}

Finally, combining Inequalities~\eqref{eq:common.1} and~\eqref{eq:common.2}
we obtain $\E[\text{welfare}] \geq \left(1 - \frac1e\right) w$, as claimed.
\end{proof}

\subsection{Optional inspection}\label{app:optionalinspection}

\begin{proof}[Proof of \Cref{prop:optional}]
For bidder $i$, let $\bar{v}_i =
\E[v_i \mid \theta_i]$ denote the expected value of
acquiring the item given $i$'s type, $\theta_i$. 
For any procedure, let us represent its net welfare as the sum of 
two terms: $w_i$, the {\em inspected welfare}, is the net 
contribution from bidders who inspect their value, while
$w_u$, the {\em uninspected welfare}, is the net contribution
from bidders who do not inspect their value. More precisely,
\begin{align*}
  w_i &= \sum_i \inspecti \cdot \left[ \allocsubi v_i - c_i \right] \\
  w_u &= \sum_i (1 - \inspecti) \cdot \bar{v}_i.
\end{align*}
Conditional on the profile of realized types, $\{\theta_i\}$, the
procedure that maximizes $\E[w_i]$ is Weitzman's optimal search
procedure, which achieves $\E[w_i] = \E[\max_i \kappa_i]$.
The procedure that maximizes $\E[w_u]$ simply allocates the item
to the bidder $i$ with maximum $\bar{v}_i$. Therefore, letting
$w_i^*$ and $w_u^*$ denote the values of $w_i$ and $w_u$
for the first-best procedure, we have
\begin{align} \label{eq:optional.1}
  \E[\text{first-best}] &= \E[w_i^* + w_u^*] \leq
  \E[\max_i \kappa_i] + \E[\max_i \bar{v}_i].
\end{align}
We claim that any equilibrium of the Dutch auction 
achieves
\begin{align}
\label{eq:optional.2}
  \E[\text{welfare}] &\geq \left(1 - \tfrac1e\right) \E[\max_i \kappa_i] \\
\label{eq:optional.3}
  \E[\text{welfare}] &\geq \left(1 - \tfrac1e\right) \E[\max_i \bar{v}_i] 
\end{align}
The proposition will follow by summing these two inequalities and
combining with~\eqref{eq:optional.1}.

The proofs of Inequalities~\eqref{eq:optional.2}
and~\eqref{eq:optional.3} use a ``smoothness'' argument
very similar to the arguments used in proving 
Theorem~3 and  Propositions~5 and 6. 
Let $b$ be any equilibrium profile
of strategies, and define deviation 
strategies $b'_i$ and $b''_i$ for bidder $i$ to be the following
mixed strategies: sample a random $r \in [\frac1e,1]$ with
density $f(r)=1/r$. In deviation $b'_i$, bidder $i$ inspects the item when
the descending clock is at $(1-r)\sigma_i$ and claims it either 
immediately (if $v_i \geq \sigma_i$) or when the descending 
clock reaches $(1-r)v_i$. In deviation $b''_i$, bidder $i$ acquires
the item without inspection when the descending clock reaches
$(1-r) \bar{v}_i$. Note that,
conditional on any type profile $\{\theta_j\}$,

For any profile of types $(\theta_1,\ldots,\theta_n)$
let $i'$ denote the bidder with maximum covered call value, and
let $i''$ denote the bidder with maximum conditional expected value.
The equilibrium welfare satisfies
\begin{align}
\label{eq:optional.4}
  \E[\text{welfare}] &=
  \E \left[ p(b) + \sum_i u_i(b) \right] \geq
  \E \left[ p(b) + \sum_i u_i(b_{-i},b'_i) \right] \geq
  \E \left[ p(b) + u_{i'}(b_{-i'},b'_{i'}) \right] \\
\label{eq:optional.5}
  \E[\text{welfare}] &=
  \E \left[ p(b) + \sum_i u_i(b) \right] \geq
  \E \left[ p(b) + \sum_i u_i(b_{-i},b''_i) \right] \geq
  \E \left[ p(b) + u_{i'}(b_{-i''},b''_{i''}) \right]
\end{align}
where the second inequality on each line follows because
the expected utility of bidder $i$ using either of the 
deviations $b'_i, b''_i$ is non-negative. 
As in the proof of Proposition~6, conditional
on any profile of types $\{\theta_i\}$, each 
of the inequalities 
\begin{align}
\label{eq:optional.6}
  \E \left[ p(b) + u_{i'}(b_{-i'},b'_{i'}) \mid \{\theta_i\} \right] &\geq \left( 1 - \tfrac1e \right) \kappa_i \\
\label{eq:optional.7}
  \E \left[ p(b) + u_{i''}(b_{-i},b''_{i''}) \mid \{\theta_i\} \right] &\geq \left( 1 - \tfrac1e \right) \bar{v}_i
\end{align}
holds pointwise. The justification of both inequalities recapitulates
the justification for Inequality~\eqref{eq:common.2} in the proof
of Proposition~6, except for notational changes to account
for the minor differences in the deviations being considered.
Combining~\eqref{eq:optional.4} and~\eqref{eq:optional.5}
with~\eqref{eq:optional.6} and~\eqref{eq:optional.7} we 
obtain \eqref{eq:optional.2} and~\eqref{eq:optional.3}, from
which the proposition follows.
\end{proof}

\subsection{Posted pricing}\label{postedapp}

\begin{proof}[Proof of Corollary \ref{prophetcorr}]

Each bidder $i$'s best response, when offered $\pi$, is to inspect iff $\pi \leq \sigma_i$ and subsequently claim the item iff $\pi \leq v_i$; in other words, $i$ accepts the posted-price offer if and only if $\pi \leq \kappa_i$.
These strategies exercise in the money, so the welfare of the sequential posted-price procedure is equal to the expected covered call value of the winner, while the optimal welfare is upper-bounded by the expected maximum covered call value.
Letting $X_i$ be the covered call value of bidder $i$, and noting that these random variables $\{X_i\}$ are independent, Theorem \ref{thm:prophet} implies the result.

\end{proof}

\section{Revenue Guarantees} \label{app:revenue}
Here we show that, under some common assumptions, the Dutch auction with per-bidder reserves has approximately optimal revenue in the single-item setting with inspection costs.
We will rely on material from main text Section 4.
We will first show that the Dutch auction with per-bidder reserves is still invariant to search costs, \emph{i.e.} ``functionally equivalent'' to an auction without costs of inspection.
We may then apply known results on first-price auctions with reserves, under suitable assumptions.

\begin{definition}
The Dutch auction with per-bidder reserve prices $\{r_i\}_{i=1}^N$ is a sales procedure in which a clock begins at $\infty$ and continuously decreases.
Any bidder $i$ may stop the clock at any time weakly exceeding $r_i$, ending the auction, paying the current clock value, and obtaining the item.
\end{definition}

Recall Definition 3: for a given type $\theta_i \in \Theta_i$, its \emph{covered counterpart} $\theta_i^{\circ}$ has zero inspection cost and value distributed according to the distribution of $\kappa_i$, the covered-call value for $\theta_i$.
Given a prior $\mathcal{P}$ on bidders in our model, $\mathcal{P}^{\circ}$ is the corresponding prior on covered counterparts.

Recall Definition 5: two auction outcomes are termed \emph{functionally equivalent} if the same bidder gets the item and at the same price, and the set of inspection costs paid is the same.
This is extended to functional equivalence of strategies and of equilibria.

\begin{theorem} \label{theorem:dutch-reserves-bijection}
There is a mapping from equilibria of the Dutch auction with per-bidder reserves in our model with types jointly distributed according to $\mathcal{P}$ to equilibria of the Dutch auction with per-bidder reserves with the covered counterpart distribution $\mathcal{P}^{\circ}$ where bidders know their values without inspection.
This mapping preserves bidder expected utility and auctioneer expected revenue and induces a bijection on functional equivalence classes of equilibria.
\end{theorem}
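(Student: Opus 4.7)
The plan is to adapt the proof of \Cref{maintheorem} (appearing in \Cref{mainapp} as Claims \ref{clm:1}--\ref{clm:4}) by constructing reserve-aware versions of the simulation maps $\lambda$ and $\mu$, and then verifying that the same four claims carry through. The overall strategy is unchanged: given a fixed profile $b_{-k}$ of opponents' strategies, I will exhibit a bijection between functional equivalence classes of best responses for a bidder with type space $\Theta_k$ and for one with covered-counterpart type space $\ccpt{\Theta_k}$, then induct over $k = 1,\ldots,n$, swapping one bidder at a time while holding the types and strategies of the others fixed.

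For the simulation maps, I first restrict attention to \emph{normalized} strategies of covered-counterpart bidders: inspect (costlessly) at the earliest moment, and claim at a price $b(v,\ccpt{\theta})$ that is monotone non-decreasing in $v$. Because the reserve forbids any claim strictly below $r_k$, I may assume, without loss of functional equivalence, that $b(v,\ccpt{\theta}_k) \geq r_k$ whenever the bidder claims, and that the bidder otherwise never claims. Define $\lambda(b)$ for a bidder of type $\theta_k$ as follows. If $\sigma_k < r_k$, the strategy never inspects and never claims---matching the fact that the covered counterpart, whose value $\kappa_k \leq \sigma_k$ lies strictly below $r_k$, also never wins. Otherwise the bidder inspects when the clock reaches $b(\sigma_k,\ccpt{\theta}_k)$, claims immediately if $v_k \geq \sigma_k$, and otherwise waits to claim at $b(v_k,\ccpt{\theta}_k)$ if that value is at least $r_k$ and never claims if it is below $r_k$. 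The reverse map $\mu(b)$ for a covered-counterpart bidder inspects $\kappa_k$ at the earliest moment (free), samples a simulated full value $v_k$ from the same conditional distribution used in \Cref{mainapp}, and then mimics the action of a bidder of type $\theta_k$ with value $v_k$ playing $b$; if $b$ never inspects (e.g.\ when $\sigma_k < r_k$), then $\mu(b)$ simply never claims.

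With these definitions the four supporting claims go through with only cosmetic changes. The analogue of \Cref{clm:1} is unaffected, since the monotonization argument applies equally to bids $\geq r_k$ and zero-probability bids below $r_k$ can be moved to zero. For \Cref{clm:2}, the arguments that $\lambda(\mu(b_k))$ and $\mu(\lambda(b_{k+n}))$ are functionally equivalent to $b_k$ and $b_{k+n}$ respectively rely only on the monotonicity of bids in values and on almost sure exercise in the money; the reserve only rules out certain prices, and does so symmetrically in the two auctions. \Cref{clm:3}'s utility identity and inequality, obtained by coupling $v_{k+n} = \kappa_k$ and using $c_{k+n} = 0$, continue to hold because the reserve constrains $t_k = t_{k+n}$ and $\allocsub{k} = \allocsub{k+n}$ in the coupled sample space, so \Cref{lemma:upper-bound-policy} applies pointwise. \Cref{clm:4} then produces mutually inverse bijections on functional equivalence classes of best responses, and composing the swaps for $k = 1,\ldots,n$ yields the claimed equilibrium bijection preserving bidder utility and seller revenue.

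The main obstacle will be the boundary case $\sigma_k < r_k$, where the original proof's construction does not directly apply because there is no clock value $b(\sigma_k,\ccpt{\theta}_k) \geq r_k$ at which to inspect. I expect to handle it by showing that in this regime every best response of the original bidder is functionally equivalent to non-participation: by definition of $\sigma_k$, paying $c_k$ to inspect and then being obliged to pay at least $r_k > \sigma_k$ conditional on winning gives strictly negative expected surplus, so no best response both inspects with positive probability and claims with positive probability. Dually, the covered-counterpart bidder has $\kappa_k \leq \sigma_k < r_k$ almost surely and thus never wins under any strategy. Both sides therefore achieve zero utility and zero revenue contribution, which aligns the maps $\lambda,\mu$ on this corner case and completes the bijection.
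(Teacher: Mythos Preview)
Your approach is correct but takes a different route from the paper. You directly adapt the simulation maps $\lambda$ and $\mu$ from \Cref{mainapp} to be ``reserve-aware,'' then re-verify Claims~\ref{clm:1}--\ref{clm:4} with the necessary boundary handling at $\sigma_k < r_k$. The paper instead \emph{reduces} to the no-reserve case: for each bidder $i$, it replaces the reserve $r_i$ by a fictitious ``reserve bidder'' who always claims at price $r_i$, observes that this leaves bidder $i$'s best-response correspondence unchanged (any attempt to inspect or claim below $r_i$ is dominated by doing nothing, since the dummy has already taken the item), and then invokes \Cref{clm:4} directly on the resulting reserve-free Dutch auction. Composing the bijections $A \leftrightarrow A' \leftrightarrow B' \leftrightarrow B$ (reserves-with-inspection, no-reserves-with-inspection plus dummy, no-reserves covered counterpart plus dummy, reserves covered counterpart) gives the theorem without re-opening any of the claims.

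What each buys: the paper's reduction is short and reuses \Cref{maintheorem} as a black box, so nothing needs to be re-proved and the boundary case $\sigma_k < r_k$ is handled automatically (the dummy bidder simply pre-empts any inspection below $r_i$). Your approach is more self-contained and makes the mechanics at the reserve boundary explicit---your argument that $\E_{\theta_k}[(v_k - r_k)^+] < \E_{\theta_k}[(v_k - \sigma_k)^+] = c_k$ forces non-participation when $\sigma_k < r_k$ is exactly right---but it requires re-checking all four claims and tracking the corner case separately through each of them. Either route is fine; the paper's is the more economical one to write up.
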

\begin{proof}
Fix a bidder $i$ and any strategies of the other bidders.
We will construct a mapping between (A) best-responses of $i$ in the Dutch auction with per-bidder reserves and inspection and (B) best-responses of $i$'s covered counterpart in the Dutch auction with per-bidder reserves and no inspection.
This will be a bijection of functional equivalence classes of best responses, preserving bidder utilities and auctioneer revenue, proving the theorem.

First, we will construct a bijection between setting A above and (A') the Dutch auction without reserves and with inspection, where bidders other than $i$ continue to play their strategies from the auction with reserves and we have added a ``reserve bidder'' who always bids $r_i$.
We claim this is a bijection of best-responses of bidder $i$ between $A$ and $A'$, for the following reason. Every strategy in $A'$ that attempts to inspect or claim below $r_i$ is dominated by doing nothing, since the reserve bidder always bids $r_i$.
All other strategies in A' are also available in A, and all strategies in A are available in A'.
Furthermore, every strategy has the same utility for $i$ in both cases, so the mapping of strategies in A to the same strategy in A' induces a bijection between functionally equivalent best-responses; it also preserves all bidder utilities and auctioneer revenue.

Now, consider setting B', obtained from B in the same way that A' is obtained from A: by removing the reserve prices and adding a ``reserve bidder'' who always bids $r_i$.
Because these both involve a Dutch auction without reserves, and B' is the inspection-less covered counterpart of A', Claim 4 asserts a bijection between functional equivalence classes of best-responses of bidder $i$ in A' and $i$'s covered counterpart in B'; this bijection preserves all bidders' utilities and auctioneer revenue.

Finally, we construct a bijection between $i$'s covered counterpart's best responses in B' and in B.
Again, any strategy of $i$ in B' that plans to inspect or claim at a time below $r_i$ is dominated, as the reserve bidder always claims the item at this time; all other strategies are equivalent to the same strategy played in B, and preserve all bidder utilities and auctioneer revenue.

By composing these bijections, we obtain the desired bijection between A and B.
\end{proof}

\subsection{Revenue guarantees for Dutch with reserves}
We now develop some revenue guarantees for the Dutch auction with per-bidder reserves in the standard independent private values setting (\emph{i.e.} each bidder knows her value without inspecting).
These will show that the expected revenue obtained by the auction is at least some constant fraction of the total welfare generated in the auction, under some assumptions on value distributions.
This implies that expected revenue is at least the same fraction of the optimal obtainable revenue.\footnote{Ideally, one could obtain a better bound by comparing to the revenue of the optimal auction.
Unfortunately, with inspection costs, we do not have a better upper-bound on the optimal revenue than the total welfare.}

The bounds we prove for this independent private values setting will then immediately transfer to the case with inspection costs via Theorem \ref{theorem:dutch-reserves-bijection}.

\paragraph{Preliminaries.}
The \emph{virtual value function} associated with a distribution $F$ having density $f$ is
 \[ \phi_F(x) = v - \frac{1-F(x)}{f(x)} . \]
For bidder $i$, let $F_i$ denote her distribution of covered call values and $\phi_i$ the associated virtual value function.
(Recall that the covered call value is $\kappa_i = \min\{v_i,\sigma_i\}$, the minimum of value and strike price.)

A common assumption in proofs of approximate revenue maximization is that each bidder's value distribution be \emph{regular}, meaning its corresponding virtual value function has nonnegative derivative.
Here we assume a somewhat stronger, parameterized condition on the distribution of covered calls, \emph{$\rho$-concavity.}
\begin{definition}
 $F$ is \emph{$\rho$-concave} for $\rho \geq -1$ if $\frac{d\phi}{dx} \geq 1 + \rho$.
\end{definition}
Regularity is equivalent to $-1$-concavity, while the common \emph{monotone hazard rate} assumption is $0$-concavity.
We will assume that there is a $\rho > -1$ such that each bidder's covered call distribution $F_i$ is $\rho$-concave, obtaining bounds parameterized by $\rho$.

\paragraph{Results.}
The following lemma is not an original idea here but represents a common use case of $\rho$-concave distributions, particularly in \citet{anderson2003efficiency}.
\begin{lemma} \label{lemma:rho-concave-approx}
Let value distributions $F_1,\dots,F_n$ all be $\rho$-concave for $\rho > -1$.
Let $R$ be the optimal expected revenue of any auction with private values drawn indpendently from these distributions, and $W$ the optimal expected welfare.
Then
 \[ R \geq \left(\frac{2}{e+4}\right) \frac{\rho+1}{\rho+2} W . \]
\end{lemma}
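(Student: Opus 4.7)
The plan is to combine Myerson's optimal-auction characterization with the structural consequences of $\rho$-concavity and then invoke a prophet-inequality-style comparison against welfare.

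\textbf{Step 1 (reduce revenue to virtual welfare).} Since the $v_i$ are mutually independent, Myerson's theorem gives the exact formula
$$R \;=\; \E\!\left[\max_i \phi_i(v_i)^+\right].$$
Let $r_i$ denote bidder $i$'s monopoly reserve, i.e.\ the unique solution to $\phi_i(r_i)=0$; existence and uniqueness are immediate because $\phi_i'\geq 1+\rho>0$ makes $\phi_i$ strictly increasing with a single sign change.

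\textbf{Step 2 (pass from virtual values to values above reserve).} Integrating the derivative inequality $\phi_i'(v)\geq 1+\rho$ from $r_i$ upward yields the pointwise bound
$$\phi_i(v)\;\geq\;(1+\rho)(v-r_i)\quad\text{for all }v\geq r_i,$$
and substituting into the Myerson formula gives
$$R \;\geq\; (1+\rho)\,\E\!\left[\max_i (v_i-r_i)^+\right].$$
The task then reduces to lower-bounding the reserve-shifted welfare in terms of the unrestricted welfare $W=\E[\max_i v_i]$.

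\textbf{Step 3 (prophet-style comparison to welfare).} I would compare $\E[\max_i(v_i-r_i)^+]$ to $W$ by analyzing the sequential posted-price mechanism that approaches bidders in random order and offers each bidder $i$ her personalized price $r_i$. A prophet-secretary style guarantee along the lines of \citet{esfandiari2015prophet} produces a $(1-1/e)$-factor loss against the prophet value, while a second, purely distributional, calculation uses $\rho$-concavity to show that for each bidder $\E[(v_i-r_i)^+]$ is at least a $1/(\rho+2)$ fraction of $\E[v_i\wedge\text{(tail of }v_i)]$ above the appropriate cutoff. Chaining these two inequalities, with a careful choice of the cutoff in the $\rho$-concavity step so as to match the prophet threshold, produces an overall factor of $\frac{2}{e+4}\cdot \frac{1}{\rho+2}$ between $\E[\max_i(v_i-r_i)^+]$ and $W$. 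Multiplying by the $(1+\rho)$ from Step~2 gives the claimed bound $R\geq \frac{2}{e+4}\cdot\frac{\rho+1}{\rho+2}\,W$.

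\textbf{Anticipated main obstacle.} The algebra of Steps 1--2 is routine; the delicate part is pinning down the exact constant $\frac{2}{e+4}$ (as opposed to some looser universal constant like $\tfrac14$ that a crude prophet argument would give). Getting this sharper constant requires applying the prophet-secretary bound and the $\rho$-concavity tail inequality simultaneously rather than sequentially, balancing the cutoff at which each bound is applied so that neither is used in its worst-case regime.
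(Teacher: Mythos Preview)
Your Steps~1--2 are correct and do extract the factor $(1+\rho)$ cleanly, but Step~3 is a genuine gap, and the framing in terms of prophet inequalities is off target. A prophet(-secretary) inequality compares an \emph{online stopping rule} to the offline maximum of the same random variables; here you already have an offline maximum, $\E[\max_i (v_i-r_i)^+]$, and need to compare it to the offline maximum of \emph{different} variables, $\E[\max_i v_i]$. That comparison is not a prophet statement at all --- it is a deadweight-loss statement about how much welfare is destroyed by imposing the monopoly reserves $r_i$. Your description (``a second, purely distributional, calculation uses $\rho$-concavity \ldots chaining with a careful choice of cutoff'') never names what that calculation is, and there is no evident way to make a $(1-1/e)$ prophet-secretary factor combine with a single-bidder tail bound to produce exactly $\tfrac{2}{e+4}$.

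The paper does something structurally different and more direct. It fixes bidder~$i$, conditions on all other bidders' realized values $\kappa_{-i}$, and observes that in Myerson's auction bidder~$i$ faces an effective reserve $s = \phi_i^{-1}\bigl(\max_{j\neq i}\phi_j(\kappa_j)^+\bigr)$. This reduces the multi-bidder problem to a \emph{single-bidder} revenue-versus-welfare comparison at reserve $s$, where two inequalities of \citet{anderson2003efficiency} apply directly: (i) consumer surplus is at most $1/(1+\rho)$ times revenue, giving $R_i \geq \tfrac{\rho+1}{\rho+2} W_i$; and (ii) the deadweight loss $W_i^* - W_i$ is at most $\bigl((1+\rho)^{1/\rho} + \tfrac{\rho+2}{\rho+1}\bigr) R_i$. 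Combining these and using $(1+\rho)^{1/\rho} \leq \tfrac{e}{2}\tfrac{\rho+2}{\rho+1}$ yields $W_i^* \leq \tfrac{e+4}{2}\tfrac{\rho+2}{\rho+1} R_i$; taking expectations over $\kappa_{-i}$ gives the lemma. The constant $\tfrac{2}{e+4}$ thus comes from the Anderson--Renault deadweight-loss bound, not from any prophet argument --- so even if your Step~3 could be made to work with \emph{some} constant, recovering this particular one along your route looks implausible.
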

\begin{proof}
Note that the optimal achievable revenue, from Myerson's auction, is the expected maximum virtual value, \emph{i.e.}
 \[ R = \E \max_i \phi_i(\kappa_i)^+ . \]
Meanwhile, welfare is the expected maximum value:
 \[ \E \max_i \kappa_i . \]
Pick a bidder $i$ and fix any realizations of $\kappa_{-i}$.
Let $s = \phi_i^{-1}\left(\max_{j\neq i} \phi_j(\kappa_j)^+ \right)$.
Intuitively, $s$ is the ``reserve price'' faced by bidder $s$ induced by the opponents' bids, in Myerson's auction.
The revenue of the optimal auction is
\begin{align}
 \text{revenue} &= \E_{\kappa_{-i}} R_i(\kappa_{-i})  \label{eqn:rho-concave-rev-expectation}  \\
 \text{where }~R_i(\kappa_{-i}) &= \int_{k=s}^{\infty} f_i(k) \phi_i(k) dk.  \nonumber
\end{align}
We now apply two bounds of \citet{anderson2003efficiency}.
First, in this single-bidder environment, consumer surplus is upper-bounded by revenue divided by $\rho + 1$.
In our notation, let the welfare generated by $i$ be $W_i(\kappa_{-i}) = \int_{k=s}^{\infty} f_i(k) k dk$.
Then
\begin{align*}
 R_i(\kappa_{-i}) &\geq \left(\rho + 1\right) \left(W_i(\kappa_{-i}) - R_i(\kappa_{-i})\right)  & \text{\citep{anderson2003efficiency}} \\
 \implies R_i(\kappa_{-i}) &\geq \frac{\rho+1}{\rho+2} W_i(\kappa_{-i}) .
\end{align*}
Second, the difference in welfare generated by $i$ between the welfare-optimal mechanism and the revenue-optimal one (the ``dead weight loss'') is bounded.
Letting $W_i^*(\kappa_{-i}) = \int_{k=\max_{j\neq i}\kappa_j}^{\infty} f_i(k) k dk$ be the first term:
\begin{align}
  W_i^*(\kappa_{-i}) - W_i(\kappa_{-i})
  &\leq \left((1+\rho)^{1/\rho} + \frac{\rho+2}{\rho+1}\right) R_i(\kappa_{-i})   & \text{\citep{anderson2003efficiency}}  \nonumber \\
 \implies W_i^*
  &\leq \left((1+\rho)^{1/\rho} + 2\frac{\rho+2}{\rho+1}\right) R_i(\kappa_{-i})  \nonumber \\
  &\leq \frac{e+4}{2} \frac{\rho+2}{\rho+1} R_i(\kappa_{-i}) .  \label{eqn:rho-concave-upper-bound}
\end{align}
Here, if $\rho = 0$, then $(1+\rho)^{1/\rho} := e$; and we use that $(1+\rho)^{1/\rho} \leq \frac{e}{2} \frac{\rho+2}{\rho+1}$.
Now the result follows by plugging Inequality \ref{eqn:rho-concave-upper-bound} to (\ref{eqn:rho-concave-rev-expectation}), obtaining
\begin{align*}
 \text{revenue}
  &\geq \frac{2}{e+4}\frac{\rho+1}{\rho+2} \E_{\kappa_{-i}} W_i^*(\kappa_{-i})  \\
  &= \frac{2}{e+4}\frac{\rho+1}{\rho+2} \text{welfare} .
\end{align*}
\end{proof}

\begin{lemma}[\cite{POArevenue}] \label{lemma:hartline-revenue}
When bidders' value distributions are regular (including any $\rho$-concave distribution for $\rho \geq -1$), the first-price auction with per-bidder reserves has expected revenue at least a fraction $\frac{e-1}{2e}$ of the optimal (Myerson's) expected revenue.

The reserves are set to each bidder's inverse virtual value of zero.
\end{lemma}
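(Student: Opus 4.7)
The plan is to establish this revenue price-of-anarchy bound via the smooth-mechanism framework, following \citet{POArevenue}, adapted to the setting with per-bidder reserves. Throughout, let $b$ denote any Bayes-Nash equilibrium of the first-price auction with reserves $r_i = \phi_i^{-1}(0)$, and write $R^{*} = \E[\max_i \phi_i(v_i)^{+}]$ for Myerson's optimal expected revenue, which under regularity is attained by the second-price auction with these very reserves.

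The first step is to introduce, for each bidder $i$, a random-shaving deviation $b'_i$ in the flavor of those used in the proofs of \Cref{prop:bounded-rationality} and \Cref{commonprop} above: if $v_i < r_i$ then $i$ declines to participate (her bid would be rejected anyway); otherwise she draws $r \in [1/e,1]$ with density $f(r) = 1/r$ and submits the reserve-respecting bid $\max(r_i,(1-r) v_i)$. Conditioning on the highest competing equilibrium bid $p_{-i}(b)$ and integrating over $r$, a direct calculation delivers a pointwise smoothness inequality
\begin{equation*}
  \E_r\bigl[u_i(b_{-i},b'_i)\bigm|v_i,b_{-i}\bigr] + p_{-i}(b)\;\geq\;(1-1/e)\,\phi_i(v_i)^{+}.
\end{equation*}
The critical feature is that the right-hand side is expressed using the \emph{virtual} value $\phi_i(v_i)^{+}$ rather than the raw value $v_i$: the reserve-floor truncation eats exactly the gap $v_i - \phi_i(v_i) = (1 - F_i(v_i))/f_i(v_i)$, and regularity is exactly what makes this absorption non-negative in expectation.

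The second step combines the smoothness inequality with the Bayes-Nash condition $\E u_i(b) \geq \E u_i(b_{-i},b'_i)$ and the welfare identity $\text{Welfare}(b) = \text{Rev}(b) + \sum_i u_i(b)$. After summing over $i$, controlling the total competing-price contribution via $\sum_i \E p_{-i}(b) \leq 2\,\E[\text{Rev}(b)]$ (each equilibrium sale price appears in this sum at most twice: once as the winner's own price and once inside the suprema faced by the loser), and rearranging, one obtains
\begin{equation*}
  2\,\E[\text{Rev}(b)]\;\geq\;(1-1/e)\,\E\!\left[\max_i \phi_i(v_i)^{+}\right]\;=\;(1-1/e)\,R^{*},
\end{equation*}
which rearranges to $\E[\text{Rev}(b)] \geq \tfrac{e-1}{2e}\,R^{*}$ as required.

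The main obstacle is the first step---producing a smoothness bound whose right-hand side is the \emph{virtual} value rather than the raw value. A naive value-based deviation would yield only $(1-1/e)\,v_i\,\mathbf{1}[v_i \geq r_i]$ on the right, which overstates each bidder's Myerson contribution by an unbounded ratio as $v_i$ approaches the reserve from above and so cannot by itself deliver a bound against $R^{*}$. The reserve-respecting modification of the deviation, coupled with the regularity identity $v - \phi_i(v) = (1-F_i(v))/f_i(v)$, is precisely what aligns the clipped utility loss with the value-to-virtual-value gap; everything else is routine POA accounting as carried out in \citet{POArevenue}.
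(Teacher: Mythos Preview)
The paper does not prove this lemma itself; it is quoted from \citet{POArevenue} and used as a black box in \Cref{theorem:first-price-revenue}. So the only question is whether your sketch stands on its own, and it does not.

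The pointwise smoothness inequality in your first step is false. Take $F_i$ uniform on $[0,1]$, so that $\phi_i(v)=2v-1$ and $r_i=\tfrac12$; set $v_i=0.6$ and $p_{-i}(b)=0$. For every $r\in[1/e,1]$ one has $(1-r)\cdot 0.6\le 0.6(1-1/e)\approx 0.379<0.5$, so your deviation always bids the reserve, always wins, and earns utility $v_i-r_i=0.1$; but the right-hand side is $(1-1/e)\cdot\phi_i(0.6)=(1-1/e)\cdot 0.2\approx 0.126$. The heuristic that flooring at $r_i$ ``eats exactly the gap $v_i-\phi_i(v_i)$'' cannot work pointwise in $v_i$: that gap is $(1-F_i(v_i))/f_i(v_i)$ and varies with $v_i$, whereas the floor subtracts a constant. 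The only link between payments and virtual values is Myerson's payment identity, which holds after integrating over $v_i\sim F_i$, not conditionally on $v_i$. Your second step is also broken for $n\ge 3$: for each of the $n-1$ losers $j$ the highest competing bid $p_{-j}(b)$ equals the winning bid itself, so $\sum_i p_{-i}(b)=(n-1)\,\text{Rev}(b)+b_{(2)}$, not $\le 2\,\text{Rev}(b)$; your parenthetical justification tacitly assumes two bidders.

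The argument in \citet{POArevenue} is structured quite differently. It works in expectation over $v_i$ from the outset, invoking Myerson's lemma to equate equilibrium revenue with equilibrium virtual surplus, and then bounds that surplus through their revenue-covering/value-covering decomposition; the factor $2$ comes from comparing each bidder's equilibrium threshold to the revenue she generates, not from a bidder-summation of competing prices.
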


\begin{theorem} \label{theorem:first-price-revenue}
In equilibrium of the first-price auction with per-bidder reserves of $\{\phi_i^{-1}(0)\}_{i=1}^N$ and all value distributoins $\rho$-concave for $\rho > -1$, expected revenue is at least a fraction $0.09 \frac{\rho+1}{\rho+2}$ of total welfare.
\end{theorem}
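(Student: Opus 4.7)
\medskip
\noindent\textbf{Proof proposal.} The plan is to obtain Theorem \ref{theorem:first-price-revenue} as an essentially immediate consequence of the two preceding lemmas, by chaining the revenue approximation of Lemma \ref{lemma:hartline-revenue} with the revenue-to-welfare comparison of Lemma \ref{lemma:rho-concave-approx}. First I would verify that the hypotheses line up: every $\rho$-concave distribution with $\rho > -1$ has $\phi_F'(x) \geq 1 + \rho > 0$, hence is regular, so Lemma \ref{lemma:hartline-revenue} applies to the profile $(F_1,\dots,F_N)$, giving that the first-price auction with per-bidder reserves $\phi_i^{-1}(0)$ earns expected revenue at least $\frac{e-1}{2e}\,R^\star$, where $R^\star$ is the Myerson optimal revenue. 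Meanwhile Lemma \ref{lemma:rho-concave-approx} bounds $R^\star \geq \frac{2}{e+4}\cdot\frac{\rho+1}{\rho+2}\,W^\star$ with $W^\star$ the welfare-optimal expected welfare.

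Concatenating these two bounds yields
\[
\text{Rev} \;\geq\; \frac{e-1}{2e}\cdot\frac{2}{e+4}\cdot\frac{\rho+1}{\rho+2}\,W^\star
\;=\; \frac{e-1}{e(e+4)}\cdot\frac{\rho+1}{\rho+2}\,W^\star.
\]
Numerically $\frac{e-1}{e(e+4)} \approx 0.0941 > 0.09$, so $\text{Rev}\geq 0.09\cdot\frac{\rho+1}{\rho+2}\,W^\star$. Finally, the welfare realized in any equilibrium of the first-price auction is at most $W^\star$, so the same bound holds with equilibrium welfare on the right-hand side, yielding the statement of the theorem.

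There is essentially no obstacle here beyond bookkeeping: the two ingredient lemmas do all the work, and the only arithmetic to verify is that the product of the two approximation constants exceeds the claimed $0.09$. One subtlety worth flagging is that Lemma \ref{lemma:hartline-revenue} is a statement about the equilibrium revenue of the first-price auction with reserves, while Lemma \ref{lemma:rho-concave-approx} compares the optimal mechanism's revenue to the optimal mechanism's welfare; the only non-cosmetic step in combining them is the observation that equilibrium welfare is pointwise dominated by first-best welfare, which is immediate since no participant can have negative surplus in equilibrium. Via Theorem \ref{theorem:dutch-reserves-bijection}, this bound then automatically transfers from the no-inspection setting back to the inspection-cost model with the Dutch auction with per-bidder reserves.
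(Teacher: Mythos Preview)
Your proposal is correct and follows essentially the same approach as the paper: chain Lemma \ref{lemma:hartline-revenue} with Lemma \ref{lemma:rho-concave-approx} and verify numerically that $\frac{e-1}{e(e+4)} \approx 0.0941 > 0.09$. Your additional remark that equilibrium welfare is dominated by first-best welfare is a harmless clarification; the paper's proof simply writes ``welfare'' without distinguishing, since Lemma \ref{lemma:rho-concave-approx} already takes $W$ to be optimal welfare.
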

\begin{proof}
We have
\begin{align*}
 \text{expected revenue}
   &\geq \frac{e-1}{2e} \text{optimal revenue}
     & \text{Lemma \ref{lemma:hartline-revenue}}  \\
   &\geq \frac{e-1}{2e} \left(\frac{2}{e+4}\right) \frac{\rho+1}{\rho+2} \text{welfare}
     & \text{Lemma \ref{lemma:rho-concave-approx}}  \\
   &= \frac{e-1}{e(e+4)} \frac{\rho+1}{\rho+2} \text{welfare}  \\
   &\geq 0.09 \frac{\rho+1}{\rho+2} \text{welfare} .
\end{align*}
\end{proof}

\begin{corollary} \label{corollary:dutch-revenue-approx}
In the setting with inspection costs, suppose that bidder covered-call value distributions are $\rho$-concave for $\rho > -1$ and have virtual value functions $\{phi_i\}_{i=1}^N$.
The expected revenue of the Dutch auction with per-bidder reserves of $\{\phi_i^{-1}(0)\}_{i=1}^N$ is at least $0.09 \frac{\rho+1}{\rho+2}$ fraction of the optimal expected revenue.
\end{corollary}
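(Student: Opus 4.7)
The plan is to reduce the claim, in three moves, to Theorem \ref{theorem:first-price-revenue} applied to the covered-counterpart environment. First, I would invoke Theorem \ref{theorem:dutch-reserves-bijection} to transport the Dutch-with-reserves auction in the inspection-cost model to the Dutch-with-reserves auction on the covered counterparts $\mathcal{P}^{\circ}$; since that bijection preserves functional equivalence classes of equilibria and, in particular, auctioneer revenue, the expected revenue of the original Dutch auction equals the expected revenue of the Dutch-with-reserves auction run on bidders whose values are simply $\kappa_i$ and who incur no inspection cost.

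Second, I would use the standard observation that a Dutch auction with per-bidder reserves is strategically (and revenue-)equivalent to a first-price sealed-bid auction with the same per-bidder reserves: bids in one setting correspond one-to-one with clock-claim times in the other, so any Bayes--Nash equilibrium in one format pulls back to one in the other with identical allocation rule and payment rule. Applied to the covered-counterpart environment, this reduces the problem to bounding the revenue of the first-price auction with reserves $\{\phi_i^{-1}(0)\}$ on the independent $\rho$-concave distributions $F_i$, which is exactly the content of Theorem \ref{theorem:first-price-revenue}. Thus the expected revenue is at least $0.09\,\tfrac{\rho+1}{\rho+2}$ of the optimal welfare in the covered-counterpart environment.

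Third, I would close the gap between welfare and revenue bounds using two facts. By Lemma \ref{equivwelfare}, the optimal welfare in the covered-counterpart environment equals the optimal welfare under $\mathcal{P}$ (both equal $\E[\max_i \kappa_i]$). And by individual rationality, the optimal expected revenue of any auction in the inspection-cost model is at most the optimal expected welfare there, since no bidder ever pays more than her net value $v_i - c_i$, and the planner's welfare dominates the sum of these net values. Chaining the three bounds yields
\[
\text{expected revenue (Dutch)} \;\geq\; 0.09\,\tfrac{\rho+1}{\rho+2}\,\E[\max_i \kappa_i] \;\geq\; 0.09\,\tfrac{\rho+1}{\rho+2}\,(\text{optimal revenue}),
\]
as required.

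The conceptually delicate step is the second one: one must be certain that equilibria of the Dutch-with-reserves auction in the covered-counterpart model are in revenue-preserving bijection with equilibria of the first-price-with-reserves auction on the same distributions, so that Theorem \ref{theorem:first-price-revenue}, which is stated for the first-price format, may be applied verbatim. Everything else is bookkeeping: Theorem \ref{theorem:dutch-reserves-bijection} handles the translation to the no-inspection world, Lemma \ref{equivwelfare} lines up the two notions of optimal welfare, and individual rationality provides the final trivial comparison between optimal revenue and optimal welfare.
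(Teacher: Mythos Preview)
Your proposal is correct and follows essentially the same approach as the paper's proof: invoke Theorem~\ref{theorem:dutch-reserves-bijection} to pass to the covered-counterpart environment, apply Theorem~\ref{theorem:first-price-revenue} there, and then bound optimal revenue by optimal welfare via individual rationality. You are more explicit than the paper in separating out the Dutch-to-first-price equivalence in the no-inspection setting and in citing Lemma~\ref{equivwelfare} to align the two notions of optimal welfare, but these are elaborations of steps the paper compresses rather than a different route.
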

\begin{proof}
By Theorem \ref{theorem:dutch-reserves-bijection}, the revenue and welfare are equal to that of a first-price auction with values distributed as the covered-call values.
By Theorem \ref{theorem:first-price-revenue}, revenue in the first-price auction exceeds the given fraction of total welfare, so this is the case in the Dutch auction with the same reserves and inspection costs.
Finally, total welfare is at least the optimal expected revenue.
\end{proof}

In calibrations, we tend to find a significant improvement in revenue for the descending-price mechanism over a simultaneous sealed-bid auction when inspection costs are high, \emph{i.e.} in the startup acquisition calibrations.  See Subsection \ref{addresults} below for details.  For timber sales, where inspection costs are much lower, improvement over the second-price or BKRS procedures is more modest.

\section{Calibrations - In Detail} \label{sec:app-calibrations-extra}

In this appendix we discuss in detail our methods for solving for approximate equilibria of the simultaneous second-price and Dutch auctions and provide additional numerical results.  All discussions below are thus formulated in terms of the general model of \Cref{model} in the text, rather than the particular formulation in Section 5, unless specifically stated. We begin by  describing our procedure at a relatively high level here, highlighting only elements of potential interest to a general economic theorist.   Then we describe some additional results from our calibrations. Finally we fill in additional details about our approach that can be used to replicate our methods from scratch.

\subsection{Simultaneous second-price auction}\label{SPAapp}

Our discussion here corresponds to step 4 of the pseudo-algorithm we describe in \Cref{numericsapp}.

We focus on equilibria in which, at any decision point at which agents have a weakly dominant truthful strategy they adopt this strategy, given that \citet{VCGinfo} show this leads to the best possible outcome in a simultaneous setting.  In particular, any bidder who inspects her value chooses to bid this value and any bidder who chooses not to inspect her value bids the value she would in expectation derive from being awarded the object, $\tilde v_i\equiv \mathbb E_{\theta_i} \left[ v_i \right]-c_i$.  

Next note that a) because the probability of paying the cost is strictly higher under inspection than not, it becomes strictly less attractive to inspect when $c_i$ is higher holding fixed $\theta_i$, b) for $c_i=0$ inspection is weakly dominant and c) for $c_i=\infty$ non-inspection is weakly dominant.  Thus for any value of $\theta_i$ there is a unique threshold value of $\overline c_i \left(\theta_i\right)$ such that those with $c_i\leq \overline c_i\left(\theta_i\right)$ will inspect and those with $c_i> \overline c_i\left( \theta_i\right)$ will not inspect.

A given strategy induces a distribution of bids made by $i$ whose cumulative distribution function we call $G_i$.
A sufficient statistic in determining $i$'s best response is the distribution of the highest competing bid, call it $b_{-i}^{\max}$ made by all other bidders, which is determined by $\{G_j\}_{j=1}^N$.  Let $G_{-i}$ be the cumulative distribution function of this variable and let $g_{-i}$ be its density (which we now assume exists).  The payoff to inspecting is
$$\int_{v_i=0}^{\infty}\int_{b_{-i}^{\max}=0}^{v_i} \left(v_i-b_{-i} \right)f\left(v_i\right) g_{-i}\left(b_{-i}\right)db_{-i}dv_i ~~ - ~~ c_i$$
and the payoff to not inspecting is
$$\int_{b_{-i}^{\max}=0}^{\tilde v_i} \left(\tilde v_i-b_{-i}\right)  g_{-i}\left(b_{-i}\right)db_{-i}.$$
We can solve for $\overline c_i$ by simply finding the value of $c_i$ that equates these two expressions for any given value of $\theta_i$.  

Given these facts, we iteratively take the following steps, starting with a bid distribution $\{G_i\}_{i=1}^N$ where each bidder bids her value truthfully.
We compute a best response set of theshold costs $\overline c_i\left(\theta_i\right)$, on a grid of possible values of $\theta_i$.
We then compute the induced best-response distribution $G_i^{\textit{br}}$ on a grid of possible bids.
We update the bid distribution by $G_i(x) = (1-\lambda)G_i(x) + \lambda G_i^{\textit{br}}(x)$, where $\lambda \in (0,1)$ is a smoothing parameter.
We then repeat for bidder $i+1$, wrapping around at $N$.
Once this reaches an approximate fixed point, we stop and return the approximate equilibrium strategy $\{\overline c_i\}_{i=1}^N$.

\subsection{Dutch auction}\label{Dutchapp}

Our discussion here corresponds to step 5 of the pseudo-algorithm we describe in \Cref{numericsapp}.

AuctionSolver takes as inputs a number of bidders and a value distribution function for each bidder, which may be described in terms of a number of common mathematical functions, and returns a set of approximate equilibrium bid functions, one for each bidder.  Thus, in order to take advantage of his code, we must approximate the distribution of covered call values using functions expressible in terms of the mathematical functions AuctionSolver allows.  Log-normal distributions appear to offer a very close fit in our specifications except at very low values, which are irrelevant as all equilibrium quantities are functions of first- and second-order statistics.  We thus use a least squares non-linear fit finder to fit a log-normal cumulative distribution function to the empirical cumulative distribution function of covered call values generated by taking a very large number of samples from the $\left(\theta_i, c_i\right)$ distribution for each bidder.  We then feed these log-normal distributions into AuctionSolver (which allows the necessary mathematical functions) and return corresponding bid functions for each bidder.

\subsection{Evaluating welfare} \label{welfareapp}

Our discussion here corresponds to step 6 of the pseudo-algorithm we describe in \Cref{numericsapp}.

With these equilibria in hand, we draw a large number of samples of $\left\{\left(\theta_i, c_i\right)\right\}_{i=1}^N$.  For each of these samples:
\begin{enumerate}
\item To calculate the welfare of the simultaneous second-price auction we directly run the sample through the equilibrium calculated in Subsection\ref{SPAapp} above, tally all the inspection costs and the value of the highest bidder and take the net between these, calling this the welfare in that sample.
\item To calculate the welfare of the Dutch auction we calculate the covered call value for each bidder from $\left(\theta_i, c_i\right)$ by solving for the strike price using Definition~1, use the equilibrium functions, determine the highest bidder and then call the welfare in that sample her covered call value.
\item To calculate first best welfare we follow the same procedure as in the Dutch auction, but simply use the highest covered call value rather than determining the winner.
\end{enumerate}
In the timber calibration, we determine the welfare of the simultaneous second-price auction and the BKRS mechanism using their code.

\subsection{Additional Results}\label{addresults}

\paragraph{Revenue.}
In calibrations, we tend to find a significant improvement in revenue for the Dutch auction in the startup acquisition calibrations.  One explanation for this arises from the fact that, as we discuss shortly, many more inspections in the Dutch auction occur prior to bidding.  Inspections prior to bidding are sunk costs that do not discourage aggressive bids as a result.  Under the Dutch auction, equilibrium always has all inspections occurring prior to bids.  On the other hand, in the equilibrium of the second-price auction, most inspections occur only after bidding has already occurred.  Such inspection costs reduce the effective value a bidder places on the item even in expectation and thus lower optimal bids.  Furthermore the lack of ex-ante expectations dampens the spread of bids and thus makes the highest order statistics that determine revenue lower.  Together these effects mean that delayed inspections have a quite large impact on revenue, even when their impact on efficiency is modest.

This account is further confirmed by the fact that the revenue gains of the Dutch auction are much more modest, and sometimes even negative, compared to the form of the second-price auction studied by \citeauthor{sweeting}, where inspections must occur prior to bids, and even more so relative to the BKRS mechanism.  Another possible explanation for the difference between these cases is that inspection costs are much higher in the start-up than the timber calibration.  In fact, we suspect these two differences are complementary in producing the observed discrepancy.

The results are for the same parameter settings discussed in previously in the paper.
The startup calibrations results are given in Table \ref{startuprevenuetable} and timber in Table \ref{timberrevenuetable}.

\begin{table}
\begin{center}
\begin{tabular}{cc}
Parameter values   & Second price revenue as \% of Dutch \\
\hline \hline
Baseline           & $50$  \\
$\sigma_v^2=.5$    & $50$  \\
$\sigma_v^2=1.5$   & $60$  \\
$\mu_c=-1$      & $50$  \\
$\mu_c=-.25$        & $70$  \\
$ \sigma_c^2=0$    & $60$  \\
$ \sigma_c^2=1$    & $60$  \\
$ \rho=-.5$        & $60$  \\
$ \rho=0$          & $60$  \\
$ \rho=1$          & $60$  \\
$ (\alpha_0,\alpha_1) = (.5,.25)$ & $60$  \\
$(\alpha_0,\alpha_1) =  (.5,0.05)$   & $50$  \\
$(\alpha_0,\alpha_1) =  (.1,.1)$  & $60$  \\
$(\alpha_0,\alpha_1) =  (.1,.7)$  & $70$  \\
$N=2$              & $70$   \\
$N=10$             & $60$
\end{tabular}
\end{center}
\caption{Revenue of the second-price auction as a percentage of that obtained by the Dutch auction in the startup acquisition calibration, to one significant figure. Parameter values given are the ones that differ from the baseline.}\label{startuprevenuetable}
\end{table}

\begin{table}
\begin{center}
\begin{tabular}{ccc}
Parameter values              & Second Price (\% of Dutch) & Sequential (\% of Dutch) \\
\hline \hline
Baseline                      &  $98$  & $  99$  \\
$N_{\mathit{mill}}=1$         &  $96$  & $  99$  \\
$N_{\mathit{mill}}=7$         &  $100$ & $ 101$  \\
$N_{\mathit{logger}}=0$       &  $100$ & $ 100$  \\
$N_{\mathit{logger}}=8$       &  $97$  & $  99$  \\
$\mu_{\mathit{logger}}=2.921$ &  $95$  & $  99$  \\
$\mu_{\mathit{logger}}=4.243$ &  $99$  & $  99$  \\
$\mu_{\mathit{diff}}=.169$    &  $98$  & $  99$  \\
$\mu_{\mathit{diff}}=.587$    &  $98$  & $ 100$  \\
$\sigma_v^2=.122$             &  $96$  & $  99$  \\
$\sigma_v^2=.646$             &  $99$  & $ 100$  \\
$\alpha=.505$                 &  $97$  & $  99$  \\
$\alpha=.872$                 &  $98$  & $ 100$  \\
$K=.39$                       &  $100$ & $ 100$  \\
$K=3.72$                      &  $95$  & $  99$  \\
$K=16$                        &  $79$  & $  94$  \\
\end{tabular}
\end{center}
\caption{Revenue of the second-price and sequential mechanisms in the timber auction calibration, expressed as a percentage of that obtained by the Dutch auction, to two significant figures.
         Standard error due to sampling is far smaller.
         Parameter values given are the ones that differ from the baseline.}\label{timberrevenuetable}
\end{table}

\paragraph{Features of the second-price auction.}
In our startup acquisition calibrations, we observed some features that might help explain when second price performs well versus poorly.
For the numbers reported below, we report results from just one scenario from each set of parameters, but we believe these to likely be typical results.
Recall that each scenario is solved with relatively high precision.

\begin{table}
\begin{center}
\begin{tabular}{ccc}
Parameter values   & Welfare loss as \% of first best\\
\hline \hline
Baseline           & $7$  \\
$\sigma_v^2=.5$    & $0.4$  \\
$\sigma_v^2=1.5$   & $10$  \\
$\mu_c=-1$         & $9$  \\
$\mu_c=-.25$       & $5$  \\
$ \sigma_c^2=0$    & $7$  \\
$ \sigma_c^2=1$    & $7$  \\
$ \rho=-.5$        & $8$  \\
$ \rho=0$          & $7$  \\
$ \rho=1$          & $5$  \\
$ (\alpha_0,\alpha_1) = (.5,.25)$ & $2$  \\
$(\alpha_0,\alpha_1) =  (.5,05)$   & $10$  \\
$(\alpha_0,\alpha_1) =  (.1,.1)$  & $2$  \\
$(\alpha_0,\alpha_1) =  (.1,.7)$  & $2$  \\
$N=2$              & $2$   \\
$N=10$             & $10$
\end{tabular}
\end{center}
\caption{Welfare loss of a hypothetical ``first-best without inspection'' procedure: Imagine always assigning to the bidder with largest expected value minus cost; she is the only one who inspects.
         Displays the difference compared to the first best as a percent of first best, rounded to one significant figure.}\label{startupsneverinspecttable}
\end{table}
The first is to look at what would happen if bidders never inspected, but instead always bid according to expected value minus cost, with only the winner inspecting (Table \ref{startupsneverinspecttable}).
Perhaps surprisingly, in many of our scenarios this procedure would achieve quite high welfare.
This makes it less surprising that the simultaneous second-price auction achieves good welfare as well.
The welfare of this hypothetical ``never-inspect'' procedure tends to match second price in terms of trends (although its welfare is often significantly worse).
When $\sigma_v^2$ (the variance of the value) is increased to $1.5$ or the heterogeneity of bidders is decreased ($\alpha_0 = \alpha_1 = .1$), the welfare of this procedure drops dramatically.
The welfare of second price also drops in these cases, but not as much.

\begin{table}
\begin{center}
\begin{tabular}{cccc}
Parameter values   & Second price & Descending & First best \\
\hline \hline
baseline (auction 1) & $24$ & $27$ & $27$ \\
baseline (auction 2) & $27$ & $29$ & $29$ \\
baseline (auction 3) & $28$ & $30$ & $30$ \\
\end{tabular}
\end{center}
\caption{Percentage of bidders who eventually inspect in equilibrium of the second-price and Dutch auctions as well as in the first best procedure. Results shown are for the equilibria of three different auctions drawn according to the baseline parameters. There are $5$ bidders. Rounded to two significant figures; errors due to sampling are significantly lower.}\label{startupsfracinspecttable}
\end{table}
Another is to consider the fraction of bidders that inspect in equilibrium, on average.
For the Dutch  auction, this fraction closely (almost exactly) follows that of the optimal procedure.
Perhaps surprisingly, this fraction is also quite similar for the second-price auction (see Table \ref{startupsfracinspecttable}).

However, the way in which these numbers are reached is significantly different.
In Dutch and first best, the winner has always inspected prior to deciding to claim the item; this is not true in the second-price auction.

\begin{table}
\begin{center}
\begin{tabular}{cccc}
Parameter values    & Winner inspection rate  & Winner matches first best rate \\
\hline \hline
baseline (auction 1)  & $52$  & $85$  \\
baseline (auction 2)  & $58$  & $83$  \\
baseline (auction 3)  & $59$  & $82$  \\
$\sigma_v^2=.5$ (auction 1)  & $1$  & $97$  \\
$\sigma_v^2=.5$ (auction 2)  & $2.6$  & $97$  \\
$\sigma_v^2=.5$ (auction 3)  & $2.8$  & $95$  \\
$(\alpha_0,\alpha_1) = (0.1,0.1)$ (auction 1)  & $73$  & $78$  \\
$(\alpha_0,\alpha_1) = (0.1,0.1)$ (auction 2)  & $64$  & $74$  \\
$(\alpha_0,\alpha_1) = (0.1,0.1)$ (auction 3)  & $77$  & $79$
\end{tabular}
\end{center}
\caption{Percentage of winners, in equilibrium of the second-price auction, who inspected prior to submitting a bid (first column) and who were the same as the winner in the first best procedure (second column).
The three auctions for each set of parameters are drawn from those parameter sets (giving $V_i^0, C_i^0$ for each bidder $i$), then solved. Rounded to two significant figures. The first column is relatively low and the second relatively high, meaning that in our calibrations, second price is able to achieve reasonably high welfare and often find the ``right'' winner even from bidders who bid ``blind'' without inspecting.}\label{startupsfracwinnerinspectedtable}
\end{table}
Because of this, another way to evaluate the second-price auction is to look at the fraction of times in equilibrium that the winner had inspected her value before bidding (see Table \ref{startupsfracwinnerinspectedtable}).
This can be compared with, for instance, the fraction of times that the winner was the same as the winner in the optimal procedure.
In general this confirms that in the startups calibration, second price is able to find reasonably high welfare assignments even without always inducing the right bidders to inspect.
In the baseline scenario, despite a significantly low inspection rate of the winner, the overall welfare loss is relatively small.
This suggests that, in our baseline scenario, good welfare can result even when often picking a bidder with high expected value minus cost without their inspecting first.

It also suggests how problems can arise for second price.
An ideal scenario for second price arises with low variance ($\sigma_v^2 = 0.5$), where the winner is often correct despite not inspecting.
This says that bidders are sufficiently informed of their values so that they can correctly sort without inspecting.
But in the case with low heterogeneity ($\alpha_0 = \alpha_1 = 0.1$), even though the winner inspection rate goes up, the rate of matching the first best winner goes down.
This suggests a problem with coordinating the right bidders to inspect.

\subsection{Additional Numerical Details}
Below, we fill in some further details on our methods.
We hope that an outside party can use the information in our paper and this appendix to independently replicate our results writing code from scratch.
Our code itself contains instructions on exactly how we ran the code and interfaced with outside tools in order to produce the numbers reported in this paper.

\subsubsection{Startup Acquisition}
In this case, for each parameter setting considered, we repeated the following steps:
\begin{enumerate}
  \item Randomly draw a ``scenario'' using these parameters.
        This corresponds to drawing $V_i^0$ and $C_i^0$ for each bidder $i$.
  \item Compute an equilibrium of the Dutch auction using AuctionSolver, as described (to avoid duplication) below in Subsection \ref{sec:online-app-auctionsolver}.
  \item Compute an equilibrium of the second-price auction, described next.
  \item Simulate a large number (typically one million or ten million) of random realizations of that scenario.
        For each realization, compute the outcome under the first-best, Dutch equilibrium, and second price equilibrium.
\end{enumerate}
Because interfacing with AuctionSolver is relatively time-intensive, we only drew a small number of scenarios for each set of parameters.
In that sense, we cannot guarantee high accuracy on our results for a given set of parameters.
However, for each \emph{scenario} we drew, we were able to compute the welfare and revenue to quite high precision (sources of error are described below).

In a couple of cases, AuctionSolver was unable to accept or solve inputs from a scenario.
This typically occurred because the covered call value distributions were too high or heavy-tailed, making the program freeze when attempting to process them.
In these cases (less than 10\% of the total number of scenarios drawn), we re-drew the scenario randomly and tried again.
Other than that, the scenarios were not filtered in any way.
After finalizing the code and setting the parameters, we used the first random scenarios we drew for the results reported.
The particular scenarios we drew are saved in the code so that they can be double-checked.

\paragraph{Solving second price.}
In \Cref{calibrations}, we described our high-level approach for solving the second-price auction.
Here we give some additional details.
One point to note is that in almost all of our simulations, there was exactly one bidder of each ``kind''; that is, bidders were all asymmetric.
In a case where there are multiple bidders of a certain kind, the code uses the fact that their best-responses are identical to only look for an equilibrium that is symmetric within a given kind.
In the rest of the exposition, we will write as though there is only one bidder of each kind.

We discretized the bid space into at least 1000 possible values.
The range of bids was placed between $0$ and high on the highest value distribution's CDF, at least $0.9999$.
This large upper bound was chosen because bidders will bid their true values when they choose to inspect, so we cannot only consider relatively low bids.
We also discretized each bidder's space of possible $V_i^1$; recall that this is the type they observe at the start of the auction, determining their value distribution.
Each bidder's discretized bid space depended on her particular distribution of $V_i^1$, in order to cover most of the probability space, but all had the same number of points (at least 1000).

We also precomputed and stored some useful values that are re-used often, which can take up significant memory storage for large grid sizes but greatly speeds up the process.
Code profiling suggested that recomputing these values each loop was using up the majority of computation time.
First, for each possible discretized outcome of $V_i^1$, we computed and stored $i$'s CDF, that is, the probability that her realized $V_i^1$ is at most this value.
Second, for each $i$, for each of the possible discretized $V_i^1$, and for each discretized bid $b$, we precomputed $i$'s expected net utility for bidding her true value when the highest bid of any opponent is $b$.

\paragraph{The outer loop.}
Each loop, we start with a bid distribution and strategy for each bidder $i$.
We compute a new best-response of bidder $1$ along with the corresponding bid distribution.
Then, to avoid over-shooting, we set $i$'s updated bid distribution to be a convex combination of the old one and the best response.
We also track the mean error between the old and new bid distributions.
After repeating this for each bidder $i$, we check if the overall mean error was small enough, and if so, terminate the process.

\paragraph{Computing a best-response for bidder $i$.}
$i$'s opponents' strategies may be summarized as a distribution $G_{-i}$ of the ``highest opposing bid''.
We first compute $G_{-i}$ given the current strategies of the opponents.
We then compute $i$'s best-response as follows.
For each possible discretized outcome of $V_i^1$, from lowest to highest, we compute $c^*$, the cost threshold below which $i$ inspects and above which she bids her expected value minus cost.

We compute $c^*$ by defining $f$ to be the function equal to (utility from just bidding expected value minus cost) - (utility from inspecting and bidding value).
Then we find the root of $f$.

This rootfinding is done in two steps: a discretized approximate search, then a continuous rootfinding using a scientific library.
To see why we use a discretized search first, recall that we discretized the allowable bids.
If the bidder does not inspect, each of these discretized bids corresponds to an expected value minus cost.
Because the expected value is fixed for this choice of discretized $V_i^1$, this results in a list of discretized costs.
Each corresponds to a threshold at which the non-inspecting bidder would choose the next discretized bid.
We can quickly find the two such costs that sandwich $c^*$; then the scientific library only has to search over the small range in between to find it exactly.

We attempt to speed up this search in a couple of ways.
We observe that the expected utility for inspecting is a constant positive utility that does not depend on $c$, minus $c$.
So, we precompute this constant positive amount rather than recomputing it for every evaluation of $f$.
Most importantly, we ``seed'' the search with an initial guess for $c^*$ equal to the $c^*$ found for the previous discretized value of $V_i^1$.
Because $V_i^1$ has changed very little (if the discretization is good), $c^*$ should also not change much.
This allows the discretized search to run very quickly just by linearly searching up or down from the guess.

\paragraph{Updating the bid distribution of $i$.}
In tandem with computing $c^*$ for each discretized $V_i^1$, we simultaneously update the distribution of $i$'s bids.
This is useful for computing $G_{-j}$ for other bidders $j$.

Computing this distribution is straightforward: For each discretized $V_i^1$, let $p_1$ be the probability that the bidder draws $V_i^1$ and let $c^*$ be the corresponding computed cost threshold.
Let $p_2$ be the probability that the bidder draws $c \leq c^*$ conditioned on this realization of $V_i^1$.
Then we know that the bidder inspects with probability $p_1p_2$, in which case, she bids her value.
We can thus put a point mass on each discretized bid equal to $p_1p_2$ times the discretized probability of value equalling that bid conditioned on $V_i^1$.
Meanwhile, with probability $1-p_2$, the bidder chooses not to inspect and bids expected value minus cost.
Her expected value, given $V_i^1$, is a fixed constant.
Thus, for each discretized bid from $0$ up to her expected value, we can add a point mass equal to $p_1$ times the probability that her cost $c$ is in the range such that $\E[v_i] - c$ equals this bid.

\subsubsection{Timber Sales}
For these calibrations, we were able to use the code of Roberts and Sweeting directly to obtain the revenue and welfare of the sequential and second-price mechanisms.
For the Dutch, we followed the same procedure as in the startups case; this is described next.

One difference to the startups case is that here we only consider specific ``scenarios'' (\emph{i.e.} $V_i^0$ and $C_i^0$ are always $0$ and do not play a role).

\subsubsection{Using First-Price to Solve the Dutch Auction} \label{sec:online-app-auctionsolver}
These are the steps we took to calculate the welfare and revenue of the Dutch auction for a given scenario.
One of the main results of the paper is that equilibrium strategies, welfare, and revenue of the Dutch auction can be completely determined by solving a first-price auction in a setting without inspection costs.
Therefore, we always used the following procedure to compute welfare and revenue of the Dutch auction:
\begin{enumerate}
  \item For each kind of bidder $k$, fit a distribution $F_k$ to the covered-call value distribution of $k$.
        For instance, in the timber sales setting, there are at most two kinds of bidders in each auction, mills and loggers.
        All bidders of the ``mills'' kind have the same distribution of covered call values; the same holds for the ``loggers'' kind.
        We fit the distributions by drawing a large number of samples (typically 100,000) to approximate the true CDF, then finding the $(\mu,\sigma)$ for which the lognormal distribution had the least squared error to this CDF.
        We also visually inspected the difference between each fitted distribution and true one to ensure it was a close fit, particularly at the higher quantiles where the fit matters more for the outcome of the auction.

        We chose to use lognormal distributions because they are simple (with only two parameters) yet seem to fit the settings we checked very closely.
        They are also one of the distributions built-in to the AuctionSolver tool (discussed below).
        For other settings of parameters, for instance when there is a significant chance of a negative covered-call value, some other class of distributions may be a better choice.
        It is also worth pointing out that the closeness of fit matters more at higher values which are more likely to be involved in winning the auction.
        For instance, at the opposite extreme, fitting the covered-call distribution well below zero is completely unnecessary as such bidders do not even enter a bid.

        This step is a source of potential error if the fitted distributions $F_k$ are not very close to the true ones, as then we will be solving for equilibrium of a different set of bidders than the true ones.

  \item We used Richard Katzwer's AuctionSolver tool to solve for the equilibrium of a first-price auction.
        In this auction, each bidder of kind $k$ has value distribution $F_k$.
        In order to use the tool, we needed to truncate the value distributions entered for the auction.
        We usually attempted to truncate them quite high, \emph{e.g.} at a CDF of 0.9999 or higher.
        However, in some cases, this gave the tool difficulty in solving the auction, so we truncated somewhat lower.

        The tool also had some difficulty loading larger or heavier-tailed distributions, so we typically scaled down all value distributions $F_k$ by a constant before entering them into AuctionSolver.
        Luckily, this can be accomplished for the lognormal distribution by subtracting a fixed amount from the $\mu$ parameter of each distribution.

        After solving for an equilibrium in AuctionSolver\footnote{We almost universally used the default settings except to try to increase grid size for more precision; our code describes in detail the exact steps we took in using AuctionSolver.}, we used it to print out a discretized bid function $f_k$ for each kind of bidder $k$.
        The output of AuctionSolver takes the form of a list of discretized values and the corresponding bid for each value.
        We constructed $f_k$ by linearly interpolating these points; for instance, if the equilibrium output says that the value $10.0$ maps to bid $5.0$ and value $10.2$ maps to bid $5.1$, then we mapped value $10.1$ to bid $5.05$ and so on.
        Any values above the truncated upper bound are mapped to the upper bound's bid.
        At this point, if we scaled down the value distributions, we now scaled up the output of AuctionSolver by the same factor.
        For instance, suppose we subtracted $1$ from the $\mu$ parameter of every kind of bidder, then ran AuctionSolver.
        This implies that each bidder's value is a factor $e$ smaller in the AuctionSolver results, so those bids are also a factor $e$ smaller.
        To construct $f_k$, we just multiply all values and bids by $e$, then linearly interpolate.

        In using AuctionSolver, we introduce two potential sources of numerical error.
        First, AuctionSolver discretizes the equilibrium bid functions in order to solve for equilibrium.
        We tried to use large grid sizes to avoid error here and it does not seem likely that this could noticeably impact results.
        Second, it upper-bounds the value distributions.
        One might be concerned because this entails solving for equilibrium with slightly different value distributions than the true ones, which are unbounded.
        We tried to use large upper bounds to mitigate this concern.
        One might also be concerned that this upper bound could slightly disrupt welfare results, although the chances of two bidders both exceeding the upper bound in the same auction is very small.
        We also guarded against this by breaking any ties in bids so that the lower-covered-call bidder won the tiebreak, so that we can only underestimate the welfare.
        Meanwhile, the impact on revenue of the upper bound should only be to decrease it.

  \item We simulated a large number of auctions, typically one million or ten million.
        For each, we calculated the eventual bid of each participant, which is $f_k$ applied to their covered-call value (as shown in this paper), and computed the welfare and revenue.
        The average welfare can be computed as the average covered call value of the winner.
        We then averaged the results of these trials.
        By computing the sample variance of these trials, we were also able to compute the standard error, which is significantly smaller than the precision to which we report the results.
        In that sense, the numerical error here is extremely small.
        But again, there is some possibility of error if the original fitted distributions $F_k$ were not good fits to the true covered call distributions.
        If that were the case, we would be simulating strategies that are not actually an equilibrium.
        Again, we attempted to mitigate this concern by visually inspecting the goodness of the lognormal fit.
\end{enumerate}

\newcommand{\qstg}{{\frak q}}
\newcommand{\awardi}{{\mathbb{A}_i^{\ast}}}

\section{Multi-Stage Inspection} \label{app:multistage}

In this section we present full proofs of our results for the case of multistage inspection, culminating in \Cref{lem:general-ubp-restated}, which generalizes our main ``amortization lemma'' (\Cref{lemma:upper-bound-policy}) to the setting of multi-stage
inspection.

\subsection{A more formal model} \label{genmodel-appendix}

This subsection presents, in greater formality, the model
of multi-stage inspection presented in main text 
Section 6.1.  In fact,
the model we present here also
generalizes from the single-item
auctions contemplated in main text Section 6.1 
to auctions with any finite
number of items.

We will equip our 
probability space $(\Omega,\sigfld,\mu)$ with filtrations
$\{\sigfld_{i,j}^{k,\tau}\}$. The subscripts $i$ and $j$ range over bidders
and items, respectively. For the remainder of this section we 
focus on the stopping problem that a single bidder faces
when deciding when to advance the stage of inspection for
a single item and when to acquire it. Thus we are treating $i$ 
and $j$ as fixed for the remainder of this section. Accordingly,
we will omit the double subscript $i,j$ and denote
$\sigma$-fields by $\sigfld^{k,\tau}$. The superscripts $k$ and $\tau$ refer
to the stage of inspection and the time, respectively. We 
think of $k \in \mathbb{N} \cup \{\infty\}$ as a counter that
increases when the bidder endogenously decides to advance
to the next stage of inspection; the special value $k=\infty$
denotes the completion of all stages of inspection, which is
a prerequisite for acquiring an item. 
We think of $\tau \in \reals_+$ as
representing the ``clock time'' which advances exogenously;
as $\tau$ increases the bidder may receive decision-relevant
information. For example, the bidder may be notified that 
a competing bidder has acquired an item. 
In the notation of main text Section 6.1, $\sigfld_{i,j}^{k,\tau}$ denotes the
$\sigma$-field generated by the signals $s_i^1,\ldots,s_i^k$,
along with any exogenous signals that arrive during the time
interval $[0,\tau)$.

The $\sigma$-fields
$\sigfld_{i,j}^{k,\tau}$ satisfy the relation
$\sigfld_{i,j}^{k,\tau} \subseteq \sigfld_{i,j}^{k',\tau'}$
whenever $k < k'$ and $\tau < \tau'$. 
We will assume that $\sigfld^{\infty,\tau}$ is the
$\sigma$-field generated by 
$\bigcup_{k \in \mathbb{N}} \sigfld^{k,\tau}$,
and we will use the analogous notation 
$\sigfld^{k,\infty}$ to denote the
$\sigma$-field generated by 
$\bigcup_{\tau \in \reals_+} \sigfld^{k,\tau}$.
The information
a bidder learns by inspecting an item's value is 
conditionally independent of the information
learned by waiting as time passes; formally,
for any $k \in \mathbb{N}, \tau \in \reals_+$ and
any event $E \in \sigfld^{k,\infty}$,
we have
\begin{equation} \label{eq:cond-indep}
  \Pr \left[ E \given \sigfld^{\infty,\tau} \right] = 
  \Pr \left[ E \given \sigfld^{k,\tau} \right].
\end{equation}

The bidder's valuation for the item is represented by a
$\sigfld^{\infty,0}$-measurable function $v$. The cost of
inspection is represented by a stochastic process $(c^k)$ adapted to
the filtration $\{\sigfld^{k,0}\}_{k=0}^{\infty}$, and at every point
of the sample space the sequence $0=c^0,c^1,c^2,\ldots$ is
non-decreasing and converges to a finite limit, $c^{\infty}$. The
value of $c^k$ should be interpreted as the combined cost that the
bidder must pay to reach the $k^{\mathrm{th}}$ inspection stage. Our
assumption that $v$ is $\sigfld^{\infty,0}$-measurable and that $c^k$
is $\sigfld^{k,0}$-measurable means that the bidder's uncertainty
about the item's value and about future inspection costs may
potentially diminish when she advances to a higher inspection stage,
but it does not diminish as clock time progresses.  We will assume
that $\E[v^+] < \infty$ and $\E[c^{\infty}] < \infty$.  

An inspection policy is a rule for varying the inspection
stage of an item over time, and deciding when to acquire the item,
based on information learned in the past and present. More
formally, it is an ordered pair $(\stg,\allocstg)$, where 
$\stg : \Omega \times \reals_+ \to \mathbb{N} \cap \{\infty\}$ 
denotes the rule for varying the inspection stage and 
$\allocstg : \Omega \times \reals_+ \to \{0,1\}$ 
is the indicator of the ($\sigfld$-measurable) event that the policy decides
to acquire the item at time $\tau$ or earlier. 
With a slight abuse of notation, we will refer 
to such an inspection policy simply as $\stg$ rather than as $(\stg,\allocstg)$.
Inspection policies must satisfy the following properties.
\begin{enumerate}
\item For each $\omega \in \Omega$, $\stg(\omega,\tau)$ 
  and $\allocstg(\omega,\tau)$ are
  non-decreasing functions of $\tau$.
\item For all $\tau \in \reals_+, \, k \in \mathbb{N}$, we have
  $
     \{\omega \in \Omega \mid \stg(\omega,\tau) > k\} \in \sigfld^{k,\tau}.
  $
\item For all $(\omega,\tau) \in \Omega \times \reals_+$, 
if $\allocstg(\omega,\tau) = 1$ then  $\stg(\omega,\tau) = \infty$.
\end{enumerate}
The second property means that if the inspection policy has decided to
advance beyond stage $k$ at time $\tau$ or earlier, the decision must be
based on information obtained during the first $k$ stages of
inspection and the first $\tau$ units of clock time.  
The third property means that if the inspection policy decides to acquire 
the item, it must complete all stages of inspection.
We will generally
omit the argument $\omega$ from $\stg$, interpreting $\stg(\tau)$ for any
fixed $\tau$ as a random variable defined on $(\Omega,\sigfld)$ and
taking values in $\mathbb{N} \cup \{\infty\}$.  
Since $\stg(\tau)$ is a non-decreasing function of $\tau$, the limit
$\lim_{\tau \to \infty} \stg(\tau)$ is a well-defined 
$(\mathbb{N} \cup \{\infty\})$-valued random
variable, which we will denote by $\stg(\infty)$;
it represents the final inspection stage reached
by $\stg$. Similarly $\lim_{\tau \to \infty} \allocstg(\tau)$
is a well-defined $\{0,1\}$-valued random variable
and we will denote it simply by $\allocstg$; it is the 
indicator random variable of the event that the policy
acquires the item. 
The random variable $c^{\stg} = c^{\stg(\infty)}$ 
represents the cost of executing the policy $\stg$.

We can combine two inspection policies $\qstg$ and $\altstg$ by
forming their pointwise minimum, $\qstg \wedge \altstg$, or their
pointwise maximum, $\qstg \vee \altstg$. Formally these are defined
by
\begin{align*}
  (\qstg \vee \altstg)(\omega,\tau) = 
  \qstg(\omega,\tau) \vee \altstg(\omega,\tau) 
  & 
  \quad \mbox{and} \quad
  \alloc{\qstg \vee \altstg}(\omega,\tau) = 
  \alloc{\qstg}(\omega,\tau) \vee \alloc{\altstg}(\omega,\tau) \\
  (\qstg \wedge \altstg)(\omega,\tau) = 
  \qstg(\omega,\tau) \wedge \altstg(\omega,\tau) 
  & \quad \mbox{and} \quad
  \alloc{\qstg \wedge \altstg}(\omega,\tau) = 
  \alloc{\qstg}(\omega,\tau) \wedge \alloc{\altstg}(\omega,\tau).
\end{align*}
The verification that $\qstg \vee \altstg$ and $\qstg \wedge \altstg$ satisfy
the definition of an inspection policy is left to the
reader.

\subsection{Generalized strike price} \label{genstrike-appendix}
This subsection generalizes the notion of strike price to the setting
of multi-stage inspection. For each inspection stage $k < \infty$ we
will define a strike price $\sigma^k$ which will be a
$\sigfld^{k,0}$-measurable random variable.  Informally $\sigma^k$
represents the value of an outside option such that a bidder who has
already sunk the cost of reaching stage $k$ is indifferent between
stopping immediately and accepting a payoff of $\sigma^k$, versus
continuing to apply the optimal policy that undertakes at least one
more stage of inspection, given that the policy can stop at any future
time and obtain the outside-option payoff of $\sigma^k$.

Proving the existence and uniqueness of a 
$\sigfld^{k,0}$-measurable function $\sigma^k$ that
achieves this indifference property turns out 
to be a bit subtle. We begin with the following
definition.
\begin{definition} \label{def:discourages}
A $\sigfld$-measurable function $\sigma$
{\em discourages inspection at stage $k$} if
for every inspection policy $\stg$ satisfying
$\stg(\infty) > k$ pointwise, the 
relation
$$
    \E \left[ 
         \allocstg v - c^\stg 
         \given \sigfld^{k,\infty}
       \right]  \leq
   \E \left[
         \allocstg \sigma - c^k 
         \given \sigfld^{k,\infty}
   \right]
$$
holds pointwise almost everywhere. The set of 
all $\sigfld^{\infty,0}$-measurable functions 
that discourage inspection 
at stage $k$ is denoted by $\di^k$.
The set of functions $\sigma \in \di^k$ that 
satisfy $\sigma \leq v^+$ pointwise is 
denoted by $\di^k_0$.
\end{definition}
Informally, to say that a random variable 
$\sigma$ discourages
inspection at stage $k$ means a bidder
who has an outside option worth $\sigma$
and is currently at inspection stage $k$
must weakly prefer a policy that stops immediately
and claims the payoff of $\sigma$ over one
that may perform additional stages of inspection.

The following property of $\di^k$ will be
useful in the sequel.
\begin{lemma} \label{lem:pointwise-min}
For any finite subset $\{\sigma_1,\ldots,\sigma_n\} \subset
\di^k$ the pointwise minimum 
$\sigma(\omega) = \min_{1 \leq i \leq n} \{\sigma_i(\omega)\}$
also belongs to $\di^k$.
\end{lemma}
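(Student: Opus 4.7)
The plan is to argue by induction on $n$, so it suffices to treat the base case $n = 2$. Set $\sigma = \min(\sigma_1,\sigma_2)$ and let $A = \{\sigma_1 \leq \sigma_2\}$, so that the pointwise identity $\sigma = \mathbf{1}_A \sigma_1 + \mathbf{1}_{A^c} \sigma_2$ holds.

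Fix any inspection policy $\stg$ with $\stg(\infty) > k$ pointwise. The hypothesis $\sigma_i \in \di^k$ for $i=1,2$ supplies the two conditional inequalities
\[
  \E\!\left[\allocstg v - c^\stg \given \sigfld^{k,\infty}\right]
  \;\leq\;
  \E\!\left[\allocstg \sigma_i - c^k \given \sigfld^{k,\infty}\right],
\]
both valid pointwise a.e. Provided $A \in \sigfld^{k,\infty}$, the indicator $\mathbf{1}_A$ is nonnegative and $\sigfld^{k,\infty}$-measurable, so I would multiply the $i=1$ inequality by $\mathbf{1}_A$ and absorb the indicator inside each conditional expectation; the $i=2$ inequality is handled symmetrically with $\mathbf{1}_{A^c}$. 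Adding the two resulting inequalities and using $\mathbf{1}_A + \mathbf{1}_{A^c} = 1$ together with $\mathbf{1}_A \sigma_1 + \mathbf{1}_{A^c} \sigma_2 = \sigma$ yields
\[
  \E\!\left[\allocstg v - c^\stg \given \sigfld^{k,\infty}\right]
  \;\leq\;
  \E\!\left[\allocstg \sigma - c^k \given \sigfld^{k,\infty}\right]
\]
pointwise a.e., which is precisely the statement that $\sigma$ discourages inspection at stage $k$. Closure under arbitrary finite minima then follows by iteration.

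The main obstacle is verifying the measurability condition $A \in \sigfld^{k,\infty}$. In the case of primary interest---when each $\sigma_i$ is $\sigfld^{k,0}$-measurable, as will be true of the strike price $\sigma^k$ to be constructed downstream---this is automatic, since $\sigfld^{k,0} \subseteq \sigfld^{k,\infty}$ and $A$ inherits $\sigfld^{k,0}$-measurability from the $\sigma_i$. In the full $\sigfld^{\infty,0}$-measurable generality allowed by \Cref{def:discourages}, the measurability of $A$ with respect to $\sigfld^{k,\infty}$ is not automatic and requires a supplementary argument invoking the conditional independence relation \eqref{eq:cond-indep} between $\sigfld^{\infty,0}$ and $\sigfld^{k,\infty}$ given $\sigfld^{k,0}$; I would approach this by showing that any element of $\di^k$ may be replaced, without disturbing the localization step above, by an $\sigfld^{k,0}$-measurable member of $\di^k$ that agrees with it in the relevant conditional expectations.
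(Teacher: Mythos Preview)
Your approach is essentially the paper's: localize on the set where a particular $\sigma_i$ realizes the minimum, invoke the discouraging-inspection inequality there, and patch the pieces together. The paper does this directly for general $n$ via the covering sets $U_j = \{\omega : \sigma_j(\omega) = \sigma(\omega)\}$ rather than inducting from $n=2$, but the mechanism is identical.

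You are right to isolate the measurability of $A$ with respect to $\sigfld^{k,\infty}$ as the delicate step. The paper's proof simply asserts that the $U_j$ are $\sigfld^{k,\infty}$-measurable, yet by \Cref{def:discourages} the elements of $\di^k$ are only required to be $\sigfld^{\infty,0}$-measurable, and $\sigfld^{\infty,0}$ is not in general contained in $\sigfld^{k,\infty}$. So your caution is well placed: the paper glosses over exactly the point you flag, and your sketch of a repair---passing to $\sigfld^{k,0}$-measurable surrogates via the conditional-independence relation~\eqref{eq:cond-indep}---is a reasonable direction, though it would need to be fleshed out to be fully rigorous. In the downstream applications (e.g., taking minima with $v^+$ or with $\sigma^k$) the lemma is in any case only invoked in regimes where the measurability can be checked directly.
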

\begin{proof}
Let $U_i = \{ \omega \in \Omega \mid \sigma_i(\omega) = \sigma(\omega) \}$
and note that the sets $U_1,\ldots,U_n$ are $\sigfld^{k,\infty}$-measurable
and that they cover $\Omega$. The relation
\[
  \E \left[ \allocstg v - c^\stg \given \sigfld^{k,\infty} \right] 
\leq
  \E \left[ \allocstg \sigma_j - c^k \given \sigfld^{k,\infty} \right]
=
  \E \left[ \allocstg \sigma - c^k \given \sigfld^{k,\infty} \right]
\]
holds pointwise a.e.\ on $U_j$. The lemma follows by
patching together these relations, since the sets $U_j$ 
cover $\Omega$.
\end{proof}
 
Since the strike price is informally defined
as the {\em minimum} value of an outside option that
discourages inspection at stage $k$, it
is natural to try defining $\sigma^k$ as the pointwise 
infimum of the functions $\sigma \in \di^k$.
Unfortunately, except at sample points 
that are point-masses of the probability
measure $\mu$, this pointwise infimum will
be equal to $-\infty$ because we can modify 
the value of any $\sigma \in \di^k$ on any 
measure-zero set without altering its membership 
in $\di^k$. Therefore, we have to define
$\sigma^k$ more indirectly using the following
lemma.

\begin{lemma} \label{lem:sigmak}
There exists a $\sigfld^{k,\infty}$-measurable
function $\sigma^k$, taking values in 
$\reals \cup \{ -\infty \}$, such that for all 
$U \in \sigfld^{k,\infty}$, 
\begin{equation} \label{eq:def-sigmak}
  \int_U \sigma^k \, d \mu =
  \inf \left\{ \left. \int_U \sigma \, d \mu 
         \; \right| \;
               \sigma \in \di^k
       \right\} = 
  \inf \left\{ \left. \int_U \sigma \, d \mu 
         \; \right| \;
               \sigma \in \di^k_0
       \right\} .
\end{equation}
The function $\sigma^k$ is unique almost surely,
meaning that any two such functions are equal
except on a set of measure zero. 
\end{lemma}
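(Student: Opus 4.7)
The plan is to construct $\sigma^k$ as an essential infimum of $\di^k_0$. The lattice closure asserted in \Cref{lem:pointwise-min}, together with the key observation that $v^+$ itself lies in $\di^k$, will simultaneously yield existence of $\sigma^k$ and the equality of the two infima in~\eqref{eq:def-sigmak}.

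I would first verify that $v^+ \in \di^k$, which is immediate from the pointwise inequalities $\allocstg v \leq \allocstg v^+$ and $c^\stg \geq c^k$ (for admissible $\stg$ with $\stg(\infty) > k$) after taking the appropriate conditional expectation. This ensures $v^+ \in \di^k_0$, so $\di^k_0$ is nonempty and every element is pointwise dominated above by the $L^1$ function $v^+$. Because elements may be unbounded below, I would pass to the bounded monotone transform $\sigma \mapsto \int \tanh(\sigma)\,d\mu$; let $M \in [-1,1]$ denote its infimum over $\di^k_0$. Pick a minimizing sequence $\sigma_n \in \di^k_0$ and, using \Cref{lem:pointwise-min}, replace it by the cumulative minima $\sigma_1 \wedge \cdots \wedge \sigma_n$, which remain in $\di^k_0$ (minima of finitely many elements of $\di^k$ bounded above by $v^+$ are again in $\di^k$ and still bounded above by $v^+$). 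Define $\sigma^k := \lim_n \sigma_n$; this limit exists pointwise and takes values in $[-\infty,v^+]$.

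To verify $\sigma^k \in \di^k_0$, I would pass the defining inequality for membership to the limit via monotone convergence from above, using $v^+ \in L^1$ as the dominating function. Essential minimality is then a standard lattice argument: for any $\sigma \in \di^k_0$, the pointwise minimum $\sigma \wedge \sigma^k$ lies in $\di^k_0$, and $\int \tanh(\sigma \wedge \sigma^k)\,d\mu \le \int \tanh(\sigma^k)\,d\mu = M$ must be an equality by the definition of $M$, forcing $\sigma \wedge \sigma^k = \sigma^k$ a.e., i.e., $\sigma^k \leq \sigma$ a.e. Integrating over any $U \in \sigfld^{k,\infty}$ then delivers the second equality in \eqref{eq:def-sigmak}. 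For the first equality, $\di^k_0 \subseteq \di^k$ gives one direction, and the reverse follows from the map $\sigma \mapsto \sigma \wedge v^+$: for any $\sigma \in \di^k$, \Cref{lem:pointwise-min} applied to $\{\sigma, v^+\}$ places $\sigma \wedge v^+$ in $\di^k_0$, while $\sigma \wedge v^+ \leq \sigma$ yields $\int_U (\sigma \wedge v^+)\,d\mu \leq \int_U \sigma\,d\mu$. Uniqueness up to a null set is immediate since two candidates agreeing on every $U$-integral over $U \in \sigfld^{k,\infty}$ must coincide almost everywhere. The main delicate point I anticipate is handling the possibility that $\sigma^k = -\infty$ on a positive-measure set---precisely why I would minimize the bounded $\tanh$-surrogate rather than $\int \sigma\,d\mu$ directly---while the pointwise ceiling by $v^+$ keeps monotone convergence available throughout the membership verification.
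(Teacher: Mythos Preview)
Your essential-infimum construction is a genuinely different route from the paper's. The paper defines the set function $\nu^k(U)=\inf\{\int_U\sigma\,d\mu:\sigma\in\di^k_0\}$, verifies that $\nu^k$ is countably additive on $\sigfld^{k,\infty}$ (using \Cref{lem:pointwise-min} much as you do), and then invokes the Radon--Nikodym theorem to produce $\sigma^k$. Your approach instead builds $\sigma^k$ directly as the pointwise limit of a decreasing minimizing sequence in $\di^k_0$. What your route buys is that the pointwise a.e.\ inequality $\sigma^k\le\sigma$ for every $\sigma\in\di^k_0$, and the fact that $\sigma^k$ itself discourages inspection, fall out immediately; the paper establishes these separately in the subsequent \Cref{lem:sigmak2} and \Cref{lem:strike-discourages}. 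What the paper's route buys is that $\sigfld^{k,\infty}$-measurability of $\sigma^k$ is automatic from Radon--Nikodym applied on $\sigfld^{k,\infty}$.

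That last point is exactly where your argument has a gap. Elements of $\di^k_0$ are, by definition, $\sigfld^{\infty,0}$-measurable, and in general $\sigfld^{\infty,0}$ and $\sigfld^{k,\infty}$ are incomparable sub-$\sigma$-fields. So your pointwise limit $\sigma^k$ is only shown to be $\sigfld^{\infty,0}$-measurable, not $\sigfld^{k,\infty}$-measurable as the lemma requires. This also undercuts your uniqueness argument: two functions agreeing on every $\int_U$ with $U\in\sigfld^{k,\infty}$ need not be a.e.\ equal unless both are $\sigfld^{k,\infty}$-measurable. The fix is short---replace your $\sigma^k$ by $\E[\sigma^k\mid\sigfld^{k,\infty}]$, which preserves all the $\sigfld^{k,\infty}$-integrals and is well defined because $\sigma^k\le v^+\in L^1$---but it does need to be stated; the paper in fact carries out precisely this projection argument in the proof of \Cref{lem:k0-measurable}.
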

\begin{proof}
First note that the set $\di^k_0$ is non-empty
because, for example, the function $v^+$ 
belongs to $\di^k_0$: 
a bidder whose outside option is to obtain 
payoff $v^+$ at no cost will always (at least weakly) prefer 
that option to paying the cost of additional
stages of inspection followed
by attaining a reward which is at best equal to $v^+$.

To prove the existence of $\sigma^k$ we will consider
the set function
$$
  \nu^k(U) = \inf \left\{ \left. \int_U \sigma \, d \mu 
         \; \right| \;
               \sigma \in \di^k
       \right\} = \inf \left\{ \left. \int_U \sigma \, d \mu 
         \; \right| \;
               \sigma \in \di^k_0
       \right\}.
$$
Note that the infimum defining $\nu^k(U)$ is the same
regardless of whether $\sigma$ ranges over the full set $\di^k$ 
or its subset $\di^k_0$; this is because for every $\sigma \in \di^k$
the pointwise minimum $\sigma' = \sigma \wedge v^+$ belongs to
$\di^k_0$ and satisfies 
$\int_U \sigma' \, d\mu \leq \int_U \sigma \, d\mu$.
We will prove that $\nu^k(U)$
is a countably additive measure on $\sigfld^{k,\infty}$ and that it is satisfies 
$\nu^k(U) < \infty$ for all $U$. From the definition of
$\nu^k$ it is clear that $\nu^k(U)=0$ whenever $\mu(U)=0$.
An application of the Radon-Nikodym Theorem then implies
the existence of the function $\sigma^k$ asserted in the
lemma statement.

We now argue that $\nu^k$ is
countably additive. Suppose $U_1,U_2,\ldots$ are disjoint
sets in $\sigfld^{k,\infty}$ and let $U$ denote their union.
Since $v^+ \in \di^k$ we see that $\nu^k(U_i) \leq \int_{U_i} v^+ \, d\mu$
for all $i$. Hence the sum of the non-negative elements
in the set $\{\nu^k(U_i) \mid i=1,2,\ldots\}$ is finite,
being bounded above by $\int_{\Omega} v^+ \, d\mu$.
It follows that the sum $\sum_{i=1}^{\infty} \nu^k(U_i)$
has a well-defined value in $\reals \cup \{-\infty\}$ 
independent of the ordering of the summands: either
the sum of the negative elements in $\{\nu^k(U_i) \mid i=1,2,\ldots\}$ is finite,
in which case $\sum_{i=1}^{\infty} \nu^k(U_i)$
converges absolutely to a finite value, or else the
sum of the negative elements in $\{\nu^k(U_i) \mid i=1,2,\ldots\}$ is
$-\infty$, in which case the partial sums 
$\sum_{i=1}^{n} \nu^k(U_i)$ converge
to $-\infty$ as $n \to \infty$, irrespective of the
ordering of summands.

We must prove that $\sum_{i=1}^{\infty} \nu^k(U_i) = \nu^k(U)$. 
For any $\eps>0$ we can choose $\sigma \in \di^k$ such that
$\int_U \sigma \, d\mu < \nu^k(U) + \eps$.
Without loss of generality we may assume $\sigma \leq v^+$ 
pointwise, since Lemma~\ref{lem:pointwise-min} justifies
replacing $\sigma$ with the pointwise minimum 
$\sigma \wedge v^+$ if necessary. Now, arguing as in the
preceding paragraph, we may conclude that the sum
$\sum_{i=1}^{\infty} \int_{U_i} \sigma \, d\mu$ is
well-defined irrespective of the order of summands,
and that it is equal to $\int_U \sigma \, d\mu$.
It follows that
\[
   \sum_{i=1}^{\infty} \nu^k(U_i) \leq
   \sum_{i=1}^{\infty} \int_{U_i} \sigma \, d\mu =
   \int_U \sigma \, d\mu < \nu^k(U) + \eps.
\]
As $\eps>0$ was arbitrarily small, we may conclude
that $\sum_{i} \nu^k(U_i) \leq \nu^k(U)$. 

To prove the reverse inequality, for $i=1,2,\ldots$
choose $\sigma_i \in \di^k$ such that 
$\int_{U_i} \sigma_i \, d\mu < \nu^k(U_i) + 2^{-i} \eps$.
For $n=1,2,\ldots$ let $\sigma_{(n)}$ be the 
pointwise minimum of $\sigma_1,\sigma_2,\ldots,\sigma_n,v^+$.
Using the definition of $\nu^k$,
\begin{align*}
  \nu^k(U) \leq \int_U \sigma_{(n)} \, d\mu 
    & \leq
    \sum_{i=1}^{n} \int_{U_i} \sigma_i \, d\mu +
    \sum_{i=n+1}^{\infty} \int_{U_i} v^+ \, d\mu \\
    & <
    \sum_{i=1}^{n} \nu^k(U_i) + (1 - 2^{-n}) \eps +
    \sum_{i=n+1}^{\infty} \int_{U_i} v^+ \, d\mu.
\end{align*}
We may choose $n$ large enough that 
    $ \sum_{i=n+1}^{\infty} \int_{U_i} v^+ \, d\mu < \eps$
and
    $ \sum_{i=n+1}^{\infty} \nu^k(U_i) < \eps $.
Then
$$ \nu^k(U) < \sum_{i=1}^{n} \nu^k(U_i) + 2\eps <
              \sum_{i=1}^{\infty} \nu^k(U_i) + 3\eps. $$
As $\eps>0$ was arbitrarily small we conclude that
$\nu^k(U) \leq \sum_{i=1}^{\infty} \nu^k(U_i)$.
Having already proved the reverse inequality, 
we may conclude that $\nu^k$ is a countably
additive measure. By the Radon-Nikodym Theorem,
there exists a $\sigfld^{k,\infty}$-measurable 
function $\sigma^k$ taking values in $\reals \cup \{-\infty\}$
that satisfies~\eqref{eq:def-sigmak}.

Finally, the uniqueness statement in the lemma
follows because if $\sigma^k$
and $\hat{\sigma}^k$ both satisfy~\eqref{eq:def-sigmak}
then $\int_U (\sigma^k - \hat{\sigma}^k) \, d\mu = 0$
for all $U \in \sigfld^{k,\infty}$, implying that 
$\{ \omega : \sigma^k \neq \hat{\sigma}^k \}$ has
measure zero.
\end{proof}

\begin{lemma} \label{lem:sigmak2}
If $\phi$ is a $\sigfld^{k,\infty}$-measurable function taking values
in a bounded non-negative interval $[0,M]$ then
$\E \left[ \phi \sigma^k \right] = 
\inf \left\{ \E \left[ \phi \sigma \right] \mid \sigma \in \di^k \right\} = 
\inf \left\{ \E \left[ \phi \sigma \right] \mid \sigma \in \di^k_0 \right\}.$
\end{lemma}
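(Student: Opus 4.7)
The plan is to bootstrap from the indicator-function case of \Cref{lem:sigmak} to general bounded non-negative $\phi$ via the standard simple-function approximation, with \Cref{lem:pointwise-min} supplying the glue for combining finitely many nearly-optimal members of $\di^k_0$ into a single one.

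First I would establish the ``easy'' direction $\E[\phi \sigma^k] \leq \E[\phi \sigma]$ for every $\sigma \in \di^k$. By replacing $\sigma$ with $\sigma \wedge v^+$ (which lies in $\di^k_0$ by \Cref{lem:pointwise-min}, together with the fact that $v^+ \in \di^k_0$ established in the proof of \Cref{lem:sigmak}) without increasing $\int_U \sigma \, d\mu$ on any $U$, it suffices to treat $\sigma \in \di^k_0$. Applying \Cref{lem:sigmak} with $U = \{\omega : \sigma^k(\omega) > \sigma(\omega)\} \in \sigfld^{k,\infty}$ and exploiting that $\sigma^k$ and $\sigma$ are both dominated by $v^+$ (hence have integrable positive parts on $U$), the inequality $\int_U \sigma^k \, d\mu \leq \int_U \sigma \, d\mu$ combined with $\sigma^k > \sigma$ pointwise on $U$ forces $\mu(U) = 0$. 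So $\sigma^k \leq \sigma$ a.s., and since $\phi \geq 0$, $\E[\phi \sigma^k] \leq \E[\phi \sigma]$.

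Next I would establish the reverse direction, first for simple $\phi = \sum_{i=1}^n a_i \mathbf{1}_{U_i}$ with disjoint $U_i \in \sigfld^{k,\infty}$ and $a_i \geq 0$. Given $\epsilon > 0$, \Cref{lem:sigmak} supplies, for each $i$ with $a_i > 0$, a $\sigma_i \in \di^k_0$ with $\int_{U_i} \sigma_i \, d\mu < \int_{U_i} \sigma^k \, d\mu + \epsilon/(n a_i)$. \Cref{lem:pointwise-min} ensures $\sigma := \min_i \sigma_i \in \di^k_0$, and $\sigma \leq \sigma_i$ on each $U_i$ yields $\E[\phi \sigma] < \E[\phi \sigma^k] + \epsilon$. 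For general bounded $\phi \in [0,M]$, I would pick simple $\phi_\epsilon$ with $\phi_\epsilon \leq \phi \leq \phi_\epsilon + \epsilon$, apply the simple-function case to $\phi_\epsilon$ to obtain $\sigma \in \di^k_0$ with $\E[\phi_\epsilon \sigma] < \E[\phi_\epsilon \sigma^k] + \epsilon$, and bound the overshoot $\E[(\phi - \phi_\epsilon) \sigma] \leq \epsilon \E[v^+]$ using $\sigma \leq v^+$, with a parallel estimate for $\E[(\phi - \phi_\epsilon) \sigma^k]$.

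The main obstacle I expect is in this last passage: members of $\di^k_0$ are uniformly dominated above by the integrable $v^+$ but have no a priori integrable lower envelope (the paper assumes $\E[v^+], \E[c^\infty] < \infty$ but not $\E[v^-] < \infty$), so $\E[\phi \sigma^k]$ might equal $-\infty$ and the approximation error $\E[(\phi - \phi_\epsilon)\sigma^k]$ need not vanish in general. The cleanest fix is to derive from the definition of $\di^k$ the a.s.\ pointwise lower bound $\sigma \geq \E[v \given \sigfld^{k,\infty}] - \E[c^\infty - c^k \given \sigfld^{k,\infty}]$ for every $\sigma \in \di^k$ (by specializing the defining inequality to the policy that completes all remaining stages of inspection and then acquires the item), inferring the same lower bound for $\sigma^k$, and then handling the finite and $-\infty$ cases for $\E[\phi \sigma^k]$ separately---in the former using dominated convergence against $v^+$ and the derived lower envelope, and in the latter using the simple-function case directly to construct $\sigma \in \di^k_0$ with $\E[\phi \sigma]$ arbitrarily negative.
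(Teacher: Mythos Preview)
Your ``easy direction'' argument contains a genuine error. You claim $U = \{\sigma^k > \sigma\} \in \sigfld^{k,\infty}$, but members of $\di^k$ are only $\sigfld^{\infty,0}$-measurable (Definition~\ref{def:discourages}), not $\sigfld^{k,\infty}$-measurable, so $U$ need not lie in $\sigfld^{k,\infty}$ and \Cref{lem:sigmak} does not apply to it. In fact the pointwise inequality $\sigma^k \leq \sigma$ a.s.\ is \emph{false}: take $\sigfld^{k,\infty}$ trivial, a single remaining inspection stage with deterministic incremental cost $c^{k+1}-c^k=1$, and $v \in \{0,2\}$ with equal probability. Then $\sigma^k = 0$, yet $\sigma := v - 1$ belongs to $\di^k_0$ (the defining inequality reduces to $\E[\allocstg] \leq \E[c^\stg - c^k] = 1$) and equals $-1 < 0 = \sigma^k$ on the event $\{v=0\}$. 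The inequality $\E[\phi\sigma^k] \leq \E[\phi\sigma]$ genuinely depends on the $\sigfld^{k,\infty}$-measurability of $\phi$, which is why the paper obtains it set-by-set from \Cref{lem:sigmak} after writing $\phi$ as a simple function.

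Your proposed fix for the general-$\phi$ integrability issue also does not quite work: specializing the defining inequality to the ``inspect fully and acquire'' policy bounds $\E[\sigma \mid \sigfld^{k,\infty}]$ from below, not $\sigma$ pointwise, and the resulting envelope involves $\E[v \mid \sigfld^{k,\infty}]$, whose negative part need not be integrable under the paper's hypotheses. The paper avoids this entirely by rewriting the target as $\E[\phi(v^+ - \sigma^k)] = \sup\{\E[\phi(v^+ - \sigma)] : \sigma \in \di^k_0\}$; since $v^+ - \sigma \geq 0$ for $\sigma \in \di^k_0$, one can approximate $\phi$ from below by simple functions against the non-negative measure with density $v^+ - \sigma^k$, reducing cleanly to the simple-function case without any lower-envelope assumption.
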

\begin{proof}
As in Lemma~\ref{lem:sigmak}, the infimum is the same regardless of 
whether $\sigma$ ranges over $\di^k$ or $\di^k_0$. We will henceforth
work only with $\sigma \in \di^k_0$. 

First suppose that $\phi$ is a simple function. Then we may 
represent it as $\phi = \sum_{i=1}^n w_i \indic_{V_i}$ for
disjoint measurable sets $V_i \in \sigfld^{k,\infty}$ and weights
$w_i \in [0,M]$. In that case Lemma~\ref{lem:sigmak} yields
\[
   \E \left[ w_i \indic_{V_i} \sigma^k \right] = 
  \inf \left\{ \left. \E \left[ w_i \indic_{V_i} \sigma \right] \, \right| \,
                   \sigma \in \di^k_0 \right\}
\]
for $i=1,\ldots,n$. Summing over $i$ we may conclude that
\[
  \E \left[ \phi \sigma^k \right] = 
  \sum_{i=1}^n 
  \inf \left\{ \left. \E \left[ w_i \indic_{V_i} \sigma \right] \, \right| \,
                   \sigma \in \di^k_0 \right\} 
  \leq
  \inf \left\{ \left. \E \left[ \phi \sigma \right] \, \right| \,
                   \sigma \in \di^k_0 \right\}.
\]
To prove the reverse inequality, for any $\eps>0$
choose $\sigma_1,\ldots,\sigma_n \in \di^k_0$
such that  for $i=1,\ldots,n$,
$$
  \E \left[ w_i \indic_{V_i} \sigma \right] >
  \E \left[ w_i \indic_{V_i} \sigma_i \right] - \frac{\eps}{n}.
$$
The function $\sigma = \min \{ \sigma_1,\ldots,\sigma_n \}$
belongs to $\di^k_0$ and satisfies 
$ 
  \E \left[ \phi \sigma^k \right] > 
  \E \left[ \phi \sigma \right] - \eps.
$
As $\eps>0$ was arbitrarily small, we conclude that
$\E \left[ \phi \sigma^k \right] \geq 
\inf \left\{ \left. \E \left[ \phi \sigma \right] \, \right| \,
                   \sigma \in \di^k \right\}$
which completes the proof of the lemma for 
the special case of simple functions.

For the general case, we may use the equation
$$\E \left[ \phi \sigma \right] = \E \left[ \phi v^+ \right] - \E \left[ \phi (v^+ - \sigma) \right]$$
to see that the lemma is equivalent to the assertion that
\begin{equation} \label{eq:sigmak2.1}
  \E \left[ \phi (v^+ - \sigma^k) \right] =
  \sup \left\{ \E \left[ \phi (v^+ - \sigma) \right] \mid \sigma \in \di^k_0 \right\}.
\end{equation}
It will be more convenient to work with this form of the lemma
because the function $v^+ - \sigma$ is non-negative for $\sigma \in \di^k_0$.
In particular, the set function $\nu(U) = \int_{U} (v^+ - \sigma^k) \, d\mu$
is a non-negative measure on $(\Omega,\sigfld)$ and therefore
$$
  \E \left[ \phi (v^+ - \sigma) \right] = \int_U \phi \, d\nu
       = \sup \left\{ \int_U \phi' \, d\nu \mid 0 \leq \phi' \leq \phi, \phi' \mbox{ a simple function} \right\}.
$$
Letting $\mathcal{S}$ denote the set of simple functions $\phi'$ that satisfy 
$0 \leq \phi' \leq \phi$, and recalling that~\eqref{eq:sigmak2.1} was already
shown to hold for simple functions, we now find that
\begin{align*}
  \E \left[ \phi (v^+ - \sigma^k) \right] &= 
  \sup \left\{ \left.
            \E \left[ \phi' (v^+ - \sigma^k) \right] 
            \, \right| \,
            \phi' \in \mathcal{S} \right\} \\
&=
  \sup \left\{ \left.
            \E \left[ \phi' (v^+ - \sigma) \right] 
            \, \right| \,
            \phi' \in \mathcal{S}, \, \sigma \in \di^k_0 \right\} \\
&=
  \sup \left\{ \left.
            \E \left[ \phi (v^+ - \sigma) \right] 
            \, \right| \,
            \sigma \in \di^k_0  \right\},
\end{align*}
which completes the proof of~\eqref{eq:sigmak2.1}.
\end{proof}

As a first application of Lemma~\ref{lem:sigmak2}, we will prove
that $\sigma^k$ is $\sigfld^{k,0}$-measurable; this strengthens
Lemma~\ref{lem:sigmak}, which only asserts that 
$\sigma^k$ is $\sigfld^{k,\infty}$-measurable.
The fact that $\sigma^k$ is $\sigfld^{k,0}$-measurable 
can be informally summarized as stating that the value of 
the stage-$k$ strike price only depends
on information learned during the first $k$ stages of
inspection, not on information learned during the 
passage of ``clock time''. This property is intuitive,
since the information learned during the passage of 
clock time is conditionally independent of the bidder's
value and inspection costs. It wil also turn out to be an
important property of $\sigma^k$ when it comes to
analyzing equilibria of the descending-price auction.

\begin{lemma} \label{lem:k0-measurable}
The function $\sigma^k$ is $\sigfld^{k,0}$-measurable.
\end{lemma}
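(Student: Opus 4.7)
The plan is to show that $\sigma^k$ coincides almost surely with a manifestly $\sigfld^{k,0}$-measurable function, namely its conditional expectation $\sigma^k_{\ast} := \E[\sigma^k \mid \sigfld^{k,0}]$. By the uniqueness clause of \Cref{lem:sigmak}, it suffices to verify $\int_U \sigma^k_{\ast} \, d\mu = \int_U \sigma^k \, d\mu$ for every $U \in \sigfld^{k,\infty}$; replacing $\sigma^k$ by $\sigma^k_{\ast}$ on a null set then yields the required $\sigfld^{k,0}$-measurable representative.

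The key input is the conditional independence hypothesis~\eqref{eq:cond-indep} evaluated at $\tau = 0$: for any $U \in \sigfld^{k,\infty}$, one has $\E[\indic_U \mid \sigfld^{\infty,0}] = \E[\indic_U \mid \sigfld^{k,0}]$. Write $\phi_U$ for this common conditional probability, a $\sigfld^{k,0}$-measurable function valued in $[0,1]$. Because every $\sigma \in \di^k$ is $\sigfld^{\infty,0}$-measurable, the tower property yields $\E[\indic_U \, \sigma] = \E[\sigma \, \E[\indic_U \mid \sigfld^{\infty,0}]] = \E[\phi_U \, \sigma]$ for all such $\sigma$. Taking infima over $\di^k$ and applying \Cref{lem:sigmak2} twice, once with $\phi = \indic_U$ and once with $\phi = \phi_U$ (both are $\sigfld^{k,\infty}$-measurable and bounded in $[0,1]$), gives
\[
  \int_U \sigma^k \, d\mu = \E[\indic_U \, \sigma^k] = \inf_{\sigma \in \di^k} \E[\indic_U \, \sigma] = \inf_{\sigma \in \di^k} \E[\phi_U \, \sigma] = \E[\phi_U \, \sigma^k].
\]
On the other side, since $\sigma^k_{\ast}$ and $\phi_U$ are both $\sigfld^{k,0}$-measurable, a direct tower computation yields $\int_U \sigma^k_{\ast} \, d\mu = \E[\phi_U \, \sigma^k_{\ast}] = \E[\phi_U \, \E[\sigma^k \mid \sigfld^{k,0}]] = \E[\phi_U \, \sigma^k]$. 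Combining the two chains gives the desired equality of integrals.

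The only real obstacle is the mild technical nuisance that $\sigma^k$ may attain the value $-\infty$, so the conditional expectation defining $\sigma^k_{\ast}$ must be interpreted with care. This is handled by carrying out the above argument with $v^+ - \sigma^k \geq 0$ in place of $\sigma^k$: for every $\sigma \in \di^k_0$ we have $0 \leq v^+ - \sigma$ and $\E[v^+] < \infty$, so all conditional expectations and integrals against bounded $\sigfld^{k,\infty}$-measurable test functions are well-defined. Once this bookkeeping is absorbed, the substance of the proof is simply the interplay between the tower property and the conditional independence assumption, mediated by the infimum characterization of $\sigma^k$ supplied by \Cref{lem:sigmak2}.
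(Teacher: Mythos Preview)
Your proof is correct and follows essentially the same approach as the paper: define the conditional expectation $\E[\sigma^k \mid \sigfld^{k,0}]$, use \Cref{lem:sigmak2} together with the tower property and the conditional independence relation~\eqref{eq:cond-indep} to show it has the same integrals as $\sigma^k$ over every $U \in \sigfld^{k,\infty}$, then invoke uniqueness. Your explicit invocation of~\eqref{eq:cond-indep} to pass from $\E[\indic_U \mid \sigfld^{\infty,0}]$ to $\E[\indic_U \mid \sigfld^{k,0}]$ is in fact slightly more careful than the paper's own wording at the corresponding step.
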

\begin{proof}
Let $\hat{\sigma}^k = \E \left[ \sigma^k \given \sigfld^{k,0} \right]$.
By construction, $\hat{\sigma}^k$ is $\sigfld^{k,0}$-measurable.
Let us prove that it satisfies~\eqref{eq:def-sigmak} for every 
$U \in \sigfld^{k,\infty}$. We will make repeated use of the following
identity: if $f$ is $\sigfld^{k,0}$-measurable and $g$ is $\sigfld$-measurable
then
\begin{equation} \label{eq:measurable.1}
  \E \left[ f g \right] = 
  \E \left[ \E \left[ f g \given \sigfld^{k,0} \right] \right] =
  \E \left[ f \E \left[ g \given \sigfld^{k,0} \right] \right].
\end{equation}

Given any $U \in \sigfld^{k,\infty}$, let 
$\phi = \E \left[ \indic_U \given \sigfld^{k,0} \right]$.
Recalling that every function in $\di^k$ is $\sigfld^{k,0}$-measurable,
as is $\hat{\sigma}^k$, we find that
\begin{align*}
  \int_U \hat{\sigma}^k \, d\mu &= \E \left[ \indic_U \hat{\sigma}^k \right] 
  = \E \left[ \phi \hat{\sigma}^k \right] = \E \left[ \phi \sigma^k \right] \\
  &= \inf \left\{ \E \left[ \phi \sigma \right] \mid \sigma \in \di^k_0 \right\} 
    = \inf \left\{ \E \left[ \indic_U \sigma \right] \mid \sigma \in \di^k_0 \right\}
    = \inf \left\{ \left. \int_U \sigma \, d\mu \, \right| \, \sigma \in \di^k_0 \right\},
\end{align*}
where we have applied~\eqref{eq:measurable.1} twice on the first line
and once on the second line.
Since $\hat{\sigma}^k$ satisfies~\eqref{eq:def-sigmak}, the
uniqueness assertion in Lemma~\ref{lem:sigmak} implies that
$\sigma^k = \hat{\sigma}^k$ almost everywhere, and 
consequently (possibly after modifying the values of $\sigma^k$
on a measure-zero set) we may conclude that $\sigma^k$
is $\sigfld^{k,0}$-measurable.
\end{proof}

Related to the issue of $\sigfld^{k,0}$-measurability, we have
the following definition and lemma, which address the question
of when it is possible to simulate an arbitrary inspection policy
with one that performs all of its inspection at time 0.

\begin{definition} 
An inspection policy $\stg$ is {\em prompt} if 
all of its inspection is performed at time 0, 
i.e.~$\stg(\omega,\tau) = \stg(\omega,0)$ for all
$\omega \in \Omega, \tau \in \reals_+$.
\end{definition}

In the following ``prompt simulation lemma'',
the probability space $\overline{\Omega} = \Omega \times [0,1]$ 
is equipped with the product probability measure $\mu \times m$
where $m$ denotes Lebesgue measure on $[0,1]$. We think of
a sample point $(\omega,x) \in \overline{\Omega}$ as consisting
of a sample point $\omega$ from the original probability space,
along with an independent ``random seed'' $x \in [0,1]$ which
may be used for defining a randomized inspection policy. For each $k,\tau$
there are two relevant $\sigma$-fields on $\overline{\Omega}$:
$\overline{\sigfld}^{k,\tau}$ is  the product of $\sigfld^{k,\tau}$ with the Borel
$\sigma$-field on $[0,1]$, whereas $\underline{\sigfld}^{k,\tau}$ is
the $\sigma$-field of all sets of the form $U \times [0,1]$ for
$U \in \sigfld^{k,\tau}$. Thus, a $\overline{\sigfld}^{k,\tau}$-measurable
function is allowed to depend on the random seed $x$, whereas a
$\underline{\sigfld}^{k,\tau}$-measurable function may only depend 
on $\omega$. Note that for every $\sigfld^{k,\tau}$-measurable
function $f$ on $\Omega$ there is a corresponding
$\underline{\sigfld}^{k,\tau}$-measurable function $\underline{f}$
on $\overline{\Omega}$, defined by 
$\underline{f}(\omega,x) = f(\omega)$. In a 
slight abuse of notation, we will ignore the 
distinction between $f$ and $\underline{f}$.

\begin{lemma}[Prompt Simulation Lemma]
\label{lem:prompt}
If $\stg$ is any inspection policy, then there is a prompt
inspection policy $\altstg$ on $\left(\overline{\Omega}, 
\left( \overline{\sigfld}^{k,\tau} \right) \right)$ such that
for all $k \in \mathbb{N} \cup \{\infty\}$,
\begin{equation} \label{eq:prompt.1}
  \Pr \left[ \stg(\infty)=k \given \sigfld^{\infty,0} \right]
=
  \Pr \left[ \altstg(\infty)=k \given \underline{\sigfld}^{\infty,0} \right]
\end{equation}
It follows that 
\begin{equation} \label{eq:prompt.2}
  \E \left[ \allocstg v - c^\stg \given \sigfld^{\infty,0} \right] 
=
  \E \left[ \alloc{\altstg} v - c^{\altstg} \given \underline{\sigfld}^{\infty,0} \right]
\end{equation}
and, for any $\sigma \in \sigfld^{\infty,0}$, 
\begin{equation} \label{eq:prompt.3}
  \E \left[ \allocstg \sigma \given \sigfld^{\infty,0} \right] =
  \E \left[ \alloc{\altstg} \sigma \given \underline{\sigfld}^{\infty,0} \right],
\end{equation}
\end{lemma}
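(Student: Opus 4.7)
The plan is to couple $\stg$ with a prompt policy on $\overline{\Omega}$ that uses the random seed $x$ to simulate ``up front'' all of the continuation decisions that $\stg$ would otherwise spread out across clock time. For each $k \in \mathbb{N}$ set
\[
  q_k(\omega) \;=\; \Pr\!\left[ \stg(\infty) > k \,\middle|\, \sigfld^{k,0} \right]\!(\omega),
\]
which is $\sigfld^{k,0}$-measurable by the definition of conditional expectation. Because $\{\stg(\infty) > k\} = \bigcup_\tau \{\stg(\tau) > k\}$ lies in $\sigfld^{k,\infty}$, the conditional independence relation~\eqref{eq:cond-indep} applied at $\tau = 0$ shows that the same $q_k$ is also a version of $\Pr[\stg(\infty) > k \mid \sigfld^{\infty,0}]$. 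The $q_k$ are pointwise non-increasing in $k$ with limit $q_\infty(\omega) = \Pr[\stg(\infty) = \infty \mid \sigfld^{\infty,0}]$, and $r(\omega) := \Pr[\allocstg = 1 \mid \sigfld^{\infty,0}]$ is a $\sigfld^{\infty,0}$-measurable function with $0 \leq r \leq q_\infty$ almost surely, since $\{\allocstg = 1\} \subseteq \{\stg(\infty) = \infty\}$.

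Next I would split $x \in [0,1]$ into two independent uniform variables $x_1,x_2 \in [0,1]$ (for instance, via disjoint binary digits) and define $\altstg$ to be the prompt policy whose final inspection stage is
\[
  \altstg(\omega,x,\tau) \;=\; \inf\{\, k \geq 0 : x_1 \geq q_k(\omega) \,\} \qquad \text{for all }\tau \geq 0,
\]
with the convention $\inf\emptyset = \infty$, and whose allocation indicator is $\alloc{\altstg}(\omega,x,\tau) = 1$ iff $\altstg(\infty) = \infty$ and $x_2\, q_\infty(\omega) < r(\omega)$ (vacuously false on $\{q_\infty = 0\}$). Each event $\{\altstg(\tau) > k\} = \{x_1 < q_k(\omega)\}$ lies in $\overline{\sigfld}^{k,0} \subseteq \overline{\sigfld}^{k,\tau}$ because $q_k$ is $\sigfld^{k,0}$-measurable; the nesting of the $q_k$ makes the infimum well-defined; and property~3 holds by construction. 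Hence $\altstg$ is a valid prompt inspection policy on $(\overline{\Omega}, \{\overline{\sigfld}^{k,\tau}\})$.

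Equation~\eqref{eq:prompt.1} is then immediate: conditioning on $\omega$ removes every source of randomness except the independent uniform $x_1$, so
$\Pr[\altstg(\infty) > k \mid \underline{\sigfld}^{\infty,0}] = q_k(\omega) = \Pr[\stg(\infty) > k \mid \sigfld^{\infty,0}]$, and differencing in $k$ yields the claim. The analogous argument applied to $x_2$ shows that $\Pr[\alloc{\altstg} = 1 \mid \underline{\sigfld}^{\infty,0}] = r(\omega) = \Pr[\allocstg = 1 \mid \sigfld^{\infty,0}]$ (both sides equal zero when $q_\infty = 0$). Because $v$, each $c^k$, and any $\sigma \in \sigfld^{\infty,0}$ are all $\sigfld^{\infty,0}$-measurable, and because $\allocstg = 1$ forces $\stg(\infty) = \infty$---so it suffices to reproduce the marginal conditional probabilities of the final stage and of the allocation indicator, not their joint distribution with other covariates---equations~\eqref{eq:prompt.2} and~\eqref{eq:prompt.3} follow by pulling the $\sigfld^{\infty,0}$-measurable factors outside the conditional expectations on both sides and invoking the matching just established.

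The main obstacle is the measurability bookkeeping---in particular the invocation of the conditional independence property~\eqref{eq:cond-indep} needed to justify that $q_k$ admits a $\sigfld^{k,0}$-measurable version (and not merely a $\sigfld^{k,\infty}$-measurable one), since this is exactly what allows the prompt policy's stage-$k$ continuation decision to obey property~2 of the inspection-policy definition. The boundary cases ($q_\infty = 0$, where $r$ is automatically zero, and the consistent simultaneous definition of $\altstg(\infty)$ across the nested events $\{x_1 < q_k\}$) are routine but must be handled explicitly.
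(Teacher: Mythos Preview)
Your construction is correct and follows essentially the same idea as the paper's: define the prompt policy by thresholding a uniform seed against the conditional survival probabilities $q_k = \Pr[\stg(\infty)>k \mid \sigfld^{\infty,0}]$, and use the conditional-independence relation~\eqref{eq:cond-indep} to exhibit a $\sigfld^{k,0}$-measurable version of $q_k$ so that property~2 of an inspection policy is satisfied. The paper defines $\altstg((\omega,x),\tau)$ as the greatest $k$ with $\Pr[\stg(\infty)\geq k \mid \sigfld^{\infty,0}] \geq x$, which is the same threshold construction up to the harmless choice of strict versus weak inequality.

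Where you genuinely differ is in the treatment of the allocation indicator. You split the seed into independent uniforms $x_1,x_2$ and use $x_2$ to match $\Pr[\allocstg=1\mid \sigfld^{\infty,0}]$ separately from $\Pr[\stg(\infty)=\infty\mid \sigfld^{\infty,0}]$. The paper's proof never explicitly defines $\alloc{\altstg}$ and, in deriving the analogues of~\eqref{eq:prompt.2}--\eqref{eq:prompt.3}, writes $\E[\allocstg v\mid\sigfld^{\infty,0}]=\Pr[\stg(\infty)=\infty\mid\sigfld^{\infty,0}]\cdot v$, which tacitly identifies $\{\allocstg=1\}$ with $\{\stg(\infty)=\infty\}$. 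Since the model only requires $\allocstg=1\Rightarrow\stg(\infty)=\infty$ (not the converse), your two-seed device is the more careful route and actually patches a small gap in the paper's argument. One minor point to tighten: the $q_k$ are a.s.\ non-increasing in $k$ (as conditional probabilities of nested events given $\sigfld^{\infty,0}$), not automatically pointwise non-increasing; to make the $\inf$-definition of $\altstg$ clean you should state that you pass to versions with $q_{k+1}\leq q_k$ everywhere, e.g.\ by replacing $q_k$ with $\min_{j\leq k} q_j$.
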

\begin{proof}
The inspection policy $\altstg$ on 
$\overline{\Omega} = \Omega \times [0,1]$
is defined as follows. For any sample point
$(\omega,x) \in \overline{\Omega}$ and
any $\tau \in \reals_+$ we define 
$\altstg((\omega,x),\, \tau)$ to be the
greatest $k \in \mathbb{N} \cup \{\infty\}$ 
such that
$$
  \Pr \left[ \stg(\infty) \geq  k \given \sigfld^{\infty,0} \right] \geq x.
$$
Let us first verify that $\altstg$ satisfies
the definition of a prompt inspection policy,
and then verify that~\eqref{eq:prompt.1}
holds. Promptness of $\altstg$ is trivial,
since the definition of $\altstg((\omega,x), \, \tau)$
has no dependence on $\tau$, so we only need
to check that for all $k$,
$\{ (\omega,x) \mid \altstg((\omega,x), 0) > k\}$
is $\overline{\sigfld}^{k,0}$-measurable. 
From the definition of $\altstg$ we see that 
$\altstg((\omega,x), \, 0) > k$ holds if and only if the inequality
$\Pr \left[ \stg(\infty) > k \given \sigfld^{\infty,0} \right] \geq x$
holds at $\omega$. Let $h(\omega)$ denote the conditional
probability $\Pr \left[ \stg(\infty) > k \given \sigfld^{\infty,0} \right]$.
By definition $h$ is $\sigfld^{\infty,0}$-measurable; we claim that
it is, in fact, $\sigfld^{k,0}$-measurable. Indeed,
the set $E = \{ \omega \in \Omega \mid
\stg(\omega,\infty) > k \}$ belongs to $\sigfld^{k,\infty}$ by the
definition of an inspection policy. Applying
the conditional independence relation~\eqref{eq:cond-indep},
we find that
$$
  \Pr \left[ \stg(\infty) > k \given \sigfld^{\infty,0} \right] = 
  \Pr \left[ \stg(\infty) > k \given \sigfld^{k,0} \right],
$$
so $h(\omega)$ can be equivalently defined as
$\Pr \left[ \stg(\infty) > k \given \sigfld^{k,0} \right]$,
from which it is clear that $h$ is $\sigfld^{k,0}$-measurable.
Consequently the function $h(\omega)-x$
is $\overline{\sigfld}^{k,0}$-measurable, and the set
$W = \{ (\omega,x) \mid h(\omega) - x \geq 0 \}$ belongs
to $\sigfld^{k,0}$. Recalling that $W$ is also the set of
$(\omega,x)$ such that $\altstg((\omega,x), \, 0) > k$,
we see from the relation $W \in \sigfld^{k,0}$ that $\altstg$ 
satisfies the definition of an inspection policy.

To verify~\eqref{eq:prompt.1} recall that the event
$\altstg(\infty) > k$ holds at $(\omega,x)$ if
and only if $h(\omega) - x \geq 0$. As $x$ is independent of $\omega$
and is uniformly distributed in $[0,1]$, the conditional probability
of this event given $\omega$ is simply $h(\omega)$. Thus,
\begin{align}
\nonumber
  \Pr \left[ \altstg(\infty) > k \given \underline{\sigfld}^{\infty,0} \right] 
&=
  \E \left[ \Pr \left[ \altstg(\infty) > k \given \omega \right]
               \, \given \, \underline{\sigfld}^{\infty,0} \right] \\
&=
  \E \left[ h(\omega) \given \underline{\sigfld}^{\infty,0} \right] 
= h(\omega) =
  \Pr \left[ \stg(\infty) > k \given \sigfld^{\infty,0} \right].
 \label{eq:prompt.4}
\end{align}
To derive~\eqref{eq:prompt.1} we simply instantiate 
equation~\eqref{eq:prompt.4} at $k-1$ and $k$, and subtract.

Finally, to verify equations~\eqref{eq:prompt.2}-\eqref{eq:prompt.3}
from the lemma statement,
recall that $v, \sigma,$ and $\{c^k\}_{k \in \mathbb{N} \cup \{\infty\}}$
are $\sigfld^{\infty,0}$-measurable. We therefore have the
equations
\begin{align}
\label{eq:prompt.5a}
  \E \left[ \allocstg v \given \sigfld^{\infty,0} \right]
&=
  \Pr \left[ \stg(\infty)=\infty \given \sigfld^{\infty,0} \right] \cdot v \\
\label{eq:prompt.5b}
  \E \left[ c^\stg \given \sigfld^{\infty,0}  \right] 
&= 
  \sum_{k=0}^{\infty} \Pr \left[ \stg(\infty) = k \given \sigfld^{\infty,0} \right] \cdot c^k \\
\label{eq:prompt.5c}
  \E \left[ \allocstg \sigma \given \sigfld^{\infty,0}  \right]
&=
  \Pr \left[ \stg(\infty)=\infty \given \sigfld^{\infty,0} \right] \cdot \sigma,
\end{align}
Viewing $v, \sigma, \{c^k\}$ as functions defined on $\overline{\Omega}$,
they are $\underline{\sigfld}^{\infty,0}$-measurable, so the same
reasoning justifies
\begin{align}
\label{eq:prompt.6a}
  \E \left[ \alloc{\altstg} v \given \underline{\sigfld}^{\infty,0} \right] &=
  \Pr \left[ \altstg(\infty)=\infty \given \underline{\sigfld}^{\infty,0} \right] \cdot v \\
\label{eq:prompt.6b}
  \E \left[ c^\altstg \given \underline{\sigfld}^{\infty,0} \right] 
&= 
  \sum_{k=0}^{\infty} \Pr \left[ \altstg(\infty) = k \given \underline{\sigfld}^{\infty,0} \right] \cdot c^k \\
\label{eq:prompt.6c}
  \E \left[ \alloc{\altstg} \sigma \given \underline{\sigfld}^{\infty,0} \right]
&=
  \Pr \left[ \altstg(\infty)=\infty \given \underline{\sigfld}^{\infty,0} \right] \cdot \sigma.
\end{align}
Using~\eqref{eq:prompt.1} we find that the right sides 
of~\eqref{eq:prompt.5a}-\eqref{eq:prompt.5c} are equal
to the right sides of the corresponding equations~\eqref{eq:prompt.6a}-\eqref{eq:prompt.6c},
which concludes the proof of~\eqref{eq:prompt.2}-\eqref{eq:prompt.3}.
\end{proof}

Given that $\sigma^k$ is informally defined as the value of an outside option
that makes the bidder indifferent between stopping and 
inspecting at stage $k$, it is intuitive that $\sigma^k$
discourages inspection at stage $k$. The following lemma substantiates
this intuition.

\begin{lemma} \label{lem:strike-discourages}
The random variable $\sigma^k$ discourages inspection at stage $k$.
\end{lemma}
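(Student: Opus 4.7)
The plan is to fix a policy $\stg$ with $\stg(\infty)>k$ pointwise,
set
\[
A = \E\left[\allocstg v - c^\stg \,\|\, \sigfld^{k,\infty}\right],
\qquad
B^{\sigma} = \E\left[\allocstg \sigma - c^k \,\|\, \sigfld^{k,\infty}\right]
\]
(for any $\sigfld^{k,\infty}$-measurable $\sigma$), and show $A\leq B^{\sigma^k}$ almost
everywhere. The definition of $\di^k$ already gives $A\leq B^{\sigma}$ a.e.\ for every
$\sigma\in\di^k$, so the task is to transfer this family of pointwise inequalities to $\sigma^k$
itself, using the characterization of $\sigma^k$ from \Cref{lem:sigmak} and \Cref{lem:sigmak2}.

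Suppose for contradiction that $W = \{A > B^{\sigma^k}\}$ has positive measure. Both $A$ and
$B^{\sigma^k}$ are $\sigfld^{k,\infty}$-measurable, so $W\in\sigfld^{k,\infty}$, and I may
integrate the violating inequality over $W$. The core trick is to introduce the test function
\[
\phi \;=\; \indic_W \cdot \E\bigl[\allocstg \,\|\, \sigfld^{k,\infty}\bigr],
\]
which is $\sigfld^{k,\infty}$-measurable and takes values in $[0,1]$, hence is admissible in
\Cref{lem:sigmak2}. A short computation using the tower property shows that for any
$\sigfld^{k,\infty}$-measurable $\sigma$,
\[
\int_W B^{\sigma}\, d\mu \;=\; \E\bigl[\indic_W \allocstg \sigma\bigr] - \int_W c^k\, d\mu
\;=\; \E[\phi\,\sigma] - \int_W c^k\, d\mu .
\]

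Now apply \Cref{lem:sigmak2} to obtain
$\E[\phi\,\sigma^k] = \inf\{\E[\phi\,\sigma]:\sigma\in\di^k_0\}$, which rephrases as
$\int_W B^{\sigma^k}\, d\mu = \inf_{\sigma\in\di^k_0} \int_W B^{\sigma}\, d\mu$. Since
$\di^k_0\subseteq\di^k$, each $\sigma\in\di^k_0$ satisfies $A\leq B^{\sigma}$ a.e., so
$\int_W A\, d\mu \leq \int_W B^{\sigma}\, d\mu$ for every such $\sigma$, and passing to the
infimum gives $\int_W A\, d\mu \leq \int_W B^{\sigma^k}\, d\mu$. But on $W$ we have
$A>B^{\sigma^k}$, so $\int_W A\, d\mu > \int_W B^{\sigma^k}\, d\mu$, contradicting the previous
inequality. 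Hence $\mu(W)=0$ and $\sigma^k$ discourages inspection at stage $k$.

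The main obstacle is conceptual rather than computational: the defining property of $\sigma^k$
is stated in an averaged, Radon--Nikodym-style form (infimum inside integrals against indicators
or against bounded $\sigfld^{k,\infty}$-measurable $\phi$), whereas the conclusion of the lemma
is a pointwise-a.e.\ inequality involving the random quantity $\allocstg$, which is not
$\sigfld^{k,\infty}$-measurable. The choice of test function $\phi=\indic_W\E[\allocstg\,\|\,\sigfld^{k,\infty}]$
is precisely what is needed to project $\allocstg$ onto $\sigfld^{k,\infty}$ and convert the
averaged characterization of $\sigma^k$ into the desired pointwise inequality. One small
bookkeeping item is to verify that all integrals are finite (so that the contradiction makes
sense), but this follows from the standing assumptions $\E[v^+]<\infty$ and $\E[c^\infty]<\infty$
together with $\sigma^k\leq v^+$ a.e., which is immediate from $v^+\in\di^k_0$ and the defining
infimum in \Cref{lem:sigmak}.
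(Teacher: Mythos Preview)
Your proposal is correct and takes essentially the same approach as the paper. Both arguments introduce the test function $\phi = \indic_{V}\,\E[\allocstg \mid \sigfld^{k,\infty}]$ (with $V=W$ in your case), invoke \Cref{lem:sigmak2} to pass from the infimum over $\di^k$ to $\sigma^k$, and use the tower property to undo the projection of $\allocstg$; the only difference is that the paper proceeds directly (showing the integrated inequality for every $V\in\sigfld^{k,\infty}$) while you package the same computation as a proof by contradiction on the violation set $W$.
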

\begin{proof}
For all $V \in \sigfld^{k,\infty}$,
\begin{align*}
\E \left[ \indic_V \left( \allocstg v - c^\stg \right) \right]
& \leq
\inf \left\{ \left. \E \left[ \indic_V \left( \allocstg \sigma
                               - c^k \right) \right] \, \right| \,
       \sigma \in \di^k \right\} \\
& = 
\inf \left\{ \left. \E \left[ \indic_V \left( 
                  \E \left[ \allocstg \given \sigfld^{k,\infty} \right] \,
	       \sigma - c^k \right) \right] \, \right| \,
       \sigma \in \di^k \right\} \\
& =
\E \left[ \indic_V \left( 
                  \E \left[ \allocstg \given \sigfld^{k,\infty} \right] \,
	       \sigma^k - c^k \right) \right] \\
& = 
\E \left[ \indic_V \left( \allocstg \sigma^k - c^k \right) \right],
\end{align*}
where the first line used the definition of $\di^k$, the second and
fourth lines are applications of the law of iterated conditional expectation,
and the third line uses Lemma~\ref{lem:sigmak2}. Since $V$ was an
arbitrary element of $\sigfld^{k,\infty}$ we conclude that
$\E \left[ \allocstg v - c^\stg \given \sigfld^{k,\infty} \right]
  \leq \E \left[ \allocstg \sigma^k - c^k \given \sigfld^{k,\infty} \right],$
i.e., $\sigma^k$ discourages inspection at stage $k$.
\end{proof}

\begin{lemma} \label{lem:continuation}
Suppose $k$ is a natural number  and 
$\altstg, \stg$ are any two inspection policies that
satisfy $k \leq \altstg(\infty) \leq \stg(\infty)$ pointwise,
and suppose that $\alloc{\altstg}=\allocstg$ holds
at every sample point where $\altstg(\infty)=\stg(\infty)$.
Let $\sigma^{\altstg}$ denote a random variable 
whose value is equal to $\sigma^{\altstg(\infty)}$
if $\altstg(\infty) < \infty$ and equal to $v$ otherwise.
Then we have
\begin{equation} \label{eq:continuation}
  \E \left[ \allocstg v - c^\stg \given \sigfld^{k,\infty} \right] \leq
  \E \left[ \alloc{\altstg} v - c^\altstg \given \sigfld^{k,\infty} \right] + 
  \E \left[ \left( \allocstg - \alloc{\altstg} \right) \sigma^\altstg \given \sigfld^{k,\infty} \right].
\end{equation}
\end{lemma}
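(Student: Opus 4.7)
The plan is to prove the inequality by partitioning $\Omega$ according to the value of the random variable $\altstg(\infty)$ and reducing, on each cell of the partition, to Lemma \ref{lem:strike-discourages}. Let $A = \{\altstg(\infty) = \stg(\infty)\}$ and, for each natural number $k' \geq k$, let $B_{k'} = \{\altstg(\infty) = k' < \stg(\infty)\}$. Using property~2 of an inspection policy one checks that $\{\altstg(\infty) > \ell\} = \bigcup_{\tau} \{\altstg(\cdot,\tau)>\ell\} \in \sigfld^{\ell,\infty}$ for every $\ell \in \mathbb{N}$, from which it follows that $B_{k'} \in \sigfld^{k',\infty}$; moreover, the sets $A, B_k, B_{k+1}, \ldots$ form a measurable partition of $\Omega$. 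On $A$ the hypothesis gives $\allocstg = \alloc{\altstg}$ and $c^\stg = c^\altstg$, so the stated inequality collapses pointwise to an equality on $A$ (the $\sigma^{\altstg}$ contribution vanishes). On $A^c$ we have $\altstg(\infty) < \stg(\infty) \leq \infty$, so $\altstg(\infty)$ is finite and thus $\alloc{\altstg} = 0$ by property~3.

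Fix $k' \geq k$ and introduce the ``truncated continuation''
\[
\stg^{k'}(\omega,\tau) = \max\{\stg(\omega,\tau),\, k'+1\}, \qquad \alloc{\stg^{k'}} = \allocstg.
\]
This is an inspection policy: $\{\stg^{k'}(\cdot,\tau) > \ell\}$ is either all of $\Omega$ (when $\ell < k'+1$) or $\{\stg(\cdot,\tau) > \ell\} \in \sigfld^{\ell,\tau}$ (when $\ell \geq k'+1$), so the adaptedness requirement holds; moreover $\stg^{k'}(\infty) \geq k'+1 > k'$ pointwise. On $B_{k'}$ we have $\stg(\infty) \geq k'+1$, hence $\stg^{k'}$ agrees with $\stg$ on $B_{k'}$, giving $\alloc{\stg^{k'}} = \allocstg$ and $c^{\stg^{k'}} = c^{\stg}$ on $B_{k'}$. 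Lemma \ref{lem:strike-discourages} applied to $\stg^{k'}$ yields
\[
\E\bigl[\alloc{\stg^{k'}} v - c^{\stg^{k'}} \bigm\| \sigfld^{k',\infty}\bigr] \;\leq\; \E\bigl[\alloc{\stg^{k'}} \sigma^{k'} - c^{k'} \bigm\| \sigfld^{k',\infty}\bigr].
\]
Multiplying by the $\sigfld^{k',\infty}$-measurable indicator $\indic_{B_{k'}}$ and substituting the identities $\sigma^{k'} = \sigma^{\altstg}$, $c^{k'} = c^{\altstg}$, $\allocstg = \allocstg - \alloc{\altstg}$, and $\alloc{\altstg} v = 0$ that hold on $B_{k'}$, I obtain
\[
\E\bigl[\indic_{B_{k'}}(\allocstg v - c^{\stg}) \bigm\| \sigfld^{k',\infty}\bigr] \;\leq\; \E\bigl[\indic_{B_{k'}}\bigl(\alloc{\altstg} v - c^{\altstg} + (\allocstg - \alloc{\altstg})\sigma^{\altstg}\bigr) \bigm\| \sigfld^{k',\infty}\bigr].
\]

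To conclude, I take a further conditional expectation with respect to the coarser $\sigma$-field $\sigfld^{k,\infty} \subseteq \sigfld^{k',\infty}$ via the tower law, sum the resulting inequalities over $k' \geq k$, and add the trivial equality on $A$. The identities $\sum_{k' \geq k} \indic_{B_{k'}} \sigma^{k'} = \indic_{A^c} \sigma^{\altstg}$ and $\sum_{k' \geq k} \indic_{B_{k'}} c^{k'} = \indic_{A^c} c^{\altstg}$, together with $\indic_A (\allocstg - \alloc{\altstg}) = 0$, reassemble the right-hand side into the claimed form; the interchange of sum and conditional expectation is justified by dominated convergence using the integrability hypotheses $\E[v^+] < \infty$ and $\E[c^\infty] < \infty$.

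The main technical obstacle is the construction of $\stg^{k'}$ as a bona fide inspection policy. The naive attempt to define it by cases based on membership in $B_{k'}$ fails, because $B_{k'}$ lies only in $\sigfld^{k',\infty}$ and such a case split would violate the adaptedness condition $\{\stg^{k'}(\cdot,\tau)>\ell\} \in \sigfld^{\ell,\tau}$ at stages $\ell < k'$. The pointwise-maximum construction circumvents this precisely because for $\ell < k'+1$ the relevant event is all of $\Omega$, so no information from finer $\sigma$-fields is needed to witness that $\stg^{k'}$ has advanced past stage $\ell$, while agreement with $\stg$ on $B_{k'}$ is nevertheless preserved.
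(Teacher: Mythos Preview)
Your proof is correct and follows essentially the same approach as the paper: partition according to the value of $\altstg(\infty)$, invoke the fact that $\sigma^{m}$ discourages inspection at stage $m$ on each piece $B_{m}$ (the paper's $V_m$), then tower down to $\sigfld^{k,\infty}$ and sum. Your explicit construction of the auxiliary policy $\stg^{k'} = \max\{\stg,\, k'+1\}$ is in fact more careful than the paper's argument, which applies the discourages-inspection inequality directly to $\stg$ without addressing that $\stg(\infty) > m$ holds only on $V_m$ rather than pointwise as the definition requires.
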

Note that the lemma generalizes Lemma~\ref{lem:strike-discourages} 
because in the case $\altstg(\infty) \equiv k$ we have
$\alloc{\altstg} \equiv 0,$ so the right side of~\eqref{eq:continuation}
equals $\E \left[ \allocstg \sigma^k - c^k \given \sigfld^{k,\infty} \right]$.
\begin{proof}
The set of sample points where $\stg(\infty) > \altstg(\infty)$
can be partitioned into sets $V_m \in \sigfld^{m,\infty}$ where
$\stg(\infty) > \altstg(\infty) = m \geq k$. On each $V_m$ we have
\begin{equation} \label{eq:continuation.1}
  \E \left[ \indic_{V_m} ( \allocstg v - c^\stg ) 
         \given \sigfld^{k,\infty} \right]
\leq
  \E \left[ \indic_{V_m} ( \allocstg \sigma^{m} - c^{m}) 
         \given \sigfld^{k,\infty} \right]
=
  \E \left[ \indic_{V_m} \left( \allocstg \sigma^{\altstg} - c^{\altstg} 
          \right) \given \sigfld^{k,\infty} \right]
\end{equation}
where the first inequality is because $\sigma^m$ 
discourages inspection at stage $m$. 
Now,
\allowdisplaybreaks
\begin{align*}
  \E \left[ \allocstg v - c^\stg \given \sigfld^{k,\infty} \right] &=
  \E \left[ \indic_{\Omega \setminus V} 
              \left( \allocstg v - c^\stg \right) 
              \given \sigfld^{k,\infty} \right] +
  \sum_{m=k}^{\infty} 
  \E \left[ \indic_{V_m} \left( \allocstg v - c^\stg \right)
       \given \sigfld^{k,\infty} 
      \right] \\
& =
  \E \left[ \indic_{\Omega \setminus V}
              \left( \alloc{\altstg} v - 
                 c^{\altstg} \right)
       \given \sigfld^{k,\infty} \right] +
  \sum_{m=k}^{\infty}
  \E \left[ \indic_{V_m} \left( \allocstg v - c^\stg \right)
       \given \sigfld^{k,\infty} \right] \\
& =
  \E \left[ \alloc{\altstg} v  \given \sigfld^{k,\infty} \right] -
  \E \left[ \indic_{\Omega \setminus V} 
              c^{\altstg}  \given \sigfld^{k,\infty} \right] -
  \sum_{m=k}^{\infty}
  \E \left[ \indic_{V_m} c^{\altstg}  \given \sigfld^{k,\infty} \right] \\
& \qquad +
  \sum_{m=k}^{\infty}
  \E \left[ \indic_{V_m} \left( \allocstg v - c^\stg 
             + c^{\altstg} \right)  \given \sigfld^{k,\infty} \right] \\
& =
  \E \left[ \alloc{\altstg} v -
              c^{\altstg}  \given \sigfld^{k,\infty} \right] +
  \sum_{m=k}^{\infty}
  \E \left[ \indic_{V_m} \left( \allocstg v - c^\stg 
             + c^{\altstg} \right)  \given \sigfld^{k,\infty} \right] \\
& \leq
  \E \left[ \alloc{\altstg} v - c^{\altstg}  \given \sigfld^{k,\infty} \right]
  + \sum_{m=k}^{\infty}
  \E \left[ \indic_{V_m} \allocstg \sigma^{\altstg}  \given \sigfld^{k,\infty} \right] 
  \qquad   \mbox{by inequality~\eqref{eq:continuation.1}} \\
& =
  \E \left[ \alloc{\altstg} v - c^\altstg  \given \sigfld^{k,\infty} \right] +
  \E \left[ \left( \allocstg - \alloc{\altstg} \right) \sigma^{\altstg}  \given \sigfld^{k,\infty}  \right] 
\end{align*}
which completes the proof.
\end{proof}

Informally, we defined $\sigma^k$ as the value of an outside option 
that makes the bidder indifferent between undertaking at least one more
stage of inspection, or ceasing immediately at stage $k$. Thus far, in 
Lemma~\ref{lem:strike-discourages} we have shown that the bidder 
does not strictly prefer to undertake at least one more stage of 
inspection. We now turn to showing the opposite inequality: that the
bidder does not strictly prefer to cease inspection immediately. We prove
this fact in two steps: Lemma~\ref{lem:patient} shows that if the
value of the outside option is decreased by any constant $\delta>0$
then the bidder strictly prefers to continue advancing the inspection
stage beyond $k$; Lemma~\ref{lem:indifferent} shows that even
if the outside option value is {\em exactly} $\sigma^k$ then
there is an inspection policy that always advances beyond stage $k$
and is no worse than claiming the outside option after exactly $k$ 
stages of inspection.

\begin{lemma} \label{lem:patient}
For all $k \geq 0$ and $\delta>0$, there exists a prompt inspection
policy $\altstg$ such that $\altstg(\omega,\tau) > k$ 
for all $\omega,\tau$ and
\begin{equation} \label{eq:patient}
   \E \left[ \alloc{\altstg} v - c^{\altstg} 
               \given \sigfld^{k,\infty} \right]  \geq
   \E \left[ \alloc{\altstg} (\sigma^k - \delta) - c^k
               \given \sigfld^{k,\infty} \right] 
\end{equation}
almost everywhere.
\end{lemma}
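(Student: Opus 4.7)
My plan is to prove the lemma in three conceptual steps: first, show that $\sigma^k - \delta \notin \di^k$; second, upgrade this to a local existence statement that for every measurable piece of $\Omega$ some policy witnesses the desired inequality there; third, exhaust $\Omega$ by iteratively patching such local witnesses into a single policy, and finally promote it to a prompt policy via the Prompt Simulation Lemma.

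For step one I would apply Lemma \ref{lem:sigmak2} with $\phi \equiv 1$: if $\sigma^k - \delta$ belonged to $\di^k$, then $\E[\sigma^k] \leq \E[\sigma^k - \delta] = \E[\sigma^k] - \delta$, which is impossible. (The possibility $\E[\sigma^k] = -\infty$ is benign because $v^+ \in \di^k$ forces $\sigma^k \leq v^+$ in the $\di^k$-infimum sense, making $\sigma^k$ a.e.\ finite.) Step two strengthens this: for every $V \in \sigfld^{k,\infty}$ with $\mu(V) > 0$, I claim there exist a policy $\stg_V$ with $\stg_V(\infty) > k$ and a positive-measure subset of $V$ on which the lemma's inequality $\E[\alloc{\stg_V} v - c^{\stg_V} \given \sigfld^{k,\infty}] \geq \E[\alloc{\stg_V}(\sigma^k - \delta) - c^k \given \sigfld^{k,\infty}]$ holds. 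The proof is by contradiction: if on some $V$ every policy $\stg$ with $\stg(\infty) > k$ produced the reverse inequality a.e.\ on $V$, then the spliced variable $\tilde\sigma := (\sigma^k - \delta) \indic_V + \sigma^k \indic_{V^c}$ would lie in $\di^k$ (using Lemma \ref{lem:strike-discourages} on $V^c$ and the hypothesis on $V$), yet $\int_V \tilde\sigma\, d\mu < \int_V \sigma^k\, d\mu$ would contradict the defining infimum property of $\sigma^k$ from Lemma \ref{lem:sigmak}.

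For step three I would use Zorn's lemma (or a maximizing sequence) to exhaust $\Omega$: produce a maximal set $V^*$ and a single prompt policy $\altstg$ with $\altstg(\infty) > k$ satisfying the inequality a.e.\ on $V^*$. If $\mu(V^*) < \mu(\Omega)$, step two applied to a positive-measure piece of $\Omega \setminus V^*$ would yield a further improvement, contradicting maximality. The Prompt Simulation Lemma \ref{lem:prompt}---whose equations \eqref{eq:prompt.2}--\eqref{eq:prompt.3} preserve the conditional expectations appearing in the statement---then lets me assume $\altstg$ is prompt without loss of generality. The hard part will be the measurability in this exhaustion step: the inspection-policy axiom requires $\{\stg(\cdot,\tau) > j\} \in \sigfld^{j,\tau}$, which is destroyed by patching along arbitrary $\sigfld^{k,\infty}$-measurable sets. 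The workaround is to leverage Lemma \ref{lem:k0-measurable} ($\sigma^k$ is $\sigfld^{k,0}$-measurable) in order to re-run step two using $\sigfld^{k,0}$-measurable witness sets, and to restrict all patching to $\sigfld^{k,0}$-measurable sets among prompt policies. Since $\sigfld^{k,0} \subseteq \sigfld^{j,\tau}$ for every $j \geq k$ and every $\tau$, a policy patched this way from two prompt policies with $\stg(\infty) > k$ is still a valid prompt inspection policy with $\altstg(\infty) > k$, so the exhaustion carries through.
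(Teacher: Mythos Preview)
Your proposal is essentially the paper's own argument: show that $\tilde\sigma = \sigma^k - \delta\,\indic_V$ cannot lie in $\di^k$, extract from this a local witnessing policy on a positive-measure subset of each $V$, and exhaust $\Omega$ by patching such witnesses along $\sigfld^{k,0}$-measurable sets using prompt policies. The paper organizes the exhaustion via a supremum over a countable-union-closed family $\mathcal{G} \subseteq \sigfld^{k,0}$ rather than Zorn, but that difference is cosmetic.

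There is one point where your sketch is loose, and it is exactly the delicate step. You correctly flag that patching requires $\sigfld^{k,0}$-measurable sets and prompt policies, and you invoke Lemma~\ref{lem:k0-measurable} and the Prompt Simulation Lemma. But you do not explain \emph{why} the witness set
\[
  U \;=\; \bigl\{\, \E[\alloc{\stg} v - c^{\stg} \mid \sigfld^{k,\infty}] \;>\; \E[\alloc{\stg}\tilde\sigma - c^k \mid \sigfld^{k,\infty}] \,\bigr\}
\]
can be taken in $\sigfld^{k,0}$ rather than merely $\sigfld^{k,\infty}$. The $\sigfld^{k,0}$-measurability of $\sigma^k$ alone is not enough: the conditional expectation on the left is a priori only $\sigfld^{k,\infty}$-measurable. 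The missing link is that the Prompt Simulation Lemma must be applied \emph{inside} step two, not as a final cleanup: once $\stg$ is prompt, all of $\alloc{\stg},\, c^{\stg},\, v,\, \tilde\sigma$ are $\sigfld^{\infty,0}$-measurable, and the conditional-independence relation~\eqref{eq:cond-indep} then gives $\E[\,\cdot \mid \sigfld^{k,\infty}] = \E[\,\cdot \mid \sigfld^{k,0}]$, forcing $U \in \sigfld^{k,0}$. Reorder accordingly and the argument goes through; as written, your final invocation of Lemma~\ref{lem:prompt} on the already-patched policy would not repair measurability retroactively.
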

\begin{proof}
Let $\mathcal{G}$ denote the collection of all
sets $U \in \sigfld^{k,0}$ such that there exists a 
prompt inspection policy $\altstg$ with 
$\altstg(\omega,\tau) > k$ that satisfies~\eqref{eq:patient} 
pointwise on $U$. We claim that
\begin{enumerate}
\item $\mathcal{G}$ is closed under countable unions;
\item for any set $V \in \sigfld^{k,0}$ such that $\mu(V)>0$
there exists $W \in \mathcal{G}$ such that $W \subseteq V$ and $\mu(W)>0$.
\end{enumerate}
Treating these two claims as given for the moment, let us see
how they imply the lemma. Let $\gamma = \sup \{ \mu(U) \mid U \in \mathcal{G} \}$.
We may choose sets $U_1,U_2,\ldots \in \mathcal{G}$ such that
$\mu(U_n) > \gamma - \frac{1}{n}$ for all $n \geq 1$. The set
$U = \bigcup_{n \geq 1} U_n$ belongs to $\mathcal{G}$ and
satisfies $\mu(U) \geq \mu(U_n)$ for all $n$, so it must be the
case that $\mu(U) = \gamma$. Let $V$ denote the complement
of $U$. If $\mu(V) > 0$ then there exists $W \in \mathcal{G}$ such
that $W \subseteq V$ and $\mu(W)>0$; then $U \cup W \in \mathcal{G}$
and $\mu(U \cup W) > \gamma$, contradicting the definition of $\gamma$.
Therefore it must be the case that $\gamma=1$. From the definition of
$\mathcal{G}$ we know that there is a prompt inspection policy $\altstg$
satisfying~\eqref{eq:patient} pointwise on
$U$. Since $\mu(U) = \gamma = 1$ it follows that 
$\altstg$ satisfies~\eqref{eq:patient} almost everywhere.

It remains to prove the two claims about $\mathcal{G}$ enumerated 
above. Let $U_1,U_2,\ldots$ be any countable collection of sets in 
$\mathcal{G}$, and for each $n \geq 1$ let $\altstg_n$ be a prompt
inspection policy that satisfies~\eqref{eq:patient} pointwise on $U_n$
and such that $\altstg_n(\omega,\tau) > k$ for all $\omega,\tau$.
Let $V_n = U_n \setminus \left( \bigcup_{m<n} U_m \right)$, and 
define
$$
  \altstg(\omega,\tau) = \begin{cases}
    \altstg_n(\omega,\tau) & \text{if } \omega \in V_n \\
    0 & \text{otherwise.}
  \end{cases}
$$
(Note that the sets $\{V_n\}_{n \geq 1}$ are pairwise disjoint,
so the cases in the definition of $\altstg(\omega,\tau)$ are mutually exclusive.)
It is easy to verify that $\altstg$ satisfies the definition of a prompt
inspection policy, and that it satisfies~\eqref{eq:patient} pointwise
on $U$. Thus, $\mathcal{G}$ is closed under countable unions.

Now suppose $V \in \sigfld^{k,0}$ and $\mu(V) > 0$. The
function $\sigma = \sigma^k - \delta \indic_V$ satisfies 
$\int_V \sigma \, d\mu < \int_V \sigma^k \, d\mu$, so by
Lemma~\ref{lem:sigmak} it must be the case that 
$\sigma \not\in \di^k$: there exists an inspection policy
$\stg$ satisfying $\stg(\infty) > k$ pointwise, such that 
\begin{equation} \label{eq:patient.2}
  \E \left[ \allocstg v - c^\stg \given \sigfld^{k,\infty} \right] 
> 
  \E \left[ \allocstg \sigma - c^k \given \sigfld^{k,\infty} \right]
\end{equation}
holds on a set $U$ of positive measure. By Lemma~\ref{lem:prompt}
we may assume without loss of generality that $\stg$ is a
prompt inspection policy, since $\sigma$ is $\sigfld^{k,0}$-measurable.
Then all of the random variables appearing in~\eqref{eq:patient.2} ---
namely $\allocstg,\, c^\stg,\, c^k,\, v,\, \sigma$ --- are $\sigfld^{\infty,0}$-measurable.
By the conditional independence property~\eqref{eq:cond-indep} we may
conclude that
\begin{align*}
  \E \left[ \allocstg v - c^\stg \given \sigfld^{k,\infty} \right] 
&=
  \E \left[ \allocstg v - c^\stg \given \sigfld^{k,0} \right] 
\\
  \E \left[ \allocstg \sigma - c^k \given \sigfld^{k,\infty} \right]
&=
  \E \left[ \allocstg \sigma - c^k \given \sigfld^{k,0} \right]
\end{align*}
and hence the set $U$ of sample points that satisfy~\eqref{eq:patient.2}
is a $\sigfld^{k,0}$-measurable set. Observe that $\sigma = \sigma^k$ 
on the complement of $V$, and Lemma~\ref{lem:strike-discourages} 
precludes the possibility that~\eqref{eq:patient.2} holds on any 
positive-measure set where $\sigma = \sigma^k$. Thus, $U \subseteq V$.
At this point, we have proven that $U$ is a $\sigfld^{k,0}$-measurable
subset of $V$ with positive measure, and that inspection policy 
$\stg$
satisfies~\eqref{eq:patient} pointwise on $U$
(since $\sigma = \sigma^k - \delta$ pointwise on $U$). 
Thus, $V$ contains a positive-measure subset that belongs to $\mathcal{G}$,
as claimed.
\end{proof}

\begin{lemma} \label{lem:indifferent}
Let $\qstg_k$ denote an inspection policy defined as follows: 
for all $\omega,\tau$, $\qstg_k(\omega,\tau)$ is the least
$\ell > k$ such that $\sigma^\ell \leq \sigma^k$. If no such
$\ell$ exists then $\qstg_k(\omega,\tau) = \infty$. Furthermore
$\alloc{\qstg_k}=1$ if and only if $\qstg_k(\infty)=\infty$ and
$v > \sigma^k$. For the inspection policy thus defined, we have
\begin{equation} \label{eq:indifferent}
  \E \left[ \alloc{\qstg_k} v - c^{\qstg_k} \given \sigfld^{k,\infty} \right]
=
  \E \left[ \alloc{\qstg_k} \sigma^k - c^k \given \sigfld^{k,\infty} \right].
\end{equation}
\end{lemma}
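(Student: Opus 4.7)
The ``$\leq$'' direction of \eqref{eq:indifferent} is immediate: since $\qstg_k(\omega,\tau) > k$ pointwise by construction, $\qstg_k$ is a valid candidate in the quantification of \Cref{lem:strike-discourages}, which yields
$$\E \left[ \alloc{\qstg_k} v - c^{\qstg_k} \given \sigfld^{k,\infty} \right] \leq \E \left[ \alloc{\qstg_k} \sigma^k - c^k \given \sigfld^{k,\infty} \right].$$

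For the reverse inequality, the plan is to exhibit $\qstg_k$ as a solution to an optimal stopping problem. Define
$$W \;=\; \text{ess sup}_{\stg :\, \stg(\infty) > k} \E \left[ \alloc{\stg}(v - \sigma^k) - (c^\stg - c^k) \given \sigfld^{k,\infty} \right],$$
with the essential supremum taken over all inspection policies with $\stg(\infty) > k$ pointwise. A two-sided bracketing shows $W = 0$ almost everywhere: rearranging \Cref{lem:strike-discourages} yields $W \leq 0$, while \Cref{lem:patient}, applied with outside option $\sigma^k - \delta$, yields $W \geq -\delta$ for every $\delta > 0$. The reverse inequality in \eqref{eq:indifferent} thus reduces to showing that $\qstg_k$ attains the essential supremum defining $W$.

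To show $\qstg_k$ is optimal, I would proceed by a Bellman-style argument on the inspection stage. For each $\ell \geq k$, introduce the analogous subproblem value $W^\ell(\sigma) = \text{ess sup}_{\stg :\, \stg(\infty) > \ell} \E[\alloc{\stg}(v - \sigma) - (c^\stg - c^\ell) \given \sigfld^{\ell,\infty}]$; the same bracketing gives $W^\ell(\sigma^\ell) = 0$. Monotonicity of $W^\ell$ in $\sigma$ then yields $W^\ell(\sigma^k) \geq 0$ where $\sigma^k \leq \sigma^\ell$ and $W^\ell(\sigma^k) \leq 0$ where $\sigma^k \geq \sigma^\ell$, matching precisely the continue-versus-stop rule encoded in $\qstg_k$ at each stage $\ell > k$. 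A Bellman consistency step---sliding the conditioning $\sigma$-field from $\sigfld^{k,\infty}$ to $\sigfld^{\ell,\infty}$ via iterated expectation, and invoking \Cref{lem:continuation} to splice local optimality across successive stages---then aggregates these stage-by-stage optimal choices into a globally optimal policy, and hence establishes that $\qstg_k$ attains $W$.

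The main obstacles will be two-fold. First, the potentially unbounded horizon over inspection stages forces an approximation of $\qstg_k$ by finite-horizon truncations $\qstg_k^{(n)}$ (which force the policy to either commit to $\infty$ or abandon the item by stage $k+n$), and then passage to the limit; the interchange of limit and expectation must be justified by dominated or monotone convergence, using the integrability assumptions $\E[v^+] < \infty$ and $\E[c^\infty] < \infty$. Second, the measure-theoretic apparatus requires careful handling of the interplay between the filtration indices $k$ (inspection stage) and $\tau$ (clock time); here the $\sigfld^{k,0}$-measurability of $\sigma^k$ established in \Cref{lem:k0-measurable}, together with the Prompt Simulation Lemma (\Cref{lem:prompt}) and the conditional independence identity~\eqref{eq:cond-indep}, will let me reduce to prompt policies and cleanly separate ``inspection'' information from ``clock'' information throughout the Bellman recursion.
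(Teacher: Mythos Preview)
Your proposal is essentially the paper's proof recast in the language of optimal stopping. The paper too brackets via Lemmas~\ref{lem:strike-discourages} and~\ref{lem:patient} to pin the optimal value at zero, then builds by forward induction a sequence of policies $\altstg_\ell$---each obtained by appending a Lemma~\ref{lem:patient} continuation at stage $k+\ell$ on the event where $\qstg_k$ has not yet stopped---that nearly attain that value; this is precisely your ``truncation'' step together with your stage-wise use of $W^\ell(\sigma^\ell)=0$ and monotonicity in $\sigma$ (on the continuation event one has $\sigma^{k+\ell} > \sigma^k$, so the Lemma~\ref{lem:patient} policy, which is $\delta$-good relative to $\sigma^{k+\ell}$, is automatically $\delta$-good relative to $\sigma^k$). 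The infinite-horizon passage is handled, as you anticipate, via $\E[c^\infty - c^{k+\ell}] < \delta^2$ and Markov's inequality. One point where your sketch diverges from the paper: Lemma~\ref{lem:continuation} is not woven into the recursion to ``splice local optimality,'' but applied once at the end, to the pair $(\qstg_k,\stg)$ with $\stg$ a modification of $\qstg_k \vee \altstg_\ell$, exploiting that $\sigma^{\qstg_k(\infty)} \leq \sigma^k$ at $\qstg_k$'s stopping stage to transfer near-attainment from the constructed $\stg$ back to $\qstg_k$ itself.
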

\begin{proof}
First observe that $\qstg_k$ is a well-defined prompt inspection policy:
for any $j \in \mathbb{N}$ the event $\{\qstg_k(\infty) > j\}$ is
equal to the intersection of the events $\{\sigma^k < \sigma^i\}$ for 
$i = k+1,k+2,\ldots,j,$ which belongs to $\sigfld^{j,0}$ because
$\sigma^k$ and $\sigma^i$ are both $\sigfld^{j,0}$-measurable.
Next, observe that the left side of~\eqref{eq:indifferent} is pointwise
less than or equal to the right side, by Lemma~\ref{lem:strike-discourages}.
To prove that the reverse inequality holds pointwise, we fix any
$\delta > 0$ and aim to prove that 
\begin{equation} \label{eq:indiff.0}
  \E \left[ \alloc{\qstg_k} v - c^{\qstg_k} \given \sigfld^{k,\infty} \right]
>
  \E \left[ \alloc{\qstg_k} \sigma^k - c^k \given \sigfld^{k,\infty} \right] - 2 \delta
\end{equation}
holds on a set of measure at least $1-\delta$. If this is true
for all $\delta>0$, then in fact~\eqref{eq:indifferent} must
hold almost everywhere.

For ease of notation, let $\qstg = \qstg_k$.
We shall construct a sequence of prompt
inspection policies $(\altstg_\ell)_{\ell=1}^{\infty}$
inductively, such that for all $\ell$ the inequalities
\begin{align}
\label{eq:indiff.1}
  \altstg_\ell(\infty) & \geq (k + \ell) \wedge \qstg(\infty) \\
\label{eq:indiff.2}
  \E \left[ 
    \alloc{\altstg_\ell} v - c^{\altstg_\ell} \given
    \sigfld^{k,\infty} \right]
   & >
  \E \left[
    \alloc{\altstg_\ell} \sigma^k - c^{k} \given
    \sigfld^{k,\infty} \right] - \delta + \frac{\delta}{2^\ell}
\end{align}
hold pointwise. For $\ell=1$ the existence of such a policy
$\altstg_\ell$ is implied by Lemma~\ref{lem:patient}. Assume
now that $\altstg_{\ell}$ is already defined. Use
Lemma~\ref{lem:patient} to obtain a 
prompt inspection policy
$\altstg$ that satisfies $\altstg(\infty) > k+\ell$ and
\begin{equation} \label{eq:indiff.3}
  \E \left[ \alloc{\altstg} v - c^\altstg \given \sigfld^{k+\ell,\infty} \right]
  \geq
  \E \left[ \alloc{\altstg} \sigma^{k+\ell}
              - c^{k+\ell} \given \sigfld^{k+\ell,\infty} \right]
  - \frac{\delta}{2^{\ell+1}}
\end{equation}
pointwise. Letting $V$ denote the
set of sample points where $\altstg_\ell(\infty) = k+\ell < \qstg(\infty)$,
 we define $\altstg_{\ell+1}$ to be equal to $\altstg$ on $V$
and equal to $\altstg_\ell$ on the complement of $V$. 
First let us verify that
$\altstg_{\ell+1}$ is a prompt inspection policy. 
Promptness of $\altstg_{\ell+1}$ 
follows from the fact that $\qstg, \altstg_\ell, \altstg$
are all prompt. The first and third properties of an 
inspection policy are trivial to verify. For the second
property, fix $\tau,j$ and consider the set of $\omega$ such that
$\altstg_{\ell+1}(\omega,\tau) > j$. If $j \leq k+\ell$ then
this is equal to the set of $\omega$ such that $\altstg_\ell(\omega,\tau) > j$
or $\qstg(\omega,\tau) > j$, hence it is $\sigfld^{j,\tau}$-measurable.
If $j > k+\ell$ there are two ways that $\altstg_{\ell+1}(\omega,\tau)$
could be greater than $j$: either 
$\altstg_\ell(\omega,\tau) > j$ or
$\altstg_\ell(\omega,\tau) = k+\ell < \qstg(\omega,\tau)$ 
and $\altstg(\omega,\tau) > j$. Both of these events are 
$\sigfld^{j,\tau}$-measurable. This completes the verification
that $\altstg_{\ell+1}$ is a valid inspection policy. 

To verify~\eqref{eq:indiff.1}, note that $V$ is precisely the
set of sample points at which $\altstg_\ell(\infty) < (k+\ell+1) \wedge 
\qstg(\infty)$. Thus, on the complement of $V$ we have
$\altstg_{\ell+1}(\infty) = \altstg_\ell(\infty) \geq (k+\ell+1) \wedge
\qstg(\infty)$. Meanwhile, on $V$, we have 
$\altstg_{\ell+1}(\infty) = \altstg(\infty) \geq k+\ell+1 \geq
(k+\ell+1) \wedge \qstg(\infty)$.
To verify~\eqref{eq:indiff.2}, first observe that 
on the complement of $V$ we have 
$\alloc{\altstg_{\ell+1}} = \alloc{\altstg_\ell}$
and $c^{\altstg_{\ell+1}} = c^{\altstg_\ell}$,
whereas on $V$ we have
$\alloc{\altstg_\ell} = 0$, $c^{\altstg_\ell} = c^{k+\ell}$,
$\alloc{\altstg_{\ell+1}} = \alloc{\altstg}$, and
$c^{\altstg_{\ell+1}} = c^{\altstg}$.
Combining these observations with the fact that 
$V \in \sigfld^{k+\ell,\infty}$ we obtain
\begin{align}
\nonumber
  \E \left[ \alloc{\altstg_{\ell+1}} v - c^{\altstg_{\ell+1}}
    \given \sigfld^{k+\ell,\infty} \right] 
&=
  \E \left[ \alloc{\altstg_{\ell}} v - c^{\altstg_\ell}
    \given \sigfld^{k+\ell,\infty} \right] +
  \indic_V
  \E \left[ \alloc{\altstg} v - c^{\altstg} + c^{k+\ell} 
    \given \sigfld^{k+\ell,\infty} \right]
\\
\nonumber
& > 
  \E \left[ \alloc{\altstg_{\ell}} v - c^{\altstg_\ell}
    \given \sigfld^{k+\ell,\infty} \right] +
  \indic_V \E \left[ \alloc{\altstg} \sigma^{k+\ell} 
    \given \sigfld^{k+\ell,\infty} \right] - \frac{\delta}{2^{\ell+1}} \\
\label{eq:indiff.4}
  & >
  \E \left[ \alloc{\altstg_{\ell}} v - c^{\altstg_\ell}
    \given \sigfld^{k+\ell,\infty} \right] +
  \indic_V \E \left[ \alloc{\altstg} \sigma^{k} 
    \given \sigfld^{k+\ell,\infty} \right] - \frac{\delta}{2^{\ell+1}}.
\end{align}
In the last line, we have used the fact that 
$\sigma^{k+\ell} > \sigma^k$ pointwise on $V$,
which is a consequence of the definition of $\qstg = \qstg_k$
and of the fact that $\qstg(\infty) > k+\ell$ pointwise on $V$.
Now, taking the conditional expectation of both sides 
of~\eqref{eq:indiff.4} with respect to $\sigfld^{k,\infty}$
and using the fact that $\alloc{\altstg_{\ell+1}} = 
\alloc{\altstg_\ell} + \indic_V \alloc{\altstg}$ pointwise,
we obtain
\begin{align}
\nonumber
  \E \left[ \alloc{\altstg_{\ell+1}} v - c^{\altstg_{\ell+1}}
    \given \sigfld^{k,\infty} \right] 
& \geq
  \E \left[ \alloc{\altstg_{\ell}} \sigma^k - c^{k}
    \given \sigfld^{k,\infty} \right] +
  \E \left[ \indic_V \alloc{\altstg} \sigma^{k} 
    \given \sigfld^{k,\infty} \right] - \frac{\delta}{2^{\ell+1}} \\
& >
  \E \left[ \left(
    \alloc{\altstg_{\ell}} + \indic_V \alloc{\altstg}
    \right) \sigma^k - c^k \given \sigfld^{k,\infty} \right] 
     - \delta + \frac{\delta}{2^{\ell}} 
    - \frac{\delta}{2^{\ell+1}} \\
& =
  \E \left[ \alloc{\altstg_{\ell+1}} \sigma^k - c^k \given \sigfld^{k,\infty}
       \right] - \delta + \frac{\delta}{2^{\ell+1}},
\label{eq:indiff.5}
\end{align}
which completes the induction.

Now, the monotone convergence theorem
implies that $\E \left[ c^{k+\ell} \right] \to \E \left[ c^\infty \right]$
as $\ell \to \infty$, and hence we may choose $\ell$ large 
enough that $\E \left[ c^\infty - c^{k+\ell} \right] < \delta^2$.
Then by Markov's Inequality, the relation
$\E \left[ c^\infty - c^{k+\ell} \given \sigfld^{k,\infty} \right] < \delta$
holds pointwise on a set of measure at least $1-\delta$.
For the inspection policy $\qstg \vee \altstg_\ell$ we have
\begin{equation} \label{eq:indiff.6}
  \left( \alloc{\qstg \vee \altstg_\ell} - \alloc{\altstg_\ell} \right) v
\geq
  \left( \alloc{\qstg \vee \altstg_\ell} - \alloc{\altstg_\ell} \right) \sigma^k
\end{equation}
because the construction of $\qstg$ ensures that 
$v \geq \sigma^k$ at any point where $\alloc{\qstg} = 1$.
Combining~\eqref{eq:indiff.2} with~\eqref{eq:indiff.6} we obtain
\begin{align}
\nonumber
  \E \left[ \alloc{\qstg \vee \altstg_\ell} v - c^{\qstg \vee \altstg_\ell}
    \given \sigfld^{k,\infty} \right]
  & =
  \E \left[ \alloc{\altstg_\ell} v - c^{\altstg_\ell} \;\; + \;\;
  \left( \alloc{\qstg \vee \altstg_\ell} - \alloc{\altstg_\ell} \right) v \;\; - \;\;
  \left( c^{\qstg \vee \altstg_\ell} - c^{\altstg_\ell} \right) 
  \given \sigfld^{k,\infty} \right] \\
\nonumber
  & >
  \E \left[ \alloc{\altstg_\ell} \sigma^k - c^k  \;\; + \;\;
  \left( \alloc{\qstg \vee \altstg_\ell} - \alloc{\altstg_\ell} \right) \sigma^k
  \;\; - \;\; \left( c^{\infty} - c^{k+\ell} \right) \given \sigfld^{k,\infty} \right]
  - \delta  \\
\nonumber
  & =
  \E \left[ \alloc{\qstg \vee \altstg_\ell} \sigma^k - c^k \given \sigfld^{k,\infty} \right]
  - \E \left[ c^{\infty} - c^{k+\ell} \given \sigfld^{k,\infty} \right] - \delta \\
  & >
  \E \left[ \alloc{\qstg \vee \altstg_\ell} \sigma^k - c^k \given \sigfld^{k,\infty} \right]
  - 2 \delta,
\label{eq:indiff.7}
\end{align}
where all of the inequalities except the last one hold pointwise, and the
last inequality holds pointwise on a set of measure at least $1-\delta$,
by our assumption on $\ell$. Modify $\qstg \vee \altstg_\ell$ into
an inspection policy $\stg$ defined by $\stg(\omega,\tau) = 
(\qstg \vee \altstg_\ell)(\omega,\tau)$ and 
$\allocstg = \indic_{v > \sigma^k} \cdot \alloc{\qstg \vee \altstg_\ell}$.
In other words, $\stg$ uses the same inspection rule as 
$\qstg \vee \altstg_\ell$ but a modified acquisition rule that
only acquires the item if $v > \sigma^k$. Note that
$\left( \alloc{\qstg \vee \altstg_\ell} - \allocstg \right) 
\cdot \left( v - \sigma^k \right) \leq 0$ pointwise. Upon
rearranging terms and taking conditional expectations, this implies
\begin{equation} \label{eq:indiff.8}
  \E \left[ \left(
    \allocstg - \alloc{\qstg \vee \altstg_\ell} 
    \right) v \given \sigfld^{k,\infty} \right]
  \geq
  \E \left[ \left(
    \allocstg - \alloc{\qstg \vee \altstg_\ell}
    \right) \sigma^k \given \sigfld^{k,\infty} \right].
\end{equation}
Summing~\eqref{eq:indiff.7} and~\eqref{eq:indiff.8} we 
find that
\begin{equation} \label{eq:indiff.9}
  \E \left[ \allocstg v - c^{\stg} \given \sigfld^{k,\infty} \right]
  >
  \E \left[ \allocstg \sigma^k - c^k \given \sigfld^{k,\infty} \right]
  - 2 \delta
\end{equation}
on a set of measure at least $1-\delta$.

Now we shall apply
Lemma~\ref{lem:continuation} to the pair of 
inspection policies $\qstg$ and $\stg$. Observe
that $k < \qstg(\infty) \leq \stg(\infty)$ pointwise,
and that $\alloc{\qstg} = \alloc{\stg}$
holds at every sample point where $\qstg(\infty) = \stg(\infty)$.
(If $\qstg(\infty) = \stg(\infty) < \infty$ then $\alloc{\qstg} = \allocstg = 0$,
and if $\qstg(\infty) = \stg(\infty) = \infty$ then
$\alloc{\qstg} = \allocstg = \indic_{v > \sigma^k}$.)
Consequently, the lemma guarantees that
\begin{equation} \label{eq:indiff.10}
  \E \left[ \alloc{\qstg} v - c^{\qstg} \given \sigfld^{k,\infty} \right]
  +
  \E \left[ \left( \allocstg - \alloc{\qstg} \right) \sigma^{\qstg}
    \given \sigfld^{k,\infty} \right] 
 \geq
  \E \left[ \allocstg v - c^\stg \given \sigfld^{k,\infty} \right].
\end{equation}
On a set of measure at least $1-\delta$ we have
\begin{align}
\nonumber
  \E \left[ \allocstg v - c^\stg \given \sigfld^{k,\infty} \right]
  & > 
  \E \left[ \allocstg \sigma^k - c^k \given \sigfld^{k,\infty} \right]
  - 2 \delta \\
\nonumber
  & =
  \E \left[ \alloc{\qstg} \sigma^k - c^k \given \sigfld^{k,\infty} \right]
  + \E \left[ \left( \allocstg - \alloc{\qstg} \right) \sigma^k
        \given \sigfld^{k,\infty} \right] - 2 \delta \\
  & \geq
  \E \left[ \alloc{\qstg} \sigma^k - c^k \given \sigfld^{k,\infty} \right]
  + \E \left[ \left( \allocstg - \alloc{\qstg} \right) \sigma^{\qstg}
        \given \sigfld^{k,\infty} \right] - 2 \delta,
\label{eq:indiff.11}
\end{align}
where the last line was derived using the fact that 
$\sigma^{k} \geq \sigma^{\qstg}$ whenever 
$\allocstg - \alloc{\qstg} > 0$, a consequence 
of the definition of the inspection policy $\qstg$.
Combining~\eqref{eq:indiff.10}
with~\eqref{eq:indiff.11} and canceling the common term on the 
left and right sides, we obtain~\eqref{eq:indiff.1} which
completes the proof.
\end{proof}

\subsection{Generalized covered call value}
\label{genccv-appendix}

In this section we generalize the notion of covered call value
to the setting of multi-stage inspection, and we state and
prove \Cref{lem:general-ubp-restated}, which generalizes the 
amortization lemma (Lemma 1 from main text  Subsection 2.2) that underpins
many of the results in the paper.
As before, the 
lemma asserts that the expected value of items acquired by
the bidder, minus the inspection cost invested in all items,
is bounded above by the expected covered call value of items
acquired by the bidder, and it provides a sufficient condition
for this upper bound to be sharp.

\begin{definition} \label{def:gen-ccv}
For $k \geq 0$ let $\kappa^k = \min\{\sigma^0,\sigma^1,\ldots,\sigma^k\}$.
Let $\kappa^\infty = \lim_{k \to \infty} \{ \kappa^k \}$.
(Note that the limit has a well-defined
value in $\reals \cup \{-\infty\}$ since $\kappa^0,\kappa^1,\ldots$ is a 
non-increasing sequence.) 
The random variable $\kappa = \min \{ \kappa^\infty, v \}$ is
the {\em generalized covered call value}. 
\end{definition}

To illustrate these definitions consider the single-bidder, single-item, single-inspection-stage case. Then $c^0, c^1, \ldots$
degenerates to a single random variable $c = c^1$.
Since there is only one stage of inspection, it is clear
that the optimal inspection policy pays a cost of $c$,
learns the value $v$, and then claims the better of 
$v$ or $\sigma^0$. Thus by its definition as the indifference point for implementing the optimal policy, 
$\E \left[ \max\{v,\sigma^0\} - c \right] = \sigma^0$,
or equivalently,
$\E \left[ (v-\sigma^0)^+ \right] = \E [ c ]$.
Thus, $\sigma^0$ is the same as the strike price 
in Definition 1. On the other hand, $\sigma^1$ is defined by the fact that
a bidder who has already reached inspection
stage 1 (viz. learned her value) is 
indifferent between stopping immediately and claiming a
reward of $\sigma^1$ versus applying
the optimal policy that 
engages in at least one more stage 
of inspection.  Given there are no inspection stages past the first, this policy must simply claim the item of value $v$,
and hence the indifference point $\sigma^1$ is equal to $v$. The same logic
applies to $\sigma^2, \sigma^3, \ldots$; all
of them are equal to $v$.\footnote{ More generally, in environments 
with at most $b$ stages of inspection, then
$\sigma^b = \sigma^{b+1} = \cdots = v.$}
Thus
the generalized strike price $\sigma^0$
represents the  strike price, $\sigma$, defined in main text Subsection 2.2
while the generalized strike prices 
$\sigma^1,\sigma^2,\ldots$ are all 
equal to the item's value, $v$. The generalized
covered call value $\kappa$ is therefore equal
to $\min \{\sigma, v\}$, matching the definition
from main text Subsection 2.2.  

To generalize the crucial amortization lemma (Lemma 1) to 
environments with multiple stages of
inspection, we will use a property that generalizes the idea of ``exercising in the money''.
\begin{definition} \label{def:non-exposed}
An inspection policy is {\em non-exposed} if it never ceases inspection
at a stage when the strike price is strictly greater than the minimum
strike price encountered along the inspection path,
and it never
fails to acquire an item of value $v > \kappa$ if it
completes all stages of inspection. More formally,
$\stg$ is non-exposed if
\begin{enumerate}
\item 
$
    \Pr \left( \stg(\infty) < \infty \mbox{ and } \sigma^{\stg(\infty)} > \kappa^{\stg(\infty)} \right) = 0
$
\item
$    \Pr \left( \stg(\infty) = \infty \mbox{ and } \left(1 - \allocstg \right) (v - \kappa) > 0 \right) = 0.
$
\end{enumerate}
\end{definition}

For non-exposed policies we have the following strengthening
of Lemma~\ref{lem:continuation}.

\newcommand{\Em}{{{\E}_m}}

\begin{lemma} \label{lem:non-exp-contin}
Suppose $\altstg, \qstg$ are any two inspection policies that
satisfy $\altstg(\infty) \leq \qstg(\infty)$ 
and $\alloc{\altstg} \leq \alloc{\qstg}$ pointwise.
If $\altstg$ is non-exposed then we have
\begin{equation} \label{eq:non-exp-contin}
  \E \left[ \alloc{\qstg} v - c^\qstg \right] \leq
  \E \left[ \alloc{\altstg} v - c^\altstg \right] + 
  \E \left[ \left( \alloc{\qstg} - \alloc{\altstg} \right) \kappa \right].
\end{equation}
\end{lemma}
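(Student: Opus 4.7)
The plan is to prove the equivalent inequality
\[
\E\bigl[(\alloc{\qstg} - \alloc{\altstg})(v - \kappa)\bigr] \leq \E[c^\qstg - c^\altstg]
\]
by conditioning on $\altstg(\infty)$. Both sides are non-negative pointwise, since $v \geq \kappa$ and $c^\qstg \geq c^\altstg$. On $\{\altstg(\infty) = \infty\}$, non-exposedness condition~2 forces $(1 - \alloc{\altstg})(v - \kappa) = 0$ almost surely, so the integrand on the left is bounded above by $(1-\alloc{\altstg})(v-\kappa) = 0$ and the inequality holds at once. The substantive work lies on the events $E_k = \{\altstg(\infty) = k\}$ for $k < \infty$, where $\alloc{\altstg} = 0$ and, by non-exposedness condition~1, $\sigma^k = \kappa^k$; on these events, the goal reduces to showing $\E[\indic_{E_k}\alloc{\qstg}(v-\kappa)] \leq \E[\indic_{E_k}(c^\qstg - c^k)]$.

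Using $v - \kappa = (v-\kappa^\infty)^+$ together with the monotone pointwise limit $(v-\kappa^{k+n})^+ \uparrow (v-\kappa^\infty)^+$ as $\kappa^{k+n} \searrow \kappa^\infty$, monotone convergence reduces the task to showing, for every $n \geq 0$,
\[
\E\bigl[\indic_{E_k}\alloc{\qstg}(v-\kappa^{k+n})^+\bigr] \leq \E\bigl[\indic_{E_k}(c^\qstg - c^k)\bigr]. \qquad (\ast_n)
\]
The base case $(\ast_0)$ follows by a ``truncation in the money'' trick: let $\qstg'$ be the policy that uses the same inspection rule as $\qstg$ but acquires only when $\qstg$ would acquire \emph{and} $v \geq \kappa^k$. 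Then $\qstg'$ is a valid inspection policy, $c^{\qstg'} = c^\qstg$, and $\alloc{\qstg'}(v - \kappa^k) = \alloc{\qstg}(v - \kappa^k)^+$. Applying Lemma~\ref{lem:strike-discourages} to $\qstg'$ conditionally on $\sigfld^{k,\infty}$, multiplying by the $\sigfld^{k,\infty}$-measurable indicator $\indic_{E_k}$, and invoking $\sigma^k = \kappa^k$ on $E_k$ yields $(\ast_0)$.

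The inductive step from $(\ast_{n-1})$ to $(\ast_n)$ is the main obstacle. The plan is to splice in the indifferent continuation policy $\qstg_{k+n-1}$ from Lemma~\ref{lem:indifferent}: its identity
\[
\E\bigl[\alloc{\qstg_{k+n-1}}(v-\sigma^{k+n-1}) \,\big|\, \sigfld^{k+n-1,\infty}\bigr] = \E\bigl[c^{\qstg_{k+n-1}} - c^{k+n-1} \,\big|\, \sigfld^{k+n-1,\infty}\bigr]
\]
precisely balances the strike-price decrement from $\kappa^{k+n-1}$ to $\kappa^{k+n}$ against the cost increment from $c^{k+n-1}$ to $c^{k+n}$. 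The tricky point is that $\qstg$'s continuation beyond stage $k+n-1$ is arbitrary and need not coincide with $\qstg_{k+n-1}$, so the argument must condition on $\sigfld^{k+n-1,\infty}$ and on the event that $\qstg$ has not yet acquired, and compare the conditional expected net value of $\qstg$'s arbitrary continuation against the indifferent $\qstg_{k+n-1}$ via the discouragement direction of Lemma~\ref{lem:strike-discourages}. Once this inductive step is in hand, passing to the limit $n \to \infty$ via monotone convergence (justified by the standing assumptions $\E[c^\infty] < \infty$ and $\E[v^+] < \infty$) and summing over $k$ completes the proof.
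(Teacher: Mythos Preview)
Your initial reduction is correct, and the base case $(\ast_0)$ is fine (modulo the minor issue that one should restrict to $E_k \cap \{\qstg(\infty) > k\}$ before invoking discouragement, but the complementary piece contributes zero to both sides). The problem is the inductive step. As you have written $(\ast_n)$, the right-hand side $\E[\indic_{E_k}(c^\qstg - c^k)]$ is \emph{independent of $n$}, while the left-hand side is non-decreasing in $n$ (since $\kappa^{k+n} \leq \kappa^{k+n-1}$ implies $(v-\kappa^{k+n})^+ \geq (v-\kappa^{k+n-1})^+$). So $(\ast_n)$ is strictly stronger than $(\ast_{n-1})$, and the induction hypothesis gives you nothing to work with. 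Your sketch invokes the identity from Lemma~\ref{lem:indifferent} for the indifferent policy $\qstg_{k+n-1}$, but that identity concerns the \emph{continuation policy} $\qstg_{k+n-1}$, not $\qstg$; and the ``cost increment from $c^{k+n-1}$ to $c^{k+n}$'' you allude to appears nowhere in $(\ast_n)$. Conditioning on $\sigfld^{k+n-1,\infty}$ and applying discouragement to $\qstg$'s continuation just reproduces a stage-$(k+n-1)$ version of your base-case argument, with $\sigma^{k+n-1}$ rather than $\kappa^{k+n}$, which is not what you need.

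The paper's proof resolves this by letting the policy vary with $n$: it defines a truncation $\qstg_n = \qstg' \wedge \stg_n$ (where $\stg_n$ stops at the first stage $\ell > n$ with $\sigma^\ell \leq \kappa^n$, and $\qstg'$ is $\qstg$ modified to acquire only when $v \geq \kappa^\infty$) and proves by induction that
\[
\E_m\bigl[\alloc{\qstg_n} v - c^{\qstg_n}\bigr] \;\leq\; \E_m\bigl[\alloc{\qstg_n}\kappa^n - c^m\bigr].
\]
Here both $\alloc{\qstg_n}$ and $c^{\qstg_n}$ change with $n$, so the right-hand side is no longer frozen, and the inductive step goes through via a single application of discouragement at stage $n+1$ on the set where $\qstg_n$ stopped exactly there. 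One then passes from $\qstg_n$ to $\qstg'$ using Lemma~\ref{lem:continuation}, and finally from $\qstg'$ back to $\qstg$. The missing idea in your proposal is precisely this truncation of $\qstg$ at the running-minimum hitting times, which is what allows the cost side of the inequality to keep pace with the decreasing $\kappa^{k+n}$ on the value side.
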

\begin{proof}
First note that 
\[
  \E \left[ \alloc{\qstg} v - c^\qstg \right] = 
  \E \left[ \alloc{\altstg} v - c^\altstg \right] +
  \E \left[ (\alloc{\qstg} - \alloc{\altstg}) v - (c^\qstg - c^\altstg) \right],
\]
so~\eqref{eq:non-exp-contin} is equivalent to
\begin{equation} \label{eq:nec.1}
  \E \left[ (\alloc{\qstg} - \alloc{\altstg}) v - (c^\qstg - c^\altstg) \right]
\leq
  \E \left[ \left( \alloc{\qstg} - \alloc{\altstg} \right) \kappa \right].
\end{equation}
A first observation is that the equation
\begin{equation} \label{eq:nec.1.1}
  (\alloc{\qstg} - \alloc{\altstg}) v - (c^\qstg - c^\altstg)
 =
  \left( \alloc{\qstg} - \alloc{\altstg} \right) \kappa
\end{equation}
holds at any sample point where $\qstg(\infty) = \altstg(\infty)$.
When $\qstg(\infty) = \altstg(\infty) < \infty$ this is 
because both sides are equal to 0. When $\qstg(\infty) = \altstg(\infty) = \infty$
we have $c^{\qstg} - c^{\altstg} = 0$, and the only case in which
$\alloc{\qstg} - \alloc{\altstg} \neq 0$ is if
$\alloc{\qstg} = 1, \alloc{\altstg}=0,$
but in that case $v = \kappa$ because 
$\altstg$ is non-exposed. (Recall that $v \geq \kappa$
by the definition of $\kappa$, and that the possibility 
$v > \kappa$ is precluded for a non-exposed policy $\altstg$
when $\altstg(\infty)=\infty$ and $\alloc{\altstg}=0$.)

The set $V$ of sample points where $\qstg(\infty) > \altstg(\infty)$
can be partitioned into sets $V_m \in \sigfld^{m,\infty}$ where
$\qstg(\infty) > \altstg(\infty) = m$. For $m=0,1,2,\ldots$ 
let $\Em$ denote the operator
\[ 
  \Em[f] = \int_{V_m} f \, d\mu,
\]
and note that $\int_V f \, d\mu = \sum_{m=0}^{\infty} \Em[f]$
whenever both sides are well-defined, for instance whenever
$\int_V f^+ \, d\mu < \infty$. We have already argued that
the integrands on
the left and right sides of~\eqref{eq:nec.1} are equal on 
the complement of $V$, so~\eqref{eq:nec.1} is equivalent to
\begin{equation} \label{eq:nec.2}
  \sum_{m=0}^{\infty} 
  \Em \left[ (\alloc{\qstg} - \alloc{\altstg}) v - (c^\qstg - c^\altstg) \right]
\leq
  \sum_{m=0}^{\infty}
  \Em \left[ \left( \alloc{\qstg} - \alloc{\altstg} \right) \kappa \right].
\end{equation}
We will prove~\eqref{eq:nec.2} by comparing the sums term-by-term.
Since $\alloc{\altstg}=0$ and $c^\altstg=c^m$ on $V_m$, the proof 
reduces to showing that
\begin{equation} \label{eq:nec.3}
  \Em \left[ \alloc{\qstg} v - c^\qstg  \right] 
\leq 
  \Em \left[ \alloc{\qstg} \kappa - c^m \right].
\end{equation}

For $n \geq 0$, define an inspection policy $\stg_n$ by
setting $\stg_n(\omega,\tau)$ to be the least $k> n$ such that
$\sigma^k \leq \kappa^n$ if there is any such $k$; otherwise
$\stg_n(\omega,\tau)=\infty$
and $\alloc{\stg_n}(\omega) = \alloc{\qstg}(\omega)$. 
Also define an inspection policy $\qstg'$ by specifying
that $\qstg'(\omega,\tau)=\qstg(\omega,\tau)$ for all
$\omega,\tau$ and that 
$$
  \alloc{\qstg'} = \begin{cases}
    \alloc{\qstg} & \mbox{if } v \geq \kappa^\infty \\
    0             & \mbox{if } v < \kappa^\infty
  \end{cases}.
$$
Consider the inspection
policy $\qstg_n$ defined by the pointwise minimum
$\qstg' \wedge \stg_n$. 
We claim that for $0 \leq m \leq n$, 
\begin{equation} \label{eq:nec.4}
  \Em \left[ \alloc{\qstg_n} v - c^{\qstg_n} \right]
\leq
  \Em \left[ \alloc{\qstg_n} \kappa^n - c^m \right].
\end{equation}
For each fixed $m$ the proof is by induction on $n$.
In the base case $n=m$, note that $\kappa^m = \sigma^m$
pointwise on $V_m$, because $\altstg(\infty)=m$ and
$\altstg$ is non-exposed. Thus, when $n=m$ the
inequality~\eqref{eq:nec.4} is equivalent to 
$\Em \left[ \alloc{\qstg_n} v - c^{\qstg_n} \right] 
\leq \Em \left[ \alloc{\qstg_n} \sigma^m - c^m \right]$,
which is valid because $\sigma^m$ discourages inspection
at stage $m$ and $V_m \in \sigfld^{m,\infty}$. For the 
induction step, let $U$ denote the set of sample
points where $\qstg_n(\infty) \neq \qstg_{n+1}(\infty)$.
Note that $U \in \sigfld^{n+1,\infty}$,
and that on $U$ the relations
$\sigma^{n+1} = \kappa^{n+1}$,
$\qstg_n(\infty)=n+1$, and
$\alloc{\qstg_n} = 0$ hold pointwise,
whereas on the complement of $U$ the relations $\kappa^{n+1} = \kappa^n$
and $\alloc{\qstg_{n+1}} = \alloc{\qstg_n}$
hold pointwise. 
These observations imply the following:
\begin{align}
\label{eq:gubp.2}
  \alloc{\qstg_{n+1}} v &= 
  \alloc{\qstg_n} v + \indic_U \alloc{\qstg_{n+1}} v \\
\label{eq:gubp.3}
  c^{\qstg_{n+1}} &=
  c^{\qstg_n} + \indic_U (c^{\qstg_{n+1}} - c^{n+1})  \\
\label{eq:gubp.4}
  \alloc{\qstg_{n+1}} \kappa^{n+1} &=
  \alloc{\qstg_n} \kappa^{n} +
  \indic_U \alloc{\qstg_{n+1}} \sigma^{n+1}.
\end{align}
We therefore have
\begin{align*}
  \Em \left[ \alloc{\qstg_{n+1}} v - c^{\qstg_{n+1}} \right] 
& =
  \Em \left[ \alloc{\qstg_n} v - c^{\qstg_n} \right] + 
  \Em \left[ \indic_U \left( \alloc{\qstg_{n+1}} v
              - c^{\qstg_{n+1}} + c^{n+1} \right) \right] \\
& \leq
  \Em \left[ \alloc{\qstg_n} \kappa^n - c^m \right] + 
  \Em \left[ \indic_U \alloc{\qstg_{n+1}} \sigma^{n+1} \right] \\
& = \Em \left[ \alloc{\qstg_{n+1}} \kappa^{n+1} - c^m \right],
\end{align*}
where the second line follows from the induction hypothesis 
and the fact that $\sigma^{n+1}$ discourages inspection at stage $n+1$.
This completes the inductive proof of~\eqref{eq:nec.4}.

Now, applying Lemma~\ref{lem:continuation} we find that 
\begin{align}
\nonumber
  \Em \left[ \alloc{\qstg'} v - c^{\qstg'} \right] 
& \leq
  \Em \left[ \alloc{\qstg_n} v - c^{\qstg_n} \right] +
  \Em \left[ \left( \alloc{\qstg'} - \alloc{\qstg_n} \right) 
                \sigma^{\qstg_n} \right] \\
\nonumber
& \leq
  \Em \left[ \alloc{\qstg_n} \kappa^n - c^m \right] +
  \Em \left[ \left( \alloc{\qstg'} - \alloc{\qstg_n} \right) 
                 \kappa^n \right] \\
\label{eq:nec.5}
& = 
  \Em \left[ \alloc{\qstg'} \kappa^n - c^m \right].
\end{align}
where the second line follows from~\eqref{eq:nec.4}
and the definition
of $\qstg_n$: the relation
$\sigma^{\qstg_n} = \sigma^{\stg_n} \leq \kappa^n$
holds whenever $\qstg'(\infty) > \qstg_n(\infty)$.

The monotone convergence theorem implies
that 
$
  \lim_{n \to \infty} \Em \left[ \alloc{\qstg'} \kappa^n \right]
  = \Em \left[ \alloc{\qstg'} \kappa^\infty \right],
$
so 
the inequality 
\begin{equation} \label{eq:nec.6}
  \Em \left[ \alloc{\qstg'} v - c^{\qstg'} \right]
  \leq \Em \left[ \alloc{\qstg'} \kappa^\infty - c^m \right]
\end{equation} 
follows 
from~\eqref{eq:nec.5} in the limit as $n \to \infty$.
Now, recalling the definition of $\qstg'$, we see 
that $\alloc{\qstg'}=0$ whenever
$\kappa \neq \kappa^\infty$ and that 
$\alloc{\qstg} - \alloc{\qstg'} =0$ whenever
$\kappa \neq v$. 
In addition, $\qstg(\infty)=\qstg'(\infty)$ and
hence $c^\qstg = c^{\qstg'}$. Therefore,
\begin{align*}
  \Em \left[ \alloc{\qstg} v - c^{\qstg} \right] 
  & =
  \Em \left[ \left( \alloc{\qstg} - \alloc{\qstg'} \right) v \right]
  + \Em \left[ \alloc{\qstg'} v - c^{\qstg'} \right]  \\
  & \leq
  \Em \left[ \left( \alloc{\qstg} - \alloc{\qstg'} \right) v \right]
  + \Em \left[ \alloc{\qstg'} \kappa^\infty - c^{\qstg'} \right] \\
  & =
  \Em \left[ \left( \alloc{\qstg} - \alloc{\qstg'} \right) \kappa \right]
  + \Em \left[ \alloc{\qstg'} \kappa - c^{m} \right] \\
  & =
  \Em \left[ \alloc{\qstg} \kappa - c^{m} \right]
\end{align*}
which establishes~\eqref{eq:nec.3} and 
completes the proof of the lemma.
\end{proof}

The following lemma generalizes
Lemma 1 from main text Subsection 2.2 to the
setting of multi-stage inspection.

\begin{lemma} \label{lem:general-ubp-restated}
For any inspection policy $\stg$, we have
\begin{equation} \label{eq:gubp}
  \E \left[ \allocstg v - c^\stg \right] \leq \E \left[ \allocstg \kappa \right],
\end{equation}
with equality when $\stg$ is non-exposed.
\end{lemma}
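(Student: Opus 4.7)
My plan is to prove the inequality by a direct application of Lemma \ref{lem:non-exp-contin} with a trivially non-exposed baseline policy, and then to establish the equality for non-exposed $\stg$ by constructing an extended policy using the tight indifference identity of Lemma \ref{lem:indifferent}.

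For the inequality, I would introduce the ``null'' inspection policy $\altstg_0$ defined by $\altstg_0(\omega,\tau) \equiv 0$ and $\alloc{\altstg_0}(\omega,\tau) \equiv 0$ (never advance past stage $0$, never acquire). I claim that $\altstg_0$ is non-exposed: the first condition of Definition \ref{def:non-exposed} requires $\Pr(\sigma^0 > \kappa^0) = 0$, which holds trivially since $\kappa^0 = \sigma^0$ by definition, while the second condition is vacuous because $\altstg_0(\infty) = 0 \neq \infty$ everywhere. For any inspection policy $\stg$, we have $\altstg_0(\infty) \leq \stg(\infty)$ and $\alloc{\altstg_0} \leq \allocstg$ pointwise, so Lemma \ref{lem:non-exp-contin} applied with $\altstg = \altstg_0$ and $\qstg = \stg$ yields $\E[\allocstg v - c^\stg] \leq 0 + \E[\allocstg \kappa]$.

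For equality, assume $\stg$ is non-exposed. It suffices to establish the reverse inequality $\E[\allocstg v - c^\stg] \geq \E[\allocstg \kappa]$; equivalently, the identity $\E[c^\stg] = \E[\allocstg(v - \kappa)]$, which collapses to $\E[c^\stg] = \E[\allocstg(v - \kappa^\infty)^+]$ since $\allocstg = 1$ forces $\stg(\infty) = \infty$ and then $\kappa = \min\{\kappa^\infty, v\}$. My proposed approach is to build an extension $\stg^\dagger$ satisfying $\stg \leq \stg^\dagger$ and $\allocstg \leq \alloc{\stg^\dagger}$ by iteratively splicing in the indifferent continuation $\qstg_k$ from Lemma \ref{lem:indifferent} at every sample point where $\stg$ stops at a finite stage $k$. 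Non-exposedness of $\stg$ guarantees that $\sigma^k = \kappa^k$ at each such stopping stage, so Lemma \ref{lem:indifferent}'s identity \eqref{eq:indifferent} shows that each splice preserves the quantity $\E[\cdot\, v - c^{\cdot}\,]$ while adding $\alloc{\qstg_k}\kappa^k = \alloc{\qstg_k}\kappa$ worth of acquisition value (since $\qstg_k$ only acquires when $v > \sigma^k = \kappa^k = \kappa$). Applying the already-proven inequality to $\stg^\dagger$ and tracking the bookkeeping between $\stg$ and $\stg^\dagger$ via a second application of Lemma \ref{lem:non-exp-contin} (with $\altstg = \stg$, $\qstg = \stg^\dagger$) then sandwiches $\E[\allocstg v - c^\stg]$ between $\E[\allocstg \kappa]$ and itself, forcing equality.

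The hard part will be making the iterative construction of $\stg^\dagger$ rigorous. Because each $\qstg_k$ may itself stop at some finite stage $\ell > k$ (whenever $\sigma^\ell \leq \sigma^k$), a second splice is needed, and so on, potentially without termination after any fixed number of rounds. The construction must therefore be carried out as a countable iteration, followed by a limit argument. The finite-cost hypothesis $\E[c^\infty] < \infty$ combined with the monotone convergence theorem will be essential to ensure that the tail contributions of the iterated splicings vanish and that the identity~\eqref{eq:indifferent} survives passage to the limit. Verifying that the resulting $\stg^\dagger$ is a valid inspection policy --- in particular, that $\{\stg^\dagger(\infty) > k\} \in \sigfld^{k,\tau}$ --- will require reusing the $\sigfld^{k,0}$-measurability of $\qstg_k$ established in the proof of Lemma \ref{lem:indifferent}, patched together across the splicing events in a manner analogous to the construction of $\altstg_{\ell+1}$ in that proof.
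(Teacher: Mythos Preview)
Your treatment of the inequality is identical to the paper's: apply Lemma~\ref{lem:non-exp-contin} with $\altstg$ the null policy.

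For the equality case, you invoke exactly the same tools as the paper --- iterated splicing with the indifferent continuation $\qstg_k$ of Lemma~\ref{lem:indifferent}, plus a tail argument via $\E[c^\infty]<\infty$ --- but your closing ``sandwich'' is circular as written. Set $G(\pi):=\E[\alloc{\pi}v-c^\pi]-\E[\alloc{\pi}\kappa]$. Your splicing bookkeeping shows $G(\stg^\dagger)=G(\stg)$; your second application of Lemma~\ref{lem:non-exp-contin} (with $\altstg=\stg$, $\qstg=\stg^\dagger$) shows $G(\stg)\geq G(\stg^\dagger)$; and the already-proven inequality applied to $\stg^\dagger$ shows $G(\stg^\dagger)\leq 0$. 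These three relations are jointly consistent with \emph{any} value $G(\stg)\leq 0$ and do not force $G(\stg)=0$. Indeed, the splicing identity is precisely the equality case of Lemma~\ref{lem:non-exp-contin} for the pair $(\stg,\stg^\dagger)$, so invoking the lemma again contributes nothing new.

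The paper avoids this trap by splicing onto the \emph{null} policy rather than onto $\stg$. Since $G(\text{null})=0$ trivially, the inductive splicing via Lemma~\ref{lem:indifferent} produces, for each $k$, a non-exposed $\qstg$ with $\qstg(\infty)>k$ and $G(\qstg)\geq 0$. One then forms $\qstg\vee\stg$: the tail bound $\E[c^\infty-c^k]<\eps$ together with $\alloc{\qstg\vee\stg}\geq\alloc{\qstg}$ and $v\geq\kappa$ gives $G(\qstg\vee\stg)>G(\qstg)-\eps\geq-\eps$, and Lemma~\ref{lem:non-exp-contin} applied to the non-exposed $\stg$ and $\qstg\vee\stg$ yields $G(\stg)\geq G(\qstg\vee\stg)>-\eps$. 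The essential difference is that the paper anchors the splicing at a policy where $G$ is \emph{known} to vanish, whereas your construction splices from $\stg$ itself, so the preserved quantity $G(\stg_n)=G(\stg)$ is the very unknown you are trying to pin down.
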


The formal proof of the lemma appears below.
The intuition underlying the proof
can be summarized by appealing to the option-theoretic
notions introduced in main text Subsection 2.2. Suppose that a third
party (the ``insurer'') agrees to underwrite all of the bidder's
inspection costs in exchange for being granted a call option
with strike price $\sigma$ along with the right to dictate the 
bidder's inspection policy. The defining property of $\sigma^k$
is that if the bidder is at inspection stage $k$ and the insurer
holds an option with strike price $\sigma^k$, she is indifferent
between telling the bidder to stop immediately or to engage in
a policy that incorporates at least one more stage of 
inspection. Since the value of exercising the option is a
decreasing function of its strike price, this implies the 
following two observations:
\begin{enumerate}
\item If $\sigma^k < \sigma$, the insurer must require 
the bidder to stop inspecting.
\item If $\sigma^k > \sigma$, the insurer must require
the bidder to continue inspecting.
\end{enumerate}
Now consider an option with a renegotiable strike price
that evolves as follows. Initially the strike price is set at
$\sigma = \sigma^0$, but whenever the bidder reaches an 
inspection stage $k$ such that $\sigma^k < \sigma$,
the strike price is reduced to $\sigma^k$. Now there are
only two types of states: those in which 
$\sigma = \sigma^k$ (possibly because $\sigma$ was 
just reduced to $\sigma^k$) and those in which 
$\sigma < \sigma^k$. In the former type of
state, the insurer is indifferent between asking the bidder
to continue inspecting or to stop immediately; in the latter
type of state, she strictly prefers for the bidder to 
continue inspecting. In other words, the insurer is 
indifferent between all non-exposed policies, and strictly
prefers non-exposed policies to exposed ones. Since 
one example of a non-exposed policy is the trivial policy
which stops immediately at stage 0 and yields zero net 
payoff for the insurer, it must be the case that {\em every}
non-exposed policy yields zero expected net payoff for the
insurer, and all other policies yield negative expected net payoff
for the insurer. Given the way we have defined the dynamics
of the option's renegotiable strike price, it is always the case 
that if the bidder  completes all stages of inspection and 
acquires the item, she pays $v - \kappa$ to the insurer.
Thus the insurer's net payoff is $\allocstg (v - \kappa) - c^\stg$.
We have argued that the expected value of this quantity is 
never positive, and that it equals zero when $\stg$ is non-exposed,
exactly as asserted in the lemma.

Having presented this intuition, we now present the formal
proof.
\begin{proof}[Proof of \Cref{lem:general-ubp-restated}]
The inequality~\eqref{eq:gubp} is the special case
of Lemma~\ref{lem:non-exp-contin} in which $\qstg=\stg$
and $\altstg$ is the trivial inspection policy given by
$\altstg(\omega,\tau)=0$ for all $(\omega,\tau)$. So, we
only need to prove that 
$\E \left[ \allocstg v - c^\stg \right] = \E \left[ \allocstg \kappa \right]$
when $\stg$ is non-exposed. To do so, it will be useful
to first construct, for any given $k \in \mathbb{N}$, 
a non-exposed prompt inspection policy $\qstg$ such that $\qstg(\infty) > k$
pointwise, and
\begin{equation} \label{eq:gubp.01}
  \E \left[ \alloc{\qstg} v - c^\qstg \right]
\geq
  \E \left[  \alloc{\qstg} \kappa  \right].
\end{equation}
The construction will proceed by induction on $k$. For $k=0$,
observe that the policy $\qstg_0$ defined in Lemma~\ref{lem:indifferent} 
is non-exposed and prompt. Setting $\qstg = \qstg_0$ and applying 
Lemma~\ref{lem:indifferent} we obtain
\[
  \E \left[ \alloc{\qstg} v - c^{\qstg} \right] \geq
  \E \left[ \alloc{\qstg} \sigma^0 - c^0 \right] \geq
  \E \left[ \alloc{\qstg} \kappa \right]
\]
since $\sigma^0 \geq \kappa$ and $c^0 = 0$.
This completes the base case of the induction.

For $k>0$ assume that we already have a
non-exposed prompt inspection policy $\qstg'$ such
that $\qstg'(\infty) > k-1$ pointwise, and 
\begin{equation} \label{eq:gubp.02}
  \E \left[ \alloc{\qstg'} v - c^{\qstg'} \right] 
  \geq \E \left[ \alloc{\qstg'} \kappa \right].
\end{equation}
Let $V$ be the set of all sample points 
where $\qstg'(\infty) = k$ and note that
$V \in \sigfld^{k,\infty}$. Recalling the policy
$\qstg_k$ defined in Lemma~\ref{lem:indifferent},
we define $\qstg$ to be equal to $\qstg_k$ on $V$
and to $\qstg'$ on $\Omega \setminus V$. By
construction $\qstg$ is prompt and 
$\qstg(\infty) > k$ pointwise. 
For any $\ell > k$ we have
\[
  \qstg(\infty) > \ell \Longleftrightarrow
  \left( \qstg'(\infty) = k \mbox{ and } 
          \qstg_k(\infty) > \ell \right) \mbox{ or }
  \qstg'(\infty) > \ell,
\]
which shows that the event $\qstg(\infty) > \ell$ 
belongs to $\sigfld^{\ell,0}$, confirming that 
$\qstg$ is a valid inspection policy. Our induction
hypothesis that $\qstg'$ is non-exposed implies
that at all sample points in the complement of $V$,
$\qstg$ satisfies the properties that define a 
non-exposed policy. At sample points that belong
to $V$, the facts that $\qstg'$ is non-exposed and
that $\qstg'(\infty) = k$ together imply 
$\sigma^k = \kappa^k$. This means that $\qstg$ 
(which mimics the policy $\qstg_k$ on $V$) stops at the
first $\ell > k$ such that $\sigma^\ell \leq \sigma^k$
if such an $\ell$ exists, and otherwise it completes all
inspection stages and acquires the item if and only
if $v > \sigma^k$. Recalling that $\sigma^k=\kappa^k$,
we see that the former case implies that $\qstg$
stops at a stage $\ell < \infty$ such that 
$\sigma^\ell = \kappa^\ell$, and the latter case
implies that $\qstg(\infty) = \infty$ and that
$\alloc{\qstg}=1$ if and only if 
$v > \sigma^k = \kappa$. This completes the 
verification that $\qstg$ is non-exposed.

Applying Lemma~\ref{lem:indifferent} we find that
\begin{equation} \label{eq:gubp.03}
  \E \left[ \indic_V \left( \alloc{\qstg} v - c^{\qstg} \right) \right]
  =
  \E \left[ \indic_V \left( \alloc{\qstg} \sigma^k - c^k \right) \right]
  = 
  \E \left[ \indic_V \left( \alloc{\qstg} \kappa - c^k \right) \right],
\end{equation}
where the second equation uses the fact that at every sample point in $V$ 
where $\alloc{\qstg} = 1$, the relation $\sigma^k = \kappa$ holds.
Summing~\eqref{eq:gubp.02} and~\eqref{eq:gubp.03}, and
rearranging terms, we obtain
\begin{equation} \label{eq:gubp.04}
  \E \left[ \left( \alloc{\qstg'} + \indic_V \alloc{\qstg} \right) v -
                   c^{\qstg'} - \indic_V c^{\qstg} + \indic_V c^k \right] \geq
  \E \left[ \left( \alloc{\qstg'} + \indic_V \alloc{\qstg} \right) \kappa \right].
\end{equation}
Recall that on the set $V$, we have $\qstg'(\infty) = k$, hence
the relations
$c^{\qstg'} = c^k$ and $\alloc{\qstg'} = 0$
hold pointwise on $V$. 
These observations, together with the fact that $\qstg = \qstg'$ 
on $\Omega \setminus V$, imply the relations
\begin{align*}
  \alloc{\qstg} &=  
  \indic_{\Omega \setminus V} \alloc{\qstg'} + \indic_V \alloc{\qstg} = 
  \alloc{\qstg'} + \indic_V \alloc{\qstg} 
\\
  c^{\qstg} &= 
  \indic_{\Omega \setminus V} c^{\qstg'} + \indic_V c^{\qstg} =
  c^{\qstg'} - \indic_V c^k + \indic_V c^{\qstg} 
\end{align*}
hence~\eqref{eq:gubp.04} simplifies to
$  \E \left[ \alloc{\qstg} v - c^\qstg \right] \geq 
  \E \left[ \alloc{\qstg} \kappa \right], $
as desired. This completes the induction step, and
establishes the existence of the claimed inspection
policy $\qstg$ for every $k \in \mathbb{N}$.

The monotone convergence theorem implies 
that $\E \left[ c^\infty - c^k \right] \to 0$
as $k \to \infty$, so for any given $\eps>0$ we 
may choose $k$ large enough
that $\E \left[ c^\infty - c^k \right] < \eps$.
Then we may choose a non-exposed prompt inspection policy 
$\qstg$ such that $\qstg(\infty) > k$ pointwise, and
$\E \left[ \alloc{\qstg} v - c^{\qstg} \right] \geq
  \E \left[ \alloc{\qstg} \kappa \right].
$
For the inspection
policy defined by the pointwise maximum
$\qstg \vee \stg$, we have
\begin{align}
\nonumber
  \E \left[ \alloc{\qstg \vee \stg} v - c^{\qstg \vee \stg} \right] &=
  \E \left[ \alloc{\qstg} v - c^{\qstg} \right] +
  \E \left[ \left( \alloc{\qstg \vee \stg} - \alloc{\qstg} \right) v \right] -
  \E \left[ c^{\qstg \vee \stg} - c^{\qstg} \right] \\
  & >
  \E \left[ \alloc{\qstg} \kappa \right] +
  \E \left[  \left( \alloc{\qstg \vee \stg} - \alloc{\qstg} \right) \kappa \right] -
  \eps.
\label{eq:gubp.06}
\end{align}
In deriving the last line, we made use of~\eqref{eq:gubp.01}
together with fact that 
$k < \qstg(\infty) \leq (\qstg \vee \stg)(\infty) \leq \infty$,
so $\E \left[ c^{\qstg \vee \stg} - c^{\qstg} \right] \leq
      \E \left[ c^\infty - c^k \right] < \eps.$

Now, applying Lemma~\ref{lem:non-exp-contin}
to the inspection policies $\stg$ and $\qstg \vee \stg$,
we obtain 
\begin{align*}
  \E \left[ \allocstg v - c^\stg \right] 
& \geq
  \E \left[ \alloc{\qstg \vee \stg} v - c^{\qstg \vee \stg} \right] -
  \E \left[ \left( \alloc{\qstg \vee \stg} - \allocstg \right)
              \kappa \right] \\
& >
  \E \left[ \alloc{\qstg \vee \stg} \kappa \right] - \eps -
  \E \left[ \left( \alloc{\qstg \vee \stg} - \allocstg \right)
              \kappa \right] \\
& = 
  \E \left[ \allocstg \kappa \right] - \eps.
\end{align*}
As $\eps>0$ was arbitrarily small, we conclude
that $\E \left[ \allocstg v - c^\stg \right] \geq
\E \left[ \allocstg \kappa \right]$. The reverse
inequality was already established in the first
paragraph of this proof, so the proof is complete.
\end{proof}

We conclude this section with a technical lemma
about the generalized covered call value which 
will be useful in the sequel, when we analyze the
first-best procedure and the equilibria of Dutch
auctions.

\begin{lemma} \label{lem:v-k}
At every sample point where $v > \kappa \geq 0$,
except possibly a measure-zero set of exceptions, 
there exists some $k < \infty$ such that 
$\kappa = \sigma^k < \inf_{\ell > k} \{\sigma^\ell\}$.
\end{lemma}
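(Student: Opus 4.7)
The plan is to construct the desired stage $k$ by iterating the indifference policies $\qstg_k$ from \Cref{lem:indifferent} and then to use the amortization identity of \Cref{lem:general-ubp-restated} to rule out pathological events almost surely. On the event $\{v > \kappa \geq 0\}$, the identity $\kappa = \min\{v,\kappa^\infty\}$ forces $\kappa = \kappa^\infty$, so the task reduces to identifying a finite $k$ at which $\kappa^\infty$ is both attained as a value of $\sigma^k$ and strictly isolated from above by the remainder of the sequence.

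First I would set $k_0 = 0$ and define $k_{i+1}$ inductively as the least $\ell > k_i$ with $\sigma^\ell \leq \sigma^{k_i}$, putting $k_{i+1}=\infty$ if no such $\ell$ exists. By construction $(\sigma^{k_i})$ is non-increasing and $\sigma^{k_i}=\kappa^{k_i}$, so the sample space partitions into Case A, on which $k_{I+1}=\infty$ for some finite $I$ (so that $\sigma^\ell > \sigma^{k_I}$ strictly for every $\ell > k_I$ and $\sigma^{k_I}=\kappa^\infty$); and Case B, on which every $k_i$ is finite and hence $k_i \to \infty$. To rule out Case B on $\{v > \kappa^\infty\}$, I would apply \Cref{lem:general-ubp-restated} to two non-exposed policies: the truncated iterate $\tilde{\stg}_K$ that runs the procedure for at most $K$ steps and either completes inspection and acquires iff $v > \kappa^\infty$ (upon reaching Case A by step $K$) or else stops at stage $k_K$ without acquisition (non-exposed because $\sigma^{k_K}=\kappa^{k_K}$ at truncation); and the policy that always inspects all stages and acquires iff $v > \kappa^\infty$. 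Taking $K \to \infty$ and invoking monotone convergence for $c^{k_K} \nearrow c^\infty$ on Case B yields
\[
  \E\!\left[\indic_{\text{Case A}}\indic_{v>\kappa^\infty}(v-\kappa^\infty)\right] \;=\; \E[c^\infty] \;=\; \E\!\left[\indic_{v>\kappa^\infty}(v-\kappa^\infty)\right],
\]
whose difference forces $\mu(\text{Case B}\cap\{v>\kappa^\infty\})=0$, since the integrand is strictly positive on that intersection.

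The main obstacle will be upgrading Case A to the strict infimum inequality $\inf_{\ell > k_I}\sigma^\ell > \sigma^{k_I}$ demanded by the lemma, because the iteration only guarantees the pointwise inequality $\sigma^\ell > \sigma^{k_I}$, which remains consistent with the pathological behavior $\sigma^\ell \downarrow \sigma^{k_I}$ along a subsequence. I would handle this by a countable-tolerance refinement: for each rational $\delta > 0$, define a parallel iteration by setting $k^{(\delta)}_{i+1}$ to be the least $\ell > k^{(\delta)}_i$ with $\sigma^\ell \leq \sigma^{k^{(\delta)}_i} - \delta$ (and $\infty$ otherwise). Since each step strictly decreases $\sigma$ by at least $\delta$ and $\sigma^{k^{(\delta)}_i} \geq \kappa^\infty$, the iteration terminates in finitely many steps at a stage $k^{(\delta)}$ with $\inf_{\ell > k^{(\delta)}}\sigma^\ell \geq \sigma^{k^{(\delta)}} - \delta$. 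A non-exposed composite of this refinement with the original iteration, fed through the same amortization identity, would show that the ``near-attainment'' subset of Case A on which $\inf_{\ell > k_I}\sigma^\ell < \sigma^{k_I}+\delta$ has zero measure on $\{v > \kappa^\infty\}$ for each rational $\delta$; intersecting over $\delta \in \mathbb{Q}^+$ then delivers the strict isolation almost surely and completes the proof.
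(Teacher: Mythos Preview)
Your first maneuver—eliminating Case~B by applying \Cref{lem:general-ubp-restated} to the truncated iterate $\tilde{\stg}_K$ and to the always-inspect policy—is sound (modulo integrability bookkeeping if $\kappa^\infty$ can reach $-\infty$ off the target event). The gap is in your treatment of the strict-infimum obstacle. As literally written, your claim that $\{\inf_{\ell>k_I}\sigma^\ell < \sigma^{k_I}+\delta\}$ is null for every rational $\delta>0$ would force $\inf_{\ell>k_I}\sigma^\ell=+\infty$ almost everywhere on Case~A, which is absurd; so something has gone wrong in the formulation. More fundamentally, the amortization identity cannot separate the ``good'' subcase $\inf_{\ell>k_I}\sigma^\ell > \sigma^{k_I}$ from the ``bad'' subcase $\inf_{\ell>k_I}\sigma^\ell = \sigma^{k_I}$. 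On Case~A, every $\ell>k_I$ satisfies $\sigma^\ell > \kappa^\ell=\sigma^{k_I}$, so any non-exposed policy that advances beyond $k_I$ is forced all the way to stage~$\infty$; consequently on $\{v>\kappa^\infty\}$ the quantities $\allocstg,\,c^\stg,\,\kappa$ entering the identity carry no information about whether the tail infimum is strict. Your $\delta$-iteration does not help either: its record times form a subset of the original record times $\{k_i\}$ (one checks $\sigma^{k_i^{(\delta)}}=\kappa^{k_i^{(\delta)}}$ inductively), so in Case~A they all lie at or before~$k_I$, and any inspection policy built from them again either halts at some $k_j\le k_I$ or runs to~$\infty$.

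The paper obtains the needed tail control directly, proving $\liminf_{k\to\infty}\sigma^k \geq v^+$ almost everywhere on $\{\kappa\geq 0\}$. The argument combines martingale convergence for $\E[v^+\mid\sigfld^{k,\infty}]$ and $\E[c^\infty\mid\sigfld^{k,\infty}]$ with the fact that $\sigma^k$ discourages inspection at stage~$k$: the always-inspect-and-acquire-if-$v\geq 0$ policy would strictly beat the stage-$k$ outside option on any positive-measure set where $\E[v^+\mid\sigfld^{k,\infty}]>\sigma^k+\eps\geq\eps$ and $\E[c^\infty-c^k\mid\sigfld^{k,\infty}]<\eps$, so each such set is null. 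On $\{v>\kappa\geq 0\}$ this yields $\liminf_k\sigma^k\geq v>\kappa^\infty=\inf_k\sigma^k$, whence only finitely many~$k$ attain the infimum, and at the greatest such~$k$ the tail $\{\sigma^\ell:\ell>k\}$ is bounded strictly away from~$\sigma^k$.
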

\begin{proof}
For any $k < \infty$ and $\eps > 0$ let $U_{k,\eps}$
denote the set of sample points where 
$\E [ v^+ \given \sigfld^{k,\infty} ] > \sigma^k + \eps \geq \eps$
and $\E [ c^\infty \given \sigfld^{k,\infty} ] < c^k + \eps $.
We claim that $U_{k,\eps}$ has measure zero. Indeed, let $\stg$ denote
the inspection policy that always completes all stages of 
inspection, and then acquires the item if $v \geq 0$. 
The strict inequality 
\begin{equation} \label{eq:v-k.1}
  \E [ v^+ - c^\infty \given \sigfld^{k,\infty} ] > \sigma^k - c^k
\end{equation}
holds pointwise on $U_{k,\eps}$. Since $\allocstg=1$ if and only if 
$v \geq 0$, the equation $\allocstg v = v^+$ holds pointwise.
Since $\sigma^k \geq 0$ at every point of $U_{k,\eps}$,
the inequality $\allocstg \sigma^k \leq \sigma^k$ holds 
pointwise on $U_{k,\eps}$. Combining these relations 
with~\eqref{eq:v-k.1} we find that 
\begin{equation} \label{eq:v-k.2}
  \E [ \allocstg v - c^\stg \given \sigfld^{k,\infty} ] >
  \allocstg \sigma^k - c^k
\end{equation}
holds at every point of $U_{k,\eps}$. On the other hand, 
since $\sigma^k$ discourages inspection, the
opposite inequality 
$
  \E [ \allocstg v - c^\stg \given \sigfld^{k,\infty} ] \leq
  \allocstg \sigma^k - c^k
$ 
holds pointwise almost everywhere. The only way 
to reconcile these two statements is to conclude that
$U_{k,\eps}$ has measure zero.

Let $V$ denote the set of all sample points
at which 
the relations 
\begin{equation} \label{eq:v-k.3}
  \E [ v^+ \given \sigfld^{k,\infty} ] > v^+ - \eps
  \qquad \mbox{and} \qquad
  \E [ c^\infty \given \sigfld^{k,\infty} ] < c^k + \eps
\end{equation}
are violated for only finitely many values of $k$. 
By the martingale convergence theorem, we have
$\lim_{k \to \infty} \E [ v^+ \given \sigfld^{k,\infty} ] = v^+$
and 
$\lim_{k \to \infty} \E [ c^\infty \given \sigfld^{k,\infty} ] = c^\infty
= \lim_{k \to \infty} c^k$
almost everywhere, and this implies that 
$V_{\eps}$ has measure 1. 

Now consider the set 
$$
  W_{\eps} = V_{\eps} \setminus 
            \left( \bigcup_{k=0}^{\infty} U_{k,\eps} \right),
$$
which also has measure 1. At any point in 
$W_{\eps}$ there is some $k_0 < \infty$ such
that for all $k \geq k_0$, both of the relations
in line~\eqref{eq:v-k.3} hold. However, since the
point does not belong to $U_{k,\eps}$, it must be
the case that 
$\E [ v^+ \given \sigfld^{k,\infty} ] > \sigma^k + \eps \geq \eps$
does not hold. In other words, either
$\sigma^k < 0$ or 
$\E [ v^+ \given \sigfld^{k,\infty} ] \leq \sigma^k + \eps$.
If there is any $k$ such that $\sigma^k < 0$, then $\kappa <0$.
Otherwise, for all $k \geq k_0$, 
$\E [ v^+ \given \sigfld^{k,\infty} ] \leq \sigma^k + \eps$.
Combining this with~\eqref{eq:v-k.3}, we have that at any point
of $W_\eps$ where $\kappa \geq 0$,
\begin{equation} \label{eq:v-k.4}
  \forall k \geq k_0 \quad 
  v^+ < \E [ v^+ \given \sigfld^{k,\infty} ] + \eps \leq \sigma^k + 2 \eps
\end{equation}
and therefore
\begin{equation} \label{eq:v-k.5}
  v^+ < \left( \liminf_{k \to \infty} \sigma^k \right) + 2 \eps.
\end{equation}
Since each of the sets
$W_{\eps}$ has measure 1 for $\eps=1,\frac12,\frac13,\ldots$, their 
intersection $W = \bigcap_{n=1}^{\infty} W_{1/n}$ has
measure 1, and the relation 
$v^+ \leq \liminf_{k \to \infty} \sigma^k$ 
holds at every point of $W$ where $\kappa \geq 0$. 
Recalling that $\kappa = \min\{v, \kappa^\infty\}$
and that $\kappa^\infty = \lim_{k \to \infty} \kappa^k = 
\inf_{k \in \mathbb{N}}  \sigma^k $,
we may conclude that at any point of $W$ where $v > \kappa \geq 0$,
we must have 
\begin{equation} \label{eq:v-k.6}
  \liminf_{k \to \infty} \sigma^k \geq v^+ = v > \kappa = \inf_{k \in
  \mathbb{N}} \sigma^k,
\end{equation}
from which we may conclude that the infimum on the right side
of~\eqref{eq:v-k.6} is achieved at a finite $k$, and that if $k$
is the greatest integer satisfying $\kappa = \sigma^k$ then
$\kappa < \inf_{\ell > k} \{\sigma^\ell\}$.
\end{proof}

\subsection{First-best Procedure}
\label{sec:multistage-firstbest}

\newcommand{\tk}{{\tilde{\kappa}}}

Weitzman's optimal search procedure generalizes,
in the setting of multi-stage inspection, 
to the following {\em descending-priority procedure}.
In describing the procedure, we assume that steps of
the procedure are numbered by (possibly transfinite,
but at most countable)
ordinal numbers, so that it is meaningful to refer 
to a step in which one or more bidders have already 
completed all of the (countably many) inspection 
stages. When we specialize to a setting 
in which bidders are guaranteed to learn all 
information about their value after performing a
finite number of stages of inspection, then we
may assume without loss of generality 
that their inspection stage
advances to $k=\infty$ as soon as they have learned
all information about their value, and then we can
assume that steps of the procedure are indexed by 
natural numbers rather than transfinite ordinals.

The descending-priority procedure assigns to each bidder
$i$ a {\em priority} defined according to her
current inspection stage $k(i)$ as follows: 
if $k(i) < \infty$ then the priority is set 
to the generalized strike price $\sigma_i^{k(i)}$; 
if $k(i)=\infty$ then the priority is set to $v_i$. 
In every step of the procedure, the bidder $i$ with
highest priority is selected, breaking ties arbitrarily.
If this priority is negative, we terminate the procedure
without allocating the item. 
If $k(i) < \infty$ then we perform the next inspection
stage for bidder $i$, increment $k(i)$, and recompute
bidder $i$'s priority using this updated value of $k(i)$. 
If $k(i)=\infty$ then we allocate the item to bidder $i$
and terminate.

To analyze the descending-priority procedure, it will
be useful to define a state variable $\tk_i$ for each
bidder $i$. In any step of the procedure $\tk_i$ is 
set equal to $\kappa_i^{k(i)}$ if $k(i)<\infty$,
and otherwise $\tk_i = \kappa_i$. 
In other words, $\tk_i$ is equal to the infimum
of the priority values assigned to bidder $i$
at past and present steps of the procedure;
accordingly, we will refer to it as her 
{\em min-priority}. 

The following property of the descending-priority
procedure will be instrumental in its analysis. 

\begin{lemma} \label{lem:descending-priority}
In any step of the descending-priority procedure,
the priority and min-priority of any bidder are
equal unless she is the unique bidder with
maximum priority; in the latter case, her min-priority
is greater than or equal to that of all other bidders.
\end{lemma}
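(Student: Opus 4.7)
The plan is to proceed by induction on the step number $t$ of the descending-priority procedure, verifying simultaneously the two assertions of the lemma as a compound invariant. At the initial step, every bidder $i$ has $k(i)=0$, so her priority $\sigma_i^0$ coincides with her min-priority $\tk_i=\kappa_i^0=\sigma_i^0$, and both claims hold vacuously.

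For the inductive step, suppose the invariant holds at the start of step $t$ and let $j$ be the bidder selected for processing at that step. For every other bidder $i\neq j$, the values $k(i)$, priority, and min-priority are unchanged. By the inductive hypothesis each such $i$ had priority equal to min-priority at step $t$ (the only possible exception would be a unique maximum bidder, which must be $j$), and this equality therefore persists at step $t+1$. For $j$ herself, the new priority is $p_j^{\mathrm{new}}$ (equal to $\sigma_j^{k(j)+1}$ or $v_j$), while the new min-priority is $\tk_j^{\mathrm{new}}=\min(\tk_j,p_j^{\mathrm{new}})$.

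The key preliminary observation is that at step $t$ one has $\tk_j\geq \tk_i$ for every $i\neq j$. When $j$ is the unique maximum at step $t$ this is exactly the second assertion of the inductive hypothesis. When $j$ is merely tied for the maximum, the inductive hypothesis forces priority to equal min-priority for every bidder, and $j$'s priority dominates all others, so her min-priority does too. Using this, I split into sub-cases according to the sign of $p_j^{\mathrm{new}}-\tk_j$. If $p_j^{\mathrm{new}}\geq \tk_j$, then $\tk_j^{\mathrm{new}}=\tk_j$; when the inequality is strict, $j$ becomes the unique maximum at step $t+1$ (since $p_j^{\mathrm{new}}>\tk_j\geq \tk_i=p_i^{\mathrm{new}}$ for every $i\neq j$) and her min-priority $\tk_j$ continues to dominate all others' min-priorities, while when $p_j^{\mathrm{new}}=\tk_j$ priority again equals min-priority for $j$. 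If instead $p_j^{\mathrm{new}}<\tk_j$, then $\tk_j^{\mathrm{new}}=p_j^{\mathrm{new}}$ and $j$'s priority equals her min-priority; in this case I also verify that should some other $i\neq j$ become uniquely maximum, the invariant holds, because $\tk_i=p_i>p_j^{\mathrm{new}}\geq \tk_j^{\mathrm{new}}$ while $\tk_i>p_{i'}=\tk_{i'}$ for remaining $i'\neq i,j$.

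The only delicate point is to keep straight that ``unique maximum'' refers to the priorities themselves, not to the tiebreaking outcome of the selection rule, and to track carefully how $j$'s min-priority may drop while her priority changes independently; once the preliminary observation is in place, the remaining reasoning is a mechanical case analysis that I expect to dispatch in a few lines.
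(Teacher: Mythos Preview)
Your successor-step argument is essentially the paper's own, and it is correct as far as it goes. The genuine gap is that you have omitted the limit-ordinal case entirely. The descending-priority procedure is indexed by (possibly transfinite) ordinals: a bidder reaches inspection stage $k(i)=\infty$ only after being selected at infinitely many earlier steps, so the step at which her priority first becomes $v_i$ is a limit ordinal, not the successor of any earlier step. An induction that passes from step $t$ to step $t+1$ says nothing about such steps. Your parenthetical ``(equal to $\sigma_j^{k(j)+1}$ or $v_j$)'' suggests you are implicitly assuming finitely many inspection stages, so that $k=\infty$ is reached in a single increment; in the paper's model stages are indexed by $\mathbb{N}\cup\{\infty\}$ and the procedure increments $k(i)$ by one per step, so this shortcut is not available.

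The limit case is where the real work lies. At a limit step an \emph{active} bidder $i$ has just reached $k(i)=\infty$, with priority $v_i$ and min-priority $\kappa_i$. When $v_i>\kappa_i$ you must show that $i$ is the unique maximum-priority bidder and that $\kappa_i$ dominates all other min-priorities. The paper does this via \autoref{lem:v-k}: whenever $v_i>\kappa_i\geq 0$ there is a finite $k$ with $\kappa_i=\sigma_i^k<\inf_{\ell>k}\sigma_i^\ell$. At the step where $i$ advanced past stage $k$ her priority $\sigma_i^k$ was already maximal, and thereafter her priority remained strictly above $\sigma_i^k$, so she was uniquely selected at every subsequent step up to the limit and no other bidder's state changed in the interim. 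That structural fact about the sequence $(\sigma_i^\ell)$ is the missing ingredient; without it the limit case does not reduce to your successor analysis.
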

\begin{proof}
The proof is by (transfinite) induction on the 
steps of the procedure. Initially, when $k(i)=0$
for each bidder, each bidder's priority equals
her min-priority. 
the priority of each bidder $i$ is equal to 
$\sigma_i^0$, which in turn equals $\tk_i$.

If the current step of the procedure is a 
successor ordinal, that let $i$ denote the 
bidder who performed a stage of inspection 
at the end of the preceding step. Any bidder
$i' \neq i$ could not have been the unique 
maximum-priority bidder (else the procedure
would have selected $i'$ rather than $i$ to
perform a stage of inspection) so by the 
induction hypothesis, the priority and 
min-priority of $i'$ were equal in the preceding
step. They remain equal in the current step
because $k(i')$ remains unchanged. As for 
bidder $i$, in the preceding step she had
the maximum priority; by the induction
hypothesis this means that she also had the
maximum min-priority. In the current step
there are two cases: if her new priority is
equal to her min-priority, then the lemma's
conclusion is clearly satisfied. If her new
priority is strictly greater than her min-priority,
then her min-priority must have retained its
value from the preceding step; this is greater
than or equal to the min-priority of all other bidders
because they, too, have retained their min-priority
from the preceding step.

If the current step is a limit ordinal, then
there are two types of bidders to consider:
{\em dormant bidders}, for whom the 
inspection counter $k(i)$ reached
its current value at a strictly earlier step
of the procedure, and 
{\em active bidders}, for whom $k(i)=\infty$ at the current step
but $k(i)<\infty$ at all strictly earlier steps.

For any dormant bidder $i$, the induction hypothesis
guarantees that the priority and min-priority are
equal: consider the earliest step at which $k(i)$
reached its current value. We know that $i$ could
not have been the unique bidder with maximum
priority in that step, as otherwise she would have
been selected to perform an additional inspection
stage. Thus, at that earlier step the priority and
min-priority of $i$ were equal, and neither of them
has subsequently been updated. 

On the other hand,
for an active bidder $i$, since $k(i)=\infty$,
her priority is equal to $v_i$ whereas her min-priority
is equal to $\kappa_i$. If $v_i = \kappa_i$ then
the lemma's conclusion holds for bidder $i$, so 
assume henceforth that $v_i > \kappa_i$. 
The descending-priority 
procedure never selects a bidder with negative
priority to perform an inspection stage, so 
$\kappa_i \geq 0$. Now applying \autoref{lem:v-k}
we find that there is some $k < \infty$ such 
that $\kappa_i = \sigma_i^k <
\inf_{\ell > k} \{ \sigma_i^\ell \}$. 
Let $t$ denote the step of the 
descending-priority procedure in which 
bidder $i$ was selected to advance from inspection
stage $k$ to $k+1$. During step $t$ her priority
$\sigma_i^k$ was greater than or equal to the
priority of all other bidders, and 
since $\sigma_i^k <
\inf_{\ell > k} \{ \sigma_i^\ell \}$
it follows that her priority was {\em strictly}
greater than the priority of all other bidders
in every step from $t+1$ until she reached inspection
stage $\infty$. Thus, in the current step
bidder $i$ is the only active bidder. 
Furthermore, the min-priorities
of all bidders including $i$ are
the same as they were in step $t$,
which implies that the min-priority
of bidder $i$ is greater than or equal
to that of all other bidders, as claimed.
\end{proof}

\begin{theorem} \label{thm:multistage-firstbest}
The expected welfare achieved by the
descending-priority procedure is equal
to $\E [ \max_i \kappa_i^+ ]$.
No other procedure can attain a higher
expected welfare.
\end{theorem}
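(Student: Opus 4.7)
The plan is to combine the amortization lemma (\Cref{lem:general-ubp-restated}) with the min-priority structure provided by \Cref{lem:descending-priority} to obtain both the upper bound and the achievability.

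For the upper bound, I would proceed as in the proof of \Cref{cor:upper-bound-opt}. Any procedure induces, for each bidder $i$, an inspection policy $\stg_i$ (the restriction of the procedure to bidder $i$'s $\sigma$-fields), with at most one bidder receiving the item. \Cref{lem:general-ubp-restated} applied bidder-by-bidder yields
\[
  \E\left[\sum_i \alloc{\stg_i} v_i - c^{\stg_i}\right] \leq \E\left[\sum_i \alloc{\stg_i} \kappa_i\right] \leq \E\left[\max_i \kappa_i^+\right],
\]
where the last inequality uses that $\sum_i \alloc{\stg_i} \leq 1$ and that the planner can always decline to allocate (so we compare against $\kappa_i^+$ rather than $\kappa_i$).

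For achievability, I would show two things about the descending-priority procedure: (a) the induced inspection policy $\stg_i$ is non-exposed for every bidder $i$, and (b) it awards the item to the bidder attaining $\max_i \kappa_i^+$ whenever this quantity is positive. Given (a), \Cref{lem:general-ubp-restated} gives $\E[\alloc{\stg_i} v_i - c^{\stg_i}] = \E[\alloc{\stg_i} \kappa_i]$, and summing together with (b) yields expected welfare $\E[\max_i \kappa_i^+]$, matching the upper bound.

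To verify (a), I would fix $i$ and consider the step at which $\stg_i$ ceases advancing. If this happens at a finite stage $k = \stg_i(\infty) < \infty$, bidder $i$ is not uniquely maximum at that step (otherwise she would have been selected to advance), so by \Cref{lem:descending-priority} her priority equals her min-priority, i.e.\ $\sigma_i^k = \kappa_i^k$, giving the first non-exposure condition. If instead $\stg_i(\infty) = \infty$ and bidder $i$ fails to win, then at the terminating step $i$ has priority $v_i$ and min-priority $\kappa_i$ while some other bidder $j$ is uniquely maximum; again by \Cref{lem:descending-priority}, $v_i = \kappa_i$, so $v_i > \kappa_i$ cannot occur, giving the second condition. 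To verify (b), note that when the procedure awards the item to some bidder $i$, she is uniquely maximum with $k(i) = \infty$ and min-priority $\kappa_i \geq 0$; by \Cref{lem:descending-priority}, $\kappa_i \geq \tk_{i'}$ for every other $i'$, and since $\tk_{i'} \geq \kappa_{i'}^\infty \geq \kappa_{i'}$ always holds, we obtain $\kappa_i = \max_{i'} \kappa_{i'}^+$. Conversely, when the procedure terminates without allocating, every bidder has priority below $0$, from which $\kappa_i \leq 0$ for all $i$ and hence $\max_i \kappa_i^+ = 0$.

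The main obstacle is the bookkeeping in step (a) when the procedure runs through transfinite stages for some bidder: one has to argue that the relevant min-priority identities persist in the limit. This is already done for us inside \Cref{lem:descending-priority}, whose limit-ordinal case treats exactly active versus dormant bidders, so once I invoke that lemma the argument reduces to the finite, case-by-case verification described above.
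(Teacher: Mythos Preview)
Your proposal is correct and follows essentially the same route as the paper: the upper bound via \Cref{lem:general-ubp-restated} summed over bidders, and achievability by verifying that the descending-priority procedure is non-exposed for each bidder and allocates to a bidder attaining $\max_i \kappa_i^+$.

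One small slip to patch: in your non-exposure check for the case $\stg_i(\infty)=\infty$ with $i$ not winning, you assert that ``some other bidder $j$ is uniquely maximum,'' but this need not hold---for instance, $i$ herself may be the unique maximum with $v_i<0$, so the procedure terminates without allocating. The conclusion $v_i=\kappa_i$ still holds in that residual case, since the procedure only advanced $i$ through stage $k$ when $\sigma_i^k\geq 0$, so $\kappa_i^\infty\geq 0>v_i$ forces $\kappa_i=\min\{\kappa_i^\infty,v_i\}=v_i$. The paper sidesteps this case split by running the argument in the contrapositive direction: whenever bidder $i$'s priority strictly exceeds her min-priority, \Cref{lem:descending-priority} makes her the unique maximum, so she is either advanced another stage or awarded the item; hence exposure never occurs.
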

\begin{proof}
By \autoref{lem:general-ubp-restated}, for any
procedure the expected net utility of bidder $i$
is at most $\E [ \mathbb{A}_i \kappa_i ]$, where
$\mathbb{A}_i$ is the indicator of the event that
the item is awarded to $i$. Summing over bidders,
the expected welfare of any procedure is at most
$\E \left[ \sum_i \mathbb{A}_i \kappa_i \right]$,
which in turn is bounded above by $\E \left[ \max_i (\kappa_i)^+ \right]$.

To see that the descending-priority procedure 
attains this upper bound, we will show that it 
induces a non-exposed inspection policy for each
bidder, and that it always awards the item to a
bidder $i$ in $\arg\max_i \{ \kappa_i \}$ 
unless $\kappa_i < 0$, in which case the item is not allocated.
From these two facts, \autoref{lem:general-ubp-restated}
implies that the expected welfare of the
descending-priority procedure is equal to
$\E \left[ \max_i (\kappa_i)^+ \right]$, exactly 
as in the proof of Theorem 1. 

Consider any step
of the descending-priority procedure and suppose
that either
\begin{enumerate}
\item $k(i) < \infty$ and 
$\sigma_i^{k(i)} > \kappa_i^{k(i)}$, or
\item $k(i) = \infty$ and
$v_i > \kappa_i$.
\end{enumerate}
In both case the priority of bidder $i$ is strictly 
greater than her min-priority. By 
\autoref{lem:v-k} this means that $i$
is the unique bidder with maximum priority.
In the first case this means she will be selected to perform
another stage of inspection. In the second case it means
the item will be awarded to her. Thus, the
descending-priority procedure never stops
inspecting bidder $i$ when she is at a stage
with $\sigma_i^{k(i)} > \kappa_i^{k(i)}$, and it
never fails to award the item to a bidder 
who has completed all inspection stages and
found that $v_i > \kappa_i$. This confirms
that the policy is non-exposed. 

To conclude we must show that the procedure 
always awards the item to a
bidder $i$ in $\arg\max_i \{ \kappa_i \}$ 
unless $\kappa_i < 0$, in which case the item is not allocated.
If the procedure terminates by awarding the item to $i$, then
$k(i) = \infty$, so the min-priority of bidder
$i$ is $\kappa_i$. By \autoref{lem:v-k} this is 
greater than or equal to the min-priority of
every other bidder $i'$, which in turn is an upper
bound on $\kappa_{i'}$, so $i$ is among the bidders
with maximum covered call value as claimed. Furthermore,
it cannot be the case that $\sigma_i^k < 0$ for any
$k \in \mathbb{N}$ --- as otherwise the procedure
would never have advanced bidder $i$ beyond inspection
stage $k$ --- nor can it be the case that $v_i < 0$,
since the procedure never awards the item to a bidder
with negative value. Accordingly, $\kappa_i \geq 0$ as
claimed. The remaining case to consider is that the
procedure terminates without awarding the item to any 
bidder. This can only happen if the priority of every
bidder is strictly negative. Since each bidder's priority
is an upper bound on her min-priority, which in turn is an
upper bound on her covered call value, this means that all
covered call values are strictly negative as claimed.
\end{proof}

\subsection{Application to Dutch Auction Analysis}
\label{sec:multistage-dutch}

We now sketch
how our main results concerning the analysis of the
Dutch auction (main text Lemma 2, Theorem 2, and Corollary 2)
extend to the setting of multi-stage inspection, 
as claimed in main text Subsection 6.1.
Since we are 
considering a Dutch auction for a single item, we 
may drop the subscript $j$ from our $\sigma$-fields
and denote them by $\sigfld_{i}^{k,\tau}$. 

In the Dutch auction, we will assume that 
the descending price is given by a continuous
non-increasing function $t$, with $t(\tau)$ denoting
the price when the clock time is $\tau$. The
$\sigma$-field $\sigfld_i^{k,\tau}$ is generated
by bidder $i$'s private type $\theta_i$, the outcomes
of her first $k$ stages of inspection, and a random
variable $s(\tau)$ describing the state of the auction 
at clock time $\tau$; this state variable $s(\tau)$
takes the value $(i',\tau')$ if some
bidder $i'$ was allocated the item at time $\tau' < \tau$
and otherwise it takes a null value $\perp$. 

A strategy
for bidder $i$ is given by an inspection policy
$\stg_i$, subject to the feasibility constraint that
the bidder cannot attempt to acquire the item if it
has already been allocated to another bidder:
$\alloc{\stg_i}(\omega,\tau)=0$ unless $s(\tau)=\perp$.
We will use $\awardi(\omega,\tau)$
to denote the indicator random variable of the event that
the item is allocated to bidder $i$ at time $\tau$ or 
earlier. The only case in which $\awardi(\omega,\tau)$
may differ from $\alloc{\stg_i}(\omega,\tau)$ is when two
or more bidders attempt to acquire the item at time $\tau$,
in which case the winner is decided by random tie-breaking. 
The price charged to bidder $i$ is 
\[
  t_i(\omega) = \begin{cases}
    0 & \mbox{if } \awardi(\omega,\tau)=0 \;\; \forall \tau \\
    \sup \{ t(\tau) \mid \awardi(\omega,\tau)=1 \}
      & \mbox{otherwise.}
  \end{cases}
\]

The covered counterpart of bidder $i$ is a bidder (numbered $i+n$,
as in main text Section 4) whose inspection costs are identically zero and
whose value is equal to the random variable $\kappa_i$, the
generalized covered call value of bidder $i$. A Dutch auction
strategy for bidder $i+n$ is an inspection policy $\stg_{i+n}$
adapted to the filtration $\{ \sigfld_i^{k,\tau} \}$, subject to the
same feasibility constraint articulated above: bidder $i+n$ cannot attempt to 
acquire the item if it has already been awarded to another bidder.
By allowing the strategy $\stg_{i+n}$ to be adapted to the 
filtration $\{ \sigfld_i^{k,\tau} \}$ we are implicitly assuming
that bidder $i+n$ is granted knowledge of all of bidder $i$'s
inspection costs and outcomes, as well as her value for acquiring 
the item. The justification for this assumption is the same as 
the justification given in the first paragraph of the proof
of Theorem 2 and its accompanying footnote.

As in main text Section 4, we will use $\mathcal P$ to denote a
(possibly correlated) distribution of types for
$n$ bidders, and $\ccpt{\mathcal P}$
denotes the distribution of their covered counterparts.

\begin{lemma}[Lemma 2 restated]
 \label{lem:general-equivwelfare}
The highest expected welfare achievable by any procedure when types are distributed
according to $\mathcal P$ is equal to the highest expected welfare achievable when bidders are replaced with their
covered counterparts, who face no inspection costs and have types distributed
according to $\ccpt{\mathcal P}$.
\end{lemma}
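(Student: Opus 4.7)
The plan is to mimic the proof of the single-stage Lemma 2 (\Cref{equivwelfare}), substituting \Cref{thm:multistage-firstbest} for the role played by \Cref{thm:opt-is-descending} and using the generalized covered call value $\kappa_i$ from \Cref{def:gen-ccv} in place of the single-stage covered call value. Concretely, the plan breaks into two matching welfare computations whose answers must coincide.

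First I would observe that, in the multi-stage setting, the covered counterpart distribution $\ccpt{\mathcal P}$ should be defined analogously to \Cref{equivdist}: the covered counterpart of bidder $i$ is a bidder whose inspection costs vanish at every stage (so $c_i^k \equiv 0$ for all $k$) and whose value, known at stage $0$, equals the generalized covered call value $\kappa_i$. Under this definition, $\ccpt{\mathcal P}$ is simply the joint law of $(\kappa_1,\dots,\kappa_n)$ when $(\theta_1,\dots,\theta_n) \sim \mathcal P$. In this zero-cost environment, the expected welfare of any procedure simplifies to $\E\left[\sum_i \allocsubi \kappa_i\right]$, which is maximized by the trivial rule that assigns the item to the bidder with the largest positive $\kappa_i$ and withholds it otherwise, yielding $\E\left[\max_i \kappa_i^+\right]$.

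Second, I would invoke \Cref{thm:multistage-firstbest}, which shows that when types are distributed according to $\mathcal P$ the descending-priority procedure achieves the highest expected welfare across all procedures, and that this optimum equals $\E\left[\max_i \kappa_i^+\right]$. Since both sides of the claimed equality evaluate to the same expression $\E\left[\max_i \kappa_i^+\right]$, the lemma follows.

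The main obstacle is not conceptual but notational: one must be careful that $\ccpt{\mathcal P}$ is defined consistently with the multi-stage interpretation, in particular that the generalized covered call value $\kappa_i$ (a random variable adapted to the full inspection filtration $\{\sigfld_i^{k,\tau}\}$) is being treated as a single payoff awarded costlessly at stage $0$ to the covered counterpart, rather than being re-derived from some induced multi-stage inspection problem. Once this definitional issue is pinned down, the argument is just the composition of \Cref{thm:multistage-firstbest} with the elementary upper bound on welfare in the zero-cost environment, exactly as in the single-stage proof of \Cref{equivwelfare}.
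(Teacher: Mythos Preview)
Your proposal is correct and follows essentially the same approach as the paper: both arguments compute the first-best welfare on each side as $\E\left[\max_i \kappa_i^+\right]$, invoking \Cref{thm:multistage-firstbest} for the $\mathcal P$ side and a trivial allocation argument for the $\ccpt{\mathcal P}$ side, then observe the two coincide. The paper additionally names concrete implementing mechanisms (a second-price auction for $\ccpt{\mathcal P}$, dynamic VCG for $\mathcal P$), but this is cosmetic; your identification of the definitional point about $\ccpt{\mathcal P}$ in the multi-stage setting matches exactly how the paper handles it.
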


\begin{proof}
The highest expected welfare achievable
when agents' types are jointly distributed according to
$\ccpt{\mathcal P}$ is 
$\E \left[ \max_i (\kappa_i)^+ \right]$,
achieved for example by running a second-price auction. 
We have seen in the proof of~\autoref{thm:multistage-firstbest}
that this is also the expected welfare of the first-best
procedure when agents' types are jointly distributed
according to ${\mathcal P}$. A mechanism which implements
this first-best welfare is the dynamic VCG mechanism in which
every bidder reveals all of their private information at the
outset of the auction and subsequently updates their private
information each time they perform a new stage of inspection;
the mechanism directs each bidder to carry out their share of
the sequence of inspections dictated by
the descending-priority procedure, allocates the item as
dictated by that procedure, and charges the winner a price
whose expected value is equal to the negative externality 
that the winner imposes on the other bidders by participating
in the auction.
\end{proof}

Rather than repeating the full proof of 
Theorem 2 --- the correspondence between Dutch auction
equilibria in our model and equilibria when bidders are
replaced with their covered counterparts --- in the 
interest of space we merely sketch how to modify the 
proof to the setting of multi-stage inspection.
First, in order to state the theorem, 
the definition of functional equivalence
must be generalized to the setting of multi-stage
inspection.

\begin{definition} 
Two auction outcomes are said to be {\em functionally equivalent} if they award
the item to the same bidder at the same price, and each bidder bears the same inspection cost in both auction outcomes.
Two strategies for a bidder are functionally equivalent (against a given profile
of opponents' strategies) if they always result in
functionally equivalent outcomes.
Two equilibria are functionally equivalent if they result in functionally
equivalent outcomes for every profile of types.
\end{definition}

\begin{thm}
There is a mapping from equilibria of the Dutch auction in our model with types jointly distributed
according to $\mathcal P$ to equilibria of the Dutch auction with the covered counterpart
distribution $\ccpt{\mathcal P}$ where bidders know their values without inspection.
This mapping preserves bidder expected utility and auctioneer expected revenue, and it
induces a bijection on functional equivalence classes of equilibria.
\end{thm}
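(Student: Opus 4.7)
The proof parallels that of \Cref{maintheorem} (the single-stage case), with \Cref{lem:general-ubp-restated} replacing \Cref{lemma:upper-bound-policy} as the central amortization tool and with inspection policies now dynamically evolving across inspection stages rather than making a one-shot inspect/don't-inspect decision. The definitions of normalized strategies, functional equivalence, and the two simulation operators $\lambda, \mu$ must all be lifted to the multi-stage setting.

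First I would establish the multi-stage analog of \Cref{clm:1}: any best response of a covered-counterpart bidder (who has zero inspection cost and knows $\kappa_i$ at time zero) is functionally equivalent to a normalized strategy that completes all inspection at time zero and whose clock-value claim function $b(\kappa,\ccpt{\theta})$ is monotone non-decreasing in $\kappa$. Since inspection is costless this is immediate, exactly as in the single-stage case. Next I would define $\mu$ and $\lambda$. For $\mu(b_k)$, the covered counterpart learns $\kappa_k$ at time zero, samples a coupled realization $v_k$ and inspection-cost path conditional on $\kappa_k$, and plays $b_k$ as though she were the original bidder facing those realizations. For $\lambda(b)$, given a normalized covered-counterpart strategy $b$, the original bidder mimics the descending-priority procedure of \Cref{thm:multistage-firstbest} applied to her single-bidder problem, using $b$ to convert priorities into clock values: at clock time $t$ and stage $k(i)$ with priority $\sigma_i^{k(i)}$, she advances to stage $k(i)+1$ at the moment $t$ first descends to $b(\kappa_i^{k(i)},\ccpt{\theta_i})$, and once she reaches stage $\infty$ she claims the item at clock value $b(\kappa_i,\ccpt{\theta_i})$ unless a competitor has already taken it.

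The third step is to verify that $\lambda(b)$ is a non-exposed inspection policy in the sense of \Cref{def:non-exposed}, and that the price it pays upon winning is exactly $b(\kappa_i,\ccpt{\theta_i})$---the same price the covered counterpart with value $\kappa_i$ would pay under $b$. Non-exposure follows from the construction: the policy only halts inspection at a stage $k(i)$ where $\sigma_i^{k(i)}$ equals the current min-priority $\kappa_i^{k(i)}$, and it claims the item whenever $v_i > \kappa_i$ and inspection has completed, with the $v_i > \kappa_i^{\infty}$ boundary case handled by \Cref{lem:v-k}. Price correctness reduces to monotonicity of $b$ in $\kappa$ together with the fact that the final claim clock value is $b(\min\{v_i,\kappa_i^\infty\},\ccpt{\theta_i}) = b(\kappa_i,\ccpt{\theta_i})$.

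Fourth I would prove the multi-stage analogs of Claims \ref{clm:2} and \ref{clm:3}. The compositions $\lambda\circ\mu$ and $\mu\circ\lambda$ yield functionally equivalent outcomes to the identity when restricted to non-exposed (respectively, normalized) strategies, exactly as before. For utilities, invoking \Cref{lem:general-ubp-restated} in place of \Cref{lemma:upper-bound-policy} gives $u_k(b_k,b_{-k}) \leq u_{k+n}(\mu(b_k),b_{-k})$ with equality iff $b_k$ is non-exposed, and $u_k(\lambda(b_{k+n}),b_{-k}) = u_{k+n}(b_{k+n},b_{-k})$; from these, best responses of bidder $k$ must be non-exposed. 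The argument of \Cref{clm:4} then carries over verbatim to give a bijection on functional-equivalence classes of best responses, and an inductive swap of one bidder at a time for her covered counterpart establishes the theorem, preserving expected utility and expected revenue. The main obstacle will be step three: rigorously defining $\lambda(b)$ as a valid inspection policy (adapted to the filtration, with the correct measurability in $k$ and $\tau$) and showing pointwise that its induced claim price equals $b(\kappa_i,\ccpt{\theta_i})$. The subtle case is $\kappa_i = v_i < \kappa_i^\infty$, where the policy must advance through countably many inspection stages before claiming; \Cref{lem:v-k} ensures that almost surely we are either in this case or in the complementary case $\kappa_i=\sigma_i^k$ at some finite $k$ strictly below all later strike prices, so that the bidder's inspection path is well-defined almost everywhere.
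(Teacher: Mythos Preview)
Your proposal is correct and follows essentially the same route as the paper's proof sketch: define $\lambda$ so that at stage $k$ the bidder advances when the clock reaches $b(\kappa_i^k)$ and claims at $b(\kappa_i)$ upon reaching stage $\infty$; verify this is non-exposed; and then reproduce Claims~1--4 with \Cref{lem:general-ubp-restated} standing in for \Cref{lemma:upper-bound-policy}. One simplification the paper makes that you could adopt: rather than having the covered counterpart \emph{sample} a coupled value and inspection-cost path conditional on $\kappa_i$, the paper grants bidder $i{+}n$ access to bidder $i$'s full filtration $\{\sigfld_i^{k,\tau}\}$, so that $\mu$ is literally the identity map on inspection policies---this avoids the need to construct the conditional coupling explicitly.
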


Recall that the crux of the proof of Theorem 2 was a pair of
mappings, $\mu$ and $\lambda$, from strategies
of bidder $i$ to normalized strategies of her covered counterpart,
bidder $i+n$, and vice-versa. As before, 
we can define the strategy $\mu(\stg)$ for bidder $i+n$ to simply
be a simulation of inspection policy $\stg$ as executed by bidder 
$i$. In fact, due to our 
convention that a strategy for bidder $i+n$ is an 
inspection policy adapted to bidder $i$'s filtration
$\{ \sigfld_i^{k,\tau} \}$, we can literally define
$\mu$ to be the identity function $\mu(\stg) = \stg$:
for bidder $i+n$ there is no distinction between simulating $\stg$ and
executing $\stg$. 

The definition of normalized strategies for a covered counterpart
bidder generalizes in the obvious way: 
a strategy $\stg$ of bidder $i+n$ is normalized
if
\begin{enumerate}
\item it always performs all stages of inspection at the earliest 
possible moment --- i.e., $\stg(\omega,0) = \infty$ for
all $\omega$;
\item the price at which it attempts to acquire
the item (if it is not yet claimed by another bidder)
is given by a function $b(v_i)$ whose value 
depends only on $v_i$;
not on the outcomes of intermediate stages of inspection;
\item the function $b(v_i)$ is non-decreasing.
\end{enumerate}
If $b$ is a normalized
strategy of bidder $i+n$ then we can define $\lambda(b)$ to be 
the strategy for bidder $i$ which operates as follows: if the
current inspection stage is $k<\infty$, then $\lambda(b)$ 
advances to the next inspection stage when the price is
$b(\kappa_i^k)$; if the current inspection stage
is $k=\infty$ then $\lambda(b)$ claims the item when the price
is $b(\kappa_i)$. This policy 
$\lambda(b)$ is non-exposed: if $\sigma_i^k > \kappa_i^k$
then it immediately advances to the next inspection stage
without waiting for the price to descend further, and if
$k=\infty$ and $v_i > \kappa_i$ then it acquires the item 
immediately. 

The proof of Theorem 2 depended on four claims. The proofs
of Claims 1, 3, and 4 are unchanged except for changes in 
notation and terminology; for instance, ``almost surely
exercises in the money'' changes to ``is non-exposed'' 
when we generalize from one to many stages of inspection.
In Claim 2, the proof that $b_{i+n}$ and 
$\mu(\lambda(b_{i+n}))$ are functionally equivalent
is unchanged but the proof that $b_i$ and 
$\lambda(\mu(b_i))$ requires a non-trivial 
generalization. In place of the single threshold 
$b_i^{\mathrm{insp}}(\theta_i)$ appearing in the 
original proof of Claim 2, a strategy $\stg_i$ of 
bidder $i$ defines a sequence of random variables 
$b_i^k$, each denoting the price at which $\stg_i$ 
will perform the $k^\mathrm{th}$ stage of inspection
if no other bidder has yet claimed the item. 
More formally, $b_i^k = t(\tau_i^k)$, where 
$$
  \tau_i^k = \inf \{ \tau \mid \stg_i(\omega,\tau) \geq k \}
$$
Restricted to the set of sample points where $s(\tau_i^k) = \perp$,
the value of $\tau_i^k$ (and hence also $b_i^k$) 
must be $\sigfld^{k-1,0}$-measurable, 
due to our assumption
that the event $\stg_i(\omega,\tau) \geq k$ is measurable with respect to the
$\sigma$-field generated by $i$'s private type, her first
$k-1$ inspection outcomes, and $s(\tau)$. Now, as in the
proof of Claim 2, we reason that bidder $(i+n)$'s normalized strategy
$\mu(\stg_i)$ is specified by a bid function $b$ that 
satisfies $b(\kappa_i^k) = b_i^k$, and hence that
$\lambda(\mu(\stg_i))$ advances to the $k^{\mathrm{th}}$ 
inspection stage precisely at price $b_i^k$, unless the
item has already been allocated, just like strategy
$\stg_i$. Similarly, both strategies acquire the item
at price $b(\kappa_i)$, unless it has already been
allocated. This completes the proof that they are
functionally equivalent. As in the original proof of
Claim 2, the functional equivalence of a normalized
strategy $\stg_{i+n}$ and $\mu(\lambda(\stg_{i+n}))$ 
is an easy consequence of the fact that bidder $i+n$
has no inspection costs.

\end{document}